%{\def\ignore{\usepackage{ws-procs9x6}}}%disable hyper at arxiv
%{\def\ignore{\pdfoutput=1}}%use pdflatex at arxiv

% set to true before making public
\newif\ifpublic\publicfalse
\publictrue

\documentclass[11pt,a4paper]{article}

%%%%%%%%%%%%%%%%%%%%%%%%%%%%%%%%%%%%%%%%%%%%%%%%%%%%%%%%%%%%%%%%%%%%%%%%%%%%%%%%
% page layout
\usepackage[a4paper,text={160mm,247mm},centering]{geometry}
\usepackage{layout}

%%%%%%%%%%%%%%%%%%%%%%%%%%%%%%%%%%%%%%%%%%%%%%%%%%%%%%%%%%%%%%%%%%%%%%%%%%%%%%%%
% setspace does proper distances between lines, but leaves footnotes etc. intact
\usepackage{setspace}
\setstretch{1.1}

%%%%%%%%%%%%%%%%%%%%%%%%%%%%%%%%%%%%%%%%%%%%%%%%%%%%%%%%%%%%%%%%%%%%%%%%%%%%%%%%
% fonts and encoding
\usepackage[utf8]{inputenc}
%\usepackage{lmodern}
%\renewcommand{\ttdefault}{lmtt}

%%%%%%%%%%%%%%%%%%%%%%%%%%%%%%%%%%%%%%%%%%%%%%%%%%%%%%%%%%%%%%%%%%%%%%%%%%%%%%%%
% language and hyphenation
\usepackage[english]{babel}
\usepackage{hyphenat}
%\hyphenchar\font=\string"7F
\hyphenation{con-fig-u-ra-tion-space}

%%%%%%%%%%%%%%%%%%%%%%%%%%%%%%%%%%%%%%%%%%%%%%%%%%%%%%%%%%%%%%%%%%%%%%%%%%%%%%%%
% citations and nice citations including the ref. word
\usepackage[nosort]{cite}
%\usepackage[bulletsep]{collref}

%%%%%%%%%%%%%%%%%%%%%%%%%%%%%%%%%%%%%%%%%%%%%%%%%%%%%%%%%%%%%%%%%%%%%%%%%%%%%%%%
% nice citations
%\newcommand{\rcite}[1]{ref.~\cite{#1}}
\newcommand{\rcite}[2][]{%
  % If optional argument is present, use it
  ref.~%
  \ifx\relax#1\relax
    \cite{#2}%
  \else
    \cite[#1]{#2}%
  \fi
}
\newcommand{\rcites}[1]{refs.~\cite{#1}}

%%%%%%%%%%%%%%%%%%%%%%%%%%%%%%%%%%%%%%%%%%%%%%%%%%%%%%%%%%%%%%%%%%%%%%%%%%%%%%%%
% math related
\usepackage{amsmath, amssymb, amsfonts, amsthm}
\usepackage{mathtools}
% bb font for lower-case letters
%\usepackage{bbm}
\usepackage[normalem]{ulem}

\theoremstyle{plain}
\newtheorem{theorem}{Theorem}
\newtheorem*{theorem*}{Theorem}
\newtheorem{proposition}[theorem]{Proposition}
\newtheorem{propdef}[theorem]{Definition-Proposition}

\newtheorem{lemma}[theorem]{Lemma}
\newtheorem{definition}[theorem]{Definition}

\newtheorem{conjecture}[theorem]{Conjecture}

\usepackage{mathfixs}
\ProvideMathFix{autobold}
% The following line will result in slanted greek capital letters. The default LaTeX behavior is to have them upright.
%\ProvideMathFix{greekcaps}
\ProvideMathFix{multskip}
\ProvideMathFix{frac,root,rootclose}

%%%%%%%%%%%%%%%%%%%%%%%%%%%%%%%%%%%%%%%%%%%%%%%%%%%%%%%%%%%%%%%%%%%%%%%%%%%%%%
% allow equations to stretch over page breaks
\allowdisplaybreaks[1]

%%%%%%%%%%%%%%%%%%%%%%%%%%%%%%%%%%%%%%%%%%%%%%%%%%%%%%%%%%%%%%%%%%%%%%%%%%%%%%
% automatic bold math in headlines (obsolete)
%\makeatletter
%\g@addto@macro\bfseries{\boldmath}
%\makeatother

%%%%%%%%%%%%%%%%%%%%%%%%%%%%%%%%%%%%%%%%%%%%%%%%%%%%%%%%%%%%%%%%%%%%%%%%%%%%%%%%
% references
\makeatletter
\def\mr@ignsp#1 {\ifx\:#1\@empty\else #1\expandafter\mr@ignsp\fi}%
\newcommand{\multiref}[1]{\begingroup%\let\protect\string%
\xdef\mr@no@sparg{\expandafter\mr@ignsp#1 \: }%
\def\mr@comma{}%
\@for\mr@refs:=\mr@no@sparg\do{\mr@comma\def\mr@comma{,}\ref{\mr@refs}}%
\endgroup}
\renewcommand{\eqref}[1]{(\multiref{#1})}
\makeatother

\makeatletter
% either have the word subsection etc. be part of the link
%\newcommand{\namedref}[2]{\hyperref[#2]{#1~\ref*{#2}}}
% or rather not
\newcommand{\namedref}[2]{\hyperref[#2]{#1~\ref*{#2}}}%
\newcommand{\namedreff}[2]{\hyperref[#2]{#1\,\ref*{#2}}}%
\newcommand{\secref}{\namedreff{\S}}
\newcommand{\subsecref}{\namedreff{\S}}
\newcommand{\appref}{\namedref{Appendix}}
\newcommand{\tabref}{\namedref{Table}}
\newcommand{\figref}{\namedref{Figure}}
\newcommand{\thmref}{\namedref{Theorem}}
\newcommand{\conref}{\namedref{Conjecture}}
\newcommand{\defref}{\namedref{Definition}}

\newcommand{\propref}{\namedref{Proposition}}

\newcommand{\lemref}{\namedref{Lemma}}
\newcommand{\sitref}{\namedref{Situation}}
\makeatother

%%%%%%%%%%%%%%%%%%%%%%%%%%%%%%%%%%%%%%%%%%%%%%%%%%%%%%%%%%%%%%%%%%%%%%%%%%%%%%%%
% format for equation numbers
\renewcommand{\theequation}{\thesection.\arabic{equation}}
\numberwithin{equation}{section}

%%%%%%%%%%%%%%%%%%%%%%%%%%%%%%%%%%%%%%%%%%%%%%%%%%%%%%%%%%%%%%%%%%%%%%%%%%%%%%%%
% equations, sections, figures, tables and appendices
\newcommand{\eqn}[1]{eq.~\eqref{#1}}

\newcommand{\eqns}[2]{eqs.~\eqref{#1} and~\eqref{#2}}

%%%%%%%%%%%%%%%%%%%%%%%%%%%%%%%%%%%%%%%%%%%%%%%%%%%%%%%%%%%%%%%%%%%%%%%%%%%%%%%%
%arxiv links
\providecommand{\href}[2]{#2}

%%%%%%%%%%%%%%%%%%%%%%%%%%%%%%%%%%%%%%%%%%%%%%%%%%%%%%%%%%%%%%%%%%%%%%%%%%%%%%%%
% captions
%\usepackage[margin=10pt,font=small,labelfont=bf, labelsep=colon]{caption}
\usepackage[font=small,labelfont=bf]{caption}

%%%%%%%%%%%%%%%%%%%%%%%%%%%%%%%%%%%%%%%%%%%%
%%%%%%%%%%%%%%%%%%%%%%%%%%%%%%%%%%%%
%% Fancy headers, if desired
%\usepackage{fancyhdr, dsfont}
%\renewcommand{\headrulewidth}{0.4pt}
%\renewcommand{\footrulewidth}{0.4pt}

%%%%%%%%%%%%%%%%%%%%%%%%%%%%%%%%%%%%%%%%%%%%%%%%%%%%%%%%%%%%%%%%%%%%%%%%%%%%%%
% Graphics
\usepackage{graphbox}
%\usepackage{float}

%%%%%%%%%%%%%%%%%%%%%%%%%%%%%%%%%%%%%%%%%%%%%%%%%%%%%%%%%%%%%%%%%%%%%%%%%%%%%%
% METAPOST / Colors / Graphics
\usepackage[compile=false,fonts]{mpostinl}
%\usepackage[fonts]{mpostinl}
% include basic JB routines
\begin{mposttex}[]
\usepackage{amsmath,amssymb,amsfonts}
\end{mposttex}
\begin{mpostdef}
	input metaobj;
	
	def drawzigzag(expr pzz)=
	pair zz[];
	zz[1]:= point 0 of pzz ; zz[2]:= point 1 of pzz;
	nczigzag(zz[1])(zz[2]) "coilwidth(0.15xu)", "linewidth(1pt)", "linearc(.01xu)", "arrows(-)", "coilarmA(0)", "coilarmB(0)";
	enddef;
\end{mpostdef}

\begin{mpostdef}
	pair vpos[];
	pair epos[];
	pair ext[];
	pair exta[];
	path paths[];
	picture pic;
	picture pic[];
	picture savepic;
	path cpath;
	path hpath;
	xu:=1cm;
	yu:=1cm;
\end{mpostdef}

% Pensize and other helper routines

\begin{mpostdef}
	def pensize(expr s)=withpen pencircle scaled s enddef;
	def fillshape(expr p,ci,tb,cb)=
	fill p withcolor ci;
	draw p pensize(tb) withcolor cb;
	enddef;
	def filldot(expr z,s,ci)=
	fillshape(fullcircle scaled s shifted z, ci, 0.5pt, 0.0white);
	enddef;
	def filltrig(expr z,s,r,ci)=
	fillshape(((dir 60)--(dir 180)--(dir 300)--cycle) scaled s rotated r shifted z, ci, 0.5pt, 0.0white);
	enddef;
	def fillsqr(expr z,s,r,ci)=
	fillshape(((dir 45)--(dir 135)--(dir 225)--(dir 315)--cycle) scaled s rotated r shifted z, ci, 0.5pt, 0.0white);
	enddef;
	def drawcross(expr z,s,r,t,c)=
	draw ((-0.5,-0.5)--(+0.5,+0.5)) scaled s rotated r shifted z pensize(t) withcolor c;
	draw ((+0.5,-0.5)--(-0.5,+0.5)) scaled s rotated r shifted z pensize(t) withcolor c;
	enddef;
	
	def midarrow (expr p, t) =
	fill (arrowhead subpath(0,arctime(arclength(subpath (0,t) of p)+0.5ahlength) of p) of p) withcolor ecolor;
	enddef;
	
	def midarrowred (expr p, t) =
	fill (arrowhead subpath(0,arctime(arclength(subpath (0,t) of p)+0.5ahlength) of p) of p) withcolor ecolor;
	enddef;
	
	def midarrowblue (expr p, t) =
	fill (arrowhead subpath(0,arctime(arclength(subpath (0,t) of p)+0.5ahlength) of p) of p) withcolor bcolor;
	enddef;
	
	def dprop expr p=draw p pensize(3pt) withcolor feyncolor enddef;
	def dpropf expr p=draw p pensize(3pt) withcolor feyncolorfaint enddef;
	
	def drawprop expr p=draw p pensize(1.5pt) enddef;
	def drawline expr p=draw p pensize(0.5pt) enddef;
	def drawpos expr p=drawcross(p, 5pt, 0, 1.0pt, 0.0white) enddef;
	def drawvertex expr p=filldot(p, 5pt, 0.5white+0.5red) enddef;
	def drawblackdotscale expr p=filldot(p, 0.4*xu, black) enddef;
	def drawreddotscale expr p=filldot(p, 0.4*xu, red) enddef;
	def drawwhitedotscale expr p=filldot(p, 0.4*xu, white) enddef;
	def drawblackdot expr p=filldot(p, 8pt, black) enddef;
	def drawwhitedot expr p=filldot(p, 8pt, white) enddef;
	def drawlinearrow expr p=draw p pensize(0.5pt); midarrow (p,0.5); enddef;
	def drawproparrow expr p=draw p pensize(1.5pt); midarrow (p,0.5); enddef;
	def drawproparrowdouble expr p=draw p pensize(1.5pt); midarrow (p,0.65); midarrow (reverse p,0.65); enddef;
	def drawwhitedotmr (expr p, sz)=filldot(p, sz*xu, white) enddef;
	def drawblackdotmr (expr p, sz)=filldot(p, sz*xu, black) enddef;
	
	def framed(expr p)=fill bbox p withcolor white; draw bbox p pensize(0.8pt); draw p; enddef;
	def framedd(expr p)=fill bbox p withcolor 0.94white; draw bbox p pensize(1pt); draw p; enddef;
\end{mpostdef}
%

%%%%%%%%%%%%%%%%%%%%%%%%%%%%%%%%%%%%%%%%%%%%%%%%%%%%%%%%%%%%%%%%%%%%%%%%%%%%%%%%
%%%%%%%%%%%%%%%%%%%%%%%%%%%%%%%%%%%%%%%%%%%%%%%%%%%%%%%%%%%%%%%%%%%%%%%%%%%%%%%%
%   %%%%% KZB-Artikel
%%%%%%%%%%%%%%%%%%%%%%%%%%%%%%%%%%%%%%%%%%%%%%%%%%%%%%%%%%%%%%%%%%%%%%%%%%%%%%%%
%%%%%%%%%%%%%%%%%%%%%%%%%%%%%%%%%%%%%%%%%%%%%%%%%%%%%%%%%%%%%%%%%%%%%%%%%%%%%%%%

\begin{mpostdef}
	color ecolor; ecolor:=(.72,.03,.06);
	color bcolor; bcolor:=(.07,.05,.66);
	color gcolor; gcolor:=(.0,.44,.0);
	%color feyncolor; feyncolor:=(.6,.2,.6);
	%color feyncolorfaint; feyncolorfaint:=(.9,.6,.9);
	color feyncolor; feyncolor:=(.06,.4,.01);
	color feyncolorfaint; feyncolorfaint:=(.12,.8,.02);
	color arrowcolor; arrowcolor:=(0,0,0);
	%gcolor:=blue;
	color faint; faint:=(.7,0.7,0.7);
	color fgray; fgray:=(.7,0.7,0.7);
	vsize := 0.16xu;
	rad:= 2xu;
	cpath:= for i=0 upto 35: 
	dir (i*10) scaled rad ..
	endfor cycle;
	def vert(expr pos)=
	fill fullcircle scaled vsize shifted point pos of cpath withcolor white;
	draw halfcircle scaled vsize rotated (90+10*pos) shifted point pos of cpath withcolor ecolor pensize(1.3pt);
	enddef;
	def vertthick(expr pos)=
	fill fullcircle scaled vsize shifted point pos of cpath withcolor white;
	draw halfcircle scaled vsize rotated (90+10*pos) shifted point pos of cpath withcolor ecolor pensize(2.0pt);
	enddef;
	def vertcol(expr pos,col)=
	fill fullcircle scaled vsize shifted point pos of cpath withcolor white;
	draw halfcircle scaled vsize rotated (90+10*pos) shifted point pos of cpath withcolor col pensize(1.3pt);
	enddef;
	def drawbdry(expr startpos,endpos)=
	draw subpath(startpos,endpos) of cpath withcolor ecolor pensize(1.3pt);
	enddef;
	def drawbdryblack(expr startpos,endpos)=
	draw subpath(startpos,endpos) of cpath withcolor black pensize(1.3pt);
	enddef;
	def drawbdrythick(expr startpos,endpos)=
	draw subpath(startpos,endpos) of cpath withcolor ecolor pensize(2.0pt);
	enddef;
	def drawbdrythickblack(expr startpos,endpos)=
	draw subpath(startpos,endpos) of cpath withcolor black pensize(2.0pt);
	enddef;
	def drawbdrydashed(expr startpos,endpos)=
	draw subpath(startpos,endpos) of cpath withcolor ecolor dashed withdots scaled 0.5 pensize(1.3pt);
	enddef;
	def drawbdrydashedthick(expr startpos,endpos)=
	draw subpath(startpos,endpos) of cpath withcolor ecolor dashed withdots scaled 0.5 pensize(2.0pt);
	enddef;
	def vlabel(expr pos, vscal, tt)=
	vert(pos);
	label(tt,point pos of cpath scaled vscal);
	enddef;
	
	def vlabelcol(expr pos, vscal, tcol, vcol, tt)=
	vertcol(pos, vcol);
	label(tt,point pos of cpath scaled vscal) withcolor tcol;
	enddef;
	
	def ppath(expr p,spos,epos)=
	subpath(arctime(spos*arclength(p))of p,arctime(epos*arclength(p)) of p) of p
	enddef;
\end{mpostdef}

\begin{mpostdef}
	picture nicearrow; currentpicture:= nullpicture;
	ahlengthsave:=ahlength; ahlength:=12pt;
	drawarrow (0xu,0xu){dir 20}..(2xu,0xu) pensize(3pt) withcolor arrowcolor;
	ahlength:=ahlengthsave;
	nicearrow:= currentpicture; currentpicture:= nullpicture;
\end{mpostdef}

\begin{mpostdef}
	eNum:= 2.718281828459045235;
	numeric startangle, lsdist, scalefactor, stepangle;
	path ls;
	
	vardef logspiralder(expr lspos, endpoint, maxangle, stepnumber, enddir) =
	startangle := ((angle(endpoint-lspos) - (maxangle mod 360)) mod 360); 
	stepangle := (1.0maxangle)/(1.0stepnumber);
	lsdist := length(endpoint-lspos); 
	scalefactor := mlog(lsdist+1.0)/(1.0*maxangle)/256;
	ls:=lspos for t = startangle step stepangle until (maxangle + startangle + 0.02): .. (((mexp(256((t-startangle)*scalefactor))-1.0) * cosd(t),(mexp(256((t-startangle)*scalefactor))-1.0)  * sind(t))+lspos) endfor;
	% due to numerical inaccuracy in the above calculation, the last point of the path will not be calculated, instead, append the actual endpoint:
	ls:= subpath(1,(length ls) - 1) of ls;
	ls:= ls .. {dir enddir}endpoint;
	ls
	enddef;
\end{mpostdef}

\begin{mpostdef}
def roundedrectangle(expr xa, ya, xb, yb, r) =
  (xa + r, ya) 
  -- (xb - r, ya) {dir 0} .. {dir 90} (xb, ya + r) 
  -- (xb, yb - r) {dir 90} .. {dir 180} (xb - r, yb) 
  -- (xa + r, yb) {dir 180} .. {dir 270} (xa, yb - r) 
  -- (xa, ya + r) {dir 270} .. {dir 0} cycle
enddef;

def SchottkyBackground(expr xoffset, yoffset, w, h) =
% fading region, fading steps
fregion := 0.2xu;
fsteps := 80;

for ii = 0 upto fsteps:
  numeric margin, g;
  margin := ii / fsteps * fregion;
  gcol := 1.0 - 0.3 * (ii / fsteps); 

  draw roundedrectangle(margin+xoffset, margin+yoffset,w - margin+xoffset, h-margin+yoffset, fregion * (1-ii / fsteps) ) withcolor (gcol, gcol, gcol) pensize( fregion / (fsteps - 1));
endfor
  fill roundedrectangle(margin+xoffset, margin+yoffset,w - margin+xoffset, h-margin+yoffset, 0 ) withcolor (0.7,0.7,0.7);
enddef;

def SchottkyAxes(expr xstartpos, xendpos, ystartpos, yendpos) = 
  drawarrow xstartpos-- xendpos pensize (1.0pt);  % x-axis
  drawarrow ystartpos -- yendpos pensize (1.0pt);  % y-axis
enddef;

def SchottkyCircle(expr centerPos, rad, col, bcol) = 
  fill fullcircle scaled (2*rad) shifted centerPos withcolor bcol; 
  draw fullcircle scaled (2*rad) shifted centerPos withcolor col pensize(0.7pt);
enddef;

def SchottkyCircleNoFilling(expr centerPos, rad, col) = 
  draw fullcircle scaled (2*rad) shifted centerPos withcolor col pensize(0.7pt);
enddef;

def SchottkyACircle(expr centerPos, rad, col, bcol) = 
  fill fullcircle scaled (2*rad) shifted centerPos withcolor bcol;
  drawarrow reverse(fullcircle) scaled (2*rad) shifted centerPos withcolor col pensize(0.7pt);
enddef;

def SchottkyCircleDashed(expr centerPos, rad, col, bcol) = 
	fill fullcircle scaled (2*rad) shifted centerPos withcolor bcol;
	draw fullcircle scaled (2*rad) shifted centerPos withcolor col pensize(0.7pt) dashed evenly;
enddef;

def SchottkyCircleDashedNoFilling(expr centerPos, rad, col) = 
	draw fullcircle scaled (2*rad) shifted centerPos withcolor col pensize(0.7pt) dashed evenly;
enddef;

def SchottkyCross(expr PosA, Sangle, col) = 
  transform ST;
  ST := identity rotated (Sangle + 45 + 15) shifted PosA; 
  draw ((-0.05xu,0xu)--(0.05xu,0xu)) transformed ST withcolor col;
  draw ((0,-0.05xu)--(0,0.05xu)) transformed ST withcolor col;
enddef;

def SchottkyCrossS(expr PosA, Sangle, size, col) = 
  transform ST;
  ST := identity rotated (Sangle + 45 + 15) shifted PosA; 
  draw ((-size,0xu)--(size,0xu)) transformed ST withcolor col;
  draw ((0,-size)--(0,size)) transformed ST withcolor col;
enddef;

def SchottkySquare(expr PosA, Sangle) = 
	transform ST;
	ST := identity rotated (Sangle + 45 + 15) shifted PosA; 
	draw ((-0.03xu,-0.03xu)--(0.03xu,-0.03xu)--(0.03xu,0.03xu)--(-0.03xu,0.03xu)--(-0.03xu,-0.03xu)) transformed ST;
enddef;

def SchottkyPathArrow(expr PosA,PosB) =
  path SPA;
  numeric SPAdir;
  SPAdir := angle(PosB-PosA);
  SPA := PosA {dir (SPAdir-15)} .. {dir (SPAdir+15)} PosB;
  drawarrow subpath (0.04 * length SPA, 0.96 * length SPA) of SPA pensize(0.7pt);
  SchottkyCross (PosB, SPAdir, (0,0,0));
enddef;

def SchottkyPathArrowAlt(expr PosA,PosB) =
  path SPA;
  numeric SPAdir;
  SPAdir := angle(PosB-PosA);
  SPA := PosA {dir (SPAdir+15)} .. {dir (SPAdir-15)} PosB;
  drawarrow subpath (0.04 * length SPA, 0.96 * length SPA) of SPA pensize(0.7pt);
  SchottkyCross (PosB, SPAdir, (0,0,0));
enddef;
\end{mpostdef}

\begin{mpostdef}
	path quartercircle;
	quartercircle := subpath (0, 2) of fullcircle;
\end{mpostdef}

%%%%%%%%%%%%%%%%%%%%%%%%%%%%%%%%%%%%%%%%%%%%%%%%%%%%%%%%%%%%%%%%%%%%%%%%%%%%%%%%
%%%%%%%%%%%%%%%%%%%%%%%%%%%%%%%%%%%%%%%%%%%%%%%%%%%%%%%%%%%%%%%%%%%%%%%%%%%%%%%%
%%%%%%%%%%%%%%%%%%%%%%%%%%%%%%%%%%%%%%%%%%%%%%%%%%%%%%%%%%%%%%%%%%%%%%%%%%%%%%%%
%%%%%%%%%%%%%%%%%%%%%%%%%%%%%%%%%%%%%%%%%%%%%%%%%%%%%%%%%%%%%%%%%%%%%%%%%%%%%%%%
%%%%%%%%%%%%%%%%%%%%%%%%%%%%%%%%%%%%%%%%%%%%%%%%%%%%%%%%%%%%%%%%%%%%%%%%%%%%%%%%

\begin{mpostfig}[label=SchottkySetup]
%SchottkyBackground(0xu,0xu, 6xu, 6xu);
SchottkyCircleDashed((2xu,1.6xu),1.1xu,ecolor, fgray);
SchottkyCircle((-0.8xu,6xu),0.5xu,bcolor, fgray);
SchottkyCircleDashed((2xu,5.1xu),1.4xu,gcolor, fgray);
SchottkyCircleDashed((2.3xu,5.7xu),0.5xu,gcolor, fgray);
SchottkyCircle((-2.5xu,1.7xu),0.7xu,ecolor, fgray);
SchottkyCircleDashed((-0.4xu,3.4xu),0.9xu,bcolor, fgray);
SchottkyCircle((-3xu,4.7xu),1.3xu,gcolor, fgray);
SchottkyCircle((-2.5xu,4.1xu),0.3xu,gcolor, fgray);
SchottkyPathArrow((-0.7xu,1.5xu),(1.8xu,1.3xu));
SchottkyPathArrow((1.8xu,1.3xu),(-0.2xu,3.9xu));
SchottkyPathArrow((-0.2xu,3.9xu),(2xu,4.3xu));
SchottkyPathArrow((2xu,4.3xu),(2.5xu,5.9xu));
SchottkyPathArrow((2.5xu,5.9xu),(-1.0xu,5.9xu));
SchottkyPathArrow((-1.0xu,5.9xu),(-3.5xu,4.4xu));
SchottkyPathArrow((-3.5xu,4.4xu),(-2.5xu,4.1xu));
SchottkyPathArrow((-2.5xu,4.1xu),(-0.6xu,3.3xu));
numeric p;
p = 0.2xu;
SchottkyCross((-0.7xu,1.5xu), 0, black);
label(btex $z_0$ etex, (-0.7xu,1.5xu)+(-p,p));
numeric s;
s = 0.9;
numeric r;
pair c;
r := 1.1xu;
c := (2xu,1.6xu);
label(btex $C_1'$ etex, c+s*(r,-r));
r := 0.5xu;
c := (-0.8xu,6xu);
label(btex $C_3$ etex, c+1.1*(r,-r));
r := 1.4xu;
c := (2xu,5.1xu);
label(btex $C_2'$ etex, c+s*(r,-r));
r := 0.5xu;
c := (2.3xu,5.7xu);
label(btex $\sigma_2C_2'$ etex, c+1.2*(r,-r));
r := 0.7xu;
c := (-2.5xu,1.7xu);
label(btex $C_1$ etex, c+1.1*(r,-r));
r := 0.9xu;
c := (-0.4xu,3.4xu);
label(btex $C_3'$ etex, c+s*(r,-r));
r := 1.3xu;
c := (-3xu,4.7xu);
label(btex $C_2$ etex, c+s*(-r,-r));
r := 0.3xu;
c := (-2.5xu,4.1xu);
label(btex $\sigma_2^{-1}C_2$ etex, c+2.0*(r-0.2xu,r));
\end{mpostfig}

\begin{mpostfig}[label=hyperelliptic]
pair c[], cp[], p[], pp[], labelShift;
c1 = (1.3xu,3xu);
cp1 = (4.7xu,3xu);
p1 = (1.35xu,3xu);
pp1 = (4.65xu,3xu);
c2 = (3xu+0.5*1.7xu,3xu+0.866*1.7xu);
cp2 = (3xu-0.5*1.7xu,3xu-0.866*1.7xu);
p2 = (3xu+0.5*1.9xu,3xu+0.866*1.9xu);
pp2 = (3xu-0.5*1.9xu,3xu-0.866*1.9xu);
c3 = (3xu-0.5*1.7xu,3xu+0.866*1.7xu);
cp3 = (3xu+0.5*1.7xu,3xu-0.866*1.7xu);
p3 = (3xu-0.5*1.5xu,3xu+0.866*1.5xu);
pp3 = (3xu+0.5*1.5xu,3xu-0.866*1.5xu);
labelShift = (0.2xu, 0.2xu);
% left picture
% SchottkyBackground(0xu,0xu, 6xu, 6xu);
% axes
SchottkyAxes((0xu,3xu),(6xu,3xu),(3xu,0.5xu),(3xu,5.5xu));
% Schottky circles
SchottkyCircle(c1,0.6xu,ecolor, fgray);
label(btex $C_1$ etex, c1 + (0.65xu, -0.65xu));
SchottkyCircleDashed(cp1,0.6xu,ecolor, fgray);
label(btex $C_1^\prime$ etex, cp1 + (0.7xu, -0.7xu));
SchottkyCircle(c2,0.5xu,bcolor, fgray);
label(btex $C_2$ etex, c2 + (0.6xu, -0.6xu));
SchottkyCircleDashed(cp2,0.5xu,bcolor, fgray);
label(btex $C_2^\prime$ etex, cp2 + (0.6xu, -0.6xu));
SchottkyCircle(c3,0.7xu,gcolor, fgray);
label(btex $C_3$ etex, c3 + (0.6xu, -0.75xu));
SchottkyCircleDashed(cp3,0.7xu,gcolor, fgray);
label(btex $C_3^\prime$ etex, cp3 + (0.7xu, -0.7xu));
% fixed points
SchottkyCross(p1, 0, black);
label(btex $P_1$ etex, p1 + labelShift);
SchottkyCross(pp1, 0, black);
label(btex $P_1^\prime$ etex, pp1 + labelShift);
SchottkyCross(p2, 0, black);
label(btex $P_2$ etex, p2 - labelShift);
SchottkyCross(pp2, 0, black);
label(btex $P_2^\prime$ etex, pp2 + labelShift);
SchottkyCross(p3, 0, black);
label(btex $P_3$ etex, p3 + labelShift);
SchottkyCross(pp3, 0, black);
label(btex $P_3^\prime$ etex, pp3 + labelShift);
% origin
% fill fullcircle scaled (0.1xu) shifted (3xu,3xu);
label(btex $0$ etex, (3.20xu, 3.20xu));

% arrow and label T
drawarrow (6.5xu,3xu){dir 20}..(8.5xu,3xu) pensize(1pt);
label(btex $\theta$ etex, (7.5xu, 3.5xu));

% same setup but with an offset and fixed points swapped
pair offset;
offset = (9xu, 0xu);
% SchottkyBackground(offset,0xu, 6xu, 6xu);
% axes
SchottkyAxes((0xu,3xu)+offset,(6xu,3xu)+offset,(3xu,0.5xu)+offset,(3xu,5.5xu)+offset);
% Schottky circles
SchottkyCircleDashed(c1+offset,0.6xu,ecolor, fgray);
label(btex $C_1^\prime$ etex, c1+offset + (0.65xu, -0.65xu));
SchottkyCircle(cp1+offset,0.6xu,ecolor, fgray);
label(btex $C_1$ etex, cp1+offset + (0.7xu, -0.7xu));
SchottkyCircleDashed(c2+offset,0.5xu,bcolor, fgray);
label(btex $C_2^\prime$ etex, c2+offset + (0.6xu, -0.6xu));
SchottkyCircle(cp2+offset,0.5xu,bcolor, fgray);
label(btex $C_2$ etex, cp2+offset + (0.6xu, -0.6xu));
SchottkyCircleDashed(c3+offset,0.7xu,gcolor, fgray);
label(btex $C_3^\prime$ etex, c3+offset + (0.6xu, -0.75xu));
SchottkyCircle(cp3+offset,0.7xu,gcolor, fgray);
label(btex $C_3$ etex, cp3+offset + (0.7xu, -0.7xu));
% fixed points
SchottkyCross(p1+offset, 0, black);
label(btex $P_1^\prime$ etex, p1+offset + labelShift);
SchottkyCross(pp1+offset, 0, black);
label(btex $P_1$ etex, pp1+offset + labelShift);
SchottkyCross(p2+offset, 0, black);
label(btex $P_2^\prime$ etex, p2+offset - labelShift);
SchottkyCross(pp2+offset, 0, black);
label(btex $P_2$ etex, pp2+offset + labelShift);
SchottkyCross(p3+offset, 0, black);
label(btex $P_3^\prime$ etex, p3+offset + labelShift);
SchottkyCross(pp3+offset, 0, black);
label(btex $P_3$ etex, pp3+offset + labelShift);
% origin
% fill fullcircle scaled (0.1xu) shifted (3xu,3xu);
label(btex $0$ etex, (3.20xu, 3.20xu)+offset);
\end{mpostfig}

\begin{mpostfig}[label=Situation1]
%SchottkyBackground(0,0,2xu,2xu);
pair ci, cj, cpj, pj, ppj, x, ipj, ippj, ix;
ci = (2xu,1.8xu);
cj = (0xu, 1.2xu);
cpj = (1.2xu, -0.3xu);
ri := 1xu;
rj := 0.6xu;
rpj := 0.9xu;
SchottkyACircle(ci, ri, ecolor, fgray);
label(btex $\mathfrak{A}_i$ etex, ci + (ri,-ri));
SchottkyCircleDashed(cj, rj, bcolor, fgray);
SchottkyCircle(cpj, rpj, bcolor, fgray);

x = (0xu, 0xu);
pj = cj + (0.1xu, 0.2xu);
ppj = cpj + (-0.2xu,0.2xu);
SchottkyCross(x, 0, black);
label(btex $x$ etex, x+(-0.2xu,-0.2xu));
SchottkyCross(pj, 0, black);
label(btex $P_j^\prime$ etex, pj+(-0.25xu,-0.1xu));
SchottkyCross(ppj, 0, black);
label(btex $P_j$ etex, ppj+(-0.25xu,+0.1xu));

ix = ci + (-0.3xu, -0.5xu);
ipj = ci + (0xu, 0.3xu);
ippj = ci + (0.45xu,-0.35xu);
%SchottkyCross(ix, 0, black);
label(btex $\gamma_1 x$ etex, ix+(-0.1xu,+0.2xu));
SchottkyPathArrowAlt(x, ix);
%SchottkyCross(ipj, 0, black);
label(btex $\gamma_2 P_j^\prime$ etex, ipj+(-0.1xu,0.2xu));
SchottkyPathArrow(pj, ipj);
%SchottkyCross(ippj, 0, black);
label(btex $\gamma_2 P_j$ etex, ippj+(-0.1xu,0.2xu));
SchottkyPathArrow(ppj, ippj);
\end{mpostfig}

\begin{mpostfig}[label=Situation2]
%SchottkyBackground(0,0,2xu,2xu);
pair ci, cj, cpj, pj, ppj, x, ipj, ippj, ix;
ci = (2xu,1.8xu);
cj = (0xu, 1.2xu);
cpj = (1.2xu, -0.3xu);
ri := 1xu;
rj := 0.6xu;
rpj := 0.9xu;
SchottkyACircle(ci, ri, ecolor, fgray);
label(btex $\mathfrak{A}_i$ etex, ci + (ri,-ri));
SchottkyCircleDashed(cj, rj, bcolor, fgray);
SchottkyCircle(cpj, rpj, bcolor, fgray);

x = (0xu, 0xu);
pj = cj + (0.1xu, 0.2xu);
ppj = cpj + (-0.15xu,0.3xu);
SchottkyCross(x, 0, black);
label(btex $x$ etex, x+(-0.2xu,-0.2xu));
SchottkyCross(pj, 0, black);
label(btex $P_j^\prime$ etex, pj+(-0.25xu,-0.1xu));
SchottkyCross(ppj, 0, black);
label(btex $P_j$ etex, ppj+(-0.25xu,+0.1xu));

ix = cpj + (0xu,-0.25xu);
ipj = ci + (0xu, 0.3xu);
ippj = ci + (0.45xu,-0.35xu);
%SchottkyCross(ix, 0, black);
label(btex $\gamma_1 x$ etex, ix+(0.1xu,-0.2xu));
SchottkyPathArrow(x, ix);
%SchottkyCross(ipj, 0, black);
label(btex $\gamma_2 P_j^\prime$ etex, ipj+(-0.1xu,0.2xu));
SchottkyPathArrow(pj, ipj);
%SchottkyCross(ippj, 0, black);
label(btex $\gamma_2 P_j$ etex, ippj+(-0.1xu,0.2xu));
SchottkyPathArrow(ppj, ippj);
\end{mpostfig}

\begin{mpostfig}[label=Situation3]
%SchottkyBackground(0,0,2xu,2xu);
pair ci, cj, cpj, pj, ppj, x, ipj, ippj, ix;
ci = (2xu,1.8xu);
cj = (0xu, 1.2xu);
cpj = (1.2xu, -0.3xu);
ri := 1xu;
rj := 0.6xu;
rpj := 0.9xu;
SchottkyACircle(ci, ri, ecolor, fgray);
label(btex $\mathfrak{A}_i$ etex, ci + (ri,-ri));
SchottkyCircleDashed(cj, rj, bcolor, fgray);
SchottkyCircle(cpj, rpj, bcolor, fgray);

x = (0xu, 0xu);
pj = cj + (0.1xu, 0.2xu);
ppj = cpj + (-0.2xu,0.2xu);
SchottkyCross(x, 0, black);
label(btex $x$ etex, x+(-0.2xu,-0.2xu));
SchottkyCross(pj, 0, black);
label(btex $P_j^\prime$ etex, pj+(-0.25xu,-0.1xu));
SchottkyCross(ppj, 0, black);
label(btex $P_j$ etex, ppj+(-0.25xu,+0.1xu));

ix = ci + (-0.3xu, -0.5xu);
ipj = cpj + (-0.3xu, -0.3xu);
ippj = cpj + (0.1xu,-0.25xu);
%SchottkyCross(ix, 0, black);
label(btex $\gamma_1 x$ etex, ix+(-0.1xu,+0.2xu));
SchottkyPathArrowAlt(x, ix);
%SchottkyCross(ipj, 0, black);
label(btex $\gamma_2 P_j^\prime$ etex, ipj+(0.15xu,-0.25xu));
SchottkyPathArrow(pj, ipj);
%SchottkyCross(ippj, 0, black);
label(btex $\gamma_2 P_j$ etex, ippj+(0.25xu,0.3xu));
SchottkyPathArrow(ppj, ippj);
\end{mpostfig}

\begin{mpostfig}[label=Situation4]
pair ci, cpi, fpi, fppi, x, ix[];
ci = (2xu,1.8xu);
cpi = (1.2xu, -0.3xu);
ri :=  1xu;
rpi := 0.9xu;
SchottkyACircle(ci, ri, ecolor, fgray);
label(btex $\mathfrak{A}_i$ etex, ci + (ri,-ri));
SchottkyCircleDashed(cpi, rpi, ecolor, fgray);

x = (0xu, 0xu);
SchottkyCross(x, 0, black);
label(btex $x$ etex, x+(-0.2xu,-0.2xu));

ix1 = ci + (-0.3xu, -0.4xu);
ix2 = cpi + (+0xu, -0.25xu);
fpi = ci + (0.3xu, -0.2xu);
fppi = cpi + (0.2xu, 0.15xu);
label(btex $\gamma_1 x$ etex, ix1+(-0.1xu,+0.2xu));
SchottkyPathArrowAlt(x, ix1);
SchottkyPathArrow(x, ix2);
label(btex $\gamma_1 x$ etex, ix2+(0.1xu,-0.25xu));
SchottkyCross(fpi, 0, black);
label(btex $P_i$ etex, fpi+(0.1xu,0.25xu));
SchottkyCross(fppi, 0, black);
label(btex $P_i^\prime$ etex, fppi+(-0.25xu,0.1xu));
\end{mpostfig}

\begin{mpostfig}[label=genustwoSchottky]
		%%%%%%%%%%%%%%%%%%%%%%%%%%%%%%%%%%%%%%%%%%%%%%%%%%%%%%%%%%%%%%%%%%%%%%%%%%%%%%%%
		%%%%%%%%%%%%%%%%%%%%%%%%%%%%%%%%%%%%%%%%%%%%%%%%%%%%%%%%%%%%%%%%%%%%%%%%%%%%%%%%
		% set pensize
		pens:=0.7pt;
		% Draw axes
		drawarrow (-2.2xu, 0) -- (2.5xu, 0) pensize (1.0pt);  % x-axis
		drawarrow (0, -2.0xu) -- (0, 2.3xu) pensize (1.0pt);  % y-axis
		% some variables
		pair rpos; pair rrpos; 
		pair midpos, midposright, midposleft;
		path con, rA, rrA;
		path lspr, lsprr, lsprupper, lsprlower, lsprrupper, lsprrlower;
		
		rcirc:=0.9xu;
		rpos:=(1.4xu,1.5xu);
		rrcirc:=1.3xu;
		rrpos:=(-1.0xu,1.0xu);
		% Draw circles
		%fill fullcircle scaled (rcirc) shifted rpos pensize(pens) withcolor fgray;
		fill fullcircle scaled (rcirc+0.11xu) shifted rpos pensize(pens) withcolor fgray;
		%draw fullcircle scaled (rcirc) shifted rpos pensize(pens);
		fill fullcircle scaled (rrcirc+0.11xu) shifted rrpos pensize(pens) withcolor fgray;
		%draw fullcircle scaled (rrcirc) shifted rrpos pensize(pens);
		% Define A-cycles
		rA:=fullcircle scaled (rcirc+0.11xu) shifted rpos;
		rrA:=fullcircle scaled (rrcirc+0.11xu) shifted rrpos;
		% Calculate the difference between centers:
		con = rpos--rrpos;
		%the following works for any path
		midpos := ((rcirc/(rcirc + rrcirc))[rpos,rrpos]);
		midposright := ((1.06*rcirc/(rcirc + rrcirc))[rpos,rrpos]);
		midposleft := ((0.94*rcirc/(rcirc + rrcirc))[rpos,rrpos]);
		
		numeric appanglea, appangleb;
		appanglea:=76;
		appangleb:=appanglea + 180;
		% Modify the center positions slightly, to not have the spirals start at the center.
		rpos:=(1.4xu,1.3xu);
		rrpos:=(-0.8xu,1.1xu);
		% the total spiraling angle should be the same for both half loxodromies. 
		lsprr:=logspiralder(rrpos, midpos, 560, 12, appanglea);
		draw lsprr pensize(pens);
		lspr:=logspiralder(rpos, midpos, 560, 12, appangleb);
		draw lspr pensize(pens);
		
		% define a slightly larger and a slightly smaller log spiral branch for plotting the B-cycles: 
		lsprupper:=logspiralder(rpos, midposright, 595, 12, appangleb);
		lsprlower:=logspiralder(rpos, midposleft, 528, 12, appangleb);
		lsprrupper:=logspiralder(rrpos, midposright, 528, 12, appanglea);
		lsprrlower:=logspiralder(rrpos, midposleft, 588, 12, appanglea);
		
		% Draw B-cycles
		draw lsprupper cutbefore rA pensize (pens) withcolor bcolor;
		midarrowblue ((lsprrupper cutbefore rrA),2.4);
		draw lsprlower cutbefore rA pensize(pens) withcolor bcolor;
		draw lsprrupper cutbefore rrA pensize (pens) withcolor bcolor;
		draw lsprrlower cutbefore rrA pensize (pens) withcolor bcolor;
		midarrowblue (lsprlower cutbefore rA,2.4);

		% Draw A-cycles
		draw (rA cutafter lsprupper) pensize (pens) withcolor ecolor;
		draw (rA cutbefore lsprlower) pensize (pens) withcolor ecolor;
		midarrow (reverse(rA cutbefore lsprlower),0.3);
		draw (rrA cutbefore lsprrupper) pensize (pens) withcolor ecolor;
		draw (rrA cutafter lsprrlower) pensize (pens) withcolor ecolor;
		midarrow (reverse(rrA cutafter lsprrlower),3.4);

		% Labels 
		label(btex $\mathfrak{A}_1^{-1}$ etex, (-1.75xu, 1.7xu));
		label(btex $\mathfrak{A}_1$ etex, (1.9xu, 0.9xu));
		label(btex $\mathfrak{B}_1$ etex, (0.66xu, 0.75xu));
		label(btex $\mathfrak{B}_1^{-1}$ etex, (0.35xu, 1.8xu));
		label.bot(btex $P_1^\prime$ etex, rrpos);
		label.top(btex $P_1$ etex, rpos+(-0.1xu,0xu));

		% Dots at P and Pprime
		draw fullcircle scaled (0.03xu) shifted rpos pensize(pens) withcolor black; 
		draw fullcircle scaled (0.03xu) shifted rrpos pensize(pens) withcolor black; 
		
		% second pair of circles
		rcirc:=1.2xu;
		rpos:=(1.4xu,-1.1xu);
		rrcirc:=1.0xu;
		rrpos:=(-1.2xu,-1.0xu);
		% Draw circles
		fill fullcircle scaled (rcirc+0.11xu) shifted rpos pensize(pens) withcolor fgray;
		%draw fullcircle scaled (rcirc) shifted rpos pensize(pens);
		fill fullcircle scaled (rrcirc+0.11xu) shifted rrpos pensize(pens) withcolor fgray;
		%draw fullcircle scaled (rrcirc) shifted rrpos pensize(pens);

		% Define A-cycles
		rA:=fullcircle scaled (rcirc+0.11xu) shifted rpos;
		rrA:=fullcircle scaled (rrcirc+0.11xu) shifted rrpos;
		% Calculate the difference between centers:
		%the following works for any path
		midpos := ((rcirc/(rcirc + rrcirc))[rpos,rrpos]);
		midposright := ((1.06*rcirc/(rcirc + rrcirc))[rpos,rrpos]);
		midposleft := ((0.94*rcirc/(rcirc + rrcirc))[rpos,rrpos]);
		
		numeric appanglea, appangleb;
		appanglea:=56;
		appangleb:=appanglea + 180;
		% Modify the center positions slightly, to not have the spirals start at the center.
		rpos:=(1.4xu,-1.3xu);
		rrpos:=(-1.05xu,-0.95xu);
		% the total spiraling angle should be the same for both half loxodromies. 
		lsprr:=logspiralder(rrpos, midpos, 560, 12, appanglea);
		draw lsprr pensize(pens);
		lspr:=logspiralder(rpos, midpos, 560, 12, appangleb);
		draw lspr pensize(pens);
		
		% define a slightly larger and a slightly smaller log spiral branch for plotting the B-cycles: 
		lsprupper:=logspiralder(rpos, midposright, 595, 12, appangleb);
		lsprlower:=logspiralder(rpos, midposleft, 528, 12, appangleb);
		lsprrupper:=logspiralder(rrpos, midposright, 528, 12, appanglea);
		lsprrlower:=logspiralder(rrpos, midposleft, 588, 12, appanglea);
		
		% Draw B-cycles
		draw lsprupper cutbefore rA pensize (pens) withcolor bcolor;
		midarrowblue (reverse(lsprupper cutbefore rA),0.35);
		draw lsprlower cutbefore rA pensize(pens) withcolor bcolor;
		midarrowblue ((lsprlower cutbefore rA),1.6);
		draw lsprrupper cutbefore rrA pensize (pens) withcolor bcolor;
		draw lsprrlower cutbefore rrA pensize (pens) withcolor bcolor;

		% Draw A-cycles
		draw (rA cutafter lsprupper) pensize (pens) withcolor ecolor;
		draw (rA cutbefore lsprlower) pensize (pens) withcolor ecolor;
		midarrow (reverse(rA cutbefore lsprlower),0.3);
		draw (rrA cutbefore lsprrupper) pensize (pens) withcolor ecolor;
		draw (rrA cutafter lsprrlower) pensize (pens) withcolor ecolor;
		midarrow (reverse(rrA cutafter lsprrlower),3.4);

		% Labels 
		label(btex $\mathfrak{A}_2^{-1}$ etex, (-1.85xu, -0.43xu));
		label(btex $\mathfrak{A}_2$ etex, (2.1xu, -1.8xu));
		label(btex $\mathfrak{B}_2$ etex, (0.42xu, -1.15xu));
		label(btex $\mathfrak{B}_2^{-1}$ etex, (0.35xu, -0.24xu));
		label.bot(btex $P_2^\prime$ etex, rrpos);
		label.top(btex $P_2$ etex, rpos);

		% Dots at P and Pprime
		draw fullcircle scaled (0.03xu) shifted rpos pensize(pens) withcolor black; 
		draw fullcircle scaled (0.03xu) shifted rrpos pensize(pens) withcolor black; 
		% pic[2]:=currentpicture;
		% currentpicture:=nullpicture;

		%%%%%%%%%%%%%%%%%%%%%%%%%%%%%%%%%%%%%%%%%%%%%%%%%%%%%%%%%%%%%%%%%%%%%%%%%%%%%%%%
		%%%%%%%%%%%%%%%%%%%%%%%%%%%%%%%%%%%%%%%%%%%%%%%%%%%%%%%%%%%%%%%%%%%%%%%%%%%%%%%%
		
		% labels (a), (b)
		% label.rt(btex (a) etex, (-3.5xu, 2.2xu));
		% label.rt(btex (b) etex, (4.0xu, 2.2xu));
	
		% draw pic[1] scaled 1.0 shifted (1.5xu,0.0xu);
		% draw pic[2] scaled 1.0 shifted (7.0xu,0.0xu);
\end{mpostfig}

\begin{mpostfig}[label=genustwo]
	%%%%%%%%%%%%%%%%%%%%%%%%%%%%%%%%%%%%%%%%%%%%%%%%%%%%%%%%%%%%%%%%%%%%%%%%%%%%%%%%
	%%%%%%%%%%%%%%%%%%%%%%%%%%%%%%%%%%%%%%%%%%%%%%%%%%%%%%%%%%%%%%%%%%%%%%%%%%%%%%%%
	% double torus
	
	draw (-3xu,-0.15xu) .. controls (-2.9xu,-0.18xu) and (-2.9xu,-0.92xu) .. ( -3xu,-0.95xu) dashed evenly pensize (1.0pt) withcolor ecolor;
	draw (-4.2xu,0xu) ..controls (-4.2xu,0.6xu) and (-1.8xu,0.6xu) .. (-1.8xu,0xu) pensize (1.0pt) withcolor bcolor;
	draw (-4.2xu,0xu) ..controls (-4.2xu,-0.6xu) and (-1.8xu,-0.6xu) .. (-1.8xu,0xu) pensize (1.0pt) withcolor bcolor;
	midarrowblue ((-4.2xu,0xu) ..controls (-4.2xu,-0.6xu) and (-1.8xu,-0.6xu) .. (-1.8xu,0xu),0.6);
	draw (-4xu,0.03xu){dir -20}..(-2xu,0.03xu) pensize (1.0pt);
	draw (-3.8xu,-0.02xu){dir 25}..(-2.2xu,-0.02xu) pensize (1.0pt);
	draw (-3xu,-0.15xu) .. controls (-3.1xu,-0.18xu) and (-3.1xu,-0.92xu) .. ( -3xu,-0.95xu) pensize (1.0pt) withcolor ecolor;
	midarrowred ((-3xu,-0.15xu) .. controls (-3.1xu,-0.18xu) and (-3.1xu,-0.92xu) .. ( -3xu,-0.95xu),0.65);
	
	draw (0xu,-0.15xu) .. controls (0.1xu,-0.18xu) and (0.1xu,-0.92xu) .. ( 0xu,-0.95xu) dashed evenly pensize (1.0pt) withcolor ecolor;
	draw (-1.2xu,0xu) ..controls (-1.2xu,0.6xu) and (1.2xu,0.6xu) .. (1.2xu,0xu) pensize (1.0pt) withcolor bcolor;
	draw (-1.2xu,0xu) ..controls (-1.2xu,-0.6xu) and (1.2xu,-0.6xu) .. (1.2xu,0xu) pensize (1.0pt) withcolor bcolor;
	midarrowblue ((-1.2xu,0xu) ..controls (-1.2xu,-0.6xu) and (1.2xu,-0.6xu) .. (1.2xu,0xu),0.6);
	draw (-1xu,0.03xu){dir -20}..(1xu,0.03xu) pensize (1.0pt);
	draw (-0.8xu,-0.02xu){dir 25}..(0.8xu,-0.02xu) pensize (1.0pt);
	draw (0xu,-0.15xu) .. controls (-0.1xu,-0.18xu) and (-0.1xu,-0.92xu) .. ( 0xu,-0.95xu) pensize (1.0pt) withcolor ecolor;
	midarrowred ((0xu,-0.15xu) .. controls (-0.1xu,-0.18xu) and (-0.1xu,-0.92xu) .. ( 0xu,-0.95xu),0.65);
	
	draw halfcircle xscaled 1.9xu yscaled 3.5xu rotated -90 ..(-1.5xu,0.7xu){left}..(-3xu,0.95xu){left} pensize (1pt);
	draw halfcircle xscaled 1.9xu yscaled 3.5xu rotated 90 shifted (-3xu,0xu) ..(-1.5xu,-0.7xu){right}..(0xu,-0.95xu){right} pensize (1pt);

	% Labels 
	label(btex $\mathfrak{A}_1$ etex, (-2.7xu, -1.2xu));
	label(btex $\mathfrak{A}_2$ etex, (0.3xu, -1.2xu));
	label(btex $\mathfrak{B}_1$ etex, (-1.63xu, 0.3xu));
	label(btex $\mathfrak{B}_2$ etex, (1.37xu, 0.3xu));

	% pic[1]:=currentpicture;
	% currentpicture:=nullpicture;
\end{mpostfig}

\begin{mpostfig}[label=shottky-degeneration]
SchottkyAxes((-3.1238xu, 0.0000xu), (3.1238xu, 0.0000xu), (0.0000xu, -3.0870xu), (0.0000xu, 3.0870xu));
SchottkyCircleDashed((-1.4000xu,2.1000xu), 0.7000xu, ecolor, fgray);
label(btex $C_1^\prime$ etex, (-0.5550xu, 1.2550xu));
SchottkyCircle((1.7500xu,1.1900xu), 0.3500xu, ecolor, fgray);
label(btex $C_1$ etex, (2.3475xu, 0.5925xu));
SchottkyCircleDashed((1.4000xu,-0.7000xu), 1.0500xu, bcolor, fgray);
label(btex $C_2^\prime$ etex, (2.4925xu, -1.7925xu));
SchottkyCircle((-2.4500xu,0.5600xu), 0.5250xu, bcolor, fgray);
label(btex $C_2$ etex, (-1.7288xu, -0.1612xu));
SchottkyCircleDashedNoFilling((0.7000xu,2.4500xu), 0.4900xu, (0.0000,0.3000,0.0000));
SchottkyCircleNoFilling((-1.4000xu,-1.4000xu), 1.0500xu, (0.0000,0.3000,0.0000));
SchottkyCircleDashedNoFilling((0.7204xu,2.4146xu), 0.3217xu, (0.2748,0.4717,0.2748));
SchottkyCircleNoFilling((-1.4204xu,-1.3646xu), 0.6894xu, (0.2748,0.4717,0.2748));
SchottkyCircleDashedNoFilling((0.7396xu,2.3813xu), 0.1633xu, (0.5334,0.6334,0.5334));
SchottkyCircleNoFilling((-1.4396xu,-1.3313xu), 0.3499xu, (0.5334,0.6334,0.5334));
SchottkyCircleDashedNoFilling((0.7588xu,2.3480xu), 0.0049xu, (0.7920,0.7950,0.7920));
SchottkyCircleNoFilling((-1.4588xu,-1.2980xu), 0.0105xu, (0.7920,0.7950,0.7920));
label(btex $C_3^\prime$ etex, (1.3965xu, 1.7535xu));
label(btex $C_3$ etex, (-0.3075xu, -2.4925xu));
\end{mpostfig}

\begin{mpostfig}[label=schottky-degeneration-sep]
SchottkyAxes((-3.1238xu, 0.0000xu), (3.1238xu, 0.0000xu), (0.0000xu, -3.0870xu), (0.0000xu, 3.0870xu));
SchottkyCircleDashed((-1.4000xu,2.1000xu), 0.7000xu, ecolor, fgray);
label(btex $C_1^\prime$ etex, (-0.5550xu, 1.2550xu));
SchottkyCircle((1.7500xu,1.1900xu), 0.3500xu, ecolor, fgray);
label(btex $C_1$ etex, (2.3475xu, 0.5925xu));
SchottkyCircleDashed((1.4000xu,-0.7000xu), 1.0500xu, bcolor, fgray);
label(btex $C_2^\prime$ etex, (2.4925xu, -1.7925xu));
SchottkyCircle((-2.4500xu,0.5600xu), 0.5250xu, bcolor, fgray);
label(btex $C_2$ etex, (-1.7288xu, -0.1612xu));
drawcross((0.7594xu,2.3469xu),2,0,0.7pt,(0.0000,0.3000,0.0000));
drawcross((-1.4594xu,-1.2969xu),2,0,0.7pt,(0.0000,0.3000,0.0000));
SchottkyCircleDashedNoFilling((0.7000xu,2.4500xu), 0.4900xu, (0.0000,0.3000,0.0000));
SchottkyCircleNoFilling((-1.4000xu,-1.4000xu), 1.0500xu, (0.0000,0.3000,0.0000));
drawcross((0.4784xu,1.4786xu),2,0,0.7pt,(0.2960,0.4850,0.2960));
drawcross((-0.9194xu,-0.8171xu),2,0,0.7pt,(0.2960,0.4850,0.2960));
SchottkyCircleDashedNoFilling((0.4410xu,1.5435xu), 0.3087xu, (0.2960,0.4850,0.2960));
SchottkyCircleNoFilling((-0.8820xu,-0.8820xu), 0.6615xu, (0.2960,0.4850,0.2960));
drawcross((0.2202xu,0.6806xu),2,0,0.7pt,(0.5680,0.6550,0.5680));
drawcross((-0.4232xu,-0.3761xu),2,0,0.7pt,(0.5680,0.6550,0.5680));
SchottkyCircleDashedNoFilling((0.2030xu,0.7105xu), 0.1421xu, (0.5680,0.6550,0.5680));
SchottkyCircleNoFilling((-0.4060xu,-0.4060xu), 0.3045xu, (0.5680,0.6550,0.5680));
drawcross((0.0076xu,0.0235xu),2,0,0.7pt,(0.7920,0.7950,0.7920));
drawcross((-0.0146xu,-0.0130xu),2,0,0.7pt,(0.7920,0.7950,0.7920));
SchottkyCircleDashedNoFilling((0.0070xu,0.0245xu), 0.0049xu, (0.7920,0.7950,0.7920));
SchottkyCircleNoFilling((-0.0140xu,-0.0140xu), 0.0105xu, (0.7920,0.7950,0.7920));
label(btex $C_3^\prime$ etex, (1.3965xu, 1.7535xu));
label(btex $C_3$ etex, (-0.3075xu, -2.4925xu));
label(btex $P_3^\prime$ etex, (0.4794xu, 2.5569xu));
label(btex $P_3$ etex, (-1.6694xu, -1.5069xu));
\end{mpostfig}

\begin{mpostfig}[label=degen_nonsep]
	% genus-2
	draw (-3xu,-0.15xu) .. controls (-2.9xu,-0.18xu) and (-2.9xu,-0.92xu) .. ( -3xu,-0.95xu) dashed evenly pensize (1.0pt) withcolor ecolor;
	draw (-4.2xu,0xu) ..controls (-4.2xu,0.6xu) and (-1.8xu,0.6xu) .. (-1.8xu,0xu) pensize (1.0pt) withcolor bcolor;
	draw (-4.2xu,0xu) ..controls (-4.2xu,-0.6xu) and (-1.8xu,-0.6xu) .. (-1.8xu,0xu) pensize (1.0pt) withcolor bcolor;
	draw (-4xu,0.03xu){dir -20}..(-2xu,0.03xu) pensize (1.0pt);
	draw (-3.8xu,-0.02xu){dir 25}..(-2.2xu,-0.02xu) pensize (1.0pt);
	draw (-3xu,-0.15xu) .. controls (-3.1xu,-0.18xu) and (-3.1xu,-0.92xu) .. ( -3xu,-0.95xu) pensize (1.0pt) withcolor ecolor;
	
	draw (0xu,-0.15xu) .. controls (0.1xu,-0.18xu) and (0.1xu,-0.92xu) .. ( 0xu,-0.95xu) dashed evenly pensize (1.0pt) withcolor ecolor;
	draw (-1.2xu,0xu) ..controls (-1.2xu,0.6xu) and (1.2xu,0.6xu) .. (1.2xu,0xu) pensize (1.0pt) withcolor bcolor;
	draw (-1.2xu,0xu) ..controls (-1.2xu,-0.6xu) and (1.2xu,-0.6xu) .. (1.2xu,0xu) pensize (1.0pt) withcolor bcolor;
	draw (-1xu,0.03xu){dir -20}..(1xu,0.03xu) pensize (1.0pt);
	draw (-0.8xu,-0.02xu){dir 25}..(0.8xu,-0.02xu) pensize (1.0pt);
	draw (0xu,-0.15xu) .. controls (-0.1xu,-0.18xu) and (-0.1xu,-0.92xu) .. ( 0xu,-0.95xu) pensize (1.0pt) withcolor ecolor;
	
	draw halfcircle xscaled 1.9xu yscaled 3.5xu rotated -90 ..(-1.5xu,0.7xu){left}..(-3xu,0.95xu){left} pensize (1pt);
	draw halfcircle xscaled 1.9xu yscaled 3.5xu rotated 90 shifted (-3xu,0xu) ..(-1.5xu,-0.7xu){right}..(0xu,-0.95xu){right} pensize (1pt);
	pic[1]:=currentpicture;
	currentpicture:=nullpicture;

	%genus-2 pinched
	draw (-3xu,-0.15xu) .. controls (-2.9xu,-0.18xu) and (-2.9xu,-0.92xu) .. ( -3xu,-0.95xu) dashed evenly pensize (1.0pt) withcolor ecolor;
	draw (-4.2xu,0xu) ..controls (-4.2xu,0.6xu) and (-1.8xu,0.6xu) .. (-1.8xu,0xu) pensize (1.0pt) withcolor bcolor;
	draw (-4.2xu,0xu) ..controls (-4.2xu,-0.6xu) and (-1.8xu,-0.6xu) .. (-1.8xu,0xu) pensize (1.0pt) withcolor bcolor;
	draw (-4xu,0.03xu){dir -20}..(-2xu,0.03xu) pensize (1.0pt);
	draw (-3.8xu,-0.02xu){dir 25}..(-2.2xu,-0.02xu) pensize (1.0pt);
	draw (-3xu,-0.15xu) .. controls (-3.1xu,-0.18xu) and (-3.1xu,-0.92xu) .. ( -3xu,-0.95xu) pensize (1.0pt) withcolor ecolor;

	% back of right A-cycle
	draw halfcircle rotated -90 xscaled 0.048xu yscaled 0.12xu shifted (0xu,-0.56xu) pensize (1.0pt) withcolor ecolor;
	%back of right B-cycle
	draw (-1.2xu,0xu) ..controls (-1.2xu,0.6xu) and (1.2xu,0.6xu) .. (1.2xu,0xu) pensize (1.0pt) withcolor bcolor;
	%front of right B-cycle
	draw (-1.2xu,0xu) ..controls (-1.2xu,-0.75xu) and (1.2xu,-0.75xu) .. (1.2xu,0xu) pensize (1.0pt) withcolor bcolor;
	%front of right handle
	draw (-1xu,0.03xu){dir -20}..{dir 0}(0xu, -0.5xu){dir 0}..{dir 20}(1xu,0.03xu) pensize (1.0pt);
	%back of right handle
	draw (-0.85xu,-0.02xu){dir 18}..(0.85xu,-0.02xu) pensize (1.0pt);
	%right part of outter line
	draw (0xu,-0.62xu){right} .. (0.75xu, -0.65xu){right} .. quartercircle xscaled 3.5xu yscaled 1.9xu ..(-1.5xu,0.7xu){left}..(-3xu,0.95xu){left} pensize (1pt);
	%left part of outter line
	draw halfcircle xscaled 1.9xu yscaled 3.5xu rotated 90 shifted (-3xu,0xu) ..(-1.5xu,-0.55xu){right}..(-0.75xu, -0.65xu){right}..(0xu,-0.62xu){right} pensize (1pt);
	%front of right A-cycle
	draw halfcircle rotated 90 xscaled 0.048xu yscaled 0.12xu shifted (0xu,-0.56xu) pensize (1.0pt) withcolor ecolor;
	pic[2]:=currentpicture;
	currentpicture:=nullpicture;

	%genus-1 with 2 punctures
	draw (-3xu,-0.15xu) .. controls (-2.9xu,-0.18xu) and (-2.9xu,-0.92xu) .. ( -3xu,-0.95xu) dashed evenly pensize (1.0pt) withcolor ecolor;
	draw (-4.2xu,0xu) ..controls (-4.2xu,0.6xu) and (-1.8xu,0.6xu) .. (-1.8xu,0xu) pensize (1.0pt) withcolor bcolor;
	draw (-4.2xu,0xu) ..controls (-4.2xu,-0.6xu) and (-1.8xu,-0.6xu) .. (-1.8xu,0xu) pensize (1.0pt) withcolor bcolor;
	draw (-4xu,0.03xu){dir -20}..(-2xu,0.03xu) pensize (1.0pt);
	draw (-3.8xu,-0.02xu){dir 25}..(-2.2xu,-0.02xu) pensize (1.0pt);
	draw (-3xu,-0.15xu) .. controls (-3.1xu,-0.18xu) and (-3.1xu,-0.92xu) .. ( -3xu,-0.95xu) pensize (1.0pt) withcolor ecolor;
	draw fullcircle xscaled 1.9xu yscaled 3.5xu rotated -90 shifted (-3xu,0xu)  pensize (1pt);
	%punctures
	draw (-1.5xu,-0.2xu){dir -130} .. (-2.5xu,-0.7xu) pensize (0.7pt) withcolor bcolor;
	drawcross((-1.5xu,-0.2xu),2,35,0.7pt,ecolor);
	drawcross((-2.5xu,-0.7xu),2,10,0.7pt,ecolor);

	pic[3]:=currentpicture;
	currentpicture:=nullpicture;
	
	draw pic[1] shifted (0xu,0xu);
	draw nicearrow scaled 0.4 rotated -90 shifted (1.2xu,-0.85xu);
	draw pic[2] shifted (0xu,-2.5xu);
	draw nicearrow scaled 0.4 rotated -100 shifted (0.5xu,-3.6xu);
	draw pic[3] shifted (1.5xu,-5xu);
\end{mpostfig}

\begin{mpostfig}[label=degen_sep]
	% genus-2
	draw (-3xu,-0.15xu) .. controls (-2.9xu,-0.18xu) and (-2.9xu,-0.92xu) .. ( -3xu,-0.95xu) dashed evenly pensize (1.0pt) withcolor ecolor;
	draw (-4.2xu,0xu) ..controls (-4.2xu,0.6xu) and (-1.8xu,0.6xu) .. (-1.8xu,0xu) pensize (1.0pt) withcolor bcolor;
	draw (-4.2xu,0xu) ..controls (-4.2xu,-0.6xu) and (-1.8xu,-0.6xu) .. (-1.8xu,0xu) pensize (1.0pt) withcolor bcolor;
	draw (-4xu,0.03xu){dir -20}..(-2xu,0.03xu) pensize (1.0pt);
	draw (-3.8xu,-0.02xu){dir 25}..(-2.2xu,-0.02xu) pensize (1.0pt);
	draw (-3xu,-0.15xu) .. controls (-3.1xu,-0.18xu) and (-3.1xu,-0.92xu) .. ( -3xu,-0.95xu) pensize (1.0pt) withcolor ecolor;
	
	draw (0xu,-0.15xu) .. controls (0.1xu,-0.18xu) and (0.1xu,-0.92xu) .. ( 0xu,-0.95xu) dashed evenly pensize (1.0pt) withcolor ecolor;
	draw (-1.2xu,0xu) ..controls (-1.2xu,0.6xu) and (1.2xu,0.6xu) .. (1.2xu,0xu) pensize (1.0pt) withcolor bcolor;
	draw (-1.2xu,0xu) ..controls (-1.2xu,-0.6xu) and (1.2xu,-0.6xu) .. (1.2xu,0xu) pensize (1.0pt) withcolor bcolor;
	draw (-1xu,0.03xu){dir -20}..(1xu,0.03xu) pensize (1.0pt);
	draw (-0.8xu,-0.02xu){dir 25}..(0.8xu,-0.02xu) pensize (1.0pt);
	draw (0xu,-0.15xu) .. controls (-0.1xu,-0.18xu) and (-0.1xu,-0.92xu) .. ( 0xu,-0.95xu) pensize (1.0pt) withcolor ecolor;
	
	% shrinking cycle back
	draw (-1.5xu,0.7xu) .. controls (-1.3xu,0.65xu) and (-1.3xu,-0.65xu) .. (-1.5xu,-0.7xu) dashed evenly pensize (1.0pt) withcolor gcolor;
	% right top part of outter line
	draw halfcircle xscaled 1.9xu yscaled 3.5xu rotated -90 ..(-1.5xu,0.7xu){left}..(-3xu,0.95xu){left} pensize (1pt);
	% shrinking cycle front
	draw (-1.5xu,0.7xu) .. controls (-1.7xu,0.65xu) and (-1.7xu,-0.65xu) .. (-1.5xu,-0.7xu) pensize (1.0pt) withcolor gcolor;
	% left bottom part of outter line
	draw halfcircle xscaled 1.9xu yscaled 3.5xu rotated 90 shifted (-3xu,0xu) ..(-1.5xu,-0.7xu){right}..(0xu,-0.95xu){right} pensize (1pt);
	pic[1]:=currentpicture;
	currentpicture:=nullpicture;

	%genus-2 pinched
	% left handle
	% A-cycle back
	draw (-3.4xu,-0.04xu) .. controls (-3.3xu,-0.04xu) and (-3.3xu,-0.8xu) .. ( -3.4xu,-0.8xu) dashed evenly pensize (1.0pt) withcolor ecolor;
	% B-cycle back
	draw (-4.2xu,0xu) ..controls (-4.2xu,0.5xu) and (-2.6xu,0.5xu) .. (-2.6xu,0xu) pensize (1.0pt) withcolor bcolor;
	% B-cycle front
	draw (-4.2xu,0xu) ..controls (-4.2xu,-0.5xu) and (-2.6xu,-0.5xu) .. (-2.6xu,0xu) pensize (1.0pt) withcolor bcolor;
	% handle front
	draw (-3.8xu,0.03xu){dir -20}..(-3xu,0.03xu) pensize (1.0pt);
	% handle back
	draw (-3.7xu,0xu){dir 25}..(-3.1xu,0xu) pensize (1.0pt);
	% A-cycle front
	draw (-3.4xu,-0.04xu) .. controls (-3.5xu,-0.04xu) and (-3.5xu,-0.8xu) .. ( -3.4xu,-0.8xu) pensize (1.0pt) withcolor ecolor;
	
	draw (0.4xu,-0.04xu) .. controls (0.5xu,-0.04xu) and (0.5xu,-0.8xu) .. ( 0.4xu,-0.8xu) dashed evenly pensize (1.0pt) withcolor ecolor;
	draw (-0.4xu,0xu) ..controls (-0.4xu,0.5xu) and (1.2xu,0.5xu) .. (1.2xu,0xu) pensize (1.0pt) withcolor bcolor;
	draw (-0.4xu,0xu) ..controls (-0.4xu,-0.5xu) and (1.2xu,-0.5xu) .. (1.2xu,0xu) pensize (1.0pt) withcolor bcolor;
	draw (0xu,0.03xu){dir -20}..(0.8xu,0.03xu) pensize (1.0pt);
	draw (0.1xu,0xu){dir 25}..(0.7xu,0xu) pensize (1.0pt);
	draw (0.4xu,-0.04xu) .. controls (0.3xu,-0.04xu) and (0.3xu,-0.8xu) .. ( 0.4xu,-0.8xu) pensize (1.0pt) withcolor ecolor;
	
	% shrinking cycle back
	draw halfcircle rotated -90 xscaled 0.048xu yscaled 0.12xu shifted (-1.5xu,0xu) pensize (1.0pt) withcolor gcolor;
	% right top part of outter line
	draw halfcircle xscaled 1.6xu yscaled 2.7xu rotated -90 shifted (0.4xu, 0xu) ..(-0.5xu,0.6xu)..(-1.5xu,0.06xu){left}..(-2.5xu, 0.6xu)..(-3.4xu,0.8xu){left} pensize (1pt);
	% left bottom part of outter line
	draw halfcircle xscaled 1.6xu yscaled 2.7xu rotated 90 shifted (-3.4xu,0xu) .. (-2.5xu,-0.6xu)..(-1.5xu,-0.06xu){right}..(-0.5xu, -0.6xu)..(0.4xu,-0.8xu){right} pensize (1pt);
	% shrinking cycle front
	draw halfcircle rotated 90 xscaled 0.048xu yscaled 0.12xu shifted (-1.5xu,0xu) pensize (1.0pt) withcolor gcolor;
	pic[2]:=currentpicture;
	currentpicture:=nullpicture;

	%genus-1 with punctures
	% left handle
	% A-cycle back
	draw (-3.4xu,-0.04xu) .. controls (-3.3xu,-0.04xu) and (-3.3xu,-0.8xu) .. ( -3.4xu,-0.8xu) dashed evenly pensize (1.0pt) withcolor ecolor;
	% B-cycle back
	draw (-4.2xu,0xu) ..controls (-4.2xu,0.5xu) and (-2.6xu,0.5xu) .. (-2.6xu,0xu) pensize (1.0pt) withcolor bcolor;
	% B-cycle front
	draw (-4.2xu,0xu) ..controls (-4.2xu,-0.5xu) and (-2.6xu,-0.5xu) .. (-2.6xu,0xu) pensize (1.0pt) withcolor bcolor;
	% handle front
	draw (-3.8xu,0.03xu){dir -20}..(-3xu,0.03xu) pensize (1.0pt);
	% handle back
	draw (-3.7xu,0xu){dir 25}..(-3.1xu,0xu) pensize (1.0pt);
	% A-cycle front
	draw (-3.4xu,-0.04xu) .. controls (-3.5xu,-0.04xu) and (-3.5xu,-0.8xu) .. ( -3.4xu,-0.8xu) pensize (1.0pt) withcolor ecolor;
	
	draw (0.4xu,-0.04xu) .. controls (0.5xu,-0.04xu) and (0.5xu,-0.8xu) .. ( 0.4xu,-0.8xu) dashed evenly pensize (1.0pt) withcolor ecolor;
	draw (-0.4xu,0xu) ..controls (-0.4xu,0.5xu) and (1.2xu,0.5xu) .. (1.2xu,0xu) pensize (1.0pt) withcolor bcolor;
	draw (-0.4xu,0xu) ..controls (-0.4xu,-0.5xu) and (1.2xu,-0.5xu) .. (1.2xu,0xu) pensize (1.0pt) withcolor bcolor;
	draw (0xu,0.03xu){dir -20}..(0.8xu,0.03xu) pensize (1.0pt);
	draw (0.1xu,0xu){dir 25}..(0.7xu,0xu) pensize (1.0pt);
	draw (0.4xu,-0.04xu) .. controls (0.3xu,-0.04xu) and (0.3xu,-0.8xu) .. ( 0.4xu,-0.8xu) pensize (1.0pt) withcolor ecolor;

	% right top part of outter line
	draw fullcircle xscaled 1.6xu yscaled 2.7xu rotated -90 shifted (0.4xu, 0xu) pensize (1pt);
	% left bottom part of outter line
	draw fullcircle xscaled 1.6xu yscaled 2.7xu rotated 90 shifted (-3.4xu,0xu)  pensize (1pt);

	%punctures
	drawcross((-0.8xu,0xu),2,0,0.7pt,gcolor);
	drawcross((-2.2xu,0xu),2,0,0.7pt,gcolor);
	pic[3]:=currentpicture;
	currentpicture:=nullpicture;
	
	draw pic[1] shifted (0xu,0xu);
	draw nicearrow scaled 0.4 rotated -90 shifted (1.2xu,-0.85xu);
	draw pic[2] shifted (0xu,-2.5xu);
	draw nicearrow scaled 0.4 rotated -90 shifted (1.2xu,-3.35xu);
	draw pic[3] shifted (0xu,-5xu);
\end{mpostfig}

%%%%%%%%%%%%%%%%%%%%%%%%%%%%%%%%%%%%%%%%%%%%%%%%%%%%%%%%%%%%%%%%%%%%%%%%%%%%%%%%
% tikz 
%\usepackage{tikz}
%\usetikzlibrary{decorations.markings}
%\usetikzlibrary{decorations.pathmorphing}
\usepackage{graphicx}
\usepackage{subcaption}
\usepackage{caption}
%\usepackage[theorems]{tcolorbox}
%\newtcolorbox{mymathbox}[1][]{colframe=black,colback=white, sharp corners, #1}
%\usepackage{pdfpages}
%\usepackage{bm}
%\tikzset{snake it/.style={decorate, decoration=snake}}

%%%%%%%%%%%%%%%%%%%%%%%%%%%%%%%%%%%%%%%%%%%%%%%%%%%%%%%%%%%%%%%%%%%%%%%%%%%%%%
% color definitions
%\usepackage{xcolor}
\usepackage[dvipsnames,table]{xcolor}
\definecolor{mygreen}{rgb}{0,0.4,0}
\definecolor{myblue}{rgb}{0,0.0,0.4}
\definecolor{refrcolor}{rgb}{0,0.4,0}
\definecolor{cgreen}{rgb}{0,0.7,0}
\definecolor{ecolor}{rgb}{.52,.03,.06}
\definecolor{bgcolor}{rgb}{.96,.95,.80}
\definecolor{bgcolordark}{rgb}{.80,.80,.67}
\definecolor{faint}{rgb}{.80,.80,.80}

%%%%%%%%%%%%%%%%%%%%%%%%%%%%%%%%%%%%%%%%%%%%%%%%%%%%%%%%%%%%%%%%%%%%%%%%%%%%%%%
% Example environment
\usepackage{tcolorbox}
\tcbuselibrary{breakable, skins}
\usetikzlibrary{patterns}

% Define a custom proof environment
\newtcolorbox{myproofbox}[1][]{
	enhanced,
	breakable,
	borderline west={3pt}{0pt}{faint},
	notitle,
	before skip=10pt,
	after skip=10pt,
	colback=white, 
	colframe=white,
	frame hidden,
	boxrule=0pt, 
	boxsep=0pt,
	sharp corners,
	left=9pt, right=0pt, top=1pt, bottom=0pt,
	fontupper=\small,
	#1
}

\makeatletter
\renewenvironment{proof}[1][\proofname]{%
	\begin{myproofbox}\par\pushQED{\qed}\normalfont%
		\topsep6\p@\@plus6\p@\relax%
		\trivlist%
		\item[\hskip\labelsep\itshape#1\@addpunct{.}]%
	}%
	{\popQED\endtrivlist\end{myproofbox}\@endpefalse%
	% the next line avoids indentation in the paragraph following, but does not interfere with section headings.
	\if@noskipsec\leavevmode\fi\noindent\ignorespacesafterend}
\makeatother

\newtheorem{example}[theorem]{Example}

\newtcolorbox{myexamplebox}[1][]{%
	enhanced,
	breakable,
	borderline west={3pt}{0pt}{faint},
	notitle,
	before skip=10pt,
	after skip=10pt,
	colback=white, 
	colframe=white,
	frame hidden,
	boxrule=0pt, 
	boxsep=0pt,
	sharp corners,
	left=9pt, right=0pt, top=1pt, bottom=0pt,
	%fontupper=\small,
	#1
}

\makeatletter
\renewenvironment{example}[1][]{%
	\begin{myexamplebox}\refstepcounter{theorem}%
		\topsep6\p@\@plus6\p@\relax%
		\trivlist%
		%\item[\hskip\labelsep\itshape#1\@addpunct{.}]%
		\item[\hskip\labelsep\textbf{Example \theexample\@addpunct{.}}]%
	}%
	{\endtrivlist\end{myexamplebox}\@endpefalse%
	% the next line avoids indentation in the paragraph following, but does not interfere with section headings.
	\if@noskipsec\leavevmode\fi\noindent\ignorespacesafterend}
\makeatother

%%%%%%%%%%%%%%%%%%%%%%%%%%%%%%%%%%%%%%%%%%%%%%%%%%%%%%%%%%%%%%%%%%%%%%%%%%%%%%%%
% tikz
%\usepackage{tikz}
%\usepackage{tikzit}
%\input{figures/styles.tikzstyles}
%\usetikzlibrary{decorations.pathmorphing}

%%%%%%%%%%%%%%%%%%%%%%%%%%%%%%%%%%%%%%%%%%%%%%%%%%%%%%%%%%%%%%%%%%%%%%%%%%%%%%
% rotate tables and everything else
%\usepackage{rotating}

%%%%%%%%%%%%%%%%%%%%%%%%%%%%%%%%%%%%%%%%%%%%%%%%%%%%%%%%%%%%%%%%%%%%%%%%%%%%%%
% commutative diagrams
%\usepackage[arrow, matrix, curve]{xy}

%%%%%%%%%%%%%%%%%%%%%%%%%%%%%%%%%%%%%%%%%%%%%%%%%%%%%%%%%%%%%%%%%%%%%%%%%%%%%%
% customizable itemization
\usepackage{enumitem}

%%%%%%%%%%%%%%%%%%%%%%%%%%%%%%%%%%%%%%%%%%%%%%%%%%%%%%%%%%%%%%%%%%%%%%%%%%%%%%%%
% tables
\usepackage{tabularx}
\usepackage{booktabs}
%\usepackage{diagbox}

%%%%%%%%%%%%%%%%%%%%%%%%%%%%%%%%%%%%%%%%%%%%%%%%%%%%%%%%%%%%%%%%%%%%%%%%%%%%%%%%
% special itemization for enumerations with pictures.

\newcounter{sitcounter}

\newcommand{\itemwithlabel}[3][n]{%
  \def\tempposarg{#1} % first argument now defines the left or right alignment of the figures
  \def\templeftarg{l}
  \def\temprightarg{r}
  \vspace{1em} % small space above each item
  \noindent
  \refstepcounter{sitcounter}
  \ifx\tempposarg\templeftarg % if left alignment
  \begin{tabularx}{\linewidth}{@{}m{10em}@{\hspace{2em}}X@{}}
    \raisebox{\dimexpr\ht\strutbox-\height}{#2} & \textbf{Situation \arabic{sitcounter}:} #3
  \end{tabularx}%
  \else
  \ifx\tempposarg\temprightarg % if right alignment
  \begin{tabularx}{\linewidth}{@{}X@{\hspace{2em}}m{10em}@{}}
    \textbf{Situation \arabic{sitcounter}:} #3 & \raisebox{\dimexpr\ht\strutbox-\height}{#2}
  \end{tabularx}%
  \else
  \begin{tabularx}{\linewidth}{@{\hspace{2em}}X@{}}
    \textbf{Situation \arabic{sitcounter}:} #3
  \end{tabularx}%
  \fi
  \fi
  \par
}

%%%%%%%%%%%%%%%%%%%%%%%%%%%%%%%%%%%%%%%%%%%%%%%%%%%%%%%%%%%%%%%%%%%%%%%%%%%%%%%%
% custom enumerate for situations
\newlist{situations}{enumerate}{1}
\setlist[situations]{
    labelsep=0.8em,
    leftmargin=0pt, 
    align=left,
    labelindent=0in, 
    labelwidth=0in,
    itemindent=0.8em,
    label=\textbf{Situation \arabic*:},
    ref=\arabic*
}

%%%%%%%%%%%%%%%%%%%%%%%%%%%%%%%%%%%%%%%%%%%%%%%%%%%%%%%%%%%%%%%%%%%%%%%%%%%%%%%%%
%% brackets
%
%\RequirePackage[extdef]{delimset}
%
%\RequirePackage{stmaryrd}
%\SetSymbolFont{stmry}{bold}{U}{stmry}{m}{n}
%
%\newcommand{\vev}{\corr}
%\newcommand{\state}{\ket}
%\newcommand{\costate}{\bra}
%\newcommand{\gcomm}{\delimpair[{[.],}\}}
%\newcommand{\poiss}{\delimpair\{{[.],}\}}
%\newcommand{\cybe}{\delimpair\{{[.],}\}}
%\newcommand{\octprod}{\delimpair({[.]{{\cdot}}})}
%\newcommand{\biform}{\delimpair\langle{[.],}\rangle}
%\renewcommand{\setcond}{\delimpair\{{[.];}\}}
%\newcommand{\pfrac}{\delimpair{.}{/}{.}}
%
%\ifdefined\llbracket
%\newcommand{\liebr}{\delimpair\llbracket{[.],}\rrbracket}
%\else
%\DeclareMathDelimiterSet{\liebr}[2]{\selectdelim[l]{[}{#1},{#2}\selectdelim[r]{]}_\text{Lie}}
%\fi

%%%%%%%%%%%%%%%%%%%%%%%%%%%%%%%%%%%%%%%%%%%%%%%%%%%%%%%%%%%%%%%%%%%%%%%%%%%%%%
% The package "shuffle" is using old pi-type fonts which are not 
% properly treated in several modern text-processing systems. In 
% particular publishing with IOP (JPhysA) causes serious trouble. 
% The following tex-code provides a reasonable shuffle symbol. 
%\usepackage{shuffle}

\makeatletter
\providecommand*{\shuffle}{%
  \mathbin{\mathpalette\shuffle@{}}%
}
\newcommand*{\shuffle@}[2]{%
  % #1: math style
  % #2: unused
  \sbox0{$#1\vcenter{}$}%
  \kern .15\ht0 % side bearing
  \rlap{\vrule height .25\ht0 depth 0pt width 2.5\ht0}%
  \raise.1\ht0\hbox to 2.5\ht0{%
    \vrule height 1.75\ht0 depth -.1\ht0 width .17\ht0 %
    \hfill
    \vrule height 1.75\ht0 depth -.1\ht0 width .17\ht0 %
    \hfill
    \vrule height 1.75\ht0 depth -.1\ht0 width .17\ht0 %
  }%
  \kern .15\ht0 % side bearing
}
\makeatother

%%%%%%%%%%%%%%%%%%%%%%%%%%%%%%%%%%%%%%%%%%%%%%%%%%%%%%%%%%%%%%%%%%%%%%%%%%%%%%
% macros for objects with two-line labels 

\usepackage{xparse}

% nice function for \omwb
\ExplSyntaxOn
\NewDocumentCommand{\Gtargz}{m m}
{
 \Gt\left(\begin{smallmatrix}
 \Gtargz_print:n {#1} \\
 \Gtargz_print:n {#2}
 \end{smallmatrix};z\right)
}
\seq_new:N \l_Gtargz_list_seq
\cs_new_protected:Npn \Gtargz_print:n #1
{
  \seq_set_split:Nnn \l_Gtargz_list_seq { , } { #1 }
  \seq_use:Nn \l_Gtargz_list_seq { , & }
}
\ExplSyntaxOff

\ExplSyntaxOn
\NewDocumentCommand{\Gtargt}{m m}
{
 \Gt\left(\begin{smallmatrix}
 \Gtargt_print:n {#1} \\
 \Gtargt_print:n {#2}
 \end{smallmatrix};t\right)
}
\seq_new:N \l_Gtargt_list_seq
\cs_new_protected:Npn \Gtargt_print:n #1
{
  \seq_set_split:Nnn \l_Gtargt_list_seq { , } { #1 }
  \seq_use:Nn \l_Gtargt_list_seq { , & }
}
\ExplSyntaxOff

\ExplSyntaxOn
\NewDocumentCommand{\Gtargtheta}{m m}
{
 \Gt\left(\begin{smallmatrix}
 \Gtargtheta_print:n {#1} \\
 \Gtargtheta_print:n {#2}
 \end{smallmatrix};\vt\right)
}
\seq_new:N \l_Gtargtheta_list_seq
\cs_new_protected:Npn \Gtargtheta_print:n #1
{
  \seq_set_split:Nnn \l_Gtargtheta_list_seq { , } { #1 }
  \seq_use:Nn \l_Gtargtheta_list_seq { , & }
}
\ExplSyntaxOff

\ExplSyntaxOn
\NewDocumentCommand{\Gtargzt}{m m}
{
 \Gt\left(\begin{smallmatrix}
 \Gtargzt_print:n {#1} \\
 \Gtargzt_print:n {#2}
 \end{smallmatrix};z|\tau\right)
}
\seq_new:N \l_Gtargzt_list_seq
\cs_new_protected:Npn \Gtargzt_print:n #1
{
  \seq_set_split:Nnn \l_Gtargzt_list_seq { , } { #1 }
  \seq_use:Nn \l_Gtargzt_list_seq { , & }
}
\ExplSyntaxOff

\ExplSyntaxOn
\NewDocumentCommand{\Gtargxit}{m m}
{
 \Gt\left(\begin{smallmatrix}
 \Gtargxit_print:n {#1} \\
 \Gtargxit_print:n {#2}
 \end{smallmatrix};\xi|\tau\right)
}
\seq_new:N \l_Gtargxit_list_seq
\cs_new_protected:Npn \Gtargxit_print:n #1
{
  \seq_set_split:Nnn \l_Gtargxit_list_seq { , } { #1 }
  \seq_use:Nn \l_Gtargxit_list_seq { , & }
}
\ExplSyntaxOff

\ExplSyntaxOn
\NewDocumentCommand{\Gtargtt}{m m}
{
 \Gt\left(\begin{smallmatrix}
 \Gtargtt_print:n {#1} \\
 \Gtargtt_print:n {#2}
 \end{smallmatrix};t|\tau\right)
}
\seq_new:N \l_Gtargtt_list_seq
\cs_new_protected:Npn \Gtargtt_print:n #1
{
  \seq_set_split:Nnn \l_Gtargtt_list_seq { , } { #1 }
  \seq_use:Nn \l_Gtargtt_list_seq { , & }
}
\ExplSyntaxOff

\ExplSyntaxOn
\NewDocumentCommand{\Gtargzg}{m m}
{
 \Gt\left(\begin{smallmatrix}
 \Gtargzg_print:n {#1} \\
 \Gtargzg_print:n {#2}
 \end{smallmatrix};z|\SGroup\right)
}
\seq_new:N \l_Gtargzg_list_seq
\cs_new_protected:Npn \Gtargzg_print:n #1
{
  \seq_set_split:Nnn \l_Gtargzg_list_seq { , } { #1 }
  \seq_use:Nn \l_Gtargzg_list_seq { , & }
}
\ExplSyntaxOff

\ExplSyntaxOn
\NewDocumentCommand{\Gtargxi}{m m}
{
 \Gt\left(\begin{smallmatrix}
 \Gtargxi_print:n {#1} \\
 \Gtargxi_print:n {#2}
 \end{smallmatrix};\xi\right)
}
\seq_new:N \l_Gtargxi_list_seq
\cs_new_protected:Npn \Gtargxi_print:n #1
{
  \seq_set_split:Nnn \l_Gtargxi_list_seq { , } { #1 }
  \seq_use:Nn \l_Gtargxi_list_seq { , & }
}
\ExplSyntaxOff

\ExplSyntaxOn
\NewDocumentCommand{\Gtargxig}{m m}
{
 \Gt\left(\begin{smallmatrix}
 \Gtargxig_print:n {#1} \\
 \Gtargxig_print:n {#2}
 \end{smallmatrix};\xi|\SGroup\right)
}
\seq_new:N \l_Gtargxig_list_seq
\cs_new_protected:Npn \Gtargxig_print:n #1
{
  \seq_set_split:Nnn \l_Gtargxig_list_seq { , } { #1 }
  \seq_use:Nn \l_Gtargxig_list_seq { , & }
}
\ExplSyntaxOff

\ExplSyntaxOn
\NewDocumentCommand{\Gtargtg}{m m}
{
 \Gt\left(\begin{smallmatrix}
 \Gtargtg_print:n {#1} \\
 \Gtargtg_print:n {#2}
 \end{smallmatrix};t|\SGroup\right)
}
\seq_new:N \l_Gtargtg_list_seq
\cs_new_protected:Npn \Gtargtg_print:n #1
{
  \seq_set_split:Nnn \l_Gtargtg_list_seq { , } { #1 }
  \seq_use:Nn \l_Gtargtg_list_seq { , & }
}
\ExplSyntaxOff

\newcommand{\SI}[1]{\Sel[#1]}

\ExplSyntaxOn
\NewDocumentCommand{\SIE}{m m}
{
\SelE\!\Big[\begin{smallmatrix}
 \SI_print:n {#1} \\
 \SI_print:n {#2}
 \end{smallmatrix}\Big]
}
\seq_new:N \l_SI_list_seq
\cs_new_protected:Npn \SI_print:n #1
{
  \seq_set_split:Nnn \l_SI_list_seq { , } { #1 }
  \seq_use:Nn \l_SI_list_seq { , & }
}
\ExplSyntaxOff

\ExplSyntaxOn
\NewDocumentCommand{\Gargbare}{m m m m}
{
	\Gt\left(\begin{smallmatrix}
		\Gargbare_print:n {#1} \\
		\Gargbare_print:n {#2}
	\end{smallmatrix}\,\Big|\,\Gargbare_print:n {#3}\,;\Gargbare_print:n {#4}\right)
}
\seq_new:N \l_Gargbare_list_seq
\cs_new_protected:Npn \Gargbare_print:n #1
{
	%\seq_set_split:Nnn \l_Gargbare_list_seq { , } { #1 }
	%\seq_use:Nn \l_Gargbare_list_seq { , & }
	\seq_set_split:Nnn \l_Gargbare_list_seq {,} { #1 }
	\seq_use:Nn \l_Gargbare_list_seq {\hspace{-0.1em},\hspace{-0.18em} & }
}
\ExplSyntaxOff

\newcommand{\Greg}{\tilde{\Gamma}_{\text{reg}}}

\ExplSyntaxOn
\NewDocumentCommand{\Gtreg}{m m m m}
{
	\Greg\!\left(\!\begin{smallmatrix}
		\Gtreg_print:n {#1} \\
		\Gtreg_print:n {#2}
	\end{smallmatrix}\!\,\Big|\,\Gtreg_print:n {#3}\,;\Gtreg_print:n {#4}\right)
}
\seq_new:N \l_Gtreg_list_seq
\cs_new_protected:Npn \Gtreg_print:n #1
{
	\seq_set_split:Nnn \l_Gtreg_list_seq {,} { #1 }
	\seq_use:Nn \l_Gtreg_list_seq {\hspace{-0.1em},\hspace{-0.18em} & }
}
\ExplSyntaxOff

\ExplSyntaxOn
\NewDocumentCommand{\GtregG}{m m m m}
{
	\Greg\!\left(\!\begin{smallmatrix}
		\Gtreg_print:n {#1} \\
		\Gtreg_print:n {#2}
	\end{smallmatrix}\,;\,\GtregG_print:n {#3}\,,\GtregG_print:n {#4}\,\Big|\,\SGroup\right)
}
\seq_new:N \l_GtregG_list_seq
\cs_new_protected:Npn \GtregG_print:n #1
{
	\seq_set_split:Nnn \l_GtregG_list_seq {,} { #1 }
	\seq_use:Nn \l_GtregG_list_seq {\hspace{-0.1em},\hspace{-0.18em} & }
}
\ExplSyntaxOff

\newcommand{\Gtargone}[2]{\Gt\left(\begin{smallmatrix} #1 \\ #2\end{smallmatrix};1\right)}

%%%%%%%%%%%%%%%%%%%%%%%%%%%%%%%%%%%%%%%%%%%%%%%%%%%%%%%%%%%%%%%%%%%%%%%%%%%%%%%%
% physics

%%%%%%%%%%%%%%%%%%%%%%%%%%%%%%%%%%%%%%%%%%%%%%%%%%%%%%%%%%%%%%%%%%%%%%%%%%%%%%%%
% hyperref settings, including boundary colors etc. 
\usepackage[pdfencoding=auto,bookmarks=true,hyperfigures=true]{hyperref}%
\PassOptionsToPackage{unicode}{hyperref}
%pdf setup
\providecommand{\hypersetup}[1]{}
\providecommand{\texorpdfstring}[2]{#1}
\hypersetup{plainpages=false}
\hypersetup{pdfpagemode=UseNone}
\hypersetup{bookmarksnumbered=true}
\hypersetup{pdfstartview=FitH}
\hypersetup{colorlinks=false}
\hypersetup{citebordercolor={0.7 0.7 1}}
\hypersetup{urlbordercolor={.4 .8 1}}
\hypersetup{linkbordercolor={1 .8 .6}}
\hypersetup{colorlinks=true, urlcolor=[rgb]{0.13,0.30,0.45}, linkcolor=[rgb]{0.13,0.30,0.45}, citecolor=[rgb]{0.55,0.0,0.05}}

%%%%%%%%%%%%%%%%%%%%%%%%%%%%%%%%%%%%%%%%%%%%%%%%%%%%%%%%%%%%%%%%%%%%%%%%%%%%%%%%
% fancy toc (needs to be after hyperref, as the toc uses hyperref)
\usepackage{tocloft}

\makeatletter
% Define appendixsection command
\newcommand{\appendixsection}[1]{%
  \refstepcounter{section}% increment section counter for numbering
  \section*{\thesection\quad#1}% no automatic ToC entry
  \addcontentsline{toc}{appendixsec}{\protect\numberline{\thesection}#1}% manual ToC entry
}
\newcommand{\appendixsubsection}[1]{
  \refstepcounter{subsection}% increment section counter for numbering
  \subsection*{\thesubsection\quad#1}% no automatic ToC entry
  \addcontentsline{toc}{appendixsubsec}{\protect\numberline{\thesubsection}#1}% manual ToC entry
}
% Define how appendixsec entries appear in the ToC
\newcommand{\l@appendixsec}[2]{%
  \vspace{0.1ex}%
  \@dottedtocline{1}{\cftsubsecindent}{\cftsubsecnumwidth}{#1}{#2}%
}
\newcommand{\l@appendixsubsec}[2]{%
  \vspace{0.1ex}%
  \@dottedtocline{2}{\cftsubsubsecindent}{\cftsubsubsecnumwidth}{#1}{#2}%
}
% Make hyperref recognize that appendixsec has toc level 1 (like section)
\providecommand*{\toclevel@appendixsec}{1}
\providecommand*{\toclevel@appendixsubsec}{2}
% Map appendixsec to section for bookmarks
\def\Hy@toc@appendixsec{section}
\def\Hy@toc@appendixsec{subsection}
\makeatother

%%%%%%%%%%%%%%%%%%%%%%%%%%%%%%%%%%%%%%%%%%%%%%%%%%%%%%%%%%%%%%%%%%%%%%%%%%%%%%%%
%% tiny showkeys
\ifpublic
\else
\usepackage[notcite,notref]{showkeys}
%%\iffancy

\def\showkeysrefformat#1{{\normalfont\tiny\ttfamily#1}}
\makeatletter
\def\SK@@ref#1>#2\SK@{%
{\@inlabelfalse\leavevmode\vbox to\z@{%
\vss\SK@refcolor\rlap{\vrule\raise .75em%
 \hbox{\showkeysrefformat{#2}}}}}}
\makeatother
%%\fi
\fi

%%%%%%%%%%%%%%%%%%%%%%%%%%%%%%%%%%%%%%%%%%%%%%%%%%%%%%%%%%%%%%%%%%%%%%%%%%%%%%%%
%title data
\makeatletter
\let\@keywords\@empty
\let\@subject\@empty
\providecommand{\keywords}[1]{\gdef\@keywords{#1}}
\providecommand{\subject}[1]{\gdef\@subject{#1}}
\def\thetitle{\@title}
\def\theauthor{\@author}
\def\thesubject{\@subject}
\def\thedate{\@date}
\def\thekeywords{\@keywords}
\makeatother
\AtBeginDocument{
\hypersetup{pdftitle={\thetitle}}%
\hypersetup{pdfauthor={\theauthor}}%
\hypersetup{pdfsubject={\thesubject}}%
\hypersetup{pdfkeywords={\thekeywords}}%
}

%%%%%%%%%%%%%%%%%%%%%%%%%%%%%%%%%%%%%%%%%%%%%%%%%%%%%%%%%%%%%%%%%%%%%%%%%%%%%%%%
% inline bibtex
%\RequirePackage{verbatim}
%
%\makeatletter
%\newwrite\bibinl@out
%\newenvironment{bibtex}[1][\jobname]{%
% \immediate\openout\bibinl@out #1.bib
% \immediate\write\bibinl@out{\@percentchar generated from `\jobname' starting line \the
%\inputlineno^^J}%
% \def\verbatim@processline{\immediate\write\bibinl@out{\the\verbatim@line}}%
% \@bsphack\let\do\@makeother\dospecials\catcode`\^^M\active\verbatim@start
%}%
%{\immediate\closeout\bibinl@out\@esphack}
%\makeatother

%%%%%%%%%%%%%%%%%%%%%%%%%%%%%%%%%%%%%%%%%%%%%%%%%%%%%%%%%%%%%%%%%%%%%%%%%%%%%%%%
% Notes
\newif\ifnote\notetrue
\ifpublic
\notefalse
\fi

%\newcommand{\addition}[1]{{\textcolor{PineGreen}{\normalfont\scriptsize\bf\sffamily #1}}} 

%%%%%%%%%%%%%%%%%%%%%%%%%%%%%%%%%%%%%%%%%%%%%%%%%%%%%%%%%%%%%%%%%%%%%%%%%%%%%%%%
% relax the previous definition and define a new mathoperator. 

\let\Im\relax\DeclareMathOperator{\Im}{Im}

%%%%%%%%%%%%%%%%%%%%%%%%%%%%%%%%%%%%%%%%%%%%%%%%%%%%%%%%%%%%%%%%%%%%%%%%%%%%%%%%
% operators

\let\Im\undefined\DeclareMathOperator{\Im}{Im}

\DeclareMathOperator{\sgn}{sgn}

\DeclareMathOperator{\id}{id}

\DeclareMathOperator{\Res}{Res}

\DeclareMathOperator{\Gt}{\tilde{\Gamma}}

\DeclareMathOperator{\Sel}{S}
\DeclareMathOperator{\SelE}{S^E}

% schottky
%\DeclareMathOperator{\G}{\mathrm{G}}
\newcommand{\SGroup}{\mathrm{G}}
\newcommand{\SCosetL}[1]{\SGroup_{#1}\backslash\SGroup}
\newcommand{\SCosetR}[1]{\SGroup\slash\SGroup_{#1}}
\newcommand{\SCosetLR}[2]{\SGroup_{#1}\backslash\SGroup\slash\SGroup_{#2}}
\newcommand{\CustomSR}[2]{{#1}\slash{#2}} %custom cosets
\newcommand{\CustomSL}[2]{{#1}\backslash{#2}} %custom cosets
 %custom cosets

\newcommand{\sgen}{\sigma}
\newcommand{\ssplit}{\mathrm{Sp}}
\newcommand{\ssubsetl}[3]{#1^{[#2]}_{#3\rightarrow}}
\newcommand{\ssubsetr}[3]{#1^{[#2]}_{\leftarrow #3}}
\newcommand{\csubsetl}[3]{\mathrm{C}#1^{[#2]}_{#3\rightarrow}}
\newcommand{\csubsetr}[3]{\mathrm{C}#1^{[#2]}_{\leftarrow #3}}

%\DeclareMathOperator{\TRE}{T^\textit{R}_\ve}
%\DeclareMathOperator{\TTRE}{\tilde{T}^\textit{R}_\ve}

%%%%%%%%%%%%%%%%%%%%%%%%%%%%%%%%%%%%%%%%%%%%%%%%%%%%%%%%%%%%%%%%%%%%%%%%%%%%%%%%
% some operators

%\newcommand{\metric}{\eta}

%%%%%%%%%%%%%%%%%%%%%%%%%%%%%%%%%%%%%%%%%%%%%%%%%%%%%%%%%%%%%%%%%%%%%%%%%%%%%%%%
% fractions and stuff

\ProvideMathFix{rfrac,vfrac}
\newcommand{\iunit}{{\mathring{\imath}}}

\newcommand{\half}{\rfrac{1}{2}}

\newcommand{\der}{\mathrm{d}}
\newcommand{\diff}[2][.]{\mathinner{\der#2\if #1.\else^{#1}\fi}}

%% nice fractions
%\usepackage{nicefrac}
%\newcommand{\tauh}{\nicefrac{\tau}{2}}
%\newcommand{\oneh}{\nicefrac{1}{2}}
%\newcommand{\oned}{\nicefrac{1}{3}}
%\newcommand{\onezd}{\nicefrac{2}{3}}
%\newcommand{\taud}{\nicefrac{\tau}{3}}
%\newcommand{\tauv}{\nicefrac{\tau}{4}}
%\newcommand{\tauf}{\nicefrac{\tau}{5}}
%\newcommand{\tauzf}{\nicefrac{2\tau}{5}}

%%%%%%%%%%%%%%%%%%%%%%%%%%%%%%%%%%%%%%%%%%%%%%%%%%%%%%%%%%%%%%%%%%%%%%%%%%%%%%%%
% vector operators

%\newcommand{\vect}{\vec}
%\newcommand{\vdot}{\mathord{\cdot}}
%\newcommand{\vtimes}{\mathord{\times}}

%%%%%%%%%%%%%%%%%%%%%%%%%%%%%%%%%%%%%%%%%%%%%%%%%%%%%%%%%%%%%%%%%%%%%%%%%%%%%%%%
% fields and spaces

\newcommand{\field}[1]{\mathbb{#1}}

\newcommand{\Integers}{\field{Z}}

\newcommand{\Complex}{\field{C}}
\newcommand{\ComplexComplete}{\bar{\Complex}}

%%%%%%%%%%%%%%%%%%%%%%%%%%%%%%%%%%%%%%%%%%%%%%%%%%%%%%%%%%%%%%%%%%%%%%%%%%%%%%%%
% further mathematical definitions

\newcommand{\grp}[1]{\mathrm{#1}}
\newcommand{\alg}[1]{\mathfrak{#1}}

\newcommand{\Order}{\mathcal{O}}

%%%%%%%%%%%%%%%%%%%%%%%%%%%%%%%%%%%%%
%%%%%%%%%%%%%%%%%%%%%%%%%%%%%%%%%%%%%
% commands for theta functions (with characteristics)

%%%%%%%%%%%%%%%%%%%%%%%%%%%%%%%%%%%%%%%%%%%%%%%%%%%%%%%%%%%%%%%%%%%%%%%%%%%%%%%%
% vectors and matrices

%%%%%%%%%%%%%%%%%%%%%%%%%%%%%%%%%%%%%%%%%%%%%%%%%%%%%%%%%%%%%%%%%%%%%%%%%%%%%%%
% QED
\let\qed\relax\newcommand{\qed}
%{\begin{flushright}\ensuremath{\Box}\end{flushright}}
{\hfill\ensuremath{\Box}}

% define the regularization epsilon, for uniform appearance.
\newcommand{\re}{\ve}

\newcommand{\vt}{\vartheta}
\newcommand{\ve}{\varepsilon}

\newcommand{\dd}{\mathrm{d}}

% differentials

% right-handed spinor indices

% Zahlenmengen
\newcommand{\zC}{\mathbb C}

 %freie Lie algebra
\newcommand{\zN}{\mathbb N}

\newcommand{\zQ}{\mathbb Q}

\newcommand{\zZ}{\mathbb Z}

% calligraphic letters 

\newcommand{\cF}{\mathcal{F}}       
 
\newcommand{\cH}{\mathcal{H}}

\newcommand{\cL}{\mathcal{L}}

\newcommand{\cO}{\mathcal{O}}

% Parameters for amplitudes

\def\reg{\text{reg}}

% non-holomorphic Eisenstein series with a hat

\newcommand{\nn}{\nonumber}

% Limits, sums and Integrals

% some group theory shortcuts

% Iterated elliptic integrals

%%%%%%%%%%%%%%%%%%%%%%%%%%%%%%%%%%%%%%%%%%%%%%%%%%%%%%%%%%%%%%%%%%%%%%%%%%%%%%%%
% macro for N=4

%%%%%%%%%%%%%%%%%%%%%%%%%%%%%%%%%%%%%%%%%%%%%%%%%%%%%%%%%%%%%%%%%%%%%%%%%%%%%%%%
% nice bar for dividing argument and the \tau variable
\newcommand{\with}{\,|\,}

%%%%%%%%%%%%%%%%%%%%%%%%%%%%%%%%%%%%%%%%%%%%%%%%%%%%%%%%%%%%%%%%%%%%%%%%%%%%%%%%
% Commands to help with conventions and paper-specific notations:
%\newcommand{\abel}[0]{\varphi}
\newcommand{\abel}[0]{\mathfrak{u}}

\newcommand{\acyc}[0]{\mathfrak A}
\newcommand{\bcyc}[0]{\mathfrak B}
\newcommand{\Acycle}[0]{\mathfrak A}

\newcommand{\genus}{\ensuremath{h}}

\newcommand{\periodmatrix}{\tau}
\newcommand{\Atxt}{$\mathfrak{A}$}
\newcommand{\Btxt}{$\mathfrak{B}$}
\newcommand{\omell}{\omega^{\mathrm{ell}}}

\newcommand{\gkern}[1]{g^{(#1)}}

\newcommand{\skern}[1]{s^{(#1)}}

\newcommand{\funddom}{\cF}

\newcommand{\mindx}[1]{\mathbf{#1}}
\newcommand{\RSurf}{\Sigma_\genus}
\newcommand{\hgMZV}{hgMZV}
\newcommand{\hgMPL}{hgMPL}
\newcommand{\wrt}[1]{w.r.t.~#1}
\newcommand{\zetaA}[1]{\zeta_{\raisebox{-2.8pt}{\scriptsize$\acyc_#1$}}\!}
\newcommand{\zetaAreg}[1]{\zeta_{\raisebox{-2.8pt}{\scriptsize$\acyc_#1$}}^{\mathrm{reg}}}

\newcommand{\zetaAg}[2]{{}^{[#2]}\hspace{-0.4pt}\zeta_{\raisebox{-2.8pt}{\scriptsize$\acyc_#1$}}\!}
\newcommand{\zetaAgt}[2]{{}^{[#2]}\hspace{-0.4pt}\tilde{\zeta}_{\raisebox{-2.8pt}{\scriptsize$\acyc_#1$}}\!}

\newcommand{\hgzeta}{\zetaA}

\newcommand{\hgcorr}{\sigma}
\newcommand{\len}[1]{\left| #1 \right|}

\renewcommand{\mid}{\,|\,}

% some commands for mod transforms
\DeclareMathOperator{\Sp}{\text{Sp}}

%command for alternation operator (generator of alternating words in algebra)
\newcommand{\alt}{\text{Alt}}
%coeffs in alt id
\newcommand{\Coeff}{\mathcal{C}}
\newcommand{\COeff}{\mathbf{C}}

%splitting commands

\newcommand{\psplt}[2]{\mathrm{Sp}^+_{#1}(#2)}
\newcommand{\nsplt}[2]{\mathrm{Sp}^-_{#1}(#2)}

%bernoulli numbers and derived R-numbers
\newcommand{\bn}[1]{\mathrm{B}_{#1}}
\newcommand{\rn}[1]{\mathrm{R}_{#1}}

%normalized differential
\newcommand{\ndiff}{\Omega^{(y-t)}}

% compact equal sign for inline equations
\newcommand{\texteq}{\,{=}\,}
%\newcommand{\textneq}{\,{\neq}\,}
%\newcommand{\textin}{\,{\in}\,}
%\newcommand{\textleq}{\,{\leq}\,}
%\newcommand{\textgeq}{\,{\geq}\,}
%\newcommand{\textl}{\,{<}\,}
%\newcommand{\textg}{\,{>}\,}

%title for 6.3
\newcommand{\dct}{Cycle~exchange}

%%%%%%%%%%%%%%%%%%%%%%%%%%%%%%%%%%%%%%%%%%%%%%%%%%%%%%%%%%%%%%%%%%%%%%%%%%%%%%%%
\title{\textbf{\texorpdfstring{Higher-genus multiple zeta values}{Higher-genus multiple zeta values}}}
\author{
Konstantin Baune, %\texorpdfstring{\textsuperscript{a}}{},
Johannes Broedel, %\texorpdfstring{\textsuperscript{a}}{},
Egor Im, %\texorpdfstring{\textsuperscript{a}}{},
\texorpdfstring{\\}{}
Zhexian Ji, %\texorpdfstring{\textsuperscript{a}}{},
Yannis Moeckli %\texorpdfstring{\textsuperscript{a}}{},\texorpdfstring{\\}{}
}
\date{\today}

%%%%%%%%%%%%%%%%%%%%%%%%%%%%%%%%%%%%%%%%%%%%%%%%%%%%%%%%%%%%%%%%%%%%%%%%%%%%%%%%
\setcounter{tocdepth}{3}
\setcounter{secnumdepth}{3}

%%%%%%%%%%%%%%%%%%%%%%%%%%%%%%%%%%%%%%%%%%%%%%%%%%%%%%%%%%%%%%%%%%%%%%%%%%%%%%
%%%%%%%%%%%%%%%%%%%%%%%%%%%%%%%%%%%%%%%%%%%%%%%%%%%%%%%%%%%%%%%%%%%%%%%%%%%%%%
%%%%%%%%%%%%%%%%%%%%%%%%%%%%%%%%%%%%%%%%%%%%%%%%%%%%%%%%%%%%%%%%%%%%%%%%%%%%%%
%%%%%%%%%%%%%%%%%%%%%%%%%%%%%%%%%%%%%%%%%%%%%%%%%%%%%%%%%%%%%%%%%%%%%%%%%%%%%%
%%%%%%%%%%%%%%%%%%%%%%%%%%%%%%%%%%%%%%%%%%%%%%%%%%%%%%%%%%%%%%%%%%%%%%%%%%%%%%
%%%%%%%%%%%%%%%%%%%%%%%%%%%%%%%%%%%%%%%%%%%%%%%%%%%%%%%%%%%%%%%%%%%%%%%%%%%%%%
\begin{document}

\pdfbookmark[1]{Title Page}{title} 
\thispagestyle{empty}
%\begin{flushright}
%  \verb!HU-EP-!\\
%  \verb!HU-Mathematik-!
%\end{flushright}
\vspace*{1.0cm}
\begin{center}%
  \begingroup\LARGE\bfseries\thetitle\par\endgroup
\vspace{1.0cm}

\begingroup\large\theauthor\par\endgroup
\vspace{9mm}
\begingroup\itshape
%$^{\te{a}}$
Institute for Theoretical Physics, ETH Zurich\\Wolfgang-Pauli-Str.~27, 8093 Zurich, Switzerland\\[4pt]
\par\endgroup
\vspace*{7mm}

\begingroup\ttfamily
baunek@ethz.ch, jbroedel@ethz.ch, egorim@ethz.ch,\\
zhexji@ethz.ch, moeckliy@ethz.ch 
\par\endgroup

\vspace*{2.0cm}

\textbf{Abstract}\vspace{5mm}

\begin{minipage}{13.4cm}
Multiple zeta values arise as special values of polylogarithms defined on Riemann surfaces of various genera. Building on the vast knowledge for classical and elliptic multiple zeta values, we explore a canonical extension of the formalism to Riemann surfaces of higher genera, which yields higher-genus multiple zeta values. \\
We provide a regularization prescription for higher-genus polylogarithms, which we extend to higher-genus multiple zeta values. Our regularization uses the Schottky uniformization to trace back higher-genus endpoint regularization to known regularization at genus one.\\ 
Additionally, we are commenting on relations among higher-genus multiple zeta values implied by degeneration of the underlying geometry, where we distinguish between the two types of separating and non-separating degeneration.\\
Finally, employing functional relations for higher-genus polylogarithms in the Schottky uniformization, we explore relations among higher-genus multiple zeta values and check them against our numerical testing setup. We identify relations for higher-genus multiple zeta values beyond those implied by polylogarithm identities, thereby matching the situation for genus zero and genus one. While we find several known structures for elliptic multiple zeta values to generalize to relations for higher-genus multiple zeta values, there are further classes of relations arising from the interplay and combinatorics of different cycles.
\end{minipage} 
\end{center}
\vfill

\newpage
\setcounter{tocdepth}{2}
\tableofcontents

\newpage

%%%%%%%%%%%%%%%%%%%%%%%%%%%%%%%%%%%%%%%%%%%%%%%%%%%%%%%%%%%%%%%%%%%%%%%%%%%%%%
%%%%%%%%%%%%%%%%%%%%%%%%%%%%%%%%%%%%%%%%%%%%%%%%%%%%%%%%%%%%%%%%%%%%%%%%%%%%%%
%%%%%%%%%%%%%%%%%%%%%%%%%%%%%%%%%%%%%%%%%%%%%%%%%%%%%%%%%%%%%%%%%%%%%%%%%%%%%%
%%%%%%%%%%%%%%%%%%%%%%%%%%%%%%%%%%%%%%%%%%%%%%%%%%%%%%%%%%%%%%%%%%%%%%%%%%%%%%
%%%%%%%%%%%%%%%%%%%%%%%%%%%%%%%%%%%%%%%%%%%%%%%%%%%%%%%%%%%%%%%%%%%%%%%%%%%%%%
\section{Introduction}\label{sec:introduction}

The persisting and long-lasting interest in multiple zeta values (MZVs) originates from two directions of research: on the one hand --- being special values of polylogarithms --- MZVs make a prominent appearance in calculations in quantum field theory and string theory. On the other hand, MZVs are of interest to number theorists and algebraic geometers: exploiting the rich algebraic structure of MZVs allows to investigate their properties ranging from transcendental behavior to functional relations. While MZVs found their entry into mathematics from the sum representation~\cite{Basel1,Basel2} initially, it was their representation as iterated integrals over (mostly single-pole) differentials, which allowed better control over analytic and algebraic properties and paved the way towards various generalizations of MZVs. 

Numerous algebraic, analytic and numeric tools towards treating and calculating iterated integrals on Riemann surfaces of higher genus have been developed recently. In this article, we will employ iterated integrals of Enriquez' kernels and their Schottky representation to define higher-genus multiple zeta values as iterated integrals evaluated on non-contractible cycles of Riemann surfaces. Before describing the content of this article in more detail below, we are going to take a bird's-eye perspective on the field of MZVs on Riemann surfaces of genus zero and genus one. The exploration of single-valued versions of higher-genus multiple zeta values is not part of the current article: accordingly this field of research will not be taken into account in the two following paragraphs. 

%%%%%%%%%%%%%%%%%%%%%%%%%%%%%%%%%%%%%%%%%%%%%%%%%%%%%%%%%%%%%%%%%%%%%%%%%%%%%%
\paragraph{Genus zero.} We will refer to numbers arising as special values of polylogarithms~\cite{GONCHAROV1995197,Goncharov:1998kja,Remiddi:2000,Goncharov:2001iea,GiNaC:2005}, and which are defined as iterated integrals over Abelian differentials of a genus-zero Riemann surface, as \textit{classical MZVs} or \textit{genus-zero MZVs}.
Alternatively (and equivalently), one can translate the sum representation of classical polylogarithms to a sum representation of genus-zero MZVs. Since every Riemann surface of genus zero is isomorphic to the Riemann sphere, there is no parameter describing the genus-zero geometry: classical MZVs are real numbers. 

The theory of classical MZVs has been developed and understood over many years, see for example \rcites{Chen,Borwein1997,Hoffman:MHS,Zagier23,BrownThesis} and \rcite{FresanGil} for a review. The algebraic structure of their integral representation, in conjunction with a suitable regularization procedure as well as an understanding as periods of a category of mixed Tate motives (see e.g. \rcites{DG,Brown1102.1312}), led to a translation of functional relations for polylogarithms to MZVs and finally to the identification of a basis for MZVs of given transcendentality~\cite{browndepthgraded,Glanois:Thesis}. Furthermore, these structures paved the way for calculational tools like the MZV data mine~\cite{Blumlein:2009cf}. In quantum field theory and string theory, most recent calculations employ MZVs and generalizations thereof, see \rcites{SnowmassString,LocalizationReview,Alday:2023mvu} for examples in different research directions.   

%%%%%%%%%%%%%%%%%%%%%%%%%%%%%%%%%%%%%%%%%%%%%%%%%%%%%%%%%%%%%%%%%%%%%%%%%%%%%%
\paragraph{Genus one.} Zeta values which can be represented as special values of elliptic polylogarithms, that is, as iterated integrals over Abelian differentials along either the \Atxt- or \Btxt-cycle of a Riemann surface of genus one, are called \textit{elliptic MZVs} (eMZVs). Different from their genus-zero cousins, eMZVs are functions of the geometry of the surface. By a slight abuse of nomenclature, they will be referred to as \textit{values} anyway. Depending on the integration cycle, eMZVs come as \Atxt-cycle or \Btxt-cycle eMZVs, respectively. They represent the same class of functions and can be converted into each other (see e.g.~\rcites{Enriquez:EllAss,Enriquez:Emzv}).

Based on the theory of elliptic polylogarithms~\cite{Levin, BrownLevin}, eMZVs appeared first in \rcite{Enriquez:Emzv}, have been used and developed for one-loop open-string scattering in \rcite{Broedel:2014vla}, have been exposed to underlying (derivation) algebra structures in \rcites{Pollack, CEE} and different languages in \rcites{Broedel:2015hia, Broedel:2017kkb}. Moreover, they found numerous applications in particle physics phenomenology, see \rcite{Bourjaily:2022bwx} for an overview. An eMZV data mine has been established in \rcite{eMZVWebsite}. Relations among eMZVs have been explored in different formulations mentioned above, see \rcite{Matthes:Thesis} and the references therein for a review. Several functional relations have been investigated in \rcites{ZagierGangl,Broedel_2020}. Statements about transcendentality of elliptic polylogarithms and eMZVs can be found in e.g.~\rcites{Broedel:2018qkq,DHoker:2019blr}.

\medskip

In \secref{sec:reviewRS} and \secref{sec:rev01} of this article we set the stage by reviewing higher-genus Riemann surfaces, with a special focus on the Schottky uniformization, which is a key component of our constructions later on, as well as polylogarithms and MZVs for genus zero and one. Afterward, we review the construction of Enriquez' higher-genus integration kernels in \secref{sec:revKernels} as well as of the higher-genus multiple polylogarithms based on them. We also review the construction of the kernels in terms of integrals over the prime form and as Poincar\'e series on a Schottky cover.

%%%%%%%%%%%%%%%%%%%%%%%%%%%%%%%%%%%%%%%%%%%%%%%%%%%%%%%%%%%%%%%%%%%%%%%%%%%%%%
\paragraph{Higher genus.} In this article, we are considering periods defined as iterated integrals over different \Atxt-cycles of a Riemann surface of genus higher than one. The resulting objects are called \textit{higher-genus MZVs} (\hgMZV{}s) and we will use for example the symbol\footnote{Please find the definition in \secref{sec:hgMZV}.} 
\begin{equation}\label{eqn:zetanotation}
	\zetaAg{1}{3}(121,1^232\mid\SGroup)
\end{equation}
to denote a genus-three \hgMZV{} obtained as a depth-two iterated integral around the cycle $\acyc_1$ over the differential forms (Enriquez' kernels) $\omega_{121}$ and $\omega_{1132}$. 
The resulting hgMZVs are going to be functions of the underlying complex geometry. The dependence will be mathematically expressed either by the period matrix $\tau$ or --- equivalently, as in the above example --- by the Schottky group $\SGroup$. If clear from the context, we are going to omit the genus information as well as the dependence on the geometry. Our notation for \hgMZV{}s in \eqn{eqn:zetanotation} is related to previous notations for (e)MZVs as summarized in \tabref{tab:zetanotation}.  

\begin{table}[h!]
\centering
\def\arraystretch{1.5}
\begin{tabular}{@{}ll@{}}
\toprule
\textbf{Genus zero:}   & $\displaystyle\zeta(2)=\zeta_{0,1}={}^{[0]}\zeta_{0,1}=\int_0^1 \omega^{\lbrace 0\rbrace}\circ\omega^{\lbrace 1\rbrace}=\int_0^1 \frac{\dd t_1}{t_1-0}\int_0^{t_1}\frac{\dd t_2}{t_2-1}$ \\
	\textbf{Genus one:}    & $\displaystyle\omell(1,2)=\int_0^1\!\gkern{1}(t_1\mid\tau)\!\int_0^{t_1}\!\gkern{2}(t_2\mid\tau)\,\dd t_2\, \dd t_1\propto\zetaAg{1}{1}(1^2,1^3)=\int_0^1\omega_{11}\circ\omega_{111}$ \\
	\textbf{Higher genus:} & $\displaystyle\zetaAg{1}{3}(121,1^232\mid\SGroup)=\int_{\acyc_1}\omega_{121}\circ\omega_{1132}$ \\
\bottomrule
\end{tabular}
\caption{Connecting common notations for multiple zeta values at different genera to our notation.}
\label{tab:zetanotation}
\end{table}
In order to properly define \hgMZV{}s, one will have to deal with endpoint divergences. As a preparation, we suggest a regularization scheme for higher-genus polylogarithms in \secref{sec:hgMPLreg}, which employs the Schottky uniformization in order to trace back the higher-genus regularization to the known regularization of genus-one polylogarithms. In order to extend this regularization from depth-one polylogarithms to those of higher depth, we need additional linear relations available only for Enriquez' kernels at genera higher than one. As a further preparation to the investigation of \hgMZV{}s, we investigate integration of Enriquez' kernels along closed \Atxt-cycles in the Schottky language in \secref{sec:TechniquesSchottky}, providing an explicit example in \secref{sec:thm2}, by considering statements of \rcite{DHoker:2025dhv} in the Schottky language.  

The techniques explored and developed in \secref{sec:TechniquesSchottky} highlight the benefit of the Schottky uniformization: the convoluted cycle-integrals on Riemann surfaces can be tackled using methods of complex analysis together with combinatorics applied to the Poincar\'e series over Schottky groups. This facilitates the analysis and proof of several sophisticated relations among \hgMZV{}s investigated in \secref{sec:hgmzvrel}.

Definitions and the theory of \hgMZV{}s are unfolded in \secref{sec:hgMZV}. We devote \secref{sec:hgmzvreg} to extending the regularization of higher-genus polylogarithms discussed in \secref{sec:hgMPLreg} to \hgMZV{}s: other than for polylogarithms, there can be divergences to take care of at both ends of the closed integration cycle. We conclude the section by giving an explicit formula for depth-one \hgMZV{}s in \secref{sec:relations_depth_one}, obtained using the Schottky uniformization, which was already proven in \rcites{EnriquezHigher,EZ1}.

Identities for \hgMZV{}s related to the degeneration of the underlying geometry are discussed in \secref{sec:degen} using mainly the Schottky language. We consider two types of degenerations of Riemann surfaces: non-separating (in \secref{sec:nonseparatingA}) and separating (in \secref{sec:separatingB}). Both types of degeneration are realized as particular limits applied to the Schottky generators and circles, and thus have a clear geometric interpretation. We discuss the effect of the degeneration on Enriquez' kernels and the associated \hgMZV{}s, connecting \hgMZV{}s at different genera in the degeneration limit.

In \secref{sec:hgmzvrel}, we discuss and investigate relations among \hgMZV{}s of equal genus. We start with the Fay(-like) identities for Enriquez' integration kernels, which imply relations among the corresponding \hgMZV{}s, in \secref{sec:Fayid}. These identities are higher-genus echoes of mechanisms already known from genus one. 
In \secref{sec:relations_cyc_exchange}, we discuss and prove the validity of an identity relating \hgMZV{}s obtained as integrals over different \Atxt-cycles. We furthermore comment on possible extensions of this identity.
We will consider a set of identities valid only for hyperelliptic Riemann surfaces in \secref{sec:altid}, where the enhanced symmetry provided by the hyperelliptic involution triggers the existence of new identities. 

\secref{sec:weightex} and \secref{sec:emzvhemzv} are devoted to the identities for which we have conjectures backed by extensive numerical checks, but which currently lack an analytic proof. In  \secref{sec:weightex}, we discuss the weight-exchange identities at depth two: these can be viewed as a generalization of the reflection identities known for eMZVs.
In \secref{sec:emzvhemzv}, we discuss in which way identities available for eMZVs will cause a higher-genus echo. Naturally, one would like to identify a genus-lifting mechanism as counterpart to the degeneration identities, but things do not seem to be straightforward.

Several of the starting points for identities mentioned above for \hgMZV{}s have been influenced by a numerical survey using our package \texttt{SchottkyTools} \cite{SchottkyTools}, which in principle allows numerical evaluation of multiple polylogarithms at any genus. Practically, evaluations are of course constrained by time and the accuracy goal. All the relations mentioned in \secref{sec:hgMZV}, \secref{sec:degen} and \secref{sec:hgmzvrel} (regardless if proven analytically or not) were verified numerically for generic genus-two and genus-three \hgMZV{}s of depth two and three of total weight up to six.

Within the current article, we are neither going to touch upon \Btxt-cycle \hgMZV{}s and their relation to the corresponding \Atxt-cycle \hgMZV{}s nor on the question of existence and construction of single-valued \hgMZV{}s. Those topics and questions as well as further considerations are collected in \secref{sec:openqs}. Several appendices supplement this article, showing details for some of our computations.

%%%%%%%%%%%%%%%%%%%%%%%%%%%%%%%%%%%%%%%%%%%%%%%%%%%%%%%%%%%%%%%%%%%%%%%%%%%%%%
%%%%%%%%%%%%%%%%%%%%%%%%%%%%%%%%%%%%%%%%%%%%%%%%%%%%%%%%%%%%%%%%%%%%%%%%%%%%%%
%%%%%%%%%%%%%%%%%%%%%%%%%%%%%%%%%%%%%%%%%%%%%%%%%%%%%%%%%%%%%%%%%%%%%%%%%%%%%%
\subsection*{Acknowledgments}
We are grateful to Manuel Berger for various discussions and work on related projects.
We also thank Manuel Berger, Eric D'Hoker, Artyom Lisitsyn and Oliver Schlotterer for carefully proof-reading this manuscript and providing valuable feedback.
The work of K.B., J.B., E.I., and Y.M. is partially supported by the Swiss National Science Foundation through the NCCR SwissMAP. 

%%%%%%%%%%%%%%%%%%%%%%%%%%%%%%%%%%%%%%%%%%%%%%%%%%%%%%%%%%%%%%%%%%%%%%%%%%%%%%
%%%%%%%%%%%%%%%%%%%%%%%%%%%%%%%%%%%%%%%%%%%%%%%%%%%%%%%%%%%%%%%%%%%%%%%%%%%%%%
%%%%%%%%%%%%%%%%%%%%%%%%%%%%%%%%%%%%%%%%%%%%%%%%%%%%%%%%%%%%%%%%%%%%%%%%%%%%%%
%%%%%%%%%%%%%%%%%%%%%%%%%%%%%%%%%%%%%%%%%%%%%%%%%%%%%%%%%%%%%%%%%%%%%%%%%%%%%%
\section{Higher-genus Riemann surfaces}\label{sec:reviewRS}

In this section we will first review the basics of higher-genus Riemann surfaces (without specifying a uniformization) in \subsecref{sec:basicConcepts} before focusing on the main ingredients of the Schottky uniformization for Riemann surfaces in \subsecref{sec:Schottky}, which will be heavily used in later sections.

%%%%%%%%%%%%%%%%%%%%%%%%%%%%%%%%%%%%%%%%%%%%%%%%%%%%%%%%%%%%%%%%%%%%%%%%%%%%%%
%%%%%%%%%%%%%%%%%%%%%%%%%%%%%%%%%%%%%%%%%%%%%%%%%%%%%%%%%%%%%%%%%%%%%%%%%%%%%%
%%%%%%%%%%%%%%%%%%%%%%%%%%%%%%%%%%%%%%%%%%%%%%%%%%%%%%%%%%%%%%%%%%%%%%%%%%%%%%
\subsection{Basic concepts}\label{sec:basicConcepts}

%%%%%%%%%%%%%%%%%%%%%%%%%%%%%%%%%%%%%%%%%%%%%%%%%%%%%%%%%%%%%%%%%%%%%%%%%%%%%%
%%%%%%%%%%%%%%%%%%%%%%%%%%%%%%%%%%%%%%%%%%%%%%%%%%%%%%%%%%%%%%%%%%%%%%%%%%%%%%
\subsubsection{Cycles and period matrices}\label{sec:cycles}
We consider a Riemann surface $\RSurf$ of genus $\genus\,{\geq}\,1$ with non-trivial cycles $\acyc_1,\ldots,\acyc_\genus$, $\bcyc_1,\ldots,\bcyc_\genus$ generating the fundamental group of $\RSurf$ such that their intersections are
\begin{equation}
	\acyc_i\#\acyc_j=0=\bcyc_i\#\bcyc_j,\quad \acyc_i\#\bcyc_j=\delta_{ij},\quad\text{for }i,j\in\{1,\ldots,\genus\}.
\end{equation}
We refer to the $\acyc_i$ and $\bcyc_i$ as \Atxt- and \Btxt-cycles, respectively, where we understand the \Atxt-cycles to be going around the \genus{} handles and the \Btxt-cycles around the \genus{} holes of the Riemann surface~$\RSurf$ (cf.~\figref{fig:genustwo}). 

For the \genus{} independent holomorphic differentials $\omega_i$, $i\,{\in}\,\{ 1,\ldots,\genus\}$, on $\RSurf$ we choose the normalized basis satisfying
\begin{equation}\label{eqn:periodmatrix}
	\int_{\acyc_i}\omega_j=\delta_{ij},\quad\int_{\bcyc_i}\omega_j=\tau_{ij}.
\end{equation}
Consequently, all geometric properties of $\RSurf$ are captured by the \Btxt-cycle periods of the holomorphic differentials through the period matrix $\tau\texteq(\tau_{ij})$. The period matrix $\tau$ is symmetric ($\tau_{ij}\texteq\tau_{ji}$) with positive definite imaginary part ($\Im\tau\,{>}\,0$), such that $\tau$ is an element of the genus-\genus{} Siegel upper half-plane $\cH_\genus$. Given a Riemann surface $\RSurf$, its period matrix is not unique, but two symmetric matrices $\tau,\tau'\,{\in}\,\cH_\genus$ capture the same Riemann surface if they can be related via a modular transformation, i.e.
\begin{equation}\label{eqn:modtranPM}
	\tau'=(a\tau+b)(c\tau+d)^{-1}, \quad\text{with }M=\bigg(\!\!\begin{array}{cc}a&b\\c&d\end{array}\!\!\bigg)\in\Sp(2\genus,\zZ).
\end{equation}
Such a modular transformation accounts for switching to a different basis of cycles $\acyc_i'$, $\bcyc_i'$ and holomorphic differentials $\omega_i'$ related through
\begin{equation}
	\bigg(\!\!\begin{array}{c}\vec{\bcyc}'\\\vec{\acyc}'\end{array}\!\!\bigg)=M\bigg(\!\!\begin{array}{c}\vec{\bcyc}\\\vec{\acyc}\end{array}\!\!\bigg)
	\quad\text{and}\quad
	\vec{\omega}'=\vec{\omega}\,(c\tau+d)^{-1}\,.
\end{equation}

Points $z$ and $z_0$ on (the universal cover of) a Riemann surface $\RSurf$ are mapped to the \textit{Jacobian variety} $J(\RSurf)\texteq\zC^\genus/(\zZ^\genus\,{+}\,\tau\zZ^\genus)$ using \emph{Abel's map} $\abel$, which is defined as
\begin{equation}
	\abel_i(z,z_0\mid\tau)=\int_{z_0}^z\omega_i.
\end{equation}
\subsubsection{Hyperelliptic Riemann surfaces}\label{sec:hyperelliptic}
A compact Riemann surface $\RSurf$ for $\genus\,{\geq}\, 2$ is called \textit{hyperelliptic} if there exists a two-to-one branched covering $\pi\colon\RSurf\,{\to}\,\ComplexComplete\texteq\zC\,{\cup}\,\lbrace\infty\rbrace$ of the Riemann sphere. In the language of algebraic curves, this two-sheeted covering can be represented as
\begin{equation}
y^2=P(x)\,,
\end{equation}
where $P(x)$ is a polynomial of degree $2\genus\,{+}\,1$ or $2\genus\,{+}\,2$ with distinct roots. The corresponding hyperelliptic involution 
\begin{equation}\label{eqn:hyperellipticinvolution}
	\iota\colon(x,y)\to(x,-y)
\end{equation}
fixes the branch points. Accordingly, the quotient of the Riemann surface and the involution yields the Riemann sphere: $\RSurf/\iota\,{\simeq}\,\ComplexComplete$.

%%%%%%%%%%%%%%%%%%%%%%%%%%%%%%%%%%%%%%%%%%%%%%%%%%%%%%%%%%%%%%%%%%%%%%%%%%%%%%
%%%%%%%%%%%%%%%%%%%%%%%%%%%%%%%%%%%%%%%%%%%%%%%%%%%%%%%%%%%%%%%%%%%%%%%%%%%%%%
%%%%%%%%%%%%%%%%%%%%%%%%%%%%%%%%%%%%%%%%%%%%%%%%%%%%%%%%%%%%%%%%%%%%%%%%%%%%%%
\subsection{Schottky uniformization}\label{sec:Schottky}

%%%%%%%%%%%%%%%%%%%%%%%%%%%%%%%%%%%%%%%%%%%%%%%%%%%%%%%%%%%%%%%%%%%%%%%%%%%%%%
%%%%%%%%%%%%%%%%%%%%%%%%%%%%%%%%%%%%%%%%%%%%%%%%%%%%%%%%%%%%%%%%%%%%%%%%%%%%%%
\subsubsection{Construction}\label{sec:Schottly_defs}

There are different ways to parameterize compact Riemann surfaces: one way that has proven useful for considering meromorphic kernels for higher-genus polylogarithms is the \emph{Schottky uniformization}~\cite{Schottky1887}. We are going to provide a lightning review in this subsection, whereas a complete introduction can be found in \rcites{Bobenko:2011,herrlich1schottky}. More details on Schottky uniformization in the context of higher-genus polylogarithms is provided in \rcite{Baune:2024biq}.

Let $C_1, \ldots, C_\genus, C_1^{\prime}, \ldots, C_\genus^{\prime}$ be $2\genus$ mutually disjoint Jordan curves on~$\ComplexComplete$. The \emph{Schottky group}~$\SGroup$ is then a discrete subgroup of $\grp{PSL}(2, \Complex)$, which is generated by $\genus$ loxodromic transformations $\sigma_1,\ldots ,\sigma_\genus\,{\in}\,\grp{PSL}(2,\Complex)$. Each Schottky transformation $\sigma_i$ is defined by mapping the exterior of $C_i$ to the interior of $C_i^{\prime}$. We call the transformations $\sigma_i$, $i\,{\in}\,\{1,\ldots,\genus\}$, \emph{Schottky generators}\footnote{In contrast to \rcite{Baune:2024biq}, we use the symbols $\sigma_j$ for the Schottky generators instead of $\gamma_j$, as the latter will be used to denote elements of certain cosets of the Schottky group below.}.

Each Schottky generator $\sigma_i$ can be characterized by its two fixed points $P_i,P_i'$, defined as
\begin{equation}\label{eqn:fixedpoints}
	P_i\,{\coloneqq} \lim_{n\to\infty} \sigma_i^{-n}z\qquad \text{and}\qquad P_i'\,{\coloneqq} \lim_{n\to\infty} \sigma_i^{n}z,
\end{equation}
and by a multiplier~$\lambda_i$, which can be chosen with $0 \,{<}\,|\lambda_i|\,{ <}\, 1 $. From the definition, it is clear that $P_i'\,{\in}\,\mathring{C_i'}$ and $P_i\,{\in}\,\mathring{C_i}$, where $\mathring{S}$ denotes the interior of a subset $S\,{\in}\,\ComplexComplete$. We call the fixed point $P_i'$ the \textit{attractive} fixed point whereas $P_i$ is called \textit{repulsive}. This terminology arises from the fact that repeated application of $\sigma_i$ to a generic point $z_0\,{\in}\,\ComplexComplete\setminus\{P_i',P_i\}$ results in $|z_0\,{-}\,P_i'|\,{\rightarrow}\,0$ in the limit of infinitely many applications of $\sigma_i$. On the other hand, $|z_0\,{-}\,P_i|$ is bounded from below independently of the number of applications of $\sigma_i$. The generator $\sigma_i$, the multiplier $\lambda_i$ and the two fixed points $P_i',P_i$ are related by the \textit{fixed point equation}
\begin{equation}\label{eqn:fixedpteq}
	\frac{\sigma_iz-P_i'}{\sigma_iz-P_i}=\lambda_i\frac{z-P_i'}{z-P_i}.
\end{equation}
The \emph{Schottky uniformization theorem} (cf.~\cite[Thm.~29]{Bobenko:2011}) ensures that for every Riemann surface $\RSurf$, there exists a Schottky group $\SGroup$ uniformizing $\RSurf$. The Riemann surface is then given by
\begin{equation}
	\RSurf=\Omega(\SGroup)\slash\SGroup\, ,
\end{equation} 
where $\Omega(\SGroup)\,{\subset}\,\ComplexComplete$ is the \emph{region of discontinuity} of $\SGroup$. The connected region $\funddom\,{\subset}\,\ComplexComplete$ bounded by all $2\genus$ curves $\lbrace C_i,C_i' \rbrace_{i=1}^h$ is called the \textit{fundamental domain} of the Riemann surface. When the Jordan curves $C_i, C_i^{\prime}$ are restricted to circles, a \textit{classical Schottky group} is obtained. In the following, we are going to focus on classical Schottky groups. 

\begin{figure}[t]
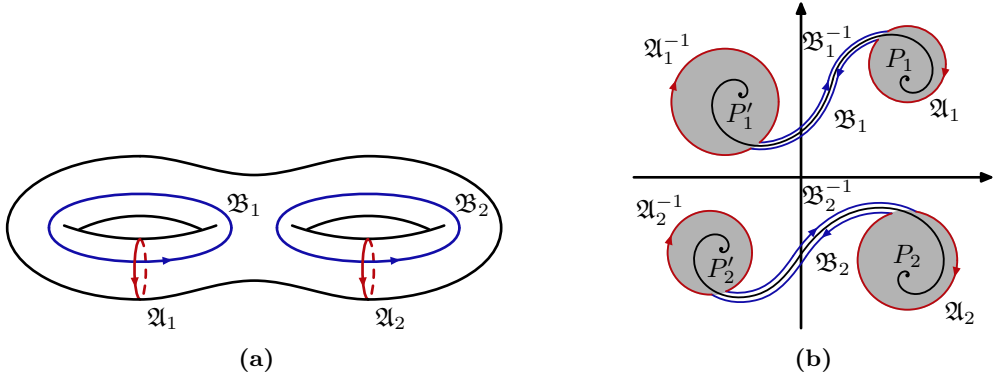

	\centering
    \begin{subfigure}[b]{0.45\textwidth}
        \centering 
        \mpostuse{genustwo}
        \caption{}
    \end{subfigure}
    \begin{subfigure}[b]{0.45\textwidth}
        \centering 
        \mpostuse{genustwoSchottky}
        \caption{}
    \end{subfigure}
	\caption{Schottky uniformization (b) of the genus-two Riemann surface in (a). \Atxt- and \Btxt-cycles are drawn in red and blue, respectively. The fundamental domain $\funddom$ of the Schottky cover is the region outside all circles.}
	\label{fig:genustwo}
\end{figure}
In \figref{fig:genustwo} we sketch a Riemann surface of genus two and its corresponding Schottky uniformization. The (canonical) homology basis $\acyc_1,\ldots,\acyc_\genus$ and $\bcyc_1,\ldots,\bcyc_\genus$ on the Schottky cover is chosen in the following way: upon projecting from the Schottky cover to the Riemann surface, the circles $C_i$ (and all their copies under the Schottky group) descend to a representative of the homology class\footnote{By abuse of notation, we use the same notation $\acyc_i$ and $\bcyc_i$, $i\,{\in}\,\{1,\ldots,\genus\}$ for the homology basis and its representatives, i.e.~the actual (non-contractible) paths on the Riemann surface. Moreover, we will not distinguish between $\acyc_i$ on the Riemann surface and the path representing it on the covering space. Accordingly, we will freely switch between these notions.}
of $\acyc_i$ for all $i\,{\in}\,\{1,\ldots,\genus\}$. Accordingly, we define the circle $C_i$ with negative (i.e.~clockwise) orientation to be the path on the covering space representing the cycle $\acyc_i$ on the Riemann surface. Furthermore, we represent $\bcyc_i$ by the path on the covering space connecting $C_i$ and $C_i'$, starting at $C_i$ and ending at $C_i'$. More precisely, the path on the Schottky cover representing $\bcyc_i$ connects a point $z_0$ on $C_i$ with the point $\sigma_iz_0$ on $C_i'$, hence $\bcyc_i$ can be identified with the action of the generator $\sigma_i$ on the Schottky cover. The conventions for both $\acyc_i$ and $\bcyc_i$ are illustrated in \figref{fig:genustwo}. Finally, let us introduce some further terminology associated to $\acyc_i$. By definition of the Schottky cover, the $\acyc_i$ are Jordan curves on the Riemann sphere and therefore divide $\ComplexComplete$ into two disjoint connected subsets by the Jordan curve theorem. We call the two resulting subsets the \textit{sides} of the curve. Moreover, for a given orientation of the curve, we refer to the side to the right of the curve as being \textit{enclosed by the curve}. Notice that this ensures that $\acyc_i$, i.e.~the circle $C_i$ with negative orientation, encloses the bounded side (see \figref{fig:genustwo}). 

Let us introduce the subgroups $\SGroup_i\,{\coloneqq}\,\langle\sigma_i\rangle\,{\leq}\, \SGroup$, for $i\,{\in}\,\{1,\ldots ,\genus\}$ and consider the right and left cosets
\begin{align}\label{eqn:cosetdefinition}
	\SCosetR{i} = \{ \sigma_{j_1}^{n_1} \cdots \sigma_{j_k}^{n_k} \mid \sigma_{j_k} \neq \sigma_i \}\,, \nonumber\\
	\SCosetL{i} = \{ \sigma_{j_1}^{n_1} \cdots \sigma_{j_k}^{n_k} \mid \sigma_{j_1} \neq \sigma_i \}\,.
\end{align}
Then, the normalized holomorphic differentials of a Riemann surface (cf.~\eqn{eqn:periodmatrix}) can be expressed on the Schottky cover with Schottky group $\SGroup$ as the Poincar\'e series 
\begin{equation}\label{eqn:schottky-holomorphic-basis}
\begin{aligned}
	\omega_i(z\mid\SGroup) 
	&= \frac{1}{2\pi \iunit}\sum_{\gamma \in \SCosetR{i}} \left(\frac{1}{z-\gamma P_i'} - \frac{1}{z-\gamma P_i}\right) \der z \\
    &= \frac{1}{2\pi \iunit}\sum_{\gamma \in \SCosetL{i}} \left(\frac{1}{\gamma z-P_i'} - \frac{1}{\gamma z-P_i}\right) \der(\gamma  z),
\end{aligned}
\end{equation}
where $P_i$ and $P_i'$ are the fixed points defined in \eqn{eqn:fixedpoints}. Employing the above representation of the holomorphic differential, Abel's map can be calculated directly on the Schottky cover: for a pair of points $z$, $z_0\,{\in}\,\funddom$ one finds for $i\,{\in}\,\{1,\ldots,\genus\}$
\begin{equation}\label{eq:abel-map-schottky}
	\abel_i(z,z_0\mid\SGroup) = \int_{z_0}^z \omega_i = \frac{1}{2\pi \iunit} \sum_{\gamma \in \SCosetR{i}} \log \{z,\gamma P_i',z_0,\gamma P_i\}\,,
\end{equation}
where the cross-ratio is defined as
\begin{equation}\label{eqn:crossratio}
	\{z_1,z_2,z_3,z_4\} = \frac{(z_1-z_2)(z_3-z_4)}{(z_1-z_4)(z_3-z_2)}.
\end{equation}
Moreover, the corresponding period matrix $\tau(\SGroup)$, has entries (see e.g.~\rcite{Baker})
\begin{equation}\label{eq:period-matrix-schottky}
\begin{aligned} 
	\periodmatrix_{ij}(\SGroup) &= \int_{\bcyc_j}\omega_i= \frac{1}{2\pi\iunit}\sum_{\gamma\in\SCosetLR{j}{i}}\log\{P_j', \gamma P_i', P_j, \gamma P_i\}\,,\quad i\neq j,\\
	\periodmatrix_{ii}(\SGroup) &= \int_{\bcyc_i}\omega_i= \frac{1}{2\pi\iunit}\log\lambda_i+\frac{1}{2\pi\iunit}\sum_{\substack{\gamma\in\SCosetLR{i}{i}\\\gamma\neq\id}}\log\{P_i', \gamma P_i', P_i, \gamma P_i\}\,.
\end{aligned}
\end{equation}
The Schottky uniformization of a Riemann surface is not unique: conjugation by a M\"obius transformation $\mu\,{\in}\,\grp{PSL}(2,\mathbb C)$ corresponds to a different choice of coordinates on $\ComplexComplete$ and leaves the Riemann surface unchanged. Accordingly, the cross-ratio~\eqref{eqn:crossratio} --- and thus also Abel's map and the period matrix --- are invariant under $\mu$. As each point on the Schottky cover transforms according to $z\,{\mapsto}\, \mu z$, the Schottky group elements $\gamma\,{\in}\, \SGroup$ transform as $\gamma\,{\mapsto}\, \mu \gamma \mu^{-1}$. Furthermore, one can show the holomorphic differentials to be invariant under the M\"obius transformation.

In later sections we will often be interested in objects defined on genus-one covers defined through the Schottky group $\SGroup_j$, which we will refer to as \emph{subcovers}. For brevity of notation, we introduce the following shorthands for these subcovers\footnote{Note the abuse of notation for $\abel_j$ below, which here refers to a one-dimensional Abel map on the $j$-th subcover, not to be confused with the $j$-th component of the \genus{}-dimensional Abel map defined above in \eqn{eq:abel-map-schottky}.}:
\begin{subequations}\label{eqn:shorthands}
\begin{align}
	&\tau_j:=\tau(\SGroup_j),\\
	&q_j:=\exp(2\pi\iunit\tau_j) = \lambda_j,\\
	&\abel_j(z,z_0):=\abel(z,z_0\mid\!\SGroup_j).
\end{align}
\end{subequations}
%

%%%%%%%%%%%%%%%%%%%%%%%%%%%%%%%%%%%%%%%%%%%%%%%%%%%%%%%%%%%%%%%%%%%%%%%%%%%%%%
%%%%%%%%%%%%%%%%%%%%%%%%%%%%%%%%%%%%%%%%%%%%%%%%%%%%%%%%%%%%%%%%%%%%%%%%%%%%%%
\subsubsection{Schottky uniformization for hyperelliptic surfaces}\label{sec:hyperellipticSchottky}

Schottky uniformizations for hyperelliptic surfaces inherit a substantial amount of symmetry from the hyperelliptic involution~\eqref{eqn:hyperellipticinvolution}. Those symmetries imply certain additional identities valid only on hyperelliptic Riemann surfaces. Our prime example is the alternating identity described and explored in \secref{sec:altid}.

As a preparation for dealing with these identities below, consider the following theorem, in which the hyperelliptic property is related to particularly symmetric configurations of the repulsive and attractive fixed points of the individual Schottky generators:
\begin{theorem}[Keen~\cite{Keen_1980}]\label{thrm:hyper}
    Let $\RSurf$ be a hyperelliptic Riemann surface of genus $\genus\,{\geq}\,1$. Then $\RSurf$ can be uniformized by a Schottky group\footnote{Notice that this group will in general not be classical, i.e.~its generators are not necessarily defined via circles.} $\SGroup$, which admits a set of generators $\varsigma\texteq\{\sigma_1,\ldots,\sigma_\genus\}$ satisfying
    \begin{equation}\label{eqn:hyperellFP}
        P_j' = -P_j \, , \quad \forall\,j\in\{1,\ldots,\genus\},
    \end{equation}
    for $P_j'$, $P_j$ the fixed points of $\sigma_j$.
\end{theorem}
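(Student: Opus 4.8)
The plan is to obtain \eqref{eqn:hyperellFP} by producing a Schottky uniformization of $\RSurf$ on which the hyperelliptic involution $\iota$ of \eqn{eqn:hyperellipticinvolution} is realized by the Möbius involution $z\mapsto -z$; the fixed-point relation then falls out of a one-line computation. Recall that $\iota$ fixes exactly the $2\genus+2$ branch points and that $\RSurf/\iota\simeq\ComplexComplete$.

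I would start from a symmetric topological model: place $\RSurf$ in $S^3$ as the boundary of a handlebody $H$ in the standard way so that a $\pi$-rotation $\rho$ of $S^3$ preserves $H$, has its axis meeting $\RSurf$ in the $2\genus+2$ branch points, and restricts on $\partial H=\RSurf$ to $\iota$ (equivalently, one may use the slit model in which $\RSurf$ is built by gluing two copies of $\ComplexComplete$ along $\genus{+}1$ arcs joining pairs of branch points, with $\iota$ the sheet interchange). As the cycles $\acyc_1,\dots,\acyc_\genus$ take the meridians of the $\genus$ handles: they are disjoint, their complement in $\RSurf$ is connected and planar, so $\{\acyc_i\}$ is a Schottky cut system, and $\rho$ maps each $\acyc_i$ to itself with reversed orientation. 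Pick also loops $\sigma_1,\dots,\sigma_\genus$ in $H$ dual to these meridians, based at a branch point on the axis, chosen symmetric under $\rho$ (a tail running along the axis followed by a loop through the handle, the whole thing $\rho$-symmetric), so that $\rho_\ast\sigma_j=\sigma_j^{-1}$ in $\pi_1(H)$ exactly. By the retrosection form of the Schottky uniformization theorem, $\RSurf$ equipped with this cut system is uniformized by a Schottky group $\SGroup=\langle\sigma_1,\dots,\sigma_\genus\rangle$, $\RSurf=\Omega(\SGroup)/\SGroup$, with $\SGroup\cong\pi_1(H)$.

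Because $\iota=\rho|_{\partial H}$ extends over $H$, it preserves the normal subgroup $N=\ker(\pi_1(\RSurf)\twoheadrightarrow\pi_1(H))$ whose quotient is $\SGroup$, hence $\iota$ lifts to a holomorphic automorphism $\tilde\iota$ of $\Omega(\SGroup)$ normalizing $\SGroup$. Taking the lift that fixes a point of $\Omega(\SGroup)$ above the chosen branch point makes $\tilde\iota$ an involution inducing on $\SGroup$ the automorphism $\rho_\ast$, i.e.\ $\tilde\iota\sigma_j\tilde\iota^{-1}=\sigma_j^{-1}$ for every $j$. One then checks that $\tilde\iota$ is the restriction of an element of $\grp{PSL}(2,\Complex)$ — a standard fact for conformal automorphisms of $\Omega(\SGroup)$ normalizing the Kleinian group $\SGroup$, obtained by extending $\tilde\iota$ continuously over the limit set (which it permutes) and checking conformality there. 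An involution in $\grp{PSL}(2,\Complex)$ has two distinct fixed points, so after conjugating the whole configuration by a Möbius map we may assume $\tilde\iota=J\colon z\mapsto -z$.

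The conclusion is then immediate. From $J\sigma_jJ^{-1}=\sigma_j^{-1}$ the two maps share a repulsive fixed point; for $\sigma_j^{-1}$ it is $P_j'$ (inversion swaps the attractive and repulsive fixed points), while for $J\sigma_jJ^{-1}$ it is $J(P_j)=-P_j$ (the image under $J$ of the repulsive fixed point $P_j$ of $\sigma_j$). Hence $P_j'=-P_j$, which is \eqref{eqn:hyperellFP}; the multipliers agree automatically, so $\lambda_j$ is left unconstrained, exactly as the statement allows. I expect the real work to lie in the middle two paragraphs: constructing an $\iota$-invariant Schottky cut system together with generators on which $\iota$ acts precisely by inversion — not merely by inversion up to conjugacy — and, more routinely, verifying that the lifted involution is Möbius. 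The case $\genus=1$ should be treated by hand: the limit set is then just two points and the claim reduces to conjugating $z\mapsto\lambda_1 z$ so that its two fixed points become $\pm P_1$.
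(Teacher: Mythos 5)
Your proposal is correct in outline and reaches the right conclusion, but by a genuinely different route from the paper. The paper's proof is a short reduction to Keen's Lemma~2: that lemma supplies generators whose fixed points satisfy the harmonic cross-ratio condition $\{P_j',Q',P_j,Q\}=-1$ with respect to a common pair $(Q',Q)$, and the only work done in the paper is to apply a M\"obius map sending $Q'\mapsto 0$ and $Q\mapsto\infty$, whence $\mu P_j'/\mu P_j=-1$. You instead re-derive the geometric content behind that lemma: an $\iota$-symmetric cut system with generators inverted on the nose by $\iota_*$, retrosection, a lift of $\iota$ to a M\"obius involution $J$ normalizing $\SGroup$ with $J\sigma_jJ^{-1}=\sigma_j^{-1}$, normalization of $J$ to $z\mapsto-z$, and the final fixed-point swap (attractive/repulsive exchange under inversion) giving $P_j'=-P_j$ --- that last computation is right, and the basepoint-fixing lift argument for $\tilde\iota^{\,2}=\id$ is also correct since $\SGroup$ acts freely on $\Omega(\SGroup)$. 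What your route buys is a self-contained, geometrically transparent argument that realizes the hyperelliptic involution concretely as $z\mapsto -z$ on the Schottky cover, which also illuminates the map $\theta$ used in \lemref{lemma:altid}; what it costs is that the two steps you explicitly defer --- choosing surface generators based at a branch point so that $\iota_*\sigma_j=\sigma_j^{-1}$ exactly (not merely up to conjugacy), and extending the lifted conformal involution over the Cantor limit set to an element of $\grp{PSL}(2,\Complex)$ --- are precisely the nontrivial inputs that the paper's citation of Keen encapsulates. Both deferred facts are standard (the first is handled by your $\rho$-symmetric loops through a fixed basepoint on the axis, the second by equivariant extension over the limit set, which for Schottky groups is a measure-zero, conformally removable Cantor set), so the argument is sound; but as written it is a proof modulo these inputs, whereas the paper's is a proof modulo Keen's lemma, and the genus-one case you set aside is indeed trivial by conjugation.
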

\begin{proof}
    The proof is essentially carried out in \rcite{Keen_1980}. In particular, ref.~\cite[Lemma 2]{Keen_1980} implies that $\SGroup$ admits a set of generators $\varsigma\texteq\{\sigma_1,\ldots,\sigma_\genus\}$, whose fixed points $P_j', P_j$ satisfy\footnote{Our definition of the cross-ratio differs from the convention used in \rcite{Keen_1980} by the permutation $(23)\,{\in}\, S_4$, where a permutation $\rho\,{\in}\, S_4$ acts on the cross-ratio as $\rho(\{z_1,z_2,z_3,z_4\})\texteq\{z_{\rho(1)},z_{\rho(2)},z_{\rho(3)},z_{\rho(4)}\}$.}
    \begin{equation}\label{eqn:proofhyperellFP}
        \{P_j',Q',P_j,Q\}=-1
    \end{equation}
    for all $j\,{\in}\,\{1,\ldots,\genus\}$. Here, $(Q',Q)$ is a fixed pair of complex numbers (or $\infty$) called the \textit{harmonic conjugates} of $\{(P_j',P_j)\}_{i=1}^\genus$. Now we can use the freedom of the Schottky uniformization under conjugation with an arbitrary M\"obius transformation: consider $\mu\,{\in}\,\grp{PSL}(2,\zC)$ to be a transformation that maps $Q'\,{\mapsto}\,0$ and $Q\,{\mapsto}\,\infty$. Since the cross-ratio is invariant under M\"obius transformations, \eqn{eqn:proofhyperellFP} immediately implies
    \begin{equation}
        -1 = \{P_j',Q',P_j,Q\} = \{\mu P_j',0,\mu P_j,\infty\} = \frac{\mu P_j'}{\mu P_j} \, .
    \end{equation}
    From this, we can see that the desired Schottky group is given by $\tilde\SGroup\texteq \mu\SGroup \mu^{-1}$ with generators $\tilde\varsigma\texteq\mu\varsigma \mu^{-1}$, which concludes the proof.
\end{proof}
An example of a Schottky cover representing a hyperelliptic surface is shown in \figref{fig:hyper}. This symmetry translates into a rather technical relation for the cross-ratios involving the fixed points of the generators, as collected in the following lemma:
\begin{figure}[t]
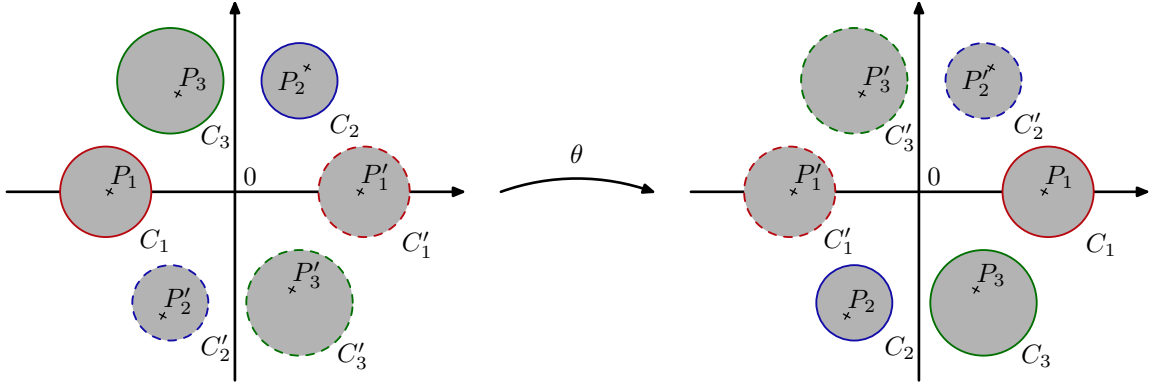

	\centering
	\mpostuse{hyperelliptic}
	\caption{A genus-three Schottky cover with a circle setup such that the fixed points fulfill $P_j'\texteq{-}\,P_j$, making this surface manifestly hyperelliptic. Applying the transformation $\theta$ from the proof of \lemref{lemma:altid} swaps the fixed points, leading to the picture on the right side.}
	\label{fig:hyper}
\end{figure}
\begin{lemma}\label{lemma:altid}
    Let $\RSurf$ be a hyperelliptic Riemann surface of genus $\genus\,{\geq}\, 2$. Then, there exists a Schottky group $\SGroup$ uniformizing $\RSurf$, which admits a set of generators $\sgen_1,\ldots,\sgen_\genus$ such that 
    \begin{equation}
        \frac{(\tilde\gamma P_{e_1}'- P_{e_2}')(\tilde\gamma P_{e_1}- P_{e_2})}{(\tilde\gamma P_{e_1}-P_{e_2}')(\tilde\gamma P_{e_1}'- P_{e_2})}=\frac{(\gamma P_{e_1}'- P_{e_2}')(\gamma P_{e_1}- P_{e_2})}{(\gamma P_{e_1}- P_{e_2}')(\gamma P_{e_1}'- P_{e_2})} \, ,
    \end{equation}
    where $e_1,e_2\,{\in}\,\{1,\ldots,\genus\}$, $\gamma\,{\in}\,\SGroup$ and $P_{e_1}', P_{e_1}, P_{e_2}', P_{e_2}\,{\in}\,\ComplexComplete$ denote the fixed points of the generators $\sigma_{e_1}$ and $\sigma_{e_2}$, respectively. The element $\tilde\gamma\,{\in}\,\SGroup$ is defined by replacing $m_l\,{\rightarrow}\, {-}\,m_l$ in $\gamma\texteq\sigma_{j_1}^{m_1}\cdots\sigma_{j_s}^{m_s}$ for all $l\,{\in}\,\{1,\ldots,s\}$.
\end{lemma}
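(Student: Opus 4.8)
The plan is to exploit the fixed-point symmetry $P_j'\texteq-P_j$ supplied by \thmref{thrm:hyper} together with the single M\"obius transformation $\theta\colon z\mapsto-z$ — the map already announced in the caption of \figref{fig:hyper}. First I would invoke \thmref{thrm:hyper} to fix a Schottky group $\SGroup$ uniformizing $\RSurf$ with generators $\sigma_1,\ldots,\sigma_\genus$ whose fixed points obey $P_j'\texteq-P_j$; since a Schottky group is free on its generators, the reduced word $\gamma\texteq\sigma_{j_1}^{m_1}\cdots\sigma_{j_s}^{m_s}$ is unique, so $\tilde\gamma$ is well defined. Note that $\theta$ interchanges the two fixed points of each generator: $\theta P_j\texteq P_j'$ and $\theta P_j'\texteq P_j$, and that $\theta^{-1}\texteq\theta$.

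The key step is the conjugation identity $\theta\sigma_i\theta^{-1}\texteq\sigma_i^{-1}$ for every $i$. This follows from a one-line manipulation of the fixed-point equation \eqref{eqn:fixedpteq}: writing $w\texteq\theta\sigma_i\theta^{-1}z\texteq-\sigma_i(-z)$ and substituting $P_i'\texteq-P_i$, equation \eqref{eqn:fixedpteq} rearranges to $\frac{w-P_i}{w-P_i'}\texteq\lambda_i\frac{z-P_i}{z-P_i'}$, which is precisely the fixed-point equation characterizing $\sigma_i^{-1}$ (now with $P_i$ attractive and multiplier $\lambda_i$). Because conjugation by $\theta$ is a group homomorphism, this gives $\theta\gamma\theta^{-1}\texteq\tilde\gamma$ for every $\gamma\in\SGroup$, hence $\tilde\gamma P_{e_1}'\texteq\theta\gamma\theta^{-1}P_{e_1}'\texteq-\gamma P_{e_1}$ and likewise $\tilde\gamma P_{e_1}\texteq-\gamma P_{e_1}'$.

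It then remains to substitute $\tilde\gamma P_{e_1}'\texteq-\gamma P_{e_1}$, $\tilde\gamma P_{e_1}\texteq-\gamma P_{e_1}'$ and $P_{e_2}'\texteq-P_{e_2}$ into the right-hand side of the asserted equality. The numerator and the denominator each acquire two overall minus signs, which cancel, and after replacing $-P_{e_2}'\texteq P_{e_2}$ the expression collapses to the left-hand side; equivalently, one recognizes both sides as the cross-ratio $\{\gamma P_{e_1}',P_{e_2}',\gamma P_{e_1},P_{e_2}\}$ upon using M\"obius-invariance of \eqref{eqn:crossratio} under $\theta$ and the elementary symmetry $\{z_1,z_2,z_3,z_4\}\texteq\{z_3,z_4,z_1,z_2\}$. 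The only point that requires genuine care is the conjugation identity $\theta\sigma_i\theta^{-1}\texteq\sigma_i^{-1}$: one must keep track of which fixed point is attractive versus repulsive and how the multiplier inverts, so as to be sure that the conjugate is honestly $\sigma_i^{-1}$ and not merely some loxodromic map sharing its fixed-point pair. Everything else is routine bookkeeping with $P_j'\texteq-P_j$ and the definition \eqref{eqn:crossratio} of the cross-ratio.
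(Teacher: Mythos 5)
Your proof is correct and follows essentially the same route as the paper: \thmref{thrm:hyper}, the map $\theta(z)\texteq{-}\,z$, the conjugation identity $\theta\sigma_i\theta^{-1}\texteq\sigma_i^{-1}$ (hence $\theta\gamma\theta^{-1}\texteq\tilde\gamma$), and cross-ratio bookkeeping with $P_j'\texteq{-}\,P_j$. The only difference is that you verify the conjugation identity directly from the fixed-point equation~\eqn{eqn:fixedpteq}, whereas the paper argues that $\sigma_i\theta$ is an involution; your shortcut is sound, since the relation $\frac{w-P_i}{w-P_i'}\texteq\lambda_i\frac{z-P_i}{z-P_i'}$ uniquely characterizes $\sigma_i^{-1}$ as a M\"obius transformation.
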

\begin{proof}
    Schematically, the idea of the proof is to find an appropriate M\"obius transformation $\theta$ that performs the exchange $\gamma\,{\rightarrow}\,\tilde\gamma$ in the cross-ratio. Conjugating with this function then directly yields the statement of the lemma. Let us make this more precise. Since the surface $\RSurf$ is hyperelliptic, \thmref{thrm:hyper} guarantees that we can find generators $\sgen_i$, $i\,{\in}\,\{1,\ldots,\genus\}$ such that their fixed points satisfy $P_i\texteq{-}\,P_i'$ for all $i\,{\in}\,\{1,\ldots,\genus\}$. Let now $\theta(z)\texteq{-}\,z$ be the M\"obius transformation\footnote{For $\genus\texteq2$, the map $\theta$ can be constructed for general configurations of the Schottky uniformization, which reflects the fact that every Riemann surface of genus two is hyperelliptic.} permuting the fixed points $P_i'$ and $P_i$ for all $i\,{\in}\,\{1,\ldots,\genus\}$. As a next step, we want to show that $\theta$ indeed implements the desired transformation $\gamma\,{\rightarrow}\,\tilde\gamma$. To do this, note that
    \begin{align}
        \sgen_i\theta(P_i')&=\sgen_i(P_i)=P_i,
    \end{align}
    as well as $\sgen_i\theta(P_i)\texteq P_i'$, $i\,{\in}\,\{1,2\}$, accordingly. Upon conjugation with an appropriate transformation sending $P_i'\,{\rightarrow}\,0$ and $P_i\,{\rightarrow}\,\infty$, this shows that $\sgen_i\theta$ is an inversion, i.e.~$\sigma_i\theta(z)\texteq\frac{c}{z}$ for some $c\,{\in}\,\zC^\times$, in this specific coordinate system where $P_i'\texteq0$ and $P_i\texteq\infty$. Since inversions are involutions\footnote{A M\"obius transformation $\mu$ is an involution if it satisfies $\mu^2\texteq\id$.}, we can conclude that $(\sgen_i\theta)^2\texteq\id$ in this coordinate system. However, if a M\"obius transformation is an involution in one coordinate system, it is so in every coordinate system. Thus, we have shown that $(\sgen_i\theta)^2\texteq\id$ generically. This now implies that $\theta\sigma_i\theta^{-1}\texteq\sgen_i^{-1}$, $i\,{\in}\,\{1,\ldots,\genus\}$, after restructuring and applying $\theta^{-1}\texteq \theta$. We can now see that $\theta$ indeed satisfies $\theta\gamma\theta^{-1}\texteq\tilde\gamma$ as it replaces each generator by its inverse. This directly yields the statement as
    \begin{equation}
        \begin{aligned}
            \frac{(\gamma P_{e_1}'-P_{e_2}')(\gamma P_{e_1}-P_{e_2})}{(\gamma P_{e_1}'-P_{e_2})(\gamma P_{e_1}-P_{e_2}')} &= \frac{((\theta\gamma \theta^{-1})\theta(P_{e_1}')-\theta(P_{e_2}'))((\theta\gamma \theta^{-1})\theta(P_{e_1})-\theta(P_{e_2}))}{((\theta\gamma \theta^{-1})\theta(P_{e_1}')-\theta(P_{e_2}))((\theta\gamma \theta^{-1})\theta(P_{e_1})-\theta(P_{e_2}'))} \\
            &= \frac{(\tilde\gamma P_{e_1}-P_{e_2})(\tilde\gamma P_{e_1}'-P_{e_2}')}{(\tilde\gamma P_{e_1}-P_{e_2}')(\tilde\gamma P_{e_1}'-P_{e_2})} \, ,
        \end{aligned}
    \end{equation}
    where we have used the invariance of the cross-ratio under M\"obius transformations.
\end{proof}
To obtain some intuition for the above proof, notice that the map $\theta(z)\texteq{-}\,z$ swaps the roles of the attractive and repulsive fixed points on a hyperelliptic Schottky cover. This induces the transformation $\sgen_i\,{\rightarrow}\,\sgen_i^{-1}$, $i\,{\in}\,\{1,\ldots,\genus\}$, of the generators of the Schottky group, as the attractive (repulsive) fixed point of the generator $\sgen_i^{-1}$ corresponds to the repulsive (attractive) fixed point of $\sgen_i$ for all $i\,{\in}\,\{1,\ldots,\genus\}$.

%%%%%%%%%%%%%%%%%%%%%%%%%%%%%%%%%%%%%%%%%%%%%%%%%%%%%%%%%%%%%%%%%%%%%%%%%%%%%%
%%%%%%%%%%%%%%%%%%%%%%%%%%%%%%%%%%%%%%%%%%%%%%%%%%%%%%%%%%%%%%%%%%%%%%%%%%%%%%
\subsubsection{Splitting of Schottky words}\label{sec:splitting_of_Schottky_words}

To prepare for later sections, let us define various operations on elements of the Schottky group~$\SGroup$. We are going to refer to particular Schottky group elements as \textit{Schottky words} 
\begin{equation}\label{eqn:reduced}
\gamma = \sgen_{i_1}^{n_1} \sgen_{i_2}^{n_2} \cdots \sgen_{i_k}^{n_k} \in \SGroup
\end{equation}
built from the alphabet of Schottky generators $\sgen_i$ with $i_l\,{\ne}\, i_{l+1}$, $i_l \,{\in}\, \{1,\ldots,\genus\}$ and $n_l \,{\in}\, \Integers \setminus \{0\}$. 
\begin{definition}\label{def:splitting}
For a Schottky word $\gamma\,{\in}\,\SGroup$, we define the \emph{splitting}
\begin{align}
    \psplt{i}{\gamma}\,,\,\nsplt{i}{\gamma}\,\,{\subset}\, \SGroup\,{\times}\, \SGroup
\end{align}
as the set of all possible ways to split $\gamma$ either after $\sgen_i$ or before $\sgen_i^{-1}$, respectively. The \emph{splitting number} is denoted by 
\begin{align}\label{eqn:nsplitting}
    N_i^\pm(\gamma) = \len{\ssplit_i^\pm(\gamma)}\,,
\end{align}
where $\len{S}$ denotes the number of elements in the set $S$.
\end{definition}
\begin{example}
For example, for $\gamma \texteq \sigma_j\sigma_i^2\sigma_j \sigma_i^{-1}\sigma_j$ with $i\,{\neq}\, j$, we find
\begin{align}
    &\ssplit_i^+(\gamma)= \left\{(\sigma_j\sigma_i^2,\sigma_j\sigma_i^{-1}\sigma_j),(\sigma_j\sigma_i,\sigma_i\sigma_j\sigma_i^{-1}\sigma_j)\right\},&\quad N_i^+(\gamma)=2\nonumber\, ,\\
    &\ssplit_i^-(\gamma)=\left\{(\sigma_j \sigma_i^2\sigma_j,\sigma_i^{-1}\sigma_j)\right\},&\quad N_i^-(\gamma)=1 \, .
\end{align}
\end{example}
When working with the above splittings, we often use special subsets or cosets of the Schottky group $\SGroup$:
\begin{definition}
Let $\mathrm{S} \,{\subset}\, \SGroup$ be any subset of the Schottky group, and let $i\,{\in}\,\{1,\ldots,\genus\}$. Then we define
\begin{subequations}\label{eqn:set}
\begin{align}
    \label{eqn:set1}
    \ssubsetl{\mathrm{S}}{\pm}{i}&=\left\{\sigma_i^n\gamma\in \mathrm{S}\with n\in\zZ_\pm,\, \gamma\in\CustomSL{\SGroup_i}{\mathrm{S}}\right\}, \\
    \label{eqn:set2}
    \ssubsetr{\mathrm{S}}{\pm}{i}&=\left\{\gamma\sigma_i^n\in \mathrm{S}\with n\in\zZ_\pm,\, \gamma\in\CustomSR{\mathrm{S}}{\SGroup_i}\right\},
\end{align}
\end{subequations}
where $\zZ_{\pm}\texteq\{n\,{\in}\,\zZ \with {\pm} n\,{>}\,0\}$. 
We also use a shorthand notation for their counterparts
\begin{subequations}\label{eqn:cset}
\begin{align}
    \label{eqn:cset1}
    \csubsetl{\mathrm{S}}{\pm}{i}&=\left\{\sigma_i^n\gamma\in \mathrm{S}\with n\notin\zZ_\pm,\, \gamma\in\CustomSL{\SGroup_i}{\mathrm{S}}\right\},\\
    \label{eqn:cset2}
    \csubsetr{\mathrm{S}}{\pm}{i}&=\left\{\gamma\sigma_i^n\in \mathrm{S}\with n\notin\zZ_\pm,\, \gamma\in\CustomSR{\mathrm{S}}{\SGroup_i}\right\}.
\end{align}
\end{subequations}
\end{definition}
\noindent For typical applications, the subset $\mathrm{S}$ will take the form of the whole group $\SGroup$ or one of the cosets $\SCosetL{j}$, $\SCosetR{j}$ for $j\,{\in}\,\{1,\ldots,\genus\}$.

%%%%%%%%%%%%%%%%%%%%%%%%%%%%%%%%%%%%%%%%%%%%%%%%%%%%%%%%%%%%%%%%%%%%%%%%%%%%%%
%%%%%%%%%%%%%%%%%%%%%%%%%%%%%%%%%%%%%%%%%%%%%%%%%%%%%%%%%%%%%%%%%%%%%%%%%%%%%%
%%%%%%%%%%%%%%%%%%%%%%%%%%%%%%%%%%%%%%%%%%%%%%%%%%%%%%%%%%%%%%%%%%%%%%%%%%%%%%
%%%%%%%%%%%%%%%%%%%%%%%%%%%%%%%%%%%%%%%%%%%%%%%%%%%%%%%%%%%%%%%%%%%%%%%%%%%%%%
\section{Review of polylogarithms and MZVs at low genera}\label{sec:rev01}

Before focusing on higher-genus multiple polylogarithms and their associated multiple zeta values, we will take a step back and review the simpler structure of polylogarithms and multiple zeta values on Riemann surfaces of genus zero and one. We will take the most straightforward approach and review the situation at genus zero in \secref{sec:rev0}, then advance to genus one in \secref{sec:rev1}.

Polylogarithms are defined in terms of iterated integrals over a suitable set of differential forms on a Riemann surface. As such, they satisfy a set of general identities attached to their definition in terms of iterated integrals~\cite{Chen,Brown:2013qva}. In particular, any iterated integral of the form $\int_\mu\eta_1\circ\cdots\circ\eta_k$ for smooth one-forms $\eta_1,\ldots,\eta_k$ on a manifold $M$ and an integration path $\mu$ obeys the following identities:
\begin{itemize}
	\item \textbf{Path inversion.} \hspace*{2.5em}$\displaystyle\int_\mu\eta_1\circ\cdots\circ\eta_k=(-1)^k\int_{\mu^{-1}}\eta_k\circ\cdots\circ\eta_1$.\hfill\refstepcounter{equation}\textup{(\theequation)}\label{eqn:pathinversion}
	\item \textbf{Path concatenation.} $\displaystyle\int_\mu\eta_1\circ\cdots\circ\eta_k=\sum_{\ell=0}^k\int_{\mu_1}\eta_1\circ\cdots\circ\eta_\ell\int_{\mu_2}\eta_{\ell+1}\circ\cdots\circ\eta_k$\hfill\refstepcounter{equation}\textup{(\theequation)}\label{eqn:pathconcatenation}\\
	for two paths $\mu_1,\mu_2$ such that $\mu\texteq\mu_1 \mu_2$ is the concatenation of the two paths.
	\item \textbf{Shuffle product.} \hspace*{2em}$\displaystyle\int_{\mu}\eta_1\circ\cdots\circ\eta_k\int_{\mu}\eta_{k+1}\circ\cdots\circ\eta_{k+\ell}=\sum_{\nu\in{\eta_1\cdots\eta_k}\shuffle{\eta_{k+1}\cdots\eta_{k+\ell}}}\int_{\mu}\nu$,\hfill\refstepcounter{equation}\textup{(\theequation)}\label{eqn:shuffle}\\ with $\nu$ a $(k{+}\ell)$-form, which stems from the shuffle product $\shuffle$.
\end{itemize}
%

%%%%%%%%%%%%%%%%%%%%%%%%%%%%%%%%%%%%%%%%%%%%%%%%%%%%%%%%%%%%%%%%%%%%%%%%%%%%%%
%%%%%%%%%%%%%%%%%%%%%%%%%%%%%%%%%%%%%%%%%%%%%%%%%%%%%%%%%%%%%%%%%%%%%%%%%%%%%%
%%%%%%%%%%%%%%%%%%%%%%%%%%%%%%%%%%%%%%%%%%%%%%%%%%%%%%%%%%%%%%%%%%%%%%%%%%%%%%
\subsection{Genus zero}\label{sec:rev0}

%%%%%%%%%%%%%%%%%%%%%%%%%%%%%%%%%%%%%%%%%%%%%%%%%%%%%%%%%%%%%%%%%%%%%%%%%%%%%%
%%%%%%%%%%%%%%%%%%%%%%%%%%%%%%%%%%%%%%%%%%%%%%%%%%%%%%%%%%%%%%%%%%%%%%%%%%%%%%
\subsubsection{Polylogarithms}

\emph{Multiple polylogarithms} (MPLs), also called \emph{Goncharov polylogarithms}, are iterated integrals
\begin{align}\label{eqn:GPL}
	G(a_1, a_2, \ldots, a_n ; z):=\int_0^z\frac{\dd t}{t-a_1} G(a_2, \ldots, a_n ; t), \quad G(; z)=1.
\end{align}
The integrals in \eqn{eqn:GPL} can have endpoint divergences, which need to be regularized. The general idea behind regularization of MPLs is to slightly shift the endpoints of the integration path by a small amount $\re\,{>}\,0$ in case of a divergence~\cite{Goncharov:2001iea}. The resulting (iterated) integral can be expanded in $\re$ and the regularized value is taken to be the regular part of this expansion in the limit $\re\,{\rightarrow}\,0$. This is referred to as \emph{canonical regularization} in~\rcite{Goncharov:2001iea}. More precisely, the only divergent MPL at depth one\footnote{Other conventions might refer to the depth of an MPL as the number of $a_i\,{\neq}\,0$. In order to be consistent with later sections, we define the depth as the number of iterated integrations.} (for generic $z\,{\in}\,\ComplexComplete$) is given by
\begin{equation}\label{eqn:depth-oneMPL}
    G(0;z)=\int_0^z\frac{\dd t}{t} \, .
\end{equation}
Shifting the lower integration boundary by a small amount $\re\,{>}\,0$ (along the positive real axis) results in the expansion
\begin{equation}
    G_\re(0;z)=\int_\re^z\frac{\dd t}{t}=\log(z)-\log(\re)
\end{equation}
for \eqn{eqn:depth-oneMPL}. The regularized value is then
\begin{equation}\label{eqn:depth-oneMPLreg}
    G_{\text{reg}}(0;z)=\lim_{\re\rightarrow0}\left(G_\re(0;z)+\log(\re)\right)=\log(z).
\end{equation}
This regularization can be shown to be compatible with the shuffle relations for MPLs. Hence, regularization at higher depth can be reduced to \eqn{eqn:depth-oneMPLreg} by (repeatedly) using the shuffle relations. For example, we have
\begin{align}
    G(\underbrace{0,0,\ldots,0}_n;z):=\frac{1}{n!} \log ^n z,
\end{align}
and more complicated cases can be derived analogously.

%%%%%%%%%%%%%%%%%%%%%%%%%%%%%%%%%%%%%%%%%%%%%%%%%%%%%%%%%%%%%%%%%%%%%%%%%%%%%%
%%%%%%%%%%%%%%%%%%%%%%%%%%%%%%%%%%%%%%%%%%%%%%%%%%%%%%%%%%%%%%%%%%%%%%%%%%%%%%
\subsubsection{Multiple zeta values}

At genus zero, \emph{multiple zeta values} (MZVs) have been historically defined through the sums
\begin{equation}
	\zeta(n_1,\ldots,n_s)=\sum_{0<k_1<\cdots<k_s}\frac{1}{k_1^{n_1}\cdots k_s^{n_s}},
\end{equation}
which can be identified with the MPLs of \eqn{eqn:GPL} evaluated at $z\texteq1$,
\begin{equation}\label{eqn:MZVint}
	\zeta(n_1 \ldots, n_s)=(-1)^s\, G(\underbrace{0, \ldots, 0}_{n_s-1}, 1, \ldots, \underbrace{0, \ldots, 0}_{n_2-1}, 1, \underbrace{0, \ldots, 0}_{n_1-1}, 1 ; 1).
\end{equation}
These special values of MPLs and their structure are well-understood, and they obey shuffle and stuffle relations (see e.g.~\cite{Brown:2013qva} for a review on MZVs). Since the iterated integrals of MZVs~\eqref{eqn:MZVint} encounter endpoint divergences when the indices of the corresponding MPL start with a $1$ or end with a $0$, regularization of these divergences is necessary. To achieve this, one employs the regularization procedure for MPLs reviewed in the previous paragraph for both integration boundaries, e.g.
\begin{equation}
G(0;1)=\lim_{\re\rightarrow0}\left(\int_\re^1\frac{dt}{t}+\log(\re)\right)=0\, ,\quad G(1;1)=\lim_{\re\rightarrow0}\left(\int_0^{1-\re}\frac{dt}{t-1}+\log(\re)\right)=0 \, ,
\end{equation} 
leading to the regularized depth-one MZVs\footnote{One defines $\zeta_0\texteq G(0;1)$.} $\zeta_0\texteq\zeta_1\texteq0$. Similarly as for MPLs, regularized MZVs at higher depth then follow through shuffle relations from the depth-one regularization, e.g.~$\zeta_{1,0}\texteq{-}\,\zeta_{0,1}\,{+}\,\zeta_1\zeta_0\texteq{-}\,\zeta_{0,1}\texteq{-}\,\zeta(2)$.

%%%%%%%%%%%%%%%%%%%%%%%%%%%%%%%%%%%%%%%%%%%%%%%%%%%%%%%%%%%%%%%%%%%%%%%%%%%%%%
%%%%%%%%%%%%%%%%%%%%%%%%%%%%%%%%%%%%%%%%%%%%%%%%%%%%%%%%%%%%%%%%%%%%%%%%%%%%%%
%%%%%%%%%%%%%%%%%%%%%%%%%%%%%%%%%%%%%%%%%%%%%%%%%%%%%%%%%%%%%%%%%%%%%%%%%%%%%%
\subsection{Genus one}\label{sec:rev1}

%%%%%%%%%%%%%%%%%%%%%%%%%%%%%%%%%%%%%%%%%%%%%%%%%%%%%%%%%%%%%%%%%%%%%%%%%%%%%%
%%%%%%%%%%%%%%%%%%%%%%%%%%%%%%%%%%%%%%%%%%%%%%%%%%%%%%%%%%%%%%%%%%%%%%%%%%%%%%
\subsubsection{Polylogarithms}\label{sec:g1reg}

At genus one, polylogarithms are constructed using an infinite set of integration kernels\footnote{More precisely, the integration kernels are defined on the universal cover of the complex torus $\zC\slash(\zZ+\tau\zZ)$, which happens to coincide with $\zC$.} defined on $\zC$. These integration kernels are obtained from the expansion of a generating function known as the \emph{Kronecker function}. This generating function $F(\xi,\alpha\mid\tau)$ can be represented using the odd theta functions as~\cite{BrownLevin}
\begin{align}\label{eqn:KronForm}
    F(\xi, \alpha \mid \tau)=\frac{\theta^{\prime}(0 \mid \tau)\, \theta(\xi+\alpha \mid \tau)}{\theta(\xi \mid \tau) \,\theta(\alpha \mid \tau)},
\end{align}
where $\xi\,{\in}\, \mathbb{C}$ and $\alpha$ is a (formal) power-counting variable.

By formally expanding $F(\xi,\alpha\mid\tau)$ in $\alpha$, we obtain a set of integration kernels $g^{(m)}(\xi \mid \tau)\,\dd\xi$, $m\,{\geq}\,0$, as expansion coefficients of
\begin{align}\label{eqn:gen1Kronecker}
   \alpha\, F(\xi,\alpha \mid \tau)\,\dd\xi = \sum^\infty_{m=0} g^{(m)}(\xi\mid\tau)\,\dd\xi \,\alpha^m.
\end{align}
The monodromies and parity of the odd theta function translate to the properties
\begin{subequations}
    \begin{align}
        F(\xi+1,\alpha\mid\tau)=F(\xi,\alpha\mid\tau),\qquad& F(\xi+\tau,\alpha\mid\tau)=e^{-2\pi\iunit \alpha}\,F(\xi,\alpha\mid\tau),\\
        F(-\xi,\alpha\mid\tau)=&\,-F(\xi,-\alpha\mid\tau)
    \end{align}
\end{subequations}
for the Kronecker function. These properties descend to the kernels, giving
\begin{subequations}
\begin{align}\label{eqn:gen1kernelprops}
    &g^{(n)}(\xi+1\mid\tau)=g^{(n)}(\xi\mid\tau),\qquad g^{(n)}(\xi+\tau\mid\tau)=\sum_{k=0}^n(-2 \pi \iunit)^k g^{(n-k)}(\xi \mid \tau),\\ 
    &g^{(n)}(-\xi\mid\tau)=(-1)^n g^{(n)}(\xi\mid\tau).\label{eqn:gen1kernelparity}
\end{align}
\end{subequations}
The Kronecker function also satisfies the Fay identity, which reads
\begin{equation}\label{eqn:g1Fay}
	F(\xi-\chi,\alpha)\,F(\tilde{\xi}-\chi,\tilde{\alpha})=
	F(\xi-\tilde{\xi},\alpha)\,F(\tilde{\xi}-\chi,\alpha+\tilde{\alpha})+
	F(\xi-\chi,\alpha+\tilde{\alpha})\,F(\tilde{\xi}-\xi,\tilde{\alpha}).
\end{equation}
Expanding the above Fay identity using \eqn{eqn:gen1Kronecker} generates relations among the integration kernels $g^{(n)}$, which in turn imply relations between polylogarithms. A convenient language for exploring those functional relations is given by the symbol, as explicitly explored in~\rcite{Broedel:2018iwv}. Further relations have been investigated in~\rcite{ZagierGangl} and exemplified in \rcite{Broedel_2020}.

Polylogarithms on a Riemann surface of genus one, or \emph{elliptic polylogarithms} (eMPLs), are defined as~\cite{BrownLevin,Broedel:2014vla}
\begin{equation}\label{eqn:eMPL}
	\Gtargxi{n_1 & n_2 & \ldots & n_r}{\chi_1 & \chi_2 & \ldots & \chi_r}=\int_0^\xi \dd \vt\, g^{(n_1)}(\vt-\chi_1)\Gtargtheta{n_2 & \ldots & n_r}{\chi_2 & \ldots & \chi_r},\qquad\Gtargxi{}{}=1,
\end{equation}
in terms of the integration kernels $g^{(m)}(\xi\,{-}\,\chi)$. The dependence on the elliptic modulus $\tau$ is usually omitted.

Once again, regularization of the polylogarithms $\tilde\Gamma$ is required (see e.g.~\rcite{EZ3} for a detailed account). To see this, notice that the integration kernel $g^{(1)}(\xi)$ has a pole at $\xi\texteq0$, as evident from its $q$-expansion~\cite{Broedel:2014vla}
\begin{align}\label{eqn:g1q}
    g^{(1)}(\xi)\,\dd\xi=\pi \cot (\pi \xi)\,\dd\xi +4 \pi \sum_{m,n=1}^{\infty} \sin (2 \pi m \xi)\, q^{m n}\, \dd\xi,\qquad q=\exp(2\pi\iunit \tau).
\end{align}
Therefore, regularization is needed for eMPLs involving the kernel $g^{(1)}(\xi)$. Based on the above $q$-expansion, we can apply the same regularization procedure as described in \secref{sec:rev0}. Shifting the endpoint by a small amount $\re\,{>}\,0$ along the positive real axis, we obtain~\cite{Broedel:2019gba}
\begin{align}\label{eqn:genus1reg}
    \tilde\Gamma_{\mathrm{reg}}\!\left(\,{ }_0^1 \,; \chi\right) & =\log(1-e^{2 \pi \iunit \chi})-\pi \iunit \chi+4 \pi \sum_{k, l>0} \frac{1}{2 \pi k}(1-\cos (2 \pi k \chi))\, q^{k l}
\end{align}
for the value of the regularized eMPL $\tilde\Gamma_{\mathrm{reg}}\!\left(\,{ }_0^1 \,; \chi\right)$. As for genus zero, it can be shown that applying this procedure to a general eMPL is compatible with the shuffle relations~\cite{EZ3}. Therefore, the regularization of eMPLs at higher depth can always be traced back to \eqn{eqn:genus1reg} by using the shuffle relations. For example, at depth $n$, one has
\begin{align}
    \tilde\Gamma_{\text {reg}}\!\,\Big(\underbrace{\,{ }_0^1\,\ldots\,{ }_0^1\,}_{n} ; \chi\Big)=\frac{1}{n!}\left(\tilde\Gamma_{\mathrm{reg}}\!\left({ }_0^1\, ; \chi\right)\right)^n.
\end{align}
In general, the shuffle relations in combination with \eqn{eqn:genus1reg} provide an algorithm for regularization of endpoint divergences of arbitrary eMPLs. As we will see later in \secref{sec:hgMPLreg}, a similar algorithm can be applied at higher genera. As a result, we obtain a generalization of \eqn{eqn:genus1reg} as well as a procedure to reduce the regularization of arbitrary higher-genus polylogarithms to depth one.

%%%%%%%%%%%%%%%%%%%%%%%%%%%%%%%%%%%%%%%%%%%%%%%%%%%%%%%%%%%%%%%%%%%%%%%%%%%%%%
%%%%%%%%%%%%%%%%%%%%%%%%%%%%%%%%%%%%%%%%%%%%%%%%%%%%%%%%%%%%%%%%%%%%%%%%%%%%%%
\subsubsection{Multiple zeta values}\label{sec:emzv}

On the genus-one Riemann surface, the \emph{elliptic multiple zeta values} (eMZVs) are defined as\footnote{Note that, in this definition, we reverse the order of the labels compared to the standard references~\cite{Broedel:2014vla,Broedel:2015hia}. However, this will not change later results since any eMZV identity considered below remains valid by means of the reflection identity~\cite[eq.~(2.24)]{Broedel:2014vla}.}
\begin{align}\label{eqn:eMZVs}
\omell(n_1, n_2, \ldots, n_r) & = \int_{0}^{1} g^{\left(n_1\right)}\left(\xi_1\right) \mathrm{d} \xi_1\int_{0}^{\xi_1} g^{\left(n_2\right)}\left(\xi_2\right) \mathrm{d} \xi_2\cdots \int_{0}^{\xi_{r-1}}g^{\left(n_r\right)}\left(\xi_r\right) \mathrm{d} \xi_r\notag \\
&=\Gtargone{n_1,n_2,\ldots,n_r}{0, \ 0,\ \ldots,\ 0},
\end{align}
so that eMZVs are obtained by choosing the $\acyc$-cycle as the integration path of an eMPL from \eqn{eqn:eMPL}.

As discussed in \subsecref{sec:g1reg}, eMPLs involving the elliptic kernel $g^{(1)}$ can exhibit endpoint divergences. For eMZVs, which are defined by integrating over the closed non-contractible cycle $\acyc\texteq[0,1]$ on the torus, this implies that endpoint divergences can appear at both $0$ and $1$ as $g^{(1)}(\xi)$ has trivial monodromy along $\acyc$ (c.f.~\eqn{eqn:gen1kernelprops}). Consequently, they require a regularization procedure (see e.g.~\cite{Broedel:2014vla,Broedel:2015hia,Matthes:Thesis,Broedel:2019gba} for a detailed account). In order to demonstrate the equivalence with the procedure for higher-genus MZVs carried out in \secref{sec:hgmzvreg}, let us briefly summarize the procedure here. By definition, we can write
\begin{equation}
    \omell(1)=\tilde\Gamma_{\text{reg}}\!\left(\,{ }_0^1\,; 1\right)=\left(\int_0^1g^{(1)}(\xi)\,\dd\xi\right)_{\text{reg}} \, ,
\end{equation}
where $g^{(1)}(\xi)$ diverges at both endpoints $0$ and $1$. By applying path concatenation, we arrive at
\begin{align}
     \omell(1)&=\tilde\Gamma_{\text{reg}}\!\left(\,{ }_0^1\,; 1\right) =\left(\int_0^1g^{(1)}(\xi)\,\dd\xi\right)_{\text{reg}}=\left(\int_0^{\xi_0}g^{(1)}(\xi)\,\dd\xi\right)_{\text{reg}}+\left(\int_{\xi_0}^1g^{(1)}(\xi)\,\dd\xi\right)_{\text{reg}}\nn \\
     &=\tilde\Gamma_{\text{reg}}\!\left(\,{ }_0^1\, ; \xi_0\right) -\tilde\Gamma_{\text{reg}}\!\left(\,{ }_0^1\, ; 1-\xi_0\right),
\end{align}
where $\xi_0\,{\in}\,(0,1)$ and we have used the substitution $\xi\,{\rightarrow}\,1\,{-}\,\xi$ in combination with the trivial monodromy of $g^{(1)}(\xi)$ along $[0,1]$ in the last equality. This allows us to employ the eMPL regularization reviewed in \secref{sec:g1reg}. In particular, \eqn{eqn:genus1reg} yields
\begin{equation}
    \omell(1)=\pi\iunit-\log(-1)=0,
\end{equation}
where we chose the standard branch of the logarithm $\log(-1)\texteq\pi\iunit$. As for eMPLs, this regularization is shuffle-compatible. Hence, regularization of eMZVs at higher depth can be reduced to the regularization of $\omell(1)$ by using the shuffle relations.

%%%%%%%%%%%%%%%%%%%%%%%%%%%%%%%%%%%%%%%%%%%%%%%%%%%%%%%%%%%%%%%%%%%%%%%%%%%%%%
%%%%%%%%%%%%%%%%%%%%%%%%%%%%%%%%%%%%%%%%%%%%%%%%%%%%%%%%%%%%%%%%%%%%%%%%%%%%%%
%%%%%%%%%%%%%%%%%%%%%%%%%%%%%%%%%%%%%%%%%%%%%%%%%%%%%%%%%%%%%%%%%%%%%%%%%%%%%%
%%%%%%%%%%%%%%%%%%%%%%%%%%%%%%%%%%%%%%%%%%%%%%%%%%%%%%%%%%%%%%%%%%%%%%%%%%%%%%
\section{Polylogarithms on higher-genus Riemann surfaces}\label{sec:reviewMPL}

In order to define multiple zeta values on a Riemann surface of higher genus, we need to define a space of polylogarithms on these surfaces. Then, higher-genus multiple zeta values can be defined analogously as for genera zero and one by evaluating the polylogarithms on the non-contractible loops of the homology basis. In \secref{sec:revKernels}, we review the construction of a set of integration kernels on higher-genus Riemann surfaces based on a connection defined by Enriquez in \rcite{EnriquezHigher} as well as the associated construction of polylogarithms. In a second step, we address the issue of regularization for these polylogarithms in \secref{sec:hgMPLreg}. 

%%%%%%%%%%%%%%%%%%%%%%%%%%%%%%%%%%%%%%%%%%%%%%%%%%%%%%%%%%%%%%%%%%%%%%%%%%%%%%
%%%%%%%%%%%%%%%%%%%%%%%%%%%%%%%%%%%%%%%%%%%%%%%%%%%%%%%%%%%%%%%%%%%%%%%%%%%%%%
%%%%%%%%%%%%%%%%%%%%%%%%%%%%%%%%%%%%%%%%%%%%%%%%%%%%%%%%%%%%%%%%%%%%%%%%%%%%%%
\subsection{Review of Enriquez' higher-genus integration kernels and MPLs}\label{sec:revKernels}

%%%%%%%%%%%%%%%%%%%%%%%%%%%%%%%%%%%%%%%%%%%%%%%%%%%%%%%%%%%%%%%%%%%%%%%%%%%%%%
%%%%%%%%%%%%%%%%%%%%%%%%%%%%%%%%%%%%%%%%%%%%%%%%%%%%%%%%%%%%%%%%%%%%%%%%%%%%%%
\subsubsection{Definition and properties}

\begin{propdef}[Enriquez~\cite{EnriquezHigher}, Enriquez--Zerbini~\cite{EZ1}]\label{def:propdef} Let $\RSurf$ be a Riemann surface of genus $\genus\,{\geq}\,1$ and let $\alg{t}$ be the algebra freely generated by the formal variables\footnote{We will refer to these variables as \textit{letters}.} $a_1,\ldots,a_\genus$ and $b_1,\ldots,b_\genus$. Moreover, let $\alg{b}\,{\subset}\,\alg{t}$ be the subalgebra freely generated by the letters $b_1,\ldots,b_\genus$. Then, there exists a unique (meromorphic) flat connection form $K(z,x)$ on the universal cover of $\RSurf$, which is valued in the algebra $\alg{t}$. Furthermore, it is uniquely determined by the properties
\begin{subequations}\label{eqn:defprop}
    \begin{align}
        &K(\sigma_iz,x)=\,e^{b_i}K(z,x),\qquad \qquad \quad &\text{(quasi-periodicity)}\\
        \label{eqn:residueK}&(-2\pi\iunit)\Res_{z=x}\,K(z,x)=\sum_{j=1}^\genus b_j\,a_j,\quad\qquad\qquad\, &\text{(residue)} \\
        &\int_{\acyc_i}K(z,x)=\frac{b_i\,a_i}{e^{b_i}-1},\qquad\qquad\qquad\ \ \ &\text{(\,$\acyc$-cycles)}\label{eqn:Acycconnection}
    \end{align}
\end{subequations}
where $i\,{\in}\,\{1,\ldots,\genus\}$ and the expression $\sigma_iz$ refers to the point $z$ after moving along the cycle $\bcyc_i$, which can be identified with the application of the generator $\sigma_i$ on the Schottky cover.
\end{propdef}
The connection form $K(z,x)$ can be expanded in the algebra generators as
\begin{align}\label{eqn:connectionexpansion}
	K(z,x)&=\sum_{j=1}^{\genus}K_j(z,x) \,a_j=\sum_{r=0}^\infty\sum_{i_1,\ldots,i_r,j=1}^\genus\omega_{i_1\cdots i_rj}(z,x)\,b_{i_1}\cdots b_{i_r}\,a_j,
\end{align}
where the expansion coefficients $\omega_{i_1\cdots i_r j}(z,x)$ define the meromorphic so-called \emph{Enriquez' kernels} and the $K_j$ are called \emph{component forms}. The kernels are one-forms in $z$ and scalar functions in $x$ and can be used for the construction of polylogarithms on higher-genus Riemann surfaces as shown in \subsecref{sec:hgmpls}.

For the construction of higher-genus polylogarithms one can likewise choose a connection that is non-meromorphic and single-valued (in contrast to the meromorphic and quasi-periodic connection of Enriquez), whose expansion yields non-meromorphic and single-valued integration kernels alike. This method was used in \rcite{DHoker:2023vax} and the relation to Enriquez' connection was established in~\rcite{DHoker:2025szl}. Furthermore, one can construct a meromorphic and single-valued connection by allowing the connection to admit poles of higher order at the punctures. This construction was considered by Levin--Racinet at genus one~\cite{LevinRacinet} and generalized to higher genera by Enriquez--Zerbini in \rcite{EZ1}.

The defining relations~\eqref{eqn:defprop} translate into properties for the kernels as follows~\cite{EnriquezHigher,EZ1}:
\begin{subequations}\label{eqn:Enriquezkernels}
\begin{align}
	\omega_{i_1\cdots i_r j}(\sigma_kz,x)&=\sum_{\ell=0}^r\frac{1}{\ell!}\delta_{ki_1\cdots i_\ell}\,\omega_{i_{\ell+1}\cdots i_rj}(z,x),\\
	\omega_{i_1\cdots i_rij}(z,\sigma_k x)&=\omega_{i_1\cdots i_rij}(z,x)+\delta_{ij}\sum_{\ell=0}^r\frac{(-1)^{\ell+1}}{(\ell+1)!}\delta_{ki_r\cdots i_{r-\ell+1}}\,\omega_{i_1\cdots i_{r-\ell}k}(z,x), \\
    \Res_{z=x}\omega_{i_1\cdots i_rj}(z,x)&=\frac{1}{(-2\pi\iunit)}\delta_{r1}\,\delta_{i_1j}, \label{eqn:kernelresidue}\\
    \int_{\acyc_k}\omega_{i_1\cdots i_rj}(z,x)&=\frac{\bn{r}}{r!}\delta_{i_1\cdots i_rjk},\label{eqn:d1}
\end{align}
\end{subequations}
where $\delta_{ki_1\cdots i_\ell}\texteq\prod_{n=1}^{\ell}\delta_{ki_n}$ and $z,x\,{\in}\,\funddom$. By $\bn{n}$, we denote the $n$-th Bernoulli number, arising as coefficients in the generating series
\begin{equation}\label{eqn:Bernoulli}
	\frac{x}{e^x-1}=\sum\limits_{k=0}^{\infty}\bn{k}\frac{x^k}{k!}\,,
\end{equation}
where, in particular, $\bn{1}\texteq{-}\,1/2$. From \eqn{eqn:kernelresidue}, one can infer that (only) the kernels $\omega_{jj}(z,x)$ have a pole at $z\texteq x$, which is why we will sometimes refer to the second variable of those kernels as the \textit{pole variable}.

In \rcites{Baune:2024ber,DHoker:2024ozn} quadratic identities between higher-genus integration kernels have been derived, generalizing the Fay identities~\cite{BrownLevin} at genus one. Even though it has not been shown that these are indeed the higher-genus Fay identities formulated in the language of Riemann theta functions~\cite{fay}, we will refer to them as \emph{Fay-like identities}.
They can be written as
\begin{align}
	\label{eqn:FaylikeId}
		\omega_{i_1\cdots i_rk}(z,x)\,\omega_{p_1\cdots p_spp'}(y,z)\big|_{p' = p}\!&=\omega_{i_1\cdots i_rk}(z,x)\,\omega_{p_1\cdots p_spp'}(y,x)\big|_{p'=p} \\
		&\quad-\sum_{l=0}^{r-1}\sum_{m=0}^s(-1)^{m-s}\omega_{(i_1\cdots i_l\shuffle p_s\cdots p_{m+1})ji_{l+1}\cdots i_rk}(z,x)\,\omega_{p_1\cdots p_mj}(y,x)\notag \\
		&\quad-\sum_{m=0}^s(-1)^{m-s}\omega_{(i_1\cdots i_r\shuffle p_s\cdots p_{m+1})jk}(z,y)\,\omega_{p_1\cdots p_mj}(y,x) \notag\\
		&\quad-\sum_{l=0}^r\sum_{m=0}^s(-1)^{m-s}\omega_{(i_1\cdots i_l\shuffle p_s\cdots p_{m+1})j}(z,y)\,\omega_{p_1\cdots p_mji_{l+1}\cdots i_rk}(y,x),\notag
\end{align}
with implicit summation over repeated indices $j$. It can be used to derive identities between higher-genus polylogarithms and their zeta values as discussed later in \secref{sec:Fayid}. Additionally, Enriquez' kernels satisfy a certain kind of linear $3$-point identity due to Enriquez~\cite[Prop.~10]{EnriquezHigher} (see also \rcite{Baune:2024ber}), which states that differences of higher-genus kernels are independent of the choice of basepoint: for any points $u$ and $v$ in the fundamental domain, one has for $r\geq0$
\begin{align}\label{eqn:linearid}
    \omega_{i_1\cdots i_rjj}(z,u)-\omega_{i_1\cdots i_rkk}(z,u)=\omega_{i_1\cdots i_rjj}(z,v)-\omega_{i_1\cdots i_rkk}(z,v).
\end{align}

While these properties are useful for proofs concerning polylogarithms, one needs explicit formulas to evaluate the kernels and the resulting higher-genus polylogarithms analytically or numerically. Ref.~\cite{Baune:2024biq} gave a representation of Enriquez' kernels in terms of Schottky sums (shown in \secref{sec:EnriquezSchottky}), which was also suitable for numerical evaluation due to quickly converging Poincar\'e series. A different formulation of Enriquez' kernels (not relying on any particular uniformization of the surface) was recently put forward in \rcite{DHoker:2025dhv}, giving a representation of the first non-trivial kernels $\omega_{ij}$ as an \Atxt-cycle integral over prime forms and expressing all other kernels as convolutions of the former (discussed in \secref{sec:EnriquezDHS}). In \secref{sec:TechniquesSchottky}, we are going to compare these two formulations of Enriquez' kernels.

%%%%%%%%%%%%%%%%%%%%%%%%%%%%%%%%%%%%%%%%%%%%%%%%%%%%%%%%%%%%%%%%%%%%%%%%%%%%%%
%%%%%%%%%%%%%%%%%%%%%%%%%%%%%%%%%%%%%%%%%%%%%%%%%%%%%%%%%%%%%%%%%%%%%%%%%%%%%%
\subsubsection{Enriquez' kernels in Schottky uniformization}\label{sec:EnriquezSchottky}

The Schottky representation of Enriquez' kernels in \rcite{Baune:2024biq} paved the way to writing higher-genus kernels as Schottky sums over elliptic integration kernels. The derivation starts from formulating the component forms $K_j(z,x)$ \eqn{eqn:connectionexpansion} in the Schottky uniformization as
\begin{equation}\label{eqn:schottkykronecker}
	K_j(z,x)=\frac{1}{(-2\pi\iunit)}\sum_{\gamma\in\SGroup}\left(\frac{\dd z}{z-\gamma x}-\frac{\dd z}{z-\gamma P_j}\right)W(\gamma)\,b_j,
\end{equation}
for $W\colon \sgen_{i_1}^{n_1}\cdots\sgen_{i_k}^{n_k}\,{\mapsto}\,e^{n_1b_{i_1}}\cdots e^{n_kb_{i_k}}$. 
Importantly, the above explicit expression of the component forms requires and relies on the choice and orientation of cycles $\acyc_j$ described in \subsecref{sec:Schottky} and depicted in \figref{fig:genustwo} in order to be consistent with the \Atxt-cycle integrals in \eqn{eqn:Acycconnection}. If one would instead choose the contour $\acyc_j$ to be around the circle $C_j'$ in anti-clockwise direction, the \Atxt-cycle periods would be altered to feature the other convention\footnote{The other convention (in contrast to \eqn{eqn:Bernoulli}) for Bernoulli numbers is through the generating series
\begin{equation*}
	\frac{x\,e^x}{e^x-1}=\sum_{k=0}^{\infty}\tilde{\mathrm{B}}_k\frac{x^k}{k!},
\end{equation*}
which only switches the sign of $\bn{1}\texteq{-}\frac{1}{2}$ to $\tilde{\mathrm{B}}_1\texteq\frac12$.
}
$\tilde{\mathrm{B}}_k$ for the Bernoulli numbers compared to \eqn{eqn:Bernoulli}. However, it is possible to formulate a consistent integration framework based on each choice of cycles and orientations.

Expanding expression~\eqref{eqn:schottkykronecker} in the algebra generators $b_i$, Enriquez' higher-genus integration kernels are expressed as weighted sums over genus-one integration kernels $g^{(n)}$ as follows:
\begin{align}\label{eqn:highergenuskernels}
	\omega_{i_1 \cdots i_s j}(z, x \mid\! \SGroup)=&\frac{1}{(-2 \pi \iunit)} \sum_{\gamma \in \SCosetR{j}} \sum_{k=0}^{\delta_{j i_s}{n_s}} C\big(b_{i_1}^{n_1} \cdots b_{i_s}^{n_s-k}, \gamma\big)\, s^{(k)}_j(\gamma^{-1} z, x),
\end{align}
where the form $s^{(n)}_j$ is defined as
\begin{align}\label{eqn:sn}
	s^{(n)}_j(z, x)=(-2 \pi \iunit)^{1-n} \, g^{(n)}_j(z, x)\, \omega(z \mid \SGroup_j),
\end{align}
where $g^{(n)}$ are the integration kernels for eMPLs discussed and reviewed in \secref{sec:g1reg}. Moreover, we used the shorthands from \eqn{eqn:shorthands} and defined
\begin{equation}
	g_j^{(n)}(z,x)=g^{(n)}(\abel_j(z,x)\mid\tau_j).\label{eqn:gj}
\end{equation}
The coefficients $C$ from \eqn{eqn:highergenuskernels} are given recursively through\footnote{A similar expression for the expansion coefficients of a family of bi-differentials can be found in \rcite{EnriquezHigher}.}~\cite{Baune:2024biq}
\begin{align} \label{eqn:Ccoeff}
	&C(b_{i_1}^{n_1} \cdots b_{i_s}^{n_s} , \sigma_{j_1}^{m_1} \cdots \sigma_{j_l}^{m_l}) =
	\begin{cases} s = 0 : & 1, \\
		s \neq 0 = l : & 0,\\
		i_1 \neq j_1 : & C(b_{i_1}^{n_1} \cdots b_{i_s}^{n_s} , \sigma_{j_2}^{m_2} \cdots \sigma_{j_l}^{m_l}),\\
		i_1 = j_1 : & \sum_{k=0}^{n_1} \frac{(m_1)^k}{k!} C(b_{i_1}^{n_1-k} \cdots b_{i_s}^{n_s} , \sigma_{j_2}^{m_2} \cdots \sigma_{j_l}^{m_l}).
	\end{cases}
\end{align}
From this definition, we can derive a useful property of these coefficients, which turns out to be equivalent to the definition in the end.
\begin{lemma}\label{lem:Ccoeff}
    Let $r\geq0$. Moreover, let $n,n_1,\cdots,n_r\geq0$ and $j,i_1,\ldots,i_r\in\{1,\ldots,\genus\}$ such that $j\neq i_1$ as well as $m,l\in\zZ$ and $\gamma\in\SCosetL{j}$. Then we have 
    \begin{equation}\label{eqn:recC}
        C(b_j^nb_{i_1}^{n_1}\cdots b_{i_r}^{n_r},\sigma_j^{m\pm l}\gamma)=\sum_{p=0}^n\frac{(\pm l)^p}{p!}C(b_j^{n-p}b_{i_1}^{n_1}\cdots b_{i_r}^{n_r},\sigma^m_j\gamma)
    \end{equation}
\end{lemma}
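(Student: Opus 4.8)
The plan is to reduce \eqref{eqn:recC} to a single-step version of itself together with the binomial theorem. First I would fix the tail word $W\texteq b_{i_1}^{n_1}\cdots b_{i_r}^{n_r}$ (with $r\texteq0$, or $r\,{\geq}\,1$ and $i_1\,{\neq}\,j$) and the coset representative $\gamma\,{\in}\,\SCosetL{j}$, and set $f(n,m)\texteq C(b_j^nW,\sigma_j^m\gamma)$ for $n\,{\geq}\,0$ and $m\,{\in}\,\zZ$, where $\sigma_j^0\gamma$ is read as $\gamma$. The key step is to establish the single-step identity
\[
	f(n,m)=\sum_{k=0}^n\frac{m^k}{k!}\,f(n-k,0)\,,\qquad n\geq0,\ m\in\zZ\,.
\]
I expect this to require only one application of the recursion \eqref{eqn:Ccoeff}, with no induction: for $m\,{\neq}\,0$ the hypothesis $\gamma\,{\in}\,\SCosetL{j}$ guarantees that $\sigma_j^m\gamma$ is a reduced word whose leading letter is $\sigma_j^m$, so when $n\,{\geq}\,1$ the fourth case of \eqref{eqn:Ccoeff} (matching the leading $b_j$ of $b_j^nW$ against $\sigma_j^m$) gives the claim directly, whereas for $n\texteq0$ the leading letter of $W$ is $b_{i_1}$ with $i_1\,{\neq}\,j$ (or $W$ is empty), so the third case (resp.\ the $s\texteq0$ base case) yields $f(0,m)\texteq f(0,0)$, again matching. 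The case $m\texteq0$ is the tautology $f(n,0)\texteq f(n,0)$, since only the $k\texteq0$ term of the right-hand side survives.

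Given the single-step identity, the rest is reindexing. Applying it once with exponent $m\,{\pm}\,l$ expresses the left-hand side of \eqref{eqn:recC} as $\sum_{q=0}^n\frac{(m\pm l)^q}{q!}f(n-q,0)$. Applying it inside each summand of the right-hand side of \eqref{eqn:recC} gives
\[
	\sum_{p=0}^n\frac{(\pm l)^p}{p!}\sum_{k=0}^{n-p}\frac{m^k}{k!}f(n-p-k,0)=\sum_{q=0}^n\frac{(m\pm l)^q}{q!}f(n-q,0)\,,
\]
obtained by collecting the terms with $p+k\texteq q$ and invoking the finite binomial theorem $\sum_{p=0}^q\binom{q}{p}(\pm l)^p m^{q-p}\texteq(m\pm l)^q$. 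The two expressions coincide, which is exactly \eqref{eqn:recC}.

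The only delicate point — and the main obstacle I anticipate — is the bookkeeping when an exponent vanishes, so that a word such as $\sigma_j^m\gamma$ or $\sigma_j^{m\pm l}\gamma$ is not literally of the reduced form \eqref{eqn:reduced} presupposed by \eqref{eqn:Ccoeff}. This is handled by the convention $\sigma_j^0\gamma\texteq\gamma$ together with the observation that the hypotheses $\gamma\,{\in}\,\SCosetL{j}$ and ($r\texteq0$ or $i_1\,{\neq}\,j$) are precisely what route the computation through the third and base cases of \eqref{eqn:Ccoeff} in those degenerate situations; one then checks that the claimed formulas reduce to trivialities (the factor $0^p$ collapsing the sum to its $p\texteq0$ term, or $(m\pm l)^q$ vanishing when $l\texteq\mp m$), so no further ingredient is needed. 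Finally, since the argument never unrolls the recursion beyond a single step, it makes the announced equivalence transparent: specializing \eqref{eqn:recC} to $m\texteq0$, $W$ empty and $\gamma\texteq\id$ reproduces exactly the branching of \eqref{eqn:Ccoeff}.
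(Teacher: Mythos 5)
Your proof is correct and rests on the same two ingredients as the paper's: a single unrolling of the recursive definition~\eqref{eqn:Ccoeff} (reducing every coefficient to ones with second argument $\gamma$) combined with the binomial theorem; the paper simply pushes the left-hand side through these steps and re-recognizes the recursion at the end, whereas you expand both sides to the common base form and compare. Your explicit treatment of the degenerate cases ($n=0$, $m=0$, $m\pm l=0$) is a welcome bit of care that the paper leaves implicit, but it does not change the substance of the argument.
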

\begin{proof}
    We start by explicitly expanding the recursive definition of the $C$-coefficients~\eqref{eqn:Ccoeff}. This gives
    \begin{equation}
        C(b_j^nb_{i_1}^{n_1}\cdots b_{i_r}^{n_r},\sigma_j^{m\pm l}\gamma)=\sum_{k=0}^n\frac{(m\pm l)^k}{k!}C(b_j^{n-k}b_{i_1}^{n_1}\cdots b_{i_r}^{n_r},\gamma) \, .
    \end{equation}
    We can now use the binomial theorem to simplify
    \begin{equation}
        \begin{aligned}
            \sum_{k=0}^n\frac{(m\pm l)^k}{k!}C(b_j^{n-k}b_{i_1}^{n_1}\cdots b_{i_r}^{n_r},\gamma)=&\sum_{k=0}^n\sum_{p=0}^k\binom{k}{p}\frac{m^{k-p}(\pm l)^p}{k!}C(b_j^{n-k}b_{i_1}^{n_1}\cdots b_{i_r}^{n_r},\gamma) \\
            =&\sum_{p=0}^n\sum_{k=p}^n\frac{m^{k-p}(\pm l)^p}{(k-p)!p!}C(b_j^{n-k}b_{i_1}^{n_1}\cdots b_{i_r}^{n_r},\gamma) \\
            =&\sum_{p=0}^n\frac{(\pm l)^p}{p!}\sum_{k=0}^{n-p}\frac{m^k}{k!}C(b_j^{n-k-p}b_{i_1}^{n_1}\cdots b_{i_r}^{n_r},\gamma) \, ,
        \end{aligned}
    \end{equation}
    where we have swapped the summations and simplified the binomial coefficient in the second line as well as relabelled $k\rightarrow k-p$ in the last line. Now, in the last line, we can again recognize the recursive definition~\eqref{eqn:Ccoeff}, which immediately yields \eqn{eqn:recC} and thus finishes the proof.
\end{proof}

%%%%%%%%%%%%%%%%%%%%%%%%%%%%%%%%%%%%%%%%%%%%%%%%%%%%%%%%%%%%%%%%%%%%%%%%%%%%%%
%%%%%%%%%%%%%%%%%%%%%%%%%%%%%%%%%%%%%%%%%%%%%%%%%%%%%%%%%%%%%%%%%%%%%%%%%%%%%%
\subsubsection{Enriquez' kernels as integrals over prime forms}\label{sec:EnriquezDHS}

In \rcite{DHoker:2025dhv}, a formula for the first kernel of Enriquez $\omega_{ij}$ was derived. Using the prime form\footnote{The prime form is defined as~\cite{fay}
\begin{equation*}
	E(z,x)=\frac{\theta_\nu(z-x)}{h_\nu(z)\,h_\nu(x)}
\end{equation*} 
for a theta function of (arbitrary) odd characteristic $\nu$ and the $(\frac{1}{2},0)$-form $h(z)\texteq\sqrt{\sum_{j=1}^\genus\partial_j\theta(0)\omega_j(z)}$.} $E(z,x)$, this can be written as
\begin{equation}\label{eqn:omegaij}
	\omega_{ij}(z,x)=\frac{1}{(-2\pi\iunit)}\int_{t\in\acyc_i}\omega_j(t)\,\dd_z\log\frac{E(z,x)}{E(z,t)}-\frac12\delta_{ij}\,\omega_j(z)\, ,
\end{equation}
where we can identify the ratio of prime forms as a representation of the \textit{normalized differential of the third kind} (see \appref{app:Ndiff} for an outline of its properties)
\begin{equation}
	\label{eqn:fundamentalDiff3}
	\Omega^{(x_1-x_0)}(z)
	\coloneqq\dd_z\log\frac{E(z,x_1)}{E(z,x_0)}
\end{equation}
for $x_1,x_0\,{\in}\,\funddom$. In order to match this representation to the Schottky representation~\eqref{eqn:highergenuskernels} later in \secref{sec:IntegrationSchottky}, it is convenient to adapt a slightly different convention compared to the one chosen in \eqn{eqn:omegaij}. More precisely, notice that \eqn{eqn:omegaij} has a pole at $z\texteq t$, which manifests when integrating \eqn{eqn:omegaij} around the cycle $\acyc_i$ \wrt{$z$}. To circumvent this complication, we can consider a modified integration contour to calculate the $\acyc_i$-period. In other words, we consider a closed contour $A_i$ homotopic to $\acyc_i$ obtained by slightly shifting $\acyc_i$ into the fundamental domain. Then we integrate \eqn{eqn:omegaij} \wrt{$z$} over this modified contour $A_i$. The modification of the contour results in $t$ being enclosed by the contour $A_i$. Therefore, the integral around $A_i$ \wrt{$z$} admits a residue contribution from $\Omega^{(x-t)}(z)$. In the end, the integral \wrt{$z$} evaluates to 
\begin{equation}
	\frac{1}{(-2\pi\iunit)}\int_{z\in A_i}\int_{t\in\acyc_i}\omega_j(t)\,\Omega^{(x-t)}(z)= (-2\pi\iunit)\Res_{z=t}\left(\frac{1}{(-2\pi\iunit)}\int_{t\in\acyc_i}\omega_j(t)\,\Omega^{(x-t)}(z)\right)=-\delta_{ij}
\end{equation}
by the residue theorem. Hence we have to modify \eqn{eqn:omegaij} to ensure the correct $\acyc_i$-period~\eqref{eqn:d1} for Enriquez' kernel $\omega_{ij}(z,x)$. It turns out that
\begin{equation}\label{eqn:omegaij2}
	\omega_{ij}(z,x)=\frac{1}{(-2\pi\iunit)}\int_{t\in\acyc_i}\omega_j(t)\,\dd_z\log\frac{E(z,x)}{E(z,t)}+\frac12\delta_{ij}\,\omega_j(z)
\end{equation}
implements the desired behavior. It is important to highlight that $t$ is regarded as being enclosed by $A_i$ and therefore equivalently may be considered to be shifted outside of the fundamental domain by a $\bcyc_i$-cycle. This shift is precisely accommodated by the change in sign when going from \eqn{eqn:omegaij} to \eqn{eqn:omegaij2}.

For kernels of higher weight, \rcite{DHoker:2025dhv} provides a recursive formula through convolution integrals over \Atxt-cycles, reading
\begin{align}\label{eqn:DSthm2}
	\omega_{\ell i_1\cdots i_rk}(z,x)&=-\sum_{j=1}^\genus\int_{t\in\acyc_\ell}\omega_{jk}(z,t)\,\omega_{i_1\cdots i_rj}(t,x)-\sum_{n=1}^{r-1}\frac{\bn{n}}{n!}\delta_{i_1\cdots i_n\ell}\,\omega_{\ell i_{n+1}\cdots i_rk}(z,x)\notag\\
	&\quad-\omega_k(z)\frac{\bn{r+1}}{r!}\delta_{i_1\cdots i_rk\ell},
\end{align}
for $z,x\,{\in}\,\funddom$ and not on the cycle $\acyc_l$. Implicitly, the $\acyc$-cycle integration contains similar modifications to the integration contour as described above in order to work around the poles of $\omega_{jj}(z,t)$ at $z\texteq t$. Furthermore, the statement of \eqn{eqn:DSthm2} was generalized in \rcite{DHoker:2025dhv} to show that the space of Enriquez' kernels is closed under taking convolution integrals of the type appearing in \eqn{eqn:DSthm2}.

In \secref{sec:TechniquesSchottky}, we will compare the above formulas for Enriquez' kernels with the Schottky representation shown in \secref{sec:EnriquezSchottky}.

%%%%%%%%%%%%%%%%%%%%%%%%%%%%%%%%%%%%%%%%%%%%%%%%%%%%%%%%%%%%%%%%%%%%%%%%%%%%%%
%%%%%%%%%%%%%%%%%%%%%%%%%%%%%%%%%%%%%%%%%%%%%%%%%%%%%%%%%%%%%%%%%%%%%%%%%%%%%%
\subsubsection{Enriquez' polylogarithms on higher-genus Riemann surfaces}\label{sec:hgmpls}

Polylogarithms on higher-genus Riemann surfaces have been defined for different kinds of integration kernels (see \rcite{DHoker:2023vax} for single-valued integration kernels, \rcite{EnriquezHigher,Baune:2024biq} for meromorphic and quasi-periodic kernels and \rcite{EZ1,EZ2} for meromorphic and single-valued kernels with poles of higher orders).
In this article, we focus on higher-genus polylogarithms arising when integrating Enriquez' kernels. The translation between the single-valued kernels and Enriquez' kernels has been established in \rcite{DHoker:2025szl}.
\begin{definition}[\!\!\cite{Baune:2024biq}]
	Using Enriquez' higher-genus kernels to form iterated integrals, \emph{higher-genus multiple polylogarithms} (\hgMPL{}s) are defined as 
	\begin{equation}\label{eqn:hgMPL}
		\Gargbare{\mindx{i}_1,\ldots,\mindx{i}_k}{x_1,\ldots,x_k}{z}{z_0}\coloneqq\int_{t_1=z_0}^{z}\omega_{\mindx{i}_1}(t_1,x_1)\int_{t_2=z_0}^{t_1}\omega_{\mindx{i}_2}(t_2,x_2)\int\cdots\int_{t_k=z_0}^{t_{k-1}}\omega_{\mindx{i}_k}(t_k,x_k).
	\end{equation}
\end{definition}
In this definition we used multi-indices $\mindx{i}_\ell$ to label the $\ell$-th kernel, the $x_\ell$ denote the pole variables of the kernels and $z_0$ is the basepoint of the integration. We call the number of integrations $k$ the \emph{depth} of a \hgMPL{}, while the number $w\texteq|\mindx{i}_1|\,{+}\,\ldots\,{+}\,|\mindx{i}_k|$ is called the \emph{weight}, where we defined the weight of a single multi-index $\mindx{i}\texteq i_1\cdots i_rj$ as $|\mindx{i}|\texteq r$. 
Note that the weight of a single multi-index is one less than the number of indices it contains. This makes the connection to the elliptic case and ensures that the holomorphic forms $\omega_j$ have weight 0. Furthermore, \hgMPL{}s depend on the geometry (i.e.~the period matrix $\tau$ or the Schottky group $\SGroup$) of the Riemann surface under consideration, where we often suppress this dependence.

Relations between different \hgMPL{}s can be derived using general properties of iterated integrals and characteristics of the integration kernels, including quadratic kernel identities, so-called Fay-like identities~\cite{DHoker:2024ozn,Baune:2024ber}. Several \hgMPL{} identities are spelled out in these references.

%%%%%%%%%%%%%%%%%%%%%%%%%%%%%%%%%%%%%%%%%%%%%%%%%%%%%%%%%%%%%%%%%%%%%%%%%%%%%%
%%%%%%%%%%%%%%%%%%%%%%%%%%%%%%%%%%%%%%%%%%%%%%%%%%%%%%%%%%%%%%%%%%%%%%%%%%%%%%
%%%%%%%%%%%%%%%%%%%%%%%%%%%%%%%%%%%%%%%%%%%%%%%%%%%%%%%%%%%%%%%%%%%%%%%%%%%%%%
\subsection{Regularization of higher-genus multiple polylogarithms}\label{sec:hgMPLreg}

%%%%%%%%%%%%%%%%%%%%%%%%%%%%%%%%%%%%%%%%%%%%%%%%%%%%%%%%%%%%%%%%%%%%%%%%%%%%%%
%%%%%%%%%%%%%%%%%%%%%%%%%%%%%%%%%%%%%%%%%%%%%%%%%%%%%%%%%%%%%%%%%%%%%%%%%%%%%%
\subsubsection{General formalism}\label{sec:GeneralformalismPolylog}

Generating series for polylogarithms on an arbitrary higher-genus Riemann surface $\RSurf$ can be identified as the solution to a differential equation of the form
\begin{equation}\label{eqn:KZhg}
    \dd L(z) +K(z,z_0)\, L(z) = 0 \,,
\end{equation}
where $K(z,z_0)$ is the Enriquez connection defined in \defref{def:propdef}. It can be checked that the expression 
\begin{equation}\label{eqn:gensol}
    L_\re(z) = \exp{\left(-\int_{\eta_\re}K(z',z_0)\right)}(\lambda\re)^{-\Res_{z_0}(K)}
\end{equation}
for $\eta_\re(s)\texteq\eta((1\,{-}\,s)\,\re\,{+}\,s)$, $\eta\colon[0,1]\,{\rightarrow}\,\RSurf$ being a path from $z_0$ to $z$, is a solution to \eqn{eqn:KZhg} for each $0\,{<}\,\re\,{\ll}\, 1$. Furthermore, it can be shown that the solution is finite for $\re\,{\rightarrow}\,0$ and so we define
\begin{equation}\label{eqn:Leps}
    L(z) = \lim_{\re\rightarrow0}L_\re(z)
\end{equation}
to be the generating series of (regularized) polylogarithms. The parameter $\lambda\,{\in}\,\zC^\times$ fixes the asymptotic behavior of $L(z)$ as $z\,{\rightarrow}\, z_0$ and therefore uniquely determines the solution.

This general approach described above will be the guiding principle for regularization of hgMPLs below: we are going to use the expansion of Enriquez' connection~\eqref{eqn:connectionexpansion} and expand the kernels therein into genus-one kernels as reviewed in \subsecref{sec:EnriquezSchottky}. As the regularization for genus-one kernels and polylogarithms is well known and reviewed in \subsecref{sec:g1reg}, the algorithm for hgMPL{}s can be phrased in terms of the regularization at genus one and Abel's map. In the following, we will explicitly compute $L(z)$ to lowest non-trivial order, i.e.~for \hgMPL{}s of depth one. As a next step, we then extend the regularization to higher depth by describing an algorithm to reduce the regularization of a higher-depth \hgMPL{} to the regularization at depth one by means of the shuffle relations~\eqref{eqn:shuffle} and the linear identity~\eqref{eqn:linearid}.

Importantly, while we assume that the general approach to regularization described above is compatible with the shuffle relations and the linear identity, it is in principle possible that they yield different regularization prescriptions for \hgMPL{}s\footnote{For eMPLs, compatibility of the two approaches to regularization has been shown in \rcite{EZ3}, and we expect the same to hold true at higher genus.}. In this article, we restrict to applying the general formalism described by \eqns{eqn:gensol}{eqn:Leps} exclusively to \hgMPL{}s of depth one and extend the regularization to higher depth by means of the shuffle relations and the linear identity. Thereby, we do not make any statement on the compatibility of the general approach described by \eqns{eqn:gensol}{eqn:Leps} on the one hand and its application at depth one, extended to higher depth by means of \eqn{eqn:shuffle} and \eqn{eqn:linearid}, on the other hand.

%%%%%%%%%%%%%%%%%%%%%%%%%%%%%%%%%%%%%%%%%%%%%%%%%%%%%%%%%%%%%%%%%%%%%%%%%%%%%%
%%%%%%%%%%%%%%%%%%%%%%%%%%%%%%%%%%%%%%%%%%%%%%%%%%%%%%%%%%%%%%%%%%%%%%%%%%%%%%
\subsubsection{Regularization at depth one} 

Let us consider the expansion of the solution~\eqref{eqn:Leps} to lowest non-trivial order (coefficient of $b_ia_j$) for the expanded Enriquez connection $K(z,z_0)$ from \eqn{eqn:connectionexpansion}. Referring to \eqn{eqn:residueK}, the only residues for $K$ occur for words of the form $b_ja_j$. So the only \hgMPL{}s of depth one, which need regularization are the ones containing $\omega_{jj}$, $j\,{\in}\,\{1,\ldots,\genus\}$: these kernels exhibit a simple pole in the fundamental domain $\funddom$ at $z \texteq z_0$ as underlined by \eqn{eqn:kernelresidue}. Using \eqn{eqn:highergenuskernels}, they can be expanded in the Schottky language as 
\begin{align}\label{eqn:omegajjSchottky}
    \omega_{jj}(z, z_0 \mid\! \SGroup)  =\frac{1}{(-2\pi\iunit)}\sum_{\gamma \in \SCosetR{j}} C(b_j, \gamma)\,s_j^{(0)}(\gamma^{-1}z)
+\frac{1}{(-2 \pi \iunit)} \sum_{\gamma \in \SCosetR{j}} \skern{1}_j(\gamma^{-1} z, z_0)\,.
\end{align}
Let us consider the integral of $\skern{1}_j(t,z_0)$ along a path $\eta(s),\,s\,{\in}\,[0,1]$ with $\eta(0)\texteq z_0,\,\eta(1)\,{=}\,z$ on the Schottky cover,
\begin{equation}
	\int_{\eta}\skern{1}_j(t,z_0),
\end{equation}
which exhibits an endpoint divergence at the lower integration boundary $s\texteq0$. Following the argument in \subsecref{sec:GeneralformalismPolylog}, which is a formalization of the algorithm for regularization of eMPLs reviewed in \subsecref{sec:g1reg}, we introduce a small parameter $\re\,{>}\,0$ and denote by $\eta_\re$ a path as in \eqn{eqn:gensol}. Thus, the parameter $\re$ measures the displacement from the basepoint $z_0$ at the lower integration boundary, along the direction of the path $\eta$. Regularization is then achieved by evaluating the $\re$-dependent integral and taking $\re$ to zero after isolating the divergent behavior. Unsurprisingly, the divergent behavior is again logarithmic in $\re$.

Let us now take the integration along the path $\eta_\re$ from the Schottky cover to the Jacobi variety of the $j$-th subcover by using Abel's map~\eqref{eq:abel-map-schottky} as follows:
\begin{equation}\label{eqn:regint}
	\int_{\eta_\re}\skern{1}_j(t,z_0)=\int^z_{z_\re}\gkern{1}(\abel_j(t, z_0)\mid\tau_j)\underbrace{\omega(t\mid\SGroup_j)}_{\dd \abel_j(t,z_0)}\stackrel{\vt=\abel_j(t,z_0)}{=}\int_{\abel_j(z_\re,z_0)}^{\abel_j(z,z_0)} \gkern{1}(\vt\mid\tau_j)\, \dd \vt\,.
\end{equation}
For generic $z$ only the lower integration boundary is problematic. Evaluating $\abel_j(z_\re,z_0)$, thereby employing \eqn{eq:abel-map-schottky} and the shorthands from~\eqref{eqn:shorthands}, we find
\begin{align}
	\abel_j(z_\re,z_0)&=\frac{1}{2\pi\iunit}\log(\{z_\re,P_j',z_0,P_j\})\notag\\
	&=\frac{1}{2\pi\iunit}\Bigg[\log\underbrace{\{z_0,P_j',z_0,P_j\}}_{=1}+\Bigg(\underbrace{\frac{1}{z_0-P_j}-\frac{1}{z_0-P_j'}}_{=\abel'_j(z_0)}\Bigg)\eta'(0)\,\re\Bigg]+\cO(\re^2)\notag\\
	&=\frac{1}{2\pi\iunit}\abel'_j(z_0)\,\eta'(0)\,\re+\cO(\re^2)\,,
\end{align}
where we have used that $z_\re$ expands as $z_\re\texteq\eta(\re)\texteq z_0\,{+}\,\eta'(0)\,\re\,{+}\,\cO(\re^2)$ for small $0\,{<}\,\re\,{<}\,1$.

Integrating the $q$-expansion of $\gkern{1}$ in \eqn{eqn:g1q} and plugging in the limits of the last integral in \eqn{eqn:regint} leads to
\begin{align}
	\log(\sin(\pi\abel_j(z_\re,z_0)))&=\log\left(\frac{1}{2i}\left(e^{\frac12\abel'_j(z_0)\eta'(0)\re+\cO(\re^2)}-e^{-\frac12\abel'_j(z_0)\eta'(0)\re+\cO(\re^2)}\right)\right)\notag\\
	&=\log\left(\frac{1}{2\iunit}\right)+\log\left(1-e^{\abel'_j(z_0)\eta'(0)\re+\cO(\re^2)}\right)+\log\left(-e^{-\frac12\abel'_j(z_0)\eta'(0)\re+\cO(\re^2)}\right)\notag\\
	&=\log\left(\frac{1}{2\iunit}\right)+\log\left(-\abel'_j(z_0)\,\eta'(0)\right)+\log\re+\log(-1)+\cO(\re),\label{eqn:logsinze}\\
	\log(\sin(\pi\abel_j(z,z_0)))&=\log\left(\frac{1}{2\iunit}\right)+\log\left(1-e^{2\pi\iunit\abel_j(z,z_0)}\right)-\pi\iunit\,\abel_j(z,z_0)+\log(-1).\label{eqn:logsinz}
\end{align}
Plugging in everything, we find
\begin{align}\label{eqn:hgMZVregularization}
   \int_{\abel_j(z_\re,z_0)}^{\abel_j(z,z_0)} \gkern{1}(\vt\mid\tau_j)\, \dd \vt &=  \log(\sin(\pi \abel_j(z,z_0))) - \log(\sin(\pi \abel_j(z_\re,z_0)) )\nonumber\\
    &\quad-4\pi \sum^\infty_{m=1}\frac{\cos(2\pi m\, \abel_j(z,z_0))-\cos(2\pi m\, \abel_j(z_\re,z_0))}{2\pi m} \sum^\infty_{n=1}q_j^{nm}\nonumber\\
    &= \log\left(1-e^{2\pi\iunit\abel_j(z,z_0)}\right)-\pi\iunit\,\abel_j(z,z_0)-\log\!\left(-\abel'_j(z_0)\,\eta'(0)\right)-\log\re\nonumber\\
    &\quad+4\pi \sum^\infty_{m=1}\frac{1-\cos(2\pi m\, \abel_j(z,z_0))}{2\pi m} \sum^\infty_{n=1}q_j^{nm}+\cO(\re).
\end{align}
To obtain a finite expression, we need to get rid of the divergent $\log\re$ term. In terms of the general approach described in \eqn{eqn:gensol}, this is immediate as $(\lambda\re)^{-\Res_{z_0}(K)}$ yields a term proportional to $\log(\lambda\re)\Res_{z_0}(K)$ upon expansion in the generators of the algebra $\alg{t}$. This amounts to essentially appending a term $\log(\lambda)\,{+}\,\log(\re)$ to \eqn{eqn:hgMZVregularization}, which implements the desired cancellation of the divergent term. Taking the limit $\re \,{\to}\, 0$ then finally leads to the regularized integral 
\begin{align}\label{eqn:hgMZVregularization2}
\left( \int^z_{z_0}\skern{1}_j(t, z_0)\!\right)_{\!\text{reg}}&=\log\Big(1-e^{2\pi\iunit\abel_j(z,z_0)}\Big)-\pi\iunit\,\abel_j(z,z_0)-\log\!\big({-}\,\abel'_j(z_0)\,\eta'(0)\big)\nonumber\\
&\quad+4\pi \sum^\infty_{m=1}\frac{1-\cos(2\pi m\, \abel_j(z,z_0))}{2\pi m} \sum^\infty_{n=1}q_j^{nm} + \log(\lambda)\, ,
\end{align}
where $\log(\lambda)$ arises from the regularization described by \eqns{eqn:gensol}{eqn:Leps}. 

Note that the constant $\lambda\,{\in}\,\zC^\times$ provides a certain freedom in what constants can be cancelled in \eqn{eqn:hgMZVregularization2}. Choosing for example
\begin{equation}
    \lambda=-\abel'_j(z_0)\,\eta'(0)
\end{equation}
would remove the term $-\!\log\!\left(-\abel'_j(z_0)\,\eta'(0)\right)$ in \eqn{eqn:hgMZVregularization2}. However, this cannot be achieved for all $i\,{\in}\,\{1,\ldots,\genus\}$ simultaneously since $\lambda$ amounts to only one complex degree of freedom. The remaining regularized integrals for $i\,{\neq}\, j$ would then be left with a modified constant term of the form
\begin{equation}
    -\log\!\left(-\abel'_i(z_0)\,\eta'(0)\right)\rightarrow-\log\left(\frac{\abel'_i(z_0)}{\abel'_j(z_0)}\right) \, .
\end{equation}
This illustrates that there will always remain a dependence on the moduli of the underlying complex geometry in the regularized hgMPL. This behavior will be radically different when considering regularization of higher-genus multiple zeta values in \secref{sec:hgmzvreg}: the regularized higher-genus multiple zeta values at depth one will simply be (rational) numbers, having lost any information on the underlying complex geometry. In other words, the dependence on the complex structure moduli contained in both endpoint regularizations exactly cancels in the regularization of the depth-one higher-genus multiple zeta values.

%%%%%%%%%%%%%%%%%%%%%%%%%%%%%%%%%%%%%%%%%%%%%%%%%%%%%%%%%%%%%%%%%%%%%%%%%%%%%%
%%%%%%%%%%%%%%%%%%%%%%%%%%%%%%%%%%%%%%%%%%%%%%%%%%%%%%%%%%%%%%%%%%%%%%%%%%%%%%
\subsubsection{Regularization at higher depth} \label{sec:higher_dep_reg_hgmpl}

Now that we have settled the depth-one regularization of hgMPLs for arbitrary covers, we can use this to infer the combinatorics of regularization for higher-depth hgMPLs. In \secref{sec:rev01}, we have seen that regularization of higher-depth MPLs and eMPLs could be traced back to the regularization at depth one by the shuffle product, hence demonstrating that it is sufficient to regularize the depth-one polylogarithms in order to resolve all divergences at higher depth. However, starting from genus $\genus\texteq2$, this statement has to be adjusted by taking into account that we have $\genus$ kernels exhibiting a simple pole instead of just one. A simple illustration is provided by the example
\begin{align}
       \Gargbare{(ii),(jj)}{z_0,z_0}{z}{z_0} ,
\end{align}
which is divergent in general. In order to regularize this, we would like to employ the shuffle product. Naively applying the shuffle relations gives the formal\footnote{In the sense that both sides of the equation are generally divergent.} relation
\begin{equation}
    \Gargbare{(ii),(jj)}{z_0,z_0}{z}{z_0}= 
    \Gargbare{(ii)}{z_0}{z}{z_0} \Gargbare{(jj)}{z_0}{z}{z_0} - \Gargbare{(jj), (ii)}{z_0,z_0}{z}{z_0}.
\end{equation}
Unlike the situation for $\genus \,{\leq}\, 1$, the right-hand side here contains a different divergent depth-two \hgMPL{}, which implies that this cannot be resolved by the shuffle relation alone since we do not know the regularization of all constituents on the right-hand side.

To overcome this, we use the linear identity~\eqref{eqn:linearid} for Enriquez' kernels. Integrating this identity permits the replacement of higher-depth integrals by differences that depend on an auxiliary point $u\,{\in}\,\funddom$, but are finite. For instance, we can write
\begin{align}\label{eqn:differenceformula}
    \Gargbare{(ii),(jj)}{z_0,z_0}{z}{z_0}-
    \Gargbare{(ii),(kk)}{z_0,z_0}{z}{z_0}=
    \Gargbare{(ii),(jj)}{z_0,u}{z}{z_0}-
    \Gargbare{(ii),(kk)}{z_0,u}{z}{z_0}
\end{align}
for generic $i,j,k\,{\in}\,\{1,\ldots,\genus\}$. As a consequence, any divergent higher-depth integral can be expressed in terms of combinations that involve either depth-one regularized integrals or finite differences of divergent expressions. For the depth-two case, we obtain
\begin{align}
    \Gargbare{(ii),(jj)}{z_0,z_0}{z}{z_0}
    &=\Gargbare{(ii)}{z_0}{z}{z_0}  \Gargbare{(jj)}{z_0}{z}{z_0}- \Gargbare{(jj),(ii)}{z_0,z_0}{z}{z_0} \nonumber\\
    &\quad + \underbrace{\Gargbare{(jj),(jj)}{z_0,z_0}{z}{z_0}- \Gargbare{(jj),(jj)}{z_0,z_0}{z}{z_0}}_{=0}  \nonumber\\
    &=
    \Gargbare{(ii)}{z_0}{z}{z_0} \Gargbare{(jj)}{z_0}{z}{z_0}+\Gargbare{(jj),(jj)}{z_0,u}{z}{z_0}\nonumber\\
&\quad-\Gargbare{(jj),(ii)}{z_0,u}{z}{z_0}-\frac{1}{2}\left(\Gargbare{(ii)}{z_0}{z}{z_0}\right)^{2}.
\end{align}
where we used a shuffle identity and inserted a trivial difference to apply \eqn{eqn:differenceformula}. In conclusion, we can now define
\begin{equation} \label{eqn:regularized_dep_2_hgmpl}
    \begin{aligned}
        \Gtreg{(ii),(jj)}{z_0,z_0}{z}{z_0}=&\Gtreg{(ii)}{z_0}{z}{z_0} \Gtreg{(jj)}{z_0}{z}{z_0}+\Gtreg{(jj),(jj)}{z_0,u}{z}{z_0} \\
    &-\Gtreg{(jj),(ii)}{z_0,u}{z}{z_0}-\frac{1}{2}\left(\Gtreg{(ii)}{z_0}{z}{z_0}\right)^{2},
    \end{aligned}
\end{equation}
since we know how to regularize all the terms on the right-hand side.

Iterating this method, we can reduce any divergent \hgMPL{} of arbitrary depth to expressions involving only depth-one regularized \hgMPL{}s and finite products of regular integrals. Consequently, we will henceforth drop the explicit ``reg'' subscript and understand all \hgMPL{}s to be given in their regularized form.

The implications for the regularization of higher-genus multiple zeta values will be discussed in detail in \secref{sec:hgmzvreg}.

%%%%%%%%%%%%%%%%%%%%%%%%%%%%%%%%%%%%%%%%%%%%%%%%%%%%%%%%%%%%%%%%%%%%%%%%%%%%%%
%%%%%%%%%%%%%%%%%%%%%%%%%%%%%%%%%%%%%%%%%%%%%%%%%%%%%%%%%%%%%%%%%%%%%%%%%%%%%%
%%%%%%%%%%%%%%%%%%%%%%%%%%%%%%%%%%%%%%%%%%%%%%%%%%%%%%%%%%%%%%%%%%%%%%%%%%%%%%
%%%%%%%%%%%%%%%%%%%%%%%%%%%%%%%%%%%%%%%%%%%%%%%%%%%%%%%%%%%%%%%%%%%%%%%%%%%%%%
\section{Integration on the Schottky cover}\label{sec:TechniquesSchottky}

In this section, we collect several techniques and considerations for higher-genus polylogarithms and their kernels based on employing the Schottky uniformization from \subsecref{sec:EnriquezSchottky}.  We provide an account of how to understand integration along \Atxt-cycles in the Schottky language in \subsecref{sec:IntegrationSchottky}. As an example, we are going to consider a Schottky version of a theorem from \rcite{DHoker:2025dhv} in \subsecref{sec:thm2}. 

%%%%%%%%%%%%%%%%%%%%%%%%%%%%%%%%%%%%%%%%%%%%%%%%%%%%%%%%%%%%%%%%%%%%%%%%%%%%%%
%%%%%%%%%%%%%%%%%%%%%%%%%%%%%%%%%%%%%%%%%%%%%%%%%%%%%%%%%%%%%%%%%%%%%%%%%%%%%%
%%%%%%%%%%%%%%%%%%%%%%%%%%%%%%%%%%%%%%%%%%%%%%%%%%%%%%%%%%%%%%%%%%%%%%%%%%%%%%
\subsection{\texorpdfstring{\Atxt-cycle integration on the Schottky cover}{A-cycle integration on the Schottky cover}}\label{sec:IntegrationSchottky}

The Schottky uniformization introduced in \subsecref{sec:Schottky} provides an explicit representation of functions, differential forms and homology cycles on higher-genus Riemann surfaces in terms of Poincar\'e series over the Schottky group~$\SGroup$ and simple (closed\footnote{Strictly speaking, the Schottky cover does not represent $\bcyc$-cycles in terms of closed curves, but lifts them to simple curves connecting the circles on the covering space.}) curves on the Riemann sphere $\ComplexComplete$. In this way, it provides a convenient tool to perform explicit calculations of the higher-genus polylogarithms defined in \secref{sec:hgmpls}. We demonstrate the necessary techniques to carry out such calculations on the Schottky cover by explicitly evaluating the analytic representation of Enriquez' kernel $\omega_{ij}(z,x)$ given in \eqn{eqn:omegaij2} on the Schottky cover, thereby showing compatibility of the two representations.

We start by expressing all the objects in \eqn{eqn:omegaij2} in terms of their explicit representation on the Schottky cover. The normalized basis of holomorphic differentials admits the representation given in \eqn{eqn:schottky-holomorphic-basis}. In order to evaluate the integral in \eqn{eqn:omegaij2}, it is necessary to rewrite the normalized differential of the third kind~\eqref{eqn:fundamentalDiff3} in terms of a Poincar\'e series defined on the Schottky uniformization. By comparing analytic properties, we can write
\begin{equation}
    \label{eqn:fundDiffschottky}
    \ndiff(x) = \sum_{\gamma\in\SGroup}\left(\frac{1}{\gamma x - y}-\frac{1}{\gamma x - t}\right)\dd(\gamma x)
\end{equation}
on the Schottky cover. The proof of this statement is deferred to \appref{app:Ndiff}. Once this is established, we can write \eqn{eqn:omegaij2} as
\begin{equation}
    \begin{aligned}
        \label{eqn:omegaij-schottky}
        \omega_{ij}(x,y)=&\frac{1}{(-2\pi\iunit)}\sum_{\substack{\gamma_1\in \SGroup\\\gamma_2\in \SCosetR{j}}}\dd(\gamma_1x)\int_{t\in\acyc_i}\dd t\frac{1}{2\pi\iunit}\left(\frac{1}{t-\gamma_2 P_j'}-\frac{1}{t-\gamma_2 P_j}\right)\left(-\frac{1}{\gamma_1x-t}\right) \\
        &+\frac{1}{(-2\pi\iunit)}\sum_{\gamma_1\in \SGroup}\frac{\dd(\gamma_1x)}{\gamma_1x-y}\int_{t\in\acyc_i}\omega_j(t) \\
        &+\frac{1}{2}\delta_{ij}\,\omega_j(x)\, ,
    \end{aligned}
\end{equation}
where we have already substituted the definition of the normalized differential of the third kind~\eqref{eqn:fundDiffschottky} as well as the holomorphic differentials on the Schottky cover given by \eqn{eqn:schottky-holomorphic-basis}. The first line in \eqn{eqn:omegaij-schottky} contains the integral over $\omega_j(t)$ convoluted with the second term of \eqn{eqn:fundDiffschottky}, which carries the $t$-dependence. The second line is the convolution integral of $\omega_j(t)$ with the first term in \eqn{eqn:fundDiffschottky}, which is independent of $t$ and can henceforth be taken out of the integral. The remaining integral over the holomorphic form $\omega_j(t)$ immediately yields $\delta_{ij}$ by \eqn{eqn:periodmatrix}. Evaluation of the first line, however, is more involved. Our strategy is to use the residue theorem to perform the integration around $\acyc_i$.

%%%%%%%%%%%%%%%%%%%%%%%%%%%%%%%%%%%%%%%%%%%%%%%%%%%%%%%%%%%%%%%%%%%%%%%%%%%%%%
\paragraph{Residue analysis.}

In order to apply the residue theorem, we need to find all (simple) poles of the function
\begin{equation}
    \mathcal{I}_j(t)=\left(\frac{1}{t-\gamma_2 P_j'}-\frac{1}{t-\gamma_2 P_j}\right)\frac{1}{\gamma_1x-t} \, ,
\end{equation}
which are located inside the contour $\acyc_i$ for fixed $\gamma_1\,{\in}\,\SCosetR{j}$ and $\gamma_2\,{\in}\,\SGroup/\SGroup_j$. From the above equation, $\mathcal{I}_j(t)$ can have simple poles in $t$ at the locations $p_{P'}(\gamma_2)\texteq\gamma_2P_j'$, $p_{P}(\gamma_2)\texteq\gamma_2P_j$ and $p_x(\gamma_1)\texteq\gamma_1x$. Moreover the points $p_{P'}(\gamma_2)$ and $p_{P}(\gamma_2)$ are always located on the same side of the contour except for $\gamma_2\texteq\id$. Several situations need to be distinguished:

\itemwithlabel[l]{\mpostuse{Situation1}}{\label{item:1}
  The first situation is given by all the singularities being located on the same side of the contour. By reversing the orientation if necessary, we can assume that the integrand is holomorphic on the side enclosed by the contour. Accordingly, in this situation, the integral vanishes by the residue theorem. 
}

\itemwithlabel[r]{\mpostuse{Situation2}}{\label{item:2} 
  The second situation is given when the poles $p_{P'}(\gamma_2)$ and $p_{P}(\gamma_2)$ are enclosed by the contour, but $p_x(\gamma_1)$ is not. This happens when $\gamma_2 \,{\in}\, \ssubsetl{(\SCosetR{j})}{-}{i}$ and $\gamma_1 \,{\in}\, \csubsetl{\SGroup}{-}{i}$ (cf.~\eqns{eqn:set}{eqn:cset} for the notation). In that case, the two enclosed residues contribute 
    \begin{equation}
        \label{eqn:res1}
	(-2\pi\iunit)(\Res_{p_{P'}}+\Res_{p_{P}})\,\mathcal{I}_j=\frac{(-2\pi\iunit)}{\gamma_1x-\gamma_2 P_j'}-\frac{(-2\pi\iunit)}{\gamma_1x-\gamma_2 P_j} \, ,
    \end{equation}
    where we have also included a minus sign to attribute that the $\acyc_i$-cycle conventionally has negative orientation on the Schottky cover (cf.~\secref{sec:Schottly_defs}).   
}

\itemwithlabel[l]{\mpostuse{Situation3}}{\label{item:3} 
  The next situation involves the opposite structure. The pole at $p_x(\gamma_1)$ is enclosed by the contour, but the other poles are on the other side of the contour. This happens when $\gamma_1 \,{\in}\, \ssubsetl{\SGroup}{-}{i}$ and $\gamma_2 \,{\in}\, \csubsetl{(\SCosetR{j})}{-}{i}$.
  To calculate the contribution to the residue theorem, we can reverse the orientation of the contour, which results in once again enclosing the poles $p_{P'}(\gamma_2)$ and $p_P(\gamma_2)$. According to \sitref{item:2}, this therefore yields
    \begin{equation}
        \label{eqn:res2}
        (-2\pi\iunit)\Res_{p_x}\mathcal{I}_j=-(-2\pi\iunit)(\Res_{p_{P'}}+\Res_{p_{P}})\,\mathcal{I}_j \, ,
    \end{equation}
    where the additional minus sign on the right-hand side is caused by reversing the orientation of the contour. Notice that, when $i\texteq j$, we explicitly need to exclude the case $\gamma_2\texteq\id$ from the above considerations, because in that case the pole $p_P(\id)$ would be enclosed by the contour $\acyc_i$ as well. We will treat this case separately in \sitref{item:4} below. 
}

\itemwithlabel[r]{\mpostuse{Situation4}}{\label{item:4} 
Finally, consider the case $i\texteq j$: then the point $P_j$ is enclosed by the contour $\acyc_i$ and additional pole configurations make an appearance. 
\begin{enumerate}
\item The first one arises when $\gamma_2\texteq\id$ and $\gamma_1 \,{\in}\, \csubsetl{\SGroup}{-}{i}$. 
This leads to $p_P(\id)$ being enclosed by the contour, but the other poles being located outside. The associated contribution to the residue theorem reads
\begin{equation}
    \label{eqn:res3}
(-2\pi\iunit)\Res_{p_P}\,\mathcal{I}_j=-(-2\pi\iunit)\frac{1}{\gamma_1x-P_j} \, ,
\end{equation}
where we again accounted for the negative orientation of $\acyc_i$ with an additional minus sign.
\item A similar situation occurs for $\gamma_2\texteq\id$ and $\gamma_1 \,{\in}\, \ssubsetl{\SGroup}{-}{i}$.
Here $p_P(\id)$ and $p_x(\gamma_1)$ are enclosed, but $p_{P'}(\id)$ is located on the opposite side. The residue contribution is
\begin{equation}
    \label{eqn:res3_v2}
(-2\pi\iunit)(\Res_{p_P}+\Res_{p_x})\,\mathcal{I}_j=-(-2\pi\iunit)\frac{1}{\gamma_1x-P_j'},
\end{equation}
with the usual minus sign accounting for the negative orientation of $\acyc_i$.
\end{enumerate}
}
%

%%%%%%%%%%%%%%%%%%%%%%%%%%%%%%%%%%%%%%%%%%%%%%%%%%%%%%%%%%%%%%%%%%%%%%%%%%%%%%
\paragraph{Manipulating Schottky sums.} 

In the next step, all contributions from \sitref{item:1} to \sitref{item:4} above are to be combined and the result needs to be simplified. While the detailed combinatorics are explicitly lined out in \appref{app:TechnicalitiesIntegrationSchottky}, we constrain ourselves to presenting the intuition behind this procedure as well as the final result here.

It is instructive to imagine the action of the generators $\sigma_i$, $i\,{\in}\,\{1,\ldots,\genus\}$, of the Schottky group $\SGroup$ with the operation of moving along a cycle $\bcyc_i$ on the covering space. Hence an element $\Upsilon\,{\in}\,\SGroup$ can be regarded as a path on the Schottky cover\footnote{The direction of the path is defined by applying $\Upsilon$ to some $z_0$ on the Schottky cover. The path is then set to start at $z_0$ and end at $\Upsilon z_0$, moving along $\bcyc$-cycles in between}.
In \appref{app:TechnicalitiesIntegrationSchottky}, it is shown that the final expression for $\omega_{ij}(x,y)$ in fact only depends on the product $\Upsilon\texteq\gamma_1^{-1}\gamma_2\,{\in}\,\SCosetR{j}$. Let us fix such an element and consider it as a path moving between circles on the covering space (not passing the circle pair corresponding to $\sigma_j$ in the beginning due to the coset). Hence the conditions on $\gamma_1$ and $\gamma_2$ in\footnote{\sitref{item:4} only contains terms with $\gamma_2\texteq\id$ and requires separate treatment.}~\sitref{item:1} -- \sitref{item:3}, which originate from the residue analysis, can be interpreted as a certain condition on this path. This is displayed in \figref{fig:pathschottky}. 
\begin{figure}[t]
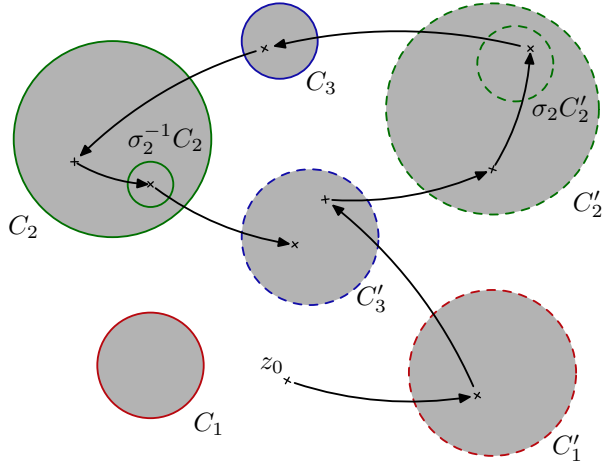

	\centering
	\mpostuse{SchottkySetup}
	\caption{Illustration of the residue conditions in \sitref{item:1} -- \sitref{item:3}. It shows a Schottky cover corresponding to a surface of genus three. The path intuitively corresponds to the element $\Upsilon\texteq\sigma_3\sigma_2^{-2}\sigma_3^{-1}\sigma_2^2\sigma_3\sigma_1\,{\in}\,\SGroup$ (applied to $z_0\,{\in}\,\funddom$). The condition in \sitref{item:2} amounts to counting the positive intersections of the path and the circles $C_2$ and $\sigma_2^{-1}C_2$. Correspondingly, the condition in \sitref{item:3} corresponds to counting the positive intersections of the path and the circles $C_2'$ and $\sigma_2C_2'$.}
    \label{fig:pathschottky}
\end{figure}

For example, the conditions in \sitref{item:2} translate to the requirement that the path passes at least once through the area enclosed by the integration contour\footnote{The condition on $\gamma_1$ in fact prohibits that the path is trivially walked backwards after entering the area enclosed by $\acyc_i$.} $\acyc_i$ since $\gamma_2\texteq\sigma_i^n\cdots$ for $n\,{<}\,0$ maps all points to the area enclosed by $C_i$. More precisely, let us define a \textit{positive intersection} of a path (corresponding to $\Upsilon\,{\in}\,\SGroup$) with a closed contour $C$ to be such that the path is intersecting $C$ from the left \wrt{the orientation of the contour $C$}. Then the conditions in~\sitref{item:2} can be regarded as counting the number of positive intersections of the path corresponding to an element $\Upsilon\,{\in}\,\SCosetR{j}$ with the (negatively oriented) circles $\sigma_i^nC_i$, $n\,{\leq}\,0$ as this number corresponds to the possible ways to decompose $\Upsilon$ into $\gamma_1^{-1}\gamma_2$ for $\gamma_2\,{\in}\, \ssubsetl{(\SCosetR{j})}{-}{i}$ and $\gamma_1 \,{\in}\, \csubsetl{\SGroup}{-}{i}$ as in \sitref{item:2}. Notice that this also corresponds to the number of generators $\sigma_i^{-1}$ contained in $\Upsilon$. We can interpret the conditions in~\sitref{item:3} analogously by replacing the circle $C_i$ corresponding to $\acyc_i$ with its associated circle $C_i'$. The residue theorem therefore counts the number of generators $\sigma_i^{\pm1}$ in $\Upsilon$ weighted by the residue contributions~\eqref{eqn:res1} and~\eqref{eqn:res2} (modulo additional contributions for $i\texteq j$ occurring when $\gamma_2\texteq\id$). Accordingly, we can rewrite \eqn{eqn:omegaij-schottky} as
\begin{equation}
    \label{eqn:omegaij-schottkyv2}
    \omega_{ij}(x,y) = \frac{1}{(-2\pi\iunit)}\sum_{\Upsilon\in \SCosetR{j}}(N_i^+(\Upsilon)-N_i^-(\Upsilon))\,s_j^{(0)}(\Upsilon^{-1}x) + \delta_{ij}\,[\,\cdots] \, ,
\end{equation}
where the ellipsis collects the contributions only relevant when $i\texteq j$ arising when $\gamma_2\texteq\id$ as well as the remaining terms from the last two lines in \eqn{eqn:omegaij-schottky}. Moreover, $N_i^{\pm}(\Upsilon)$ is defined in \eqn{eqn:nsplitting} and encodes the number of generators $\sigma_i^{\pm1}$ in $\Upsilon$ (or equivalently the number of positive intersections of the path corresponding to $\Upsilon$ and the circles $\sigma_i^nC_i'$, $n\,{\geq}\,0$ or $\sigma_i^nC_i$, $n\,{\leq}\,0$, respectively). The holomorphic kernel $s_j^{(0)}(\Upsilon^{-1}x)$ arises from the residue contributions~\eqref{eqn:res1} and~\eqref{eqn:res2}. The precise details of the calculation are contained in \appref{app:TechnicalitiesIntegrationSchottky}. By definition of the $C$-coefficients~\eqref{eqn:Ccoeff}, we can now conclude that
\begin{equation}
    N_i^+(\Upsilon)-N_i^-(\Upsilon)=C(b_i,\Upsilon) \, .
\end{equation}
This finalizes the appearance of the first term and it remains to simplify the contributions from~\sitref{item:4} for $i\texteq j$. It can be shown that these exactly recombine with the remaining terms in \eqn{eqn:omegaij-schottky} to give
\begin{equation}
    \frac{1}{(-2\pi\iunit)}\sum_{\Upsilon\in \SCosetR{j}}s_j^{(1)}(\Upsilon^{-1}x,y) \, .
\end{equation}
For details, we again refer to \appref{app:TechnicalitiesIntegrationSchottky}. Combining all of this, \eqn{eqn:omegaij-schottkyv2} finally becomes
\begin{equation}
    \label{eqn:final}
    \omega_{ij}(x,y)=\frac{1}{(-2\pi\iunit)}\sum_{\Upsilon\in \SCosetR{j}}\left(C(b_i,\Upsilon)\,s_j^{(0)}(\Upsilon^{-1}x)+\delta_{ij}\,s_j^{(1)}(\Upsilon^{-1}x,y)\right) \, ,
\end{equation}
which exactly coincides with \eqn{eqn:highergenuskernels}.

%%%%%%%%%%%%%%%%%%%%%%%%%%%%%%%%%%%%%%%%%%%%%%%%%%%%%%%%%%%%%%%%%%%%%%%%%%%%%%
%%%%%%%%%%%%%%%%%%%%%%%%%%%%%%%%%%%%%%%%%%%%%%%%%%%%%%%%%%%%%%%%%%%%%%%%%%%%%%
%%%%%%%%%%%%%%%%%%%%%%%%%%%%%%%%%%%%%%%%%%%%%%%%%%%%%%%%%%%%%%%%%%%%%%%%%%%%%%
\subsection{Example: kernel recursions in the Schottky language}\label{sec:thm2}

In this subsection, we show that the Schottky representation~\eqref{eqn:highergenuskernels} and the convolution definition~\eqref{eqn:DSthm2} of Enriquez' kernels are compatible, by illustrating that the result of~\rcite[Thm.~3]{DHoker:2025dhv} can also be derived using the Schottky language. 
We are considering the simplest non-trivial case of~\cite[Thm.~3]{DHoker:2025dhv} (the general case is stated in \eqn{eqn:DSthm3}), which reads\footnote{In \rcite{DHoker:2025dhv}, a different normalization is employed, which translates via $g^{i_1 \cdots i_r}{ }_j(x, y)=(-2\pi\iunit)^r \omega_{i_1 \cdots i_rj}(x, y)$.}
\begin{align}\label{eqn:thm3simple}
    \int_{t \in \acyc_j} \omega_{aa}(y, t)\, \omega_k(t)= -  \omega_{jk}(y)\, ,
\end{align}
where $a,j,k\in\{1,\ldots,\genus\}$ and we also assume $j\,{\neq}\, k$. More complicated situations are discussed in \appref{app:33}. 

First, without loss of generality, we can restrict to the case $a\,{\neq}\, j$: by \eqn{eqn:linearid}, the difference $\omega_{aa}(y,t) - \omega_{cc}(y,t)$ is independent of $t$ for any choice of $a$ and $c$. Note that the choice of $a$ in this example, as well as in further examples, is primarily for convenience in the calculation. Carrying out the calculation for an arbitrary point $x \,{\in}\, \funddom$, one obtains
\begin{align}\label{eqn:wlog}
	&\int_{t \in \acyc_j} \omega_{aa}(y, t)\, \omega_{k}(t) 
	- \int_{t \in \acyc_j} \omega_{cc}(y, t)\, \omega_{k}(t)
	= \left[\, \omega_{aa}(y, x) - \omega_{cc}(y, x) \,\right] 
	\underbrace{\int_{t\in\acyc_j} \omega_{k}(t)}_{\delta_{jk}}.
\end{align}
Since we assumed $j \,{\neq}\, k$ in \eqn{eqn:thm3simple}, the result of \eqn{eqn:wlog} is zero, allowing us to limit our consideration to $a\,{\neq}\,j$ in the following.

Next, by explicitly expanding the kernels in the Schottky uniformization using \eqn{eqn:highergenuskernels}, the left-hand side of \eqn{eqn:thm3simple} can be written as
\begin{align}\label{eqn:thm3simpleSchottky}
    \int_{t \in \acyc_j}\! \omega_{aa}(y, t)\, \omega_k(t)=\frac{1}{(-2\pi \iunit)^2}\sum_{\substack{\gamma_a \in \SCosetR{a}\\\gamma_k\in\SCosetR{k}}}\int_{t \in \acyc_j} \!\!\left(C(b_a, \gamma_a)\, \skern{0}_a(\gamma_a^{-1}y )+ s^{(1)}_a(\gamma_a^{-1} y, t )\!
    \right)\!\skern{0}_k(\gamma_k^{-1}t).
\end{align}
We are going to use the same arguments as in \subsecref{sec:IntegrationSchottky} to single out contributing terms from the above equation based on their pole configurations with respect to the integration path.

%%%%%%%%%%%%%%%%%%%%%%%%%%%%%%%%%%%%%%%%%%%%%%%%%%%%%%%%%%%%%%%%%%%%%%%%%%%%%%
\paragraph{Residue analysis.}

Since $\skern{0}(\gamma_a^{-1}y)$ in \eqn{eqn:thm3simpleSchottky} does not depend on the integration variable~$t$, it factors out of the integral, i.e.
\begin{align}
	\frac{1}{(-2\pi \iunit)^2}\sum_{\substack{\gamma_a \in \SCosetR{a}\\\gamma_k\in\SCosetR{k}}}&\int_{t \in \acyc_j}C(b_a, \gamma_a)\, \skern{0}_a(\gamma_a^{-1}y )\,\skern{0}_k(\gamma_k^{-1}t)\\
	&=\frac{1}{(-2\pi \iunit)}\sum_{\gamma_a \in \SCosetR{a}}C(b_a, \gamma_a)\, \skern{0}_a(\gamma_a^{-1}y )\underbrace{\frac{1}{(-2\pi \iunit)}\sum_{\gamma_k\in\SCosetR{k}}\int_{t \in \acyc_j}\skern{0}_k(\gamma_k^{-1}t)}_{=\int_{\acyc_j}\omega_k=\delta_{jk}},\notag
\end{align}
which vanishes due to the assumption $j\,{\neq}\,k$.
Thus, the non-vanishing contribution to the left-hand side of \eqn{eqn:thm3simple} reduces to
\begin{align}\label{eqn:exThm3}
\int_{t \in \acyc_j} \omega_{aa}(y, t)\, \omega_k(t, z)= 
\frac{1}{(-2 \pi \iunit)^2} \sum_{\substack{\gamma_a \in \SCosetR{a}\\\gamma_k \in \SCosetR{k}}}
\int_{t \in  \acyc_j}\skern{1}_a(\gamma_a^{-1} y,  t)\, \skern{0}_k(\gamma_k^{-1}t).
\end{align}
We use the explicit representation of the integration kernels on the Schottky cover to identify all simple poles of the integrand
\begin{align}
    \mathcal{I}_{ak}(t)=\skern{1}_a(\gamma_a^{-1} y,  t)\, \skern{0}_k(\gamma_k^{-1}t).
\end{align}
From \eqns{eqn:sn}{eqn:schottky-holomorphic-basis} as well as~\appref{app:TechnicalitiesIntegrationSchottky} (see eqs.~\eqref{eqn:qExpDiv}--\eqref{eqn:qExpNondiv2} for details), we get the following expressions:
\begin{align}
    \label{eqn:holDiffschottkyapp}
    s^{(0)}_k(y)&=-\dd y\left(\frac{1}{y-P_k'}-\frac{1}{y-P_k}\right) \, , \\
    \label{eqn:g1schottkyapp}
    s^{(1)}_k(y,t)&=-\Bigg(\sum_{n>0}\frac{\dd(\sigma_k^ny)}{\sigma_k^ny-P_k}+\sum_{n<0}\frac{\dd(\sigma_k^ny)}{\sigma_k^n y-P_k'} -\sum_{n\neq0}\frac{\dd(\sigma_k^n y)}{\sigma_k^n y-t} \\
    &\hspace{25ex}+\frac{\dd y}{2}\left(\frac{1}{y-P_k'}+\frac{1}{y-P_k}\right)-\frac{\dd y}{y-t}\Bigg) \nonumber.
\end{align}
Since \eqn{eqn:exThm3} is an integration over the variable $t$, it can be seen from \eqns{eqn:holDiffschottkyapp}{eqn:g1schottkyapp} that the integrand has simple poles located at
\begin{align}
&p_{P}(\gamma_k) = \gamma_k P_k \quad \text{and} \quad
p_{P'}(\gamma_k) = \gamma_k P'_k,
&&\text{from } \skern{0}_k(\gamma_k^{-1} t),  \nonumber\\
&p_y^n(\gamma_a) = \sigma_a^n \gamma_a^{-1} y,
&&\text{from } \skern{1}_a(\gamma_a^{-1}y,t) ,
\end{align}
with $n \,{\in}\, \mathbb{Z}$.
Let us adapt the residue analysis, performed in \secref{sec:IntegrationSchottky}, to the configuration of poles at hand:
\begin{situations}
    \item \label{3.3item:1} All simple poles lie on the same side of the contour. In this case, the residue contribution is zero.
    \item \label{3.3item:2} The point $p_y^n(\gamma_a)$ lies outside the integration contour, while $p_{P}(\gamma_k)$ and $p_{P'}(\gamma_k)$ are located inside. This occurs either when $n \,{\neq}\, 0$ and $\gamma_a \,{\in}\, \SCosetR{a}$, or when $n \texteq 0$, $\gamma_a \,{\in}\, \mathrm{C}(\SCosetR{a})_{\leftarrow j}^{[+]}$, and $\gamma_k \,{\in}\, (\SCosetR{k})^{[-]}_{j \rightarrow}$. We choose the contour to enclose the poles located outside. Inverting the contour orientation introduces an additional minus sign, resulting in the residue contribution
    \begin{equation}
        -2\pi \iunit \Res_{p_y^n}\mathcal{I}_{ak}(t) 
        = {2\pi \iunit}\, \skern{0}_k(\gamma_k^{-1} \sigma_a^n \gamma_a^{-1}y).
    \end{equation}
    \item \label{3.3item:3} The points $p_{P}(\gamma_k)$, $p_{P'}(\gamma_k)$ are outside, while $p_y^0(\gamma_a)$ is inside of the contour. This occurs when $\gamma_a \,{\in}\, \SGroup^{[+]}_{\leftarrow j}$ and $\gamma_k \,{\in}\, \csubsetl{(\SCosetR{k})}{-}{j}$. The residue contribution is
    \begin{equation}
        2\pi \iunit\Res_{p_y^0}\mathcal{I}_{ak}(t) 
        = {2\pi \iunit}\, \skern{0}_k(\gamma_k^{-1} \gamma_a^{-1}y).
    \end{equation}
\end{situations}
%

%%%%%%%%%%%%%%%%%%%%%%%%%%%%%%%%%%%%%%%%%%%%%%%%%%%%%%%%%%%%%%%%%%%%%%%%%%%%%%
\paragraph{Manipulating Schottky sums.}

This is where splitting Schottky words becomes necessary. For integration along the $\acyc_j$ direction, we analyze the splitting defined in \secref{sec:splitting_of_Schottky_words}. 
For example, in \sitref{3.3item:2}, we have the two splittings before $\sigma_j^{-1}$ that give the same residue contribution:
\begin{align}
    &\gamma_a = \sigma_k,\quad \gamma_k = \sigma_j^{-2} \sigma_a,\\
    &\gamma_a = \sigma_k\sigma_j^{-1} ,\quad \gamma_k = \sigma_j^{-1}\sigma_a,
\end{align}
which both result in
$
(-{2\pi \iunit})\, \skern{0}_k(\Upsilon^{-1}y)
$
with $\Upsilon\texteq\sigma_k^{-1}\sigma_j^{-2} \sigma_a$.
Together, this yields
\begin{align}
   (-2\pi\iunit) \,N_j^-(\Upsilon) \,\skern{0}_k(\Upsilon^{-1}y).
\end{align}
Here, the negative sign comes from the orientation of the contour $\acyc_j$, which was conventionally chosen to be negative (see \secref{sec:Schottly_defs}).
To obtain the total contribution, we must also include the terms from $\ssplit_j^+(\Upsilon)$, which happen to be empty in this specific example. Proceeding analogously for all $\Upsilon\,{\in}\, \SCosetR{k}$, we finally arrive at
\begin{align} \label{eqn:thm2_r0_case_calc}
\int_{t \in \acyc_j} \omega_{aa}(y, t)\, \omega_k(t, z)
&= \sum_{\Upsilon\in \SCosetR{k}} \frac{1}{(-2 \pi \iunit)}\left( N_j^-(\Upsilon) -N_j^+(\Upsilon)
\right)\skern{0}_k(\Upsilon^{-1}y) \nonumber\\
&= \sum_{\Upsilon\in \SCosetR{k}}
 \frac{1}{2 \pi \iunit} C(b_j,\Upsilon)\, \skern{0}_k(\Upsilon^{-1}y) \nonumber\\
 &=-\omega_{jk}(y,z).
\end{align}
The general relation between splitting number $N$ and the $C$-coefficient is a purely combinatorial problem, which is discussed in detail in \appref{app:combinatoricProblem}.
The case when $j\texteq k$ requires the use of the recursive integral representation of the genus-one integration kernels. The corresponding proof is provided in \appref{app:L=Kproof}.

%%%%%%%%%%%%%%%%%%%%%%%%%%%%%%%%%%%%%%%%%%%%%%%%%%%%%%%%%%%%%%%%%%%%%%%%%%%%%%
%%%%%%%%%%%%%%%%%%%%%%%%%%%%%%%%%%%%%%%%%%%%%%%%%%%%%%%%%%%%%%%%%%%%%%%%%%%%%%
%%%%%%%%%%%%%%%%%%%%%%%%%%%%%%%%%%%%%%%%%%%%%%%%%%%%%%%%%%%%%%%%%%%%%%%%%%%%%%
%%%%%%%%%%%%%%%%%%%%%%%%%%%%%%%%%%%%%%%%%%%%%%%%%%%%%%%%%%%%%%%%%%%%%%%%%%%%%%

\section{Higher-genus multiple zeta values}\label{sec:hgMZV}

Having reviewed languages and techniques for treating functions on Riemann surfaces of higher genus in the previous sections, we are now going to define higher-genus multiple zeta values (\hgMZV{}s). Following the pattern at genus zero and genus one, we are extending the regularization prescription of \hgMPL{}s to the second endpoint in \subsecref{sec:hgmzvreg}. 
\secref{sec:degen} and \secref{sec:hgmzvrel} are then devoted to studying degenerations and relations of \hgMZV{}s, respectively. 

%%%%%%%%%%%%%%%%%%%%%%%%%%%%%%%%%%%%%%%%%%%%%%%%%%%%%%%%%%%%%%%%%%%%%%%%%%%%%%
%%%%%%%%%%%%%%%%%%%%%%%%%%%%%%%%%%%%%%%%%%%%%%%%%%%%%%%%%%%%%%%%%%%%%%%%%%%%%%
%%%%%%%%%%%%%%%%%%%%%%%%%%%%%%%%%%%%%%%%%%%%%%%%%%%%%%%%%%%%%%%%%%%%%%%%%%%%%%
\subsection{Definition}

Using Enriquez' higher-genus integration kernels for the definition of hgMPLs in \eqn{eqn:hgMPL}, we can now define higher-genus multiple zeta values:
\begin{definition}
	Let $\RSurf$ be a compact Riemann surface of genus $\genus\,{\geq}\,1$. For $k\,{>}\,0$, $j\,{\in}\,\{1,\ldots,\genus\}$, and $\mindx{i}_\ell$, $\ell\,{\in}\,\{1,\ldots,k\}$ multi-indices labeling Enriquez' kernels on $\RSurf$, we define \Atxt-cycle \emph{higher-genus multiple zeta values} (\hgMZV{}s) as higher-genus polylogarithms evaluated along the $j$-th \Atxt-cycle: 
	\begin{align}\label{eqn:hgMZV}
		\zetaAg{j}{\genus}(\mindx{i}_1,\mindx{i}_2,\ldots,\mindx{i}_k)&\coloneqq\Gargbare{\mindx{i}_1,\ldots,\mindx{i}_k}{z_0,\ldots,z_0}{z_0\,{+}\,\Acycle_j}{z_0}\notag\\
		&=\int_{t_1=z_0}^{z_0+\Acycle_j}\omega_{\mindx{i}_1}(t_1,z_0)\int_{t_2=z_0}^{t_1}\omega_{\mindx{i}_2}(t_2,z_0)\,\cdots\int_{t_k=z_0}^{t_{k-1}}\omega_{\mindx{i}_k}(t_k,z_0)\\ 
		&=\int_{\Acycle_j}\omega_{\mindx{i}_1}\circ\omega_{\mindx{i}_2}\circ\cdots\circ\omega_{\mindx{i}_k},\notag
	\end{align}
	where the point $z_0$ is an arbitrary point on the cycle $\Acycle_j$ and the pole variables for each integration kernel are chosen to be $z_0$. We refer to the arguments $\mindx{i}_1,\ldots,\mindx{i}_k$ of the \hgMZV{} as entries or slots.
\end{definition}
The notation $z_0\,{+}\,\Acycle_j$ denotes that we end our integration at $z_0$ after moving once around the cycle $\Acycle_j$. The last line of \eqn{eqn:hgMZV} is a shorthand for the iterated integration over the cycle $\Acycle_j$ (note that in this shorthand all pole variables of the kernels are set to $z_0$). If the genus is clear from the context, we are going to avoid the label $[\genus{}]$ for the genus frequently.

Just as for \hgMPL{}s above (cf.~\secref{sec:hgmpls}), we define the depth $k$ as the number of integrations of a \hgMZV{} and the weight as $w\texteq|\mindx{i}_1|\,{+}\,\ldots\,{+}\,|\mindx{i}_k|$. For brevity of notation, we will write powers for indices repeated within one multi-index, e.g.~$211133\texteq21^33^2$.

The \hgMZV{}s defined here are functions of the geometric information of the underlying Riemann surface, i.e.~functions of the period matrix $\tau$ or equivalently of the Schottky group~$\SGroup$. We will usually omit this dependence.

While \hgMZV{}s are just (special values of) iterated integrals and, thus, adhere to the general properties~\eqref{eqn:pathinversion},~\eqref{eqn:pathconcatenation} and~\eqref{eqn:shuffle} of iterated integrals, they also respect the Fay-like identities~\eqref{eqn:FaylikeId} that are special for the integration kernels used for \hgMZV{}s.

%%%%%%%%%%%%%%%%%%%%%%%%%%%%%%%%%%%%%%%%%%%%%%%%%%%%%%%%%%%%%%%%%%%%%%%%%%%%%%
%%%%%%%%%%%%%%%%%%%%%%%%%%%%%%%%%%%%%%%%%%%%%%%%%%%%%%%%%%%%%%%%%%%%%%%%%%%%%%
\subsection{Regularization of higher-genus MZVs}\label{sec:hgmzvreg}

Regularization of MZVs for any genus relies on the mechanism of regulating endpoint divergences for polylogarithms, as reviewed in \secref{sec:g1reg} and \secref{sec:hgMPLreg}. Other than (generic) polylogarithms, which are calculated as iterated path integrals starting from a basepoint $z_0$ and ending at the argument $z$ of the polylogarithm, \hgMZV{}s are integrated around a full \Atxt-cycle, which for convenience can be assumed as path connecting $z_0$ to $z_0\,{+}\,\acyc$. Accordingly, they need regularization at the basepoint $z_0$ twice: once when the integration path leaves from $z_0$ and once when approaching $z_0$ at the end of the integration path.

%%%%%%%%%%%%%%%%%%%%%%%%%%%%%%%%%%%%%%%%%%%%%%%%%%%%%%%%%%%%%%%%%%%%%%%%%%%%%%
\paragraph{General formalism.} 

The generating series for (regularized) MZVs is typically identified with the associator of the solution to \eqn{eqn:KZhg} around a non-contractible cycle in the homology basis\footnote{This is the canonical definition for genera $\genus\,{\geq}\,1$. For $\genus\texteq0$, the associator is usually defined by connecting solutions at different punctures on the Riemann sphere.} (cf.~\rcites{gonzalez2020surfacedrinfeldtorsorsi,DHoker:2023vax} for higher-genus associators). Concretely, we consider the monodromy of the solution $L(z)$ around an $\acyc$-cycle and define the associator corresponding to $\acyc_i$, $i\,{\in}\,\{1,\ldots,\genus\}$ based at the puncture $z_0$ as
\begin{equation}\label{eqn:associator}
    Z_i=L(z)^{-1}L(z+\acyc_i)=\lim_{\re\rightarrow0}(\lambda\re)^{\Res_{z_0}(K)}\exp{\left(-\int_{(\acyc_i)_\re}K(z',z_0)\right)}(\lambda\re)^{-\Res_{z_0}(K)} \, ,
\end{equation} 
where $(\acyc_i)_\re\colon[0,1]\,{\rightarrow}\,\RSurf$ is defined by $(\acyc_i)_\re(s)\texteq\acyc_i((1\,{-}\,s)\,\re\,{+}\,s\,(1\,{-}\,\re))$. The (regularized) MZVs are now defined as the expansion coefficients of $Z_i$ for $i\,{\in}\,\{1,\ldots,\genus\}$, e.g. for depth one
\begin{equation}
    \zeta_{\acyc_i}^\reg(w)=[w]\,Z_i \, ,
\end{equation}
where $[w]$ extracts the coefficient of the word\footnote{By slight abuse of notation, we denote the \hgMZV{} as $\zeta_{\acyc_i}^{\text{reg}}(w)$ for $w\,{\in}\,\alg{t}$ here.} $w\,{\in}\,\alg{t}$ in $Z_i$. In the next paragraph, we will carry out this procedure to first order explicitly. Through the prescription in \eqn{eqn:associator} endpoint divergences at both ends of the integration path are taken care of automatically in this way.

%%%%%%%%%%%%%%%%%%%%%%%%%%%%%%%%%%%%%%%%%%%%%%%%%%%%%%%%%%%%%%%%%%%%%%%%%%%%%%
\paragraph{Regularization of higher-genus MZVs.}\label{sec:hgmzv}

In \subsecref{sec:hgMPLreg}, we addressed the regularization of hgMPLs, where we only needed to handle the endpoint divergence at one integration boundary. However, according to the definition in \eqn{eqn:hgMZV}, because hgMZV involve a closed contour integration, divergences occur at both endpoints.

We start by considering the depth-one hgMZVs, which require regularization: $\hgzeta{i}(jj)$.
As discussed for \hgMPL{}s in \secref{sec:hgMPLreg}, the only divergent term in the expansion into genus-one kernels is $\skern{1}_j(\gamma^{-1}z,z_0)$ when $\gamma \texteq \id$. Thus, for $z_0\texteq\acyc_i(0)\texteq\acyc(1)$, we focus on the integral
\begin{align}\label{eqn:hgmzvreg1}
	\int_{(\acyc_i)_\re} \skern{1}_j(t, z_0 )
	&=\int_{\abel_j(z_\re,z_0)}^{\abel_j(z_{1-\re},z_0)+\delta_{ij}} \gkern{1}(\vt\mid\tau_j)\, \dd \vt,
\end{align}
with $(\acyc_i)_\re$ as defined in \eqn{eqn:associator} and where $z_\re\texteq(\acyc_i)_\re(0)$ and $z_{1-\re}\texteq(\acyc_i)_\re(1)$, respectively. Furthermore, notice that the upper integration boundary acquired a $\delta_{ij}$ due to the cycle $(\acyc_i)_\re$ crossing the branch cut of (the logarithm in) Abel's map $\abel_j$ for $i\texteq j$. More precisely, Abel's map $\abel_j$ has a branch cut connecting the fixed points $P_j'$ and $P_j$. Hence, when going around the cycle $(\acyc_i)_\re$ for $i\texteq j$, we cross the branch cut and enter a different sheet of the logarithm. This results in the additive factor of $\delta_{ij}$ in the upper integration boundary. Next, let us split the contour $(\acyc_i)_\re$ at some point\footnote{W.l.o.g.~we assume that the branch cut crosses the contour ``after'' $z_1$, i.e.~if the crossing happens at the point $z_c\texteq(\acyc_j)_\re(s_c)$, we assume $s_c\,{>}\,s_1$.} $z_1\texteq(\acyc_i)_\re(s_1)$ for $s_1\,{\in}\,(0,1)$. Then, we can rewrite \eqn{eqn:hgmzvreg1} as 
\begin{equation}\label{eqn:hgmzvreg2}
    \begin{aligned}
        \int_{(\acyc_i)_\re} \skern{1}_j(t, z_0 )=&\int_{\abel_j(z_\re,z_0)}^{\abel_j(z_{1-\re},z_0)+\delta_{ij}} \gkern{1}(\vt\mid\tau_j)\, \dd \vt \\
        =& \int_{\abel_j(z_\re,z_0)}^{\abel_j(z_1,z_0)} \gkern{1}(\vt\mid\tau_j)\, \dd \vt+\int_{\abel_j(z_1,z_0)}^{\abel_j(z_{1-\re},z_0)+\delta_{ij}} \gkern{1}(\vt\mid\tau_j)\, \dd \vt .
    \end{aligned}
\end{equation}
We can now use the above result to trace the \hgMZV{} regularization back to the endpoint regularization of hgMPLs described in \secref{sec:hgMPLreg}. Using in particular \eqn{eqn:hgMZVregularization2}, we can write \eqn{eqn:hgmzvreg2} as
\begin{equation}
    \begin{aligned}
        \int_{(\acyc_i)_\re} \skern{1}_j(t, z_0 )&=\delta_{ij}\pi\iunit-\log\!\big(\!-\abel'_j(z_0)\,{(\acyc_i)_{\re}'}(0)\,\re\big)+\log\!\big(\abel'_j(z_0)\,{(\acyc_i)_{\re}'}(1)\,\re\big)+\cO(\re) \\
        &=\delta_{ij}\pi\iunit-\log(-1)+\cO(\re)\, ,
    \end{aligned}
\end{equation}
where the term $\delta_{ij}\pi\iunit$ arises due to the branch cut contribution in the upper boundary and $\abel'_j(z)$ and $(\acyc_i)_{\re}'(z)$ are the derivative of Abel's map and the tangent vector to the integration path evaluated at $z$, respectively. Moreover, the term $\log(-1)$ appears due to the fact that $\acyc_i(1\,{-}\,\re)\texteq\acyc_i^{-1}(\re)$, which implies that the upper boundary can also be reached by walking the path by an amount $\re$ in the opposite direction. Finally, we can perform the limit $\re\,{\rightarrow}\,0$ to obtain
\begin{equation}\label{eqn:hgmzvreg3}
	\left(\int_{\acyc_i} \skern{1}_j(t, z_0 )\right)_{\text{reg}}=\lim_{\re\to0}\int_{(\acyc_i)_\re} \skern{1}_j(t, z_0 )=(\delta_{ij}-1)\,\pi\iunit.
\end{equation}
The branch cut contribution exactly cancels the term $\log(-1)\texteq\pi\iunit$ in the case $i\texteq j$, so that \eqn{eqn:hgmzvreg3} implies
\begin{align}\label{eqn:hgmzvreg}
	\zetaAreg{i}(jj)&=
	\lim_{\re\to0}\int_{(\acyc_i)_\re}\omega_{jj}(t, z_0 \mid\! \SGroup)\notag\\
	&=(\delta_{ij}-1)\frac{\pi\iunit}{(-2\pi\iunit)}+\sum_{\gamma \in \SCosetR{j}} C(b_i, \gamma)\int_{\acyc_i}\omega(\gamma^{-1}t \mid\! \SGroup_j) 
	+\frac{1}{(-2 \pi \iunit)} \sum_{\substack{\gamma \in \SCosetR{j}\\\gamma\neq\id}} \int_{\acyc_i}\skern{1}_j(\gamma^{-1} z, z_0)\notag\\
	&=\begin{cases}
		0, & i=j,\\
		\frac12, & i\neq j,
	\end{cases}
\end{align}
for all $i,j\,{\in}\,\{1,\ldots,\genus\}$ as all terms with $\gamma\,{\neq}\,\id$ are regular at the boundaries and therefore cancel for a closed contour. Moreover, the second term vanishes also for $\gamma\texteq\id$ as $C(b_i,\id)\texteq0$ by \eqn{eqn:Ccoeff}. Thus, the \hgMZV{}s $\zetaAreg{i}(ii)$ vanish through the regularization, which is consistent with the elliptic case studied e.g.~in \rcites{Matthes:Thesis,Broedel:2015hia,Broedel:2019gba}, while the other cases for $i\,{\neq}\,j$ yield the non-vanishing result 1/2. This is in particular consistent with \eqn{eqn:linearid}: choosing a point $v$ in the fundamental domain, which is not on the cycle $\acyc_k$ and $z_0$ on the cycle $\acyc_k$, one finds for $k\,{\neq}\,j$
\begin{align}
	\zetaAreg{k}(jj)&=\lim_{\re\rightarrow0}\int_{(\acyc_k)_\re}\omega_{jj}(t,z_0)=\underbrace{\lim_{\re\rightarrow0}\int_{(\acyc_k)_\re}\omega_{kk}(t,z_0)}_{=\zetaAreg{k}(kk)=0}+\underbrace{\int_{\acyc_k}\omega_{jj}(t,v)}_{=0}-\underbrace{\int_{\acyc_k}\omega_{kk}(t,v)}_{=-\frac12}=\frac12,
\end{align}
where we first used \eqn{eqn:linearid} and then the regularized \hgMZV{} $\zetaAreg{k}(kk)\texteq0$ as well as the result~\eqref{eqn:d1}.

Let us stress the difference between the regularization result~\eqref{eqn:hgmzvreg} and relation~\eqref{eqn:d1}: for the definition of the \hgMZV{}s, the pole variable of the integration was assumed to lie \textit{on the integration path}, requiring the above regularization calculation. On the contrary, for \eqn{eqn:d1} the pole variable is required to lie \textit{inside the fundamental domain} (and not on the integration path), such that in the latter case no regularization is needed.

Finally, for regularization of the higher-depth \hgMZV{}s, we adopt the same strategy as for \hgMPL{}s in \secref{sec:higher_dep_reg_hgmpl}: by means of the shuffle relations~\eqref{eqn:shuffle} and the linear identity~\eqref{eqn:differenceformula}, one can express higher-depth \hgMZV{}s in terms of lower-depth counterparts in the same way as it is done in \eqn{eqn:regularized_dep_2_hgmpl}. In this way, one can systematically rewrite all the divergent \hgMZV{}s in terms of divergent depth-one quantities, which are regularized by \eqn{eqn:hgmzvreg}.

%%%%%%%%%%%%%%%%%%%%%%%%%%%%%%%%%%%%%%%%%%%%%%%%%%%%%%%%%%%%%%%%%%%%%%%%%%%%%%
%%%%%%%%%%%%%%%%%%%%%%%%%%%%%%%%%%%%%%%%%%%%%%%%%%%%%%%%%%%%%%%%%%%%%%%%%%%%%%
%%%%%%%%%%%%%%%%%%%%%%%%%%%%%%%%%%%%%%%%%%%%%%%%%%%%%%%%%%%%%%%%%%%%%%%%%%%%%%
\subsection{Higher-genus MZVs of depth one}\hypertarget{para:dep1}{}\label{sec:relations_depth_one}

Now that we have defined the \hgMZV{}s, we would like to apply the Schottky formalism in order to reproduce the formula~\eqref{eqn:d1} for depth-one \hgMZV{}s, including the regularization from \secref{sec:hgmzvreg}. We thereby show that depth-one \hgMZV{}s directly reduce to the elliptic values, which in their turn are simply given by the ordinary zeta values at genus zero. More precisely, we have the following result:
\begin{proposition}[Enriquez~\cite{EnriquezHigher}, Enriquez--Zerbini~\cite{EZ1}]\label{prop:depth1}
Depth-one \hgMZV{}s are given by
\begin{equation}\label{eqn:hgmzvdepth1}
	\hgzeta{k}(i_1\cdots i_r j)=\begin{cases}
		-\delta_{jki_1\cdots i_r}\frac{2\zeta_r}{(-2\pi \iunit)^r}, & \text{for }r \text{ even} ,\\
        \delta_{r1}\delta_{j i_1}(1- \delta_{jk})\frac{1}{2}, & \text{for } r \text{ odd}.
	\end{cases}
\end{equation}
for $\zeta_0\,{\equiv}\,\frac12$ and the regularization shown in \secref{sec:hgmzvreg}.
\end{proposition}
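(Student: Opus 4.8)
The plan is to evaluate the defining integral $\hgzeta{k}(i_1\cdots i_rj)=\int_{\acyc_k}\omega_{i_1\cdots i_rj}(t,z_0)$ (with $z_0$ on $\acyc_k$ and all pole variables set to $z_0$) directly in the Schottky uniformization. First I would substitute the Poincar\'e-series representation~\eqref{eqn:highergenuskernels}, interchange the absolutely convergent sum over $\gamma\in\SCosetR{j}$ with the $\acyc_k$-integration, and for each coset representative change variables $u=\gamma^{-1}t$. This reduces everything to the contour integrals $\int_{\gamma^{-1}\acyc_k}\skern{m}_j(u,z_0)=(-2\pi\iunit)^{1-m}\int_{\gamma^{-1}C_k}\gkern{m}(\abel_j(u,z_0)\mid\tau_j)\,\omega(u\mid\SGroup_j)$, one for each $\gamma$ and each $0\le m\le\delta_{ji_s}n_s$, where the (clockwise) circle $\gamma^{-1}C_k$ is now viewed as a curve on the $j$-th subcover -- a genus-one surface with Schottky group $\SGroup_j$ on which $\omega(u\mid\SGroup_j)=\dd\abel_j(u,z_0)$.

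The heart of the argument is a geometric dichotomy for the contours $\gamma^{-1}C_k$. Since a reduced representative $\gamma\in\SCosetR{j}$ does not end in $\sigma_j^{\pm1}$, the word $\gamma^{-1}$ does not \emph{begin} with $\sigma_j^{\pm1}$ (unless $\gamma=\id$); using that the $2\genus$ Schottky circles lie mutually in each other's exterior, this forces $\gamma^{-1}C_k$ to bound a disk contained in the interior of $C_l$ or $C_l'$ for some $l\ne j$ -- a disk enclosing neither fixed point $P_j$ nor $P_j'$. Hence $\gamma^{-1}C_k$ is contractible on the $j$-th subcover \emph{unless} $\gamma=\id$ and $k=j$, in which case $\gamma^{-1}\acyc_k=\acyc_j$. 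On a contractible disk $\skern{m}_j(\cdot,z_0)$ is single-valued and its only possible pole -- at $u=z_0$, occurring solely for $m=1$ because $\gkern{1}$ is the unique genus-one kernel singular at the origin -- lies either strictly outside the disk ($\gamma\ne\id$) or on its boundary $C_k$ ($\gamma=\id$, $k\ne j$, where $z_0\in\acyc_k$). By the residue theorem, every contractible-contour integral therefore vanishes, except that the case $\gamma=\id$, $m=1$ (with $k\ne j$) produces the genuinely divergent integral $\int_{\acyc_k}\skern{1}_j(t,z_0)$, regulated in \secref{sec:hgmzvreg}: by~\eqref{eqn:hgmzvreg3} it equals $(\delta_{kj}-1)\pi\iunit$. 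For the single non-contractible case $\gamma=\id$, $k=j$, parametrizing by $\vt=\abel_j(u,z_0)$, which increases by $1$ around $\acyc_j$, turns the integral into $(-2\pi\iunit)^{1-m}\int_0^1\gkern{m}(\vt\mid\tau_j)\,\dd\vt=(-2\pi\iunit)^{1-m}\,\omell(m)\big|_{\tau_j}$, again in the regularized sense when $m=1$; here the branch-cut shift of $\abel_j$ is exactly the $\delta_{ij}$ appearing in~\eqref{eqn:hgmzvreg1}.

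It then remains to assemble the surviving pieces. For $\gamma=\id$ the coefficient $C(b_{i_1}^{n_1}\cdots b_{i_s}^{n_s-m},\id)$ is nonzero -- and equal to $1$ -- only for the empty word, forcing $i_1=\cdots=i_r=j$ and $m=r$; combined with the divergent term this recovers exactly~\eqref{eqn:hgmzvreg} in the special case $i_1\cdots i_rj=jj$, so the only case needing the regularization of \secref{sec:hgmzvreg} is accounted for, and in all remaining cases the result is $(-2\pi\iunit)^{-r}\,\omell(r)\,\delta_{jki_1\cdots i_r}$, reproducing~\eqref{eqn:d1}. Finally I would insert the standard depth-one eMZV values (cf.~\secref{sec:emzv}): the parity~\eqref{eqn:gen1kernelparity} makes $\omell(r)$ vanish for odd $r$ (with $\omell(1)=0$ after regularization), while the genus-one instance of~\eqref{eqn:d1} gives $(-2\pi\iunit)^{-r}\omell(r)=\bn{r}/r!$ for even $r$; rewriting $\bn{r}/r!=-2\zeta_r/(-2\pi\iunit)^r$ for even $r$ -- consistent with $\int_{\acyc_k}\omega_j=\delta_{jk}$ at $r=0$ in~\eqref{eqn:periodmatrix}, which pins down the constant $\zeta_0$ -- and packaging the results with the Kronecker deltas $\delta_{jki_1\cdots i_r}$ and $\delta_{r1}\delta_{ji_1}(1-\delta_{jk})$ yields~\eqref{eqn:hgmzvdepth1}.

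I expect the main obstacle to be the second step: making the contractibility and pole-enclosure statements for $\gamma^{-1}C_k$ fully rigorous -- in particular determining on which side of $\gamma^{-1}C_k$ the fixed points $P_j$, $P_j'$ and the pole variable $z_0$ lie, and correctly treating the borderline configuration in which $z_0$ sits on the integration contour -- and then verifying that this borderline case is regulated precisely as in~\eqref{eqn:hgmzvreg1}--\eqref{eqn:hgmzvreg3}, so that the na\"ive value $\bn{1}/1!=-\tfrac12$ for $k=j$ is correctly replaced by the regularized $0$. The remaining input -- the combinatorics of the $C$-coefficients (cf.~\lemref{lem:Ccoeff} and \appref{app:combinatoricProblem}) and the genus-one depth-one values -- is routine.
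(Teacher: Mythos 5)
Your proposal follows essentially the same route as the paper's proof: expand the kernel via the Schottky Poincar\'e series~\eqref{eqn:highergenuskernels}, use a residue/contour argument to show that only the $\gamma=\id$ term with $k=j$ survives (your transported-contour contractibility argument is the mirror image of the paper's observation that for $\gamma\neq\id$ all poles lie on one side of $\acyc_k$), reduce that term to a depth-one eMZV via Abel's map, and handle the single divergent case $\zetaA{k}(jj)$ with the regularization of \secref{sec:hgmzvreg} before inserting the known values $\omell(r)$ and the $C$-coefficient constraint at the identity. The steps you flag as delicate (side/enclosure bookkeeping for $\gamma^{-1}C_k$ and the pole sitting on the contour) are exactly the points the paper settles with its Situation-style residue analysis and \eqns{eqn:hgmzvreg1}{eqn:hgmzvreg3}, so the argument is sound and matches the paper's.
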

In other words, we find eMZVs at depth-one, i.e.~$\zetaA{j}(j^{n+1})\texteq\omell(n)$, which are actually genus-zero zeta values of even arguments. The special case for odd $r$ is due to the regularization, see  \eqn{eqn:hgmzvreg}.
This proposition (in a slightly different form) was already proven in \rcites{EnriquezHigher,EZ1} (and also provided in \rcite{DHoker:2025szl}) using only the properties of Enriquez' kernels. Here, we provide an alternative proof using Schottky uniformization.
\begin{proof}
To simplify the notation, let us consider the depth-one \hgMZV{}s of the form $\zetaA{k}(i_1^{n_1}\cdots i_r^{n_r} j)$, where $i_k\,{\ne}\, i_{k+1}$ and $n_k \,{\ge}\, 1$ for all $k\,{\in}\, \{1,\ldots,r\}$.
The case $\zetaA{k}(jj)\texteq (1-\delta_{jk})\half$ follows from the regularization procedure, cf.~\eqn{eqn:hgmzvreg}. For all other cases, where regularization is not needed, we use \eqn{eqn:highergenuskernels} to expand $\zetaA{k}(i_1^{n_1}\cdots i_r^{n_r} j)$ in the Schottky formalism, thereby obtaining
\begin{equation}
    \label{eqn:depth-one}
    \zetaA{k}(i_1^{n_1}\cdots i_r^{n_r} j)=\frac{1}{(-2\pi\iunit)}\sum_{\gamma\in\SCosetR{j}}\sum_{l=0}^{\delta_{i_rj}n_r}C(b_{i_1}^{n_1}\cdots b_{i_r}^{n_r-l},\gamma)\int_{\acyc_k}s^{(l)}_j(\gamma^{-1}t,t_0) \,.
\end{equation}
To calculate the integral in \eqn{eqn:depth-one}, we employ the residue theorem and the techniques developed in \secref{sec:TechniquesSchottky}. First of all, notice that as $s^{(l)}_j(t,t_0)$ only depends on the $j$-th subcover, its poles can at most be located at images of $P_j'$, $P_j$ and $\sigma_j^nt_0$, $n\,{\in}\,\zZ$ under $\SCosetR{j}$. Denote these poles by $p_{P_j'}(\gamma)\texteq\gamma P_j'$, $p_{P_j}(\gamma)\texteq\gamma P_j$ and $p_{t_0}(\gamma)\texteq\gamma\,\sigma_j^nt_0$ for $\gamma\,{\in}\,\SCosetR{j}$. Next, notice that for $\gamma\,{\neq}\id$, these poles are always located on the same side of the contour $\acyc_k$. Therefore, by inverting the orientation of the contour if necessary, we can disregard these contributions in the residue theorem and the only case to be considered is $\gamma\texteq\!\id$. We can therefore write
\begin{equation}
    \begin{aligned}
    \label{eqn:depth-oneCs}
    \sum_{\gamma\in\SCosetR{j}}&C(b_{i_1}^{n_1}\cdots b_{i_r}^{n_r-l},\gamma)\int_{\acyc_k}s^{(l)}_j(\gamma^{-1}t,t_0)=C(b_{i_1}^{n_1}\cdots b_{i_r}^{n_r-l},\id)\,\delta_{jk}\int_{\acyc_j}s^{(l)}_j(t,t_0)\, . 
    \end{aligned}
\end{equation}
for $0\,{\leq}\, l\,{\leq}\, n_r$. Notice that $\delta_{jk}$ encodes the fact that $j\texteq k$ is the only situation, where we can possibly have poles on both sides of the integration contour. The integral on the right-hand side is now purely dependent on the $j$-th subcover and hence can be related to an eMZV as 
\begin{equation}
    \int_{\acyc_j}s^{(l)}_j(t,t_0)=(-2\pi\iunit)^{1-l}\int_{\acyc_j}g^{(l)}_j(t,t_0)\,\omega(z\mid\SGroup_j)=(-2\pi\iunit)^{1-l}\,\omell(l) \, ,
\end{equation}    
where the last equality follows from the substitution $\xi\texteq\abel_j(t,t_0)$. Notice that we consider all values of $l$ other than $1$, as the divergent integral with $l\texteq1$ only appears within $\zetaA{k}(jj)$ with a non-zero coefficient, which we have already addressed using regularization. With this information, we find
\begin{equation}
    \label{eqn:depth-oneDeltas}
    \begin{aligned}
        \sum_{\gamma\in\SCosetR{j}}C(b_{i_1}^{n_1}\cdots b_{i_r}^{n_r-l},\gamma)\int_{\acyc_k}s^{(l)}_j(\gamma^{-1}t,t_0)=(-2\pi\iunit)^{1-l}\,\delta_{r1}\,\delta_{n_1l}\,\delta_{jk}\,\omell(l) \, ,
    \end{aligned}
\end{equation} 
where we have used the definition~\eqref{eqn:Ccoeff} of the $C$-coefficients as well. In conclusion, we can use \eqn{eqn:depth-oneDeltas} to rewrite \eqn{eqn:depth-one} as
\begin{equation}\label{eqn:zetadepth-one}
    \zetaA{k}(i_1^{n_1}\cdots i_r^{n_r} j)=\delta_{jk}\left(\delta_{r0}\,\omell(0)+(-2\pi\iunit)^{-n_1}\,\delta_{r1}\,\delta_{i_1j}\,\omell(n_1)\right) \, ,
\end{equation}
where the first term corresponds to $l\texteq0$ and the second term corresponds to $l\,{>}\,0$, which only occurs if $\delta_{i_1j}$. This is now equivalent to \eqn{eqn:hgmzvdepth1}. To see this, note that \eqn{eqn:zetadepth-one} vanishes unless all indices are the same. In the latter case, the formula evaluates to 
\begin{equation}
    \zetaA{j}(j^{n+1})=\frac{\omell(n)}{(-2\pi\iunit)^n} \, .
\end{equation}
This exactly matches \eqn{eqn:hgmzvdepth1} as eMZVs of depth one $\omell(n)$ vanish if $n$ is odd and equal $-2\zeta_n$ for $n$ even and positive (and $\omell(0)\texteq1$)~\cite{Broedel:2015hia}.
\end{proof}
%

%%%%%%%%%%%%%%%%%%%%%%%%%%%%%%%%%%%%%%%%%%%%%%%%%%%%%%%%%%%%%%%%%%%%%%%%%%%%%%
%%%%%%%%%%%%%%%%%%%%%%%%%%%%%%%%%%%%%%%%%%%%%%%%%%%%%%%%%%%%%%%%%%%%%%%%%%%%%%
%%%%%%%%%%%%%%%%%%%%%%%%%%%%%%%%%%%%%%%%%%%%%%%%%%%%%%%%%%%%%%%%%%%%%%%%%%%%%%
%%%%%%%%%%%%%%%%%%%%%%%%%%%%%%%%%%%%%%%%%%%%%%%%%%%%%%%%%%%%%%%%%%%%%%%%%%%%%%
\section{Degenerations of higher-genus MZVs}\label{sec:degen}

In this section, we study the behavior of \hgMZV{}s under degenerations of the Riemann surface to which they are associated. As a preparation, let us briefly review the types of degenerations available for Riemann surfaces.

Topologically, the degeneration of a Riemann surface can be done in two different ways~\cite{fay}: one can pinch the surface along a separating curve, that is a simple closed curve homologous to zero\footnote{A curve is said to belong to the zeroth homology class if and only if it bounds a 2-chain, i.e.~a 2-dimensional subregion.}, or along a non-separating curve, i.e.~a curve belonging to a non-trivial homology class. The respective curves are shrunk to minimal length and then cut open. The former degeneration is commonly referred to as \emph{separating degeneration}, while the latter is called a \emph{non-separating degeneration}. A common choice for the non-separating curve is an \Atxt{}-cycle, which we use in what follows. \figref{fig:degen} illustrates both types of degenerations.

A non-separating degeneration results in a surface of one genus lower with two additional marked points (or punctures) corresponding to the pinched cycle, cf.~\figref{fig:degen_nonsep}. In terms of the period matrix, degenerating the cycle $\acyc_j$ leads to $\tau_{jj} \,{\to}\, \iunit \infty$. 
On the other hand, a separating degeneration produces two disconnected surfaces of lower genera, with a marked point on each of them as in \figref{fig:degen_sep}. Therefore, the period matrix takes block-diagonal form in the degeneration, where each block corresponds to one of the disconnected components.

In \secref{sec:nonseparatingA}, we are considering the behavior of \hgMZV{}s under the non-separating degenerations, that is, under pinching \Atxt-cycles, which was discussed in \rcite{Baune:2024biq} for the Schottky uniformization. Given that in our conventions Schottky circles correspond to \Atxt-cycles, pinching the cycle $\acyc_j$ translates into shrinking the corresponding circles $C_j$ and $C_j'$ to their respective fixed points $P_j$ and $P_j'$, see \figref{fig:degeneration-schottky}. Accordingly, the Schottky generator $\sigma_j$ degenerates to the map\footnote{For $z\,{\neq}\, P_j,P_j'$.} $z\,{\mapsto}\, P_j'$ (and $\sigma^{-1}$ to $z\,{\mapsto}\, P_j$), and is thus no longer a M\"obius transformation\footnote{Note that it is not clear, whether the degeneration of a Schottky cover results in a well-defined Schottky cover of the lower-genus Riemann surface (see discussion in~\cite[Sec.~5.3]{Baune:2024biq}), but the functions behave well under the degenerations.}. 
As pinching of an \Atxt-cycle leads to a Riemann surface of one genus lower, it is expected that \hgMZV{}s of a genus-\genus{} surface degenerate to \hgMZV{}s of a surface of genus $\genus{-}1$, as we will discuss in the following. We will derive these relations using the degeneration properties of the integration kernels. Thus, the degeneration properties derived for \hgMZV{}s here hold analogously for \hgMPL{}s.

In \secref{sec:separatingB}, we are going to introduce the separating degeneration in the Schottky picture. This degeneration corresponds to shrinking two fixed points $P_j$ and $P_j'$ of a generator $\sigma_j$ to a single point while keeping the multiplier parameter $\lambda_j$ fixed, see \figref{fig:degeneration-sep-schottky}. Hereby, effectively two scales are created on the Schottky cover: from the point of view of the untouched generators, the shrinking produces a single marked point on the cover; from the point of view of the shrunken generators, all the Schottky circles corresponding to the other generators are moved away to $\infty$, creating a marked point there. These marked points at different scales exactly correspond to the touching point of the lower-genus components of the original surface in the separating degeneration limit. We will demonstrate that starting from a genus-\genus{} surface, we can recover two disconnected surfaces of genera 1 and $\genus{-}1$ at the level of the period matrix. Moreover, by considering the two different scales, we show that the Schottky--Kronecker forms (cf.~\eqn{eqn:schottkykronecker}) degenerate into those of the lower-genus surfaces in the Schottky language. In this way, we obtain the degeneration of Enriquez' kernels and, thus, of the \hgMZV{}s.

\begin{figure}[t]
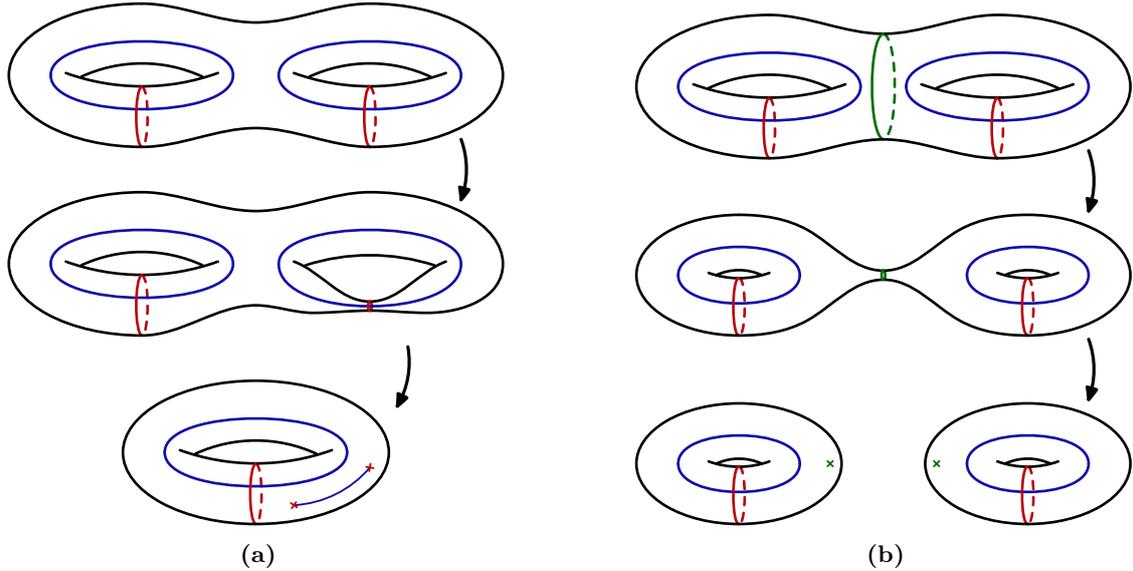

    \centering
    \begin{subfigure}[b]{0.45\textwidth}
        \centering
        \mpostuse{degen_nonsep}
        \caption{}
        \label{fig:degen_nonsep}
    \end{subfigure}
    \hspace{0.05\textwidth}
    \begin{subfigure}[b]{0.45\textwidth}
        \centering
        \mpostuse{degen_sep}
        \caption{}
        \label{fig:degen_sep}
    \end{subfigure}
    \caption{Degenerations of a genus-two Riemann surface. (a) Non-separating degeneration, where an $\mathfrak{A}$-cycle is pinched, resulting in a genus-one surface with two punctures. (b) Separating degeneration, where the surface splits into two lower-genus components.}
    \label{fig:degen}
\end{figure}
%

%%%%%%%%%%%%%%%%%%%%%%%%%%%%%%%%%%%%%%%%%%%%%%%%%%%%%%%%%%%%%%%%%%%%%%%%%%%%%%
%%%%%%%%%%%%%%%%%%%%%%%%%%%%%%%%%%%%%%%%%%%%%%%%%%%%%%%%%%%%%%%%%%%%%%%%%%%%%%
%%%%%%%%%%%%%%%%%%%%%%%%%%%%%%%%%%%%%%%%%%%%%%%%%%%%%%%%%%%%%%%%%%%%%%%%%%%%%%
\subsection{Non-separating degeneration}\label{sec:nonseparatingA}

In order to derive the degeneration of \hgMZV{}s, we first need to know the degeneration of Enriquez' integration kernels. This can be found through the degeneration of the component forms $K_k$, as was derived and discussed in ref.~\cite[Sec.~5.3]{Baune:2024biq} and is as follows:
we consider a genus-\genus{} Riemann surface and degenerate in the Schottky picture by pinching one cycle $\acyc_j$, for $j\,{\in}\,\{1,\ldots,\genus\}$.

%%%%%%%%%%%%%%%%%%%%%%%%%%%%%%%%%%%%%%%%%%%%%%%%%%%%%%%%%%%%%%%%%%%%%%%%%%%%%%
\paragraph{Degeneration of an unaffected direction.} 

For $k\,{\neq}\, j$, in \rcite{Baune:2024biq}, it was found that the genus-$\genus$ component forms $K_k$ degenerate to the component forms of genus $\genus{-}1$,
\begin{equation}
	K_k(z,x\mid\SGroup^{(\genus)})\overset{\acyc_j\to0}{\longrightarrow }K_k(z,x\mid\SGroup^{(\genus-1)}),
\end{equation}
where $\SGroup^{(\genus-1)}$ refers to the Schottky group of the degenerate surface, that is the group generated by the generators $\{\sigma_i\}_{i\neq j}$. For the expansion into words as in \eqn{eqn:connectionexpansion}, this implies that the letters $b_j$ do not appear anymore in the degenerate Schottky--Kronecker form and thus all kernels $\omega_{\cdots j\cdots k}$ vanish in this degeneration, as we can also observe numerically. Thus, any \hgMZV{} containing such a kernel will vanish when degenerated, e.g.~at arbitrary genus $\genus$, for $j\,{\neq}\, i$,
\begin{equation}
	\zetaAg{i}{\genus}(ji,i^3)\overset{\acyc_j\to0}{\longrightarrow}0.
\end{equation}
Other \hgMZV{}s only involving kernels from $K_k$ for $k\,{\neq}\, j$ will simply reduce to \hgMZV{}s of one genus lower, e.g.~at arbitrary genus \genus{}, for $i,k\,{\neq}\, j$,
\begin{equation}
	\zetaAg{i}{\genus}(ik^2,k^2i^2)\overset{\acyc_j\to0}{\longrightarrow}\zetaAg{i}{\genus-1}(ik^2,k^2i^2).
\end{equation}
\begin{figure}[t]
	\centering
	\begin{subfigure}[b]{0.5\textwidth}
		\centering
		\mpostuse{shottky-degeneration}
		\caption{}
        \label{fig:degeneration-schottky}
	\end{subfigure}%
	\begin{subfigure}[b]{0.5\textwidth}
		\centering
		\includegraphics[width=\textwidth]{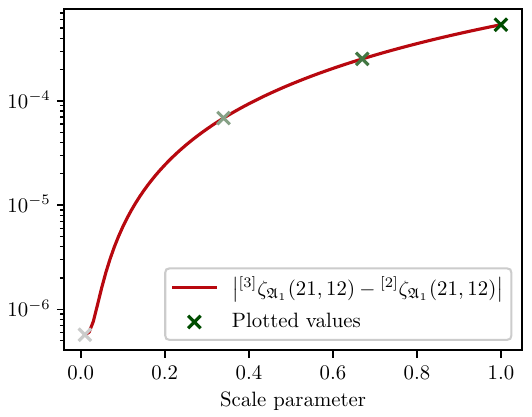}
		\caption{}
        \label{fig:degeneration-numerics}
	\end{subfigure}
	\caption{Figure (a) shows a graphical representation of degenerating a Schottky cover of genus three by shrinking one pair of circles, which corresponds to the non-separating degeneration in \figref{fig:degen_nonsep}. Figure (b) shows the absolute value of the difference between the value of $\zetaAg{1}{3}(21,12)$ (which depends on the 
    scaling parameter) and the value of $\zetaAg{1}{2}(21,12)$. The scaling parameter controls the radii of the green circles in (a) (depicted as fading green circles). Fading green crosses in (b) correspond to the shrunken pairs of circles in (a).}
    \label{fig:degeneration}
\end{figure}
\vspace{-3ex}

\noindent This shows that we can get any \hgMZV{} by degenerating a \hgMZV{} of higher genus, such that the desired \hgMZV{} is a ``descendant'' of another \hgMZV{} of one genus higher after degeneration. \figref{fig:degeneration-numerics} depicts a numerical example of a genus-three \hgMZV{} degenerating to a genus-two \hgMZV{}.

In particular, if we degenerate all directions except one, the result is zero unless all indices and the integration cycle are equal to the remaining direction and we find the corresponding eMZV of the non-degenerate genus-one subcover (with modulus $\tau_i$)
\begin{equation}
	\zetaA{i}(\mindx{i}_1,\ldots,\mindx{i}_k)\overset{\{\acyc_j\to0\}_{j\neq i}}{\longrightarrow}(-2\pi\iunit)^{-\sum_{m=1}^k\len{\mindx{i}_m}}\,\delta_{\mindx{i}_1,\ldots,\mindx{i}_k,i}\ \omell(\len{\mindx{i}_1},\ldots,\len{\mindx{i}_k}\mid \tau_i).
\end{equation}
%

%%%%%%%%%%%%%%%%%%%%%%%%%%%%%%%%%%%%%%%%%%%%%%%%%%%%%%%%%%%%%%%%%%%%%%%%%%%%%%
\paragraph{Degeneration of the affected direction.} 

The case $k\texteq j$ is more involved: here the degeneration of the component form $K_j$ in the Schottky language (cf.~\eqn{eqn:schottkykronecker}) is
\begin{align}
	K_j(z,x\mid\SGroup^{(\genus)})\overset{\acyc_j\to0}{\longrightarrow}&\sum_{\gamma\in \SGroup^{(\genus-1)}}\frac{\dd z}{z-\gamma P_j'}\frac{\gamma P_j-\gamma P_j'}{\gamma P_j-z}W(\gamma)\sum_{n<0}w_j^n\notag\\
	&+\sum_{\gamma\in \SGroup^{(\genus-1)}}\frac{\dd z}{z-\gamma x}\frac{\gamma P_j-\gamma x}{\gamma P_j-z}W(\gamma),
\end{align}
where again $\SGroup^{(\genus-1)}$ refers to the Schottky group of the degenerate surface. Thus, there are kernels coming from the expansion of this component form, which remain finite after the degeneration. Curiously, all dependence on the generator $\sigma_j$ vanishes. However, some of the kernels still depend on the locations of the fixed points $P_j$ and $P_j'$ of $\sigma_j$. In this sense, if one views the degenerate surface as a Riemann surface of genus $\genus{-}1$ with two marked points $P_j,\,P_j'$, the form $K_j$ will carry the information of the marked points in the limit, while the other forms $K_k$, $k\,{\neq}\, j$, do not see any contributions from the degenerate cycle $\acyc_j\,{\to}\,0$.

Thus, \hgMZV{}s with kernels of this form (e.g.~$\zetaA{i}(ij,i)$ for $\acyc_j\,{\to}\,0$) are not expected to degenerate purely to lower-genus \hgMZV{}s, but that they carry also geometric information of the degenerate direction through the fixed points.

%%%%%%%%%%%%%%%%%%%%%%%%%%%%%%%%%%%%%%%%%%%%%%%%%%%%%%%%%%%%%%%%%%%%%%%%%%%%%%
%%%%%%%%%%%%%%%%%%%%%%%%%%%%%%%%%%%%%%%%%%%%%%%%%%%%%%%%%%%%%%%%%%%%%%%%%%%%%%
%%%%%%%%%%%%%%%%%%%%%%%%%%%%%%%%%%%%%%%%%%%%%%%%%%%%%%%%%%%%%%%%%%%%%%%%%%%%%%
\subsection{Separating degeneration}\label{sec:separatingB}

In order to define the separating degeneration in the Schottky language, let us assume without loss of generality that $z \texteq 0$ is in the fundamental domain of a given Schottky cover with the generators $\{\sigma_i\}_{i=1}^{\genus}$. Each generator $\sigma_i$ is uniquely determined by the two fixed points $P_i$ and $P_i^\prime$ and the multiplier parameter $\lambda_i$ (cf.~\eqn{eqn:fixedpteq}).  
Now let us consider a different Schottky cover, for which we keep the parameters $P_i$, $P_i^\prime$ and 
$\lambda_i$ of the $i$-th Schottky generator intact for $i\, {\ne}\, j$, while for the $j$-th generator we rescale\footnote{There is a subtlety that for each value of $\epsilon$ the corresponding Schottky cover should be well-defined in the sense that Schottky circles do not intersect and the Poincar\'e series are convergent. If this is not the case, one can always move the shrinking point $z\texteq0$ somewhere else. In what follows, we assume that shrinking a generator to $z\texteq0$ does not cause any of those problems.}
\begin{equation} \label{eqn:sep_deg_rescaling}
P_j \to \epsilon P_j, 
\quad
P_j^\prime \to \epsilon P_j^\prime, 
\quad
\lambda_j \to \lambda_j,
\end{equation}
with some rescaling parameter $0 < \epsilon \le 1$. The separating degeneration in Schottky uniformization is then achieved as the limit $\epsilon \,{\to}\, 0$. 

%%%%%%%%%%%%%%%%%%%%%%%%%%%%%%%%%%%%%%%%%%%%%%%%%%%%%%%%%%%%%%%%%%%%%%%%%%%%%%
\paragraph{Degeneration of the period matrix.}

One can observe that the separating degeneration prescription given above indeed produces the period matrix of the block diagonal form as follows. For an arbitrary point $y\,{\in}\,\ComplexComplete$, let us first list some useful relations for the degeneration limit:
\begin{equation} \label{eqn:sep_deg_rels}
\sigma_j (y) =  \epsilon \,\tilde{\sigma}_j(\infty)+ \Order(\epsilon^2),
\qquad
\sigma_j(\epsilon y) = \epsilon\, \tilde{\sigma}_j(y) + \Order(\epsilon^2),
\qquad
\sigma_i (\epsilon y) = \sigma_i(0) + \Order(\epsilon),
\end{equation}
where $ \tilde{ \sigma}_j$ is the generator with non-rescaled fixed points $P_j$, $P_j^\prime$ and the multiplier $\lambda_j$. Now we are in the position to analyze the entries of the period matrix~\eqref{eq:period-matrix-schottky}. Let us separately consider $\periodmatrix_{jj}$, $\periodmatrix_{jl}$ and $\periodmatrix_{lm}$ for all $l, m \,{\ne}\, j$.
\begin{enumerate}
    \item $\periodmatrix_{jj}$: we can always decompose an element $\gamma\,{\in}\,(\SCosetLR{j}{j})\setminus\{\id\}$ as $\gamma \texteq \alpha \beta$, where $\alpha, \beta \,{\in}\, \SGroup$ such that $\alpha$ does not contain $\sigma_j$ or $\sigma_j^{-1}$ and $\beta\texteq\sigma_j^{\pm 1}\cdots$, i.e.~$\beta$ ends in a generator $\sigma_j^{\pm1}$ (or $\beta \texteq \id$). Due to \eqn{eqn:sep_deg_rels}, we deduce that $\gamma(\epsilon y) \texteq \alpha(0) \,{+}\, \Order(\epsilon)$ for all $y \,{\in}\, \ComplexComplete$. Therefore, the cross-ratio in the logarithm reads
    \begin{equation}
    \{\epsilon P_j', \alpha(0) + \Order(\epsilon), \epsilon P_j, \alpha(0) + \Order(\epsilon)\} = 1 + \Order(\epsilon),
    \end{equation}
    and thus each summand in the series vanishes in the degeneration limit. Accordingly, $\periodmatrix_{jj}$ coincides with $\periodmatrix_j$, the modular parameter of the $j$-th subcover. 
    \item $\periodmatrix_{jl}$: again, we can always decompose $\gamma\,{\in}\,\SCosetLR{l}{j}$ as $\gamma \texteq \alpha \beta$ like before and the cross-ratio reads
    \begin{equation}
    \{P_l', \alpha(0) + \Order(\epsilon), P_l, \alpha(0) + \Order(\epsilon)\} = 1 + \Order(\epsilon).
    \end{equation}
    Therefore, all off-diagonal terms for $j$-th entry are vanishing, $\periodmatrix_{ij} \texteq \periodmatrix_{jl} \texteq 0$ for all $i \,{\ne}\, j$.
    \item $\periodmatrix_{lm}$: now, if we decompose $\gamma \texteq \alpha \beta$ as before, there are two distinct cases: $\beta \texteq \sigma_j^n \cdots$ or $\beta \texteq \id$. In the former case, due to \eqn{eqn:sep_deg_rels}, $\beta (y) \texteq \Order(\epsilon)$ and thus $\gamma (y) \texteq \alpha(0) \,{+}\, \Order(\epsilon)$ and the cross-ratio 
    \begin{equation}
    \{\epsilon P_m', \alpha(0) + \Order(\epsilon), \epsilon P_m, \alpha(0) + \Order(\epsilon)\} = 1 + \Order(\epsilon),
    \end{equation}
    again vanishes in the degeneration limit. If $\beta \texteq \id$, then the cross-ratio does not depend on $\epsilon$ and it gives a non-trivial contribution to the period matrix. Thus, the non-degenerate entries of the period matrix reduce to \eqn{eq:period-matrix-schottky}, with $\SGroup$ being replaced by $\SGroup^{(\genus-1)}$ --- the Schottky group generated by $\{\sigma_{i}\}_{i \ne j}$ for the surface of genus $\genus{-}1$. 
\end{enumerate}

Now, let us turn our attention to what happens to the Schottky--Kronecker forms~\eqref{eqn:schottkykronecker} in the separating degeneration limit $\epsilon \,{\to}\, 0$. 

\begin{figure}[t]
	\centering
	\begin{subfigure}[b]{0.5\textwidth}
		\centering
		\mpostuse{schottky-degeneration-sep}
		\caption{}
        \label{fig:degeneration-sep-schottky}
	\end{subfigure}%
	\begin{subfigure}[b]{0.5\textwidth}
		\centering
		\includegraphics[width=\textwidth]{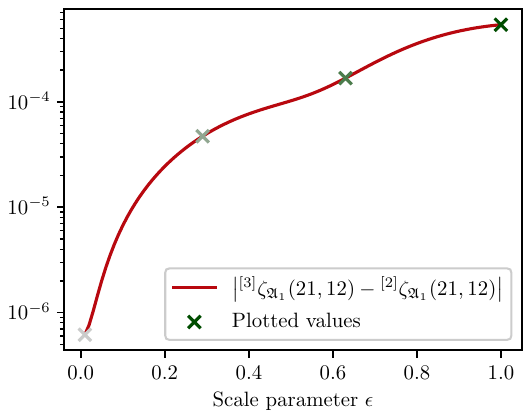}
		\caption{}
        \label{fig:degeneration-sep-numerics}
	\end{subfigure}
	\caption{Figure (a) depicts a degeneration of a genus-three surface by the rescaling 
    \protect\eqref{eqn:sep_deg_rescaling}, which corresponds to the separating degeneration from \figref{fig:degen_sep}. The absolute value of the difference between $\zetaAg{1}{3}(21,12)$, which depends on the scaling parameter $\epsilon$, and $\zetaAg{1}{2}(21,12)$, calculated for the surface formed by red and blue pairs of Schottky circles, is plotted in Figure (b). The fading green circles in (a) correspond to different values of $\epsilon$, which are also highlighted in (b) by the fading green crosses.}
    \label{fig:degeneration-sep}
\end{figure}
%

%%%%%%%%%%%%%%%%%%%%%%%%%%%%%%%%%%%%%%%%%%%%%%%%%%%%%%%%%%%%%%%%%%%%%%%%%%%%%%
\paragraph{Degeneration of an unaffected direction.}

Let us consider the $k$-th component of the Schottky--Kronecker form $K_k$,
\begin{equation}
K_k(z, x) = \frac{1}{(-2 \pi \iunit)} \sum_{\gamma \in \SGroup}
\left[\frac{\der z}{z - \gamma x} - \frac{\der z}{z - \gamma P_k}\right] W(\gamma).
\end{equation}
We can split the summands in the Poincar\'e series above into two types.
\begin{enumerate}
    \item $\sigma_{j}^{\pm1} \,{\in}\, \gamma$, i.e.~$\gamma$ contains the $j$-th generator as a letter in the Schottky word. Similar considerations as for the period matrix allow us to conclude that 
    \begin{equation}
    \gamma (y) = c(\gamma) + \Order(\epsilon),
    \end{equation}
    for all $y \,{\in}\, \ComplexComplete$, where $c(\gamma) \,{\in}\, \ComplexComplete$ is some number that does not depend on $y$. Then for each summand we have 
    \begin{equation}
    \frac{\der z}{z - \gamma x} - \frac{\der z}{z - \gamma P_k} 
    = 
    \frac{\der z}{z - c(\gamma) + \Order(\epsilon)} - \frac{\der z}{z - c(\gamma) + \Order(\epsilon)}
    = \Order(\epsilon),
    \end{equation}
    which vanishes for $\epsilon\, {\to}\, 0$.
    \item $\sigma_{j}^{\pm1} \,{\notin}\, \gamma$, i.e.~$\gamma$ does not contain the $j$-th generator as a letter in the Schottky word. Then each summand does not depend on $\epsilon$.
\end{enumerate}
Therefore, the $k\,{\ne}\, j$-th component reduces to 
\begin{equation}
K_k(z, x) \to \frac{1}{(-2 \pi \iunit)} \sum_{\gamma \in \SGroup^{(\genus-1)}}
\left[\frac{\der z}{z - \gamma x} - \frac{\der z}{z - \gamma P_k}\right] W(\gamma).
\end{equation}
Thus, we recover the Schottky--Kronecker form components of genus $\genus{-}1$ for $k \,{\ne}\, j$.

This result for the Schottky--Kronecker form implies that the Enriquez kernels $\omega_{\mindx{i}k}(z, x)$ for $k \,{\ne}\, j$, that arise from $K_k(z, x)$, degenerate to the kernels of the lower-genus surface. Moreover, if the multi-index $\mindx{i}$ contains the index $j$, the corresponding kernel vanishes:
\begin{equation}
\omega_{\mindx{i}k}(z, x \mid \SGroup)  \overset{\epsilon\to 0}{\longrightarrow}
\begin{cases}
    \omega_{\mindx{i}k}(z, x \mid \SGroup^{(\genus-1)}), &  j \notin \mindx{i},
    \\
    0, & \text{otherwise}.
\end{cases}
\end{equation}
This also implies that \hgMZV{}s directly degenerate into lower-genus \hgMZV{}s as long as the weights do not contain the degeneration index $j$,
\begin{equation}
\zetaAg{k}{\genus}(\mindx{i}_1 k_1,\mindx{i}_2 k_2, \ldots, \mindx{i}_n k_n) \overset{\epsilon\to 0}{\longrightarrow}
\begin{cases}
    \zetaAg{k}{\genus-1}(\mindx{i}_1 k_1,\mindx{i}_2 k_2, \ldots, \mindx{i}_n k_n), & j \notin \mindx{i}_1, \mindx{i}_2, \ldots, \mindx{i}_n,
    \\
    0, & \text{otherwise},
\end{cases}
\end{equation}
where $k, k_s \,{\ne}\, j$ for $s \,{\in}\, \{1, \ldots, n\}$. \figref{fig:degeneration-sep-numerics} gives a numerical illustration of the separating degeneration of a genus-three \hgMZV{} to a genus-two \hgMZV{}.

%%%%%%%%%%%%%%%%%%%%%%%%%%%%%%%%%%%%%%%%%%%%%%%%%%%%%%%%%%%%%%%%%%%%%%%%%%%%%%
\paragraph{Degeneration of the affected direction.}

In order to show that the $j$-th direction degenerates to the genus-one case, we introduce the rescaled one-form
\begin{equation}
\tilde{K}_j(z, x) := K_j(\epsilon z, \epsilon x) = \frac{1}{(-2 \pi \iunit)} \sum_{\gamma \in \SGroup}
\left[\frac{\epsilon\, \der z}{\epsilon z - \gamma (\epsilon x)} - \frac{\epsilon \,\der z}{\epsilon z - \gamma (\epsilon P_j)}\right]W(\gamma).
\end{equation}
We again consider several cases separately:
\begin{enumerate}
    \item $\gamma\texteq\sigma_i^{\pm 1}\cdots$ for $i\,{\neq}\, j$, i.e.~$\gamma$ does not end with the generator $\sigma_j$. Due to \eqn{eqn:sep_deg_rels}, we deduce that 
    \begin{equation}
    \gamma (\epsilon y) = c(\gamma) + \Order(\epsilon), 
    \end{equation}
    where $c(\gamma) \,{\in}\, \ComplexComplete$ does not depend on $y$. Therefore, each summand in the Poincar\'e series
    \begin{equation}
    \frac{\epsilon \,\der z}{\epsilon z - \gamma (\epsilon x)} - \frac{\epsilon \,\der z}{\epsilon z - \gamma (\epsilon P_j)} = 
    \frac{\epsilon \,\der z}{-c(\gamma) + \Order(\epsilon)} - \frac{\epsilon \,\der z}{-c(\gamma) + \Order(\epsilon)} = \Order(\epsilon)
    \end{equation}
    vanishes in the degeneration limit $\epsilon \,{\to}\, 0$.
    \item $\gamma\texteq\sigma_j^n\sigma_i^{\pm 1}\cdots$ for $i\,{\neq}\, j$, i.e.~$\gamma$ ends with the generator of the degenerate direction. Then one can show that 
    \begin{equation}
    \gamma (\epsilon y) = c(\gamma)\, \epsilon + \Order(\epsilon^2),
    \end{equation}
    for some $c(\gamma)\,{\in}\,\ComplexComplete$ that does not depend on $y$. Thus 
    \begin{equation}
    \frac{\epsilon \,\der z}{\epsilon z - \gamma (\epsilon x)} - \frac{\epsilon \,\der z}{\epsilon z - \gamma (\epsilon P_j)} = 
    \frac{ \der z}{ z - c(\gamma) + \Order(\epsilon)} - \frac{\epsilon \,\der z}{ z - c(\gamma) + \Order(\epsilon)} = \Order(\epsilon),
    \end{equation}
    which again vanishes in the degeneration limit.
    \item $\gamma \texteq \sigma_j^n, \, n\,{\in}\,\Integers$. Then, due to \eqn{eqn:sep_deg_rels} we have 
    \begin{equation}
    \frac{\epsilon \,\der z}{\epsilon z - \gamma (\epsilon x)} - \frac{\epsilon \,\der z}{\epsilon z - \gamma (\epsilon P_j)} = \frac{\der z}{z -  \tilde{ \sigma}_j^n (x)} - \frac{ \der z}{ z -  \tilde{ \sigma}_j^n (P_j)},
    \end{equation}
    which is finite in the degeneration limit.
\end{enumerate}
Altogether, we find that the $j$-th component of the Schottky--Kronecker form degenerates to the genus-one Kronecker form (upon rescaling):
\begin{equation}
\tilde{K}_j(z,x) \to \frac{1}{(-2 \pi \iunit)} \sum_{\gamma \in \tilde{\SGroup}_j}
\left[\frac{ \der z}{ z -  \gamma x} - \frac{ \der z}{ z - \gamma P_j}\right] W(\gamma),
\end{equation}
where $\tilde{\SGroup}_j$ is the genus-one Schottky group generated by $\tilde{\sigma}_j$.

By expanding the rescaled Schottky--Kronecker form, one can recover the genus-one meromorphic kernels upon rescaling of the genus-$\genus$ kernels in the $j$-th direction:
\begin{equation}
\tilde{\omega}_{\mindx{i} j}(z, x \mid \SGroup) 
=\omega_{\mindx{i} j}(\epsilon z, \epsilon x \mid \SGroup) 
\overset{\epsilon\to 0}{\longrightarrow}
\delta_{j \mindx{i}}\frac{1}{(-2\pi\iunit)} \skern{n}(z, x \mid \tilde{\SGroup}_j).
\end{equation}
This result directly translates to the \hgMZV{}s. We define the ``rescaled'' \hgMZV{} as
\begin{equation}
\zetaAgt{j}{\genus}(\mindx{i}_1j,\mindx{i}_2j,\ldots,\mindx{i}_kj) \coloneqq 
\int_{\Acycle_j}\tilde{\omega}_{\mindx{i}_1j}\circ \tilde{\omega}_{\mindx{i}_2j}\circ\cdots\circ \tilde{\omega}_{\mindx{i}_kj},
\end{equation}
where the contour $\acyc_j$ is the non-degenerate $\acyc$-cycle on the Schottky cover for the $j$-th direction. We therefore conclude that we find in the degeneration limit
\begin{equation} \label{eqn:sep_deg_ell_zeta}
\zetaAgt{j}{\genus}(\mindx{i}_1j,\mindx{i}_2j,\ldots,\mindx{i}_kj) 
\overset{\epsilon\to 0}{\longrightarrow}
\delta_{j \mindx{i}_1 \mindx{i}_2 \ldots \mindx{i}_k}\,(-2\pi\iunit)^{-\sum_l\!n_l}\,\omell_j(n_1, n_2, \ldots, n_k),
\end{equation}
where $n_l \texteq |\mindx{i}_l| \,{+}\, 1$.

Notice that we have not considered the kernels of the degenerate direction $\omega_{\mindx{i} j}(z, x \mid \SGroup)$ and the rescaled kernels of the non-degenerate directions $\tilde{\omega}_{\mindx{i} k}(z, x \mid \SGroup)$ for $k \,{\ne}\, j$. Naively, one could assume that those kernels must vanish in the degeneration limit. In fact, these functions are non-trivial, which is explained by the non-vanishing of the non-rescaled $j$-th component $K_j(z,x)$ as well as rescaled $l\,{\ne}\, j$-th components $\tilde{K}_l(z,x) \texteq K_l(\epsilon z, \epsilon x)$ in the degeneration limit $\epsilon \,{\to}\, 0$:
\begin{align}
K_j(z, x) &\to \frac{1}{(- 2 \pi \iunit)}\sum_{\gamma \in \SGroup^{(\genus-1)}}
\left[\frac{\der z}{z - \gamma x} - \frac{\der z}{z - \gamma(0)}\right] W(\gamma),
\\
\tilde{K}_{l\ne j}(z,x) &\to \frac{1}{(-2 \pi \iunit)} \sum_{\gamma \in \tilde{\SGroup}_j}
\left[\frac{ \der z}{ z -  \gamma x} - \frac{ \der z}{ z - \gamma(\infty)}\right] W(\gamma).
\end{align}
This is similar to the situation of the non-separating degeneration: the residual contributions to the components carry the information about the punctures due to the shrinking of the separating cycle, which are located at $z \texteq 0$ for the residual surface of genus $\genus {-}1$ (see the pole in $K_j(z,x)$) and at $z \texteq \infty$ in the rescaled coordinates for the genus-one surface (see the pole in $\tilde{K}_l(z,x)$). Therefore, the associated \hgMZV{}s are not expected to degenerate to the lower-genus values.

As a final comment, we could have rescaled several pairs of the fixed points simultaneously:
\begin{equation}
P_j \to \epsilon P_j,
\qquad
P_j' \to \epsilon P_j',
\qquad
j \in \{j_1, j_2, \ldots, j_k\} \subset \{1, \ldots, \genus\},
\end{equation}
which would allow one to degenerate a genus-\genus{} surface into two components of genera $k$ and $\genus {-} k$. Nevertheless, the considerations above would remain the same qualitatively.

%%%%%%%%%%%%%%%%%%%%%%%%%%%%%%%%%%%%%%%%%%%%%%%%%%%%%%%%%%%%%%%%%%%%%%%%%%%%%%
%%%%%%%%%%%%%%%%%%%%%%%%%%%%%%%%%%%%%%%%%%%%%%%%%%%%%%%%%%%%%%%%%%%%%%%%%%%%%%
%%%%%%%%%%%%%%%%%%%%%%%%%%%%%%%%%%%%%%%%%%%%%%%%%%%%%%%%%%%%%%%%%%%%%%%%%%%%%%
%%%%%%%%%%%%%%%%%%%%%%%%%%%%%%%%%%%%%%%%%%%%%%%%%%%%%%%%%%%%%%%%%%%%%%%%%%%%%%
\section{Relations for higher-genus MZVs}\label{sec:hgmzvrel}
In this section, we will explore identities among \hgMZV{}s for a fixed Riemann surface $\RSurf$ of genus~$\genus$. These relations can either rely on properties of Enriquez' kernels or be implied by connecting \hgMZV{}s defined as integrals along different cycles.

{
	\def\arraystretch{1.0}
	\begin{table}[h]
		\medskip\begin{center}
			\begin{tabular}{ |m{5.5cm}|m{9.6cm}|  }
				\hline
				\textbf{Identity} & \textbf{Formula/Example}\\
				\hline
				\hline
				\hyperlink{para:dep1}{Depth-one formula}~\cite{EnriquezHigher,EZ1} & {
					\setlength{\abovedisplayskip}{1pt}
					\setlength{\belowdisplayskip}{1pt}
					\begin{equation*}
						\hgzeta{k}(i_1\cdots i_r j)=
						\begin{cases}
							-\delta_{jki_1\cdots i_r}\frac{2\zeta_r}{(-2\pi \iunit)^r}, & \text{for }r \text{ even} \\
							\delta_{r1}\delta_{j i_1}(1- \delta_{jk})\frac{1}{2}, & \text{for } r \text{ odd}
						\end{cases}
					\end{equation*}
				}\\
				\hline
				\hyperlink{para:Fay}{Fay identities} & {
					\setlength{\abovedisplayskip}{1pt}
					\setlength{\belowdisplayskip}{1pt}
					\begin{equation*}0\texteq\zetaAg{1}{2}(21^2,1){+}\zetaAg{1}{2}(2^21,2){+}\zetaAg{1}{2}(2^2,21){+}\zetaAg{1}{2}(21,1^2)\end{equation*}
				} \\
				\hline
				\hyperlink{para:diffcyc}{\dct{}} (depth two)& {
					\setlength{\abovedisplayskip}{1pt}
					\setlength{\belowdisplayskip}{1pt}
					\begin{equation*}\zetaA{1}(12,23)\texteq\zetaA{2}(3,112)\,{+}\,\bn{1}\zetaA{2}(3,12)\phantom{\bigg|}\end{equation*}
				}\\
				\hline
				{\hyperlink{para:altid}{Alternating identity for hyperelliptic curves} (depth two)} & {
					\setlength{\abovedisplayskip}{1pt}
					\setlength{\belowdisplayskip}{1pt}
					\begin{equation*}\zetaAg{j}{2}(ji,ij)=\zetaAg{j}{2}(i,jij)\phantom{\bigg|}\end{equation*} 
				}\\
				\hline
				\hyperlink{para:weight}{Weight exchange}$^\ast$ (depth two)& {
					\setlength{\abovedisplayskip}{1pt}
					\setlength{\belowdisplayskip}{1pt}
					\begin{equation*}\hgzeta{j}(\mindx{i}_1, \mindx{i}_2) = -\hgzeta{j}(\mindx{i}_2, \mindx{i}_1)\, , \quad \text{for} \len{\mindx{i}_1} {+} \len{\mindx{i}_2} \in 2 \Integers {+} 1\phantom{\bigg|}\end{equation*}
				}\\
				\hline
				\hyperlink{para:emzvhemzv}{eMZV$\!\implies\!$\hgMZV{} identities}$^\ast$ & {
					\setlength{\abovedisplayskip}{1pt}
					\setlength{\belowdisplayskip}{1pt}\begin{equation*}\omell(0,5)=\omell(2,3)\ \implies\ \zetaA{j}(j,j^6)=\zetaA{j}(j^3,j^4)\end{equation*} 
				}\\
				\hline
			\end{tabular}
		\end{center}
		\caption{Known identities for higher-genus multiple zeta values, including a basic description of their range of validity as well as an example.}
		\label{tab:identities}
	\end{table}
}

\tabref{tab:identities} provides an overview of the relations that we found. The list is neither exhaustive nor claimed to be complete. The first four types of identities listed will be proven analytically. The last two relations, marked with an asterisk ($^\ast$), were tested numerically for a large number of examples, but will not be proven analytically in this article.
While the first formula was already addressed in \secref{sec:relations_depth_one}, we will show below the details of each of the remaining \hgMZV{} identities from the table, which can be accessed individually through the hyperlinks.

%%%%%%%%%%%%%%%%%%%%%%%%%%%%%%%%%%%%%%%%%%%%%%%%%%%%%%%%%%%%%%%%%%%%%%%%%%%%%%
%%%%%%%%%%%%%%%%%%%%%%%%%%%%%%%%%%%%%%%%%%%%%%%%%%%%%%%%%%%%%%%%%%%%%%%%%%%%%%
%%%%%%%%%%%%%%%%%%%%%%%%%%%%%%%%%%%%%%%%%%%%%%%%%%%%%%%%%%%%%%%%%%%%%%%%%%%%%%
\subsection{Relations from Fay identities}\hypertarget{para:Fay}{}\label{sec:Fayid}

Generalizations of the genus-one Fay identities~\eqref{eqn:g1Fay} have been considered in \rcites{DHoker:2024ozn,Baune:2024ber}. One way to write these quadratic identities among Enriquez' kernels is given in \eqn{eqn:FaylikeId}.

Employing these Fay-like relations, in ref.~\cite[Sec.~7]{Baune:2024ber} an identity for \hgMPL{}s was derived to remove an occurrence of the argument $z$ in one of the labels. This is a higher-genus analogue to relations found in \rcite{Broedel:2014vla} for eMPLs.

Taking the integration contours of \hgMPL{}s in this type of identities to be the cycle $\acyc_j$, we find relations for \hgMZV{}s. One needs to be careful about possible poles when closing the contour: just as in the elliptic case, we will avoid endpoint divergences by not considering polylogs beginning with a kernel $\omega_{jj}$. For all other cases, we find non-trivial \hgMZV{} identities, which will for brevity be treated on a case-by-case basis below.
\begin{example}
	For genus two the above formalism yields, for example, the depth-two polylogarithm identity
	\begin{align}
		\Gargbare{(1),(211)}{-,z}{z}{z_0}&=\Gargbare{(211),(1)}{z_0,-}{z}{z_0}-\Gargbare{(2),(221)}{-,-}{z}{z_0}+\Gargbare{(22),(21)}{z_0,-}{z}{z_0}\notag\\
		&\quad-\Gargbare{(11),(21)}{z_0,-}{z}{z_0}+\Gargbare{(221),(2)}{-,-}{z}{z_0}+\Gargbare{(21),(11)}{-,z_0}{z}{z_0}\notag\\
		&\quad-\Gargbare{(21),(22)}{-,z_0}{z}{z_0}.
	\end{align}
	This implies --- after applying shuffle relations --- the \hgMZV{} relation (this equation also holds when integrating over the second \Atxt-cycle instead of the first, where one simply substitutes $\hgzeta{1}\,{\mapsto}\,\hgzeta{2}$ in this equation)
	\begin{align}
		0=\zetaAg{1}{2}(21^2,1)+\zetaAg{1}{2}(2^21,2)+\zetaAg{1}{2}(2^2,21)+\zetaAg{1}{2}(21,1^2).
	\end{align}
\end{example}
\begin{example}
	Another example of a hgMPL relation from the Fay-like identities reads (for any genus $\genus\,{\geq}\,3$)
	\begin{align}
		\Gargbare{(12),(33)}{-,z}{z}{z_0}&=\Gargbare{(12),(33)}{-,z_0}{z}{z_0}-\sum_{j=1}^\genus\Big[\Gargbare{(j)}{-}{z}{z_0}\Gargbare{(j12)}{-}{z}{z_0}\\
		&\hspace{30ex}+\Gargbare{(j),(1j2)}{-,z_0}{z}{z_0}+\Gargbare{(j2),(1j)}{z_0,z_0}{z}{z_0}\Big].\notag
	\end{align}
	This implies the \hgMZV{} relation (for any cycle $\acyc_i$ and any genus $\genus\,{\geq}\,2$)
	\begin{align}
		0&=\sum_{j=1}^\genus\left(\zetaAg{i}{\genus}(j,1j2)+\zetaAg{i}{\genus}(j2,1j)\right).
	\end{align}
	E.g.~for genus $\genus\texteq3$ this identity reduces to (for $i\texteq1$)
	\begin{align}
		0&=\zetaAg{1}{3}(1,112)+\zetaAg{1}{3}(2,122)+\zetaAg{1}{3}(3,132)
		+\zetaAg{1}{3}(12,11)+\zetaAg{1}{3}(22,12)+\zetaAg{1}{3}(32,13).
	\end{align}
\end{example}
%

%%%%%%%%%%%%%%%%%%%%%%%%%%%%%%%%%%%%%%%%%%%%%%%%%%%%%%%%%%%%%%%%%%%%%%%%%%%%%%
%%%%%%%%%%%%%%%%%%%%%%%%%%%%%%%%%%%%%%%%%%%%%%%%%%%%%%%%%%%%%%%%%%%%%%%%%%%%%%
%%%%%%%%%%%%%%%%%%%%%%%%%%%%%%%%%%%%%%%%%%%%%%%%%%%%%%%%%%%%%%%%%%%%%%%%%%%%%%
\subsection{Cycle exchange}\hypertarget{para:diffcyc}{}\label{sec:relations_cyc_exchange}

A special type of identities relating \hgMZV{}s around different \Atxt-cycles can be proven using the kernel representation of \rcite{DHoker:2025dhv} (reviewed in \subsecref{sec:EnriquezDHS}). The statement can be phrased as follows. 
\begin{theorem}\label{thm:diffcyc}
	Let $r\geq0$ and $i,j_1,\ldots,j_r,k,l,m,n\,{\in}\,\{1,\ldots,\genus\}$ such that $k\,{\neq}\, l$, $i\,{\neq}\, m$ and $m\,{\neq}\, n$. Then, for any genus $\genus$, we have the \hgMZV{} identity
	\begin{equation}
	\begin{aligned}
		\hgzeta{i}(j_1\cdots j_rkl,mn)&=\hgzeta{m}(n,ij_1\cdots j_rkl)+\delta_{ij_1\cdots j_rk}\frac{\bn{r+1}}{(r+1)!}\hgzeta{m}(n,kl) \\
        &\quad+\sum_{p=1}^r\frac{\bn{p}}{p!}\delta_{ij_1\cdots j_p}\,\zetaA{m}(n,ij_{p+1}\cdots j_rkl).
	\end{aligned}
	\end{equation}
\end{theorem}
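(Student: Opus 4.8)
The plan is to start from the convolution recursion~\eqref{eqn:DSthm2} for Enriquez' kernels, specialized to the case where the outermost index is the integration label, and integrate along the cycle $\acyc_i$. Concretely, setting $\ell\texteq i$ and the lower multi-index to $j_1\cdots j_rkl$ in \eqn{eqn:DSthm2} gives
\begin{equation}
    \omega_{ij_1\cdots j_rkl}(z,x)=-\sum_{q=1}^\genus\int_{t\in\acyc_i}\omega_{ql}(z,t)\,\omega_{j_1\cdots j_rq}(t,x)-\sum_{p=1}^{r-1}\frac{\bn{p}}{p!}\delta_{j_1\cdots j_pi}\,\omega_{ij_{p+1}\cdots j_rkl}(z,x)-\omega_l(z)\frac{\bn{r+1}}{r!}\delta_{j_1\cdots j_rli}.
\end{equation}
The idea is to recognize the left-hand side as (essentially) the integrand of $\hgzeta{i}(j_1\cdots j_rkl,mn)$ once we tensor with a second kernel $\omega_{mn}$ and integrate over $\acyc_i$. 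The first step is therefore to multiply both sides by $\omega_{mn}(y,x)$, integrate $z$ over $\acyc_i$ (with the basepoint/pole conventions of \secref{sec:hgMZV}), and then integrate $y$ over the appropriate iterated contour, carefully tracking the ordering of the iterated integral so that the result on the left assembles into $\hgzeta{i}(j_1\cdots j_rkl,mn)$.

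\textbf{Key steps.} First I would reduce the double \Atxt-cycle integral appearing from the convolution term $\int_{\acyc_i}\omega_{ql}(z,t)\,\omega_{j_1\cdots j_rq}(t,x)$ to a single \Atxt-cycle integral using \thmref{thm:diffcyc}'s predecessor, namely the relation \eqref{eqn:thm3simple} and its generalizations from \secref{sec:thm2} and \appref{app:33}: integrating $\omega_{ql}(z,t)$ over $\acyc_i$ produces $-\omega_{il}(z)$ when $q\texteq i$ and vanishes for the index-matching reasons when $q\neq i$ (up to the subtleties handled in that appendix), collapsing the sum over $q$ and turning the convolution into a product of a holomorphic form with a lower-weight kernel. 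Second, I would use the shuffle relation \eqref{eqn:shuffle} together with the fact that $\int_{\acyc_i}\omega_l=\delta_{il}$ (which is killed by the hypothesis $i\neq m$ after one more contraction, or contributes trivially) to rearrange the resulting terms. Third — and this is the conceptual heart — I would reinterpret the term $\int_{\acyc_i}(\cdots)\,\omega_{mn}$ as $\hgzeta{m}(n,\,\text{rest})$ by using path reversal \eqref{eqn:pathinversion} and path concatenation \eqref{eqn:pathconcatenation} to swap which cycle is "outermost". This swap is exactly the cycle-exchange phenomenon: an iterated integral $\int_{\acyc_i}\omega_A\circ\omega_B$ with $\omega_B$ built to have trivial $\acyc_i$-monodromy (guaranteed since $m\neq i$, so $\omega_{mn}$ is single-valued around $\acyc_i$) can be re-expressed on $\acyc_m$ once one accounts for how $\omega_A$'s \Atxt-periods feed into the \Atxt-integral of $\omega_{mn}$. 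The Bernoulli-number coefficients $\bn{r+1}/(r+1)!$ and $\bn{p}/p!$ then emerge directly: the explicit $\bn{p}$ terms in \eqref{eqn:DSthm2} survive the integration, while the $\bn{r+1}/(r+1)!$ term (note the shift from $r!$ in \eqref{eqn:DSthm2} to $(r+1)!$ in the theorem) arises from the \Atxt-period \eqref{eqn:d1} of the holomorphic form $\omega_l(z)$ — i.e.\ $\int_{\acyc_i}\omega_{l}\cdots$ picks up precisely the Bernoulli normalization $\bn{r+1}/(r+1)!$ after the iterated integration against $\omega_{mn}$ absorbs one extra factor.

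\textbf{Main obstacle.} The hardest part will be step three: making the cycle-exchange reinterpretation rigorous, in particular correctly bookkeeping the basepoint dependence and the branch-cut/monodromy contributions when transferring an iterated integral from $\acyc_i$ to $\acyc_m$. Because $\hgzeta{i}$ and $\hgzeta{m}$ are defined with the pole variable placed \emph{on} the integration contour (cf.\ the discussion after \eqref{eqn:hgmzvreg}), one must ensure that the intermediate expressions do not secretly require regularization beyond what is already handled, and that the conditions $k\neq l$, $m\neq n$ are exactly what is needed to avoid divergent boundary terms (these conditions guarantee the outermost kernels are not of the form $\omega_{jj}$). A secondary technical point is verifying that the generalized version of \eqref{eqn:thm3simple} for higher-weight first slots — proven in \appref{app:33} — produces no spurious terms when the weight of the first multi-index grows; I expect this to go through by the same residue-on-the-Schottky-cover argument used in \secref{sec:thm2}, but the combinatorics of which Schottky words survive will need care. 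Once these are settled, collecting the surviving terms and matching Bernoulli coefficients is routine.
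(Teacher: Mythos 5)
There is a genuine gap at the heart of your outline: the step that actually exchanges the cycles. You propose to reinterpret $\int_{\acyc_i}\omega_A\circ\omega_B$ as an integral over $\acyc_m$ using path reversal~\eqref{eqn:pathinversion}, path concatenation~\eqref{eqn:pathconcatenation} and shuffles, but those identities only manipulate a single path and its subdivisions; they can never relate iterated integrals over homologically distinct cycles, and the trivial $\acyc_i$-monodromy of $\omega_{mn}$ does not change this. The paper performs the exchange analytically: the \emph{inner} kernel $\omega_{mn}(t_2,z_0)$ is rewritten via the prime-form representation~\eqref{eqn:omegaij2} as $\tfrac{1}{(-2\pi\iunit)}\int_{t\in\acyc_m}\omega_n(t)\,\Omega^{(z_1-t)}(t_2)$ (the hypothesis $m\,{\neq}\,n$ both kills the extra $\tfrac12\delta_{mn}\omega_n$ term and lets the pole variable be moved to a point $z_1$ on $\acyc_m$), and then the antisymmetry of the prime form gives the change of fibration basis
\begin{equation*}
\int_{t_2=z_0}^{t_1}\Omega^{(z_1-t)}(t_2)=\int_{t_2=t}^{z_1}\Omega^{(t_1-z_0)}(t_2),
\end{equation*}
which is precisely what converts a nested integral based at $z_0$ along $\acyc_i$ into one that will assemble along $\acyc_m$; the condition $i\,{\neq}\,m$ is then used to exchange the order of the $\acyc_i$- and $\acyc_m$-integrations. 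Without this step (or an equivalent Schottky-residue argument, which you gesture at but do not supply), nothing in your outline produces $\hgzeta{m}(n,\cdot)$.

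Your use of \eqref{eqn:DSthm2} is also pointed in the wrong direction, and your first key step is not well-formed: in \eqref{eqn:DSthm2} the $\acyc$-integral is a convolution in which the one-form in $t$ is $\omega_{j_1\cdots j_rkq}(t,x)$, so you cannot integrate the factor $\omega_{ql}(z,t)$ over $\acyc_i$ by itself, and \eqref{eqn:thm3simple} treats convolution against a holomorphic form, not this situation. In the paper's proof, \eqref{eqn:DSthm2} is indeed the second pillar, but it is applied \emph{after} the fibration change, to $-\int_{t_1\in\acyc_i}\omega_{j_1\cdots j_rkl}(t_1,z_1)\,\omega_{ll}(t_2,t_1)$ (with $\Omega^{(t_1-z_0)}$ first traded for a difference of $\omega_{ll}$ kernels), producing $\omega_{ij_1\cdots j_rkl}(t_2,z_1)$ plus the Bernoulli corrections; the coefficient $\bn{r+1}/(r+1)!$ comes from the $\acyc_i$-period~\eqref{eqn:d1} of the weight-$(r{+}1)$ kernel $\omega_{j_1\cdots j_rkp}$, not from the period of the holomorphic form $\omega_l$ as you claim, and only afterwards is everything integrated against $\omega_n$ over $\acyc_m$ to yield the right-hand side. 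So while you correctly identify \eqref{eqn:DSthm2}, \eqref{eqn:d1} and the role of $k\,{\neq}\,l$, $m\,{\neq}\,n$ in avoiding $\omega_{jj}$-type divergences, the argument as proposed does not go through without the prime-form/change-of-fibration mechanism.
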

\begin{proof} 
	Let us expand the definition of the \hgMZV{} $\hgzeta{i}(j_1\cdots j_rkl,mn)$,
    \begin{equation}
        \hgzeta{i}(j_1\cdots j_rkl,mn)=\int_{t_1\in\acyc_i}\omega_{j_1\cdots j_rkl}(t_1,z_0)\int_{t_2=z_0}^{t_1}\omega_{mn}(t_2,z_0) \, ,
    \end{equation}
    where $z_0$ is an arbitrary point on the contour $\acyc_i$. Next, we use the representation of Enriquez' kernels in terms of the prime form~\eqref{eqn:omegaij2} in order to expand $\omega_{mn}(t_2,z_0)$. This yields
	\begin{equation}\label{eqn:diffcycstart}
		\begin{aligned}
			\hgzeta{i}(j_1\cdots j_rkl,mn)=\int_{t_1\in\acyc_i}\omega_{j_1\cdots j_rkl}(t_1,z_0)\int_{t_2=z_0}^{t_1}\left(\frac{1}{(-2\pi\iunit)}\int_{t\in\acyc_m}\omega_n(t)\,\Omega^{(z_1-t)}(t_2)\right) \, ,
		\end{aligned}
	\end{equation}
	where we have simultaneously used that $\omega_{mn}(t_2,z_0)\texteq\omega_{mn}(t_2,z_1)$ for $z_1$ located on the contour $\acyc_m$ and dropped the last term in \eqn{eqn:omegaij2}, both justified by the assumption $m\,{\neq}\, n$. Moreover, notice that we do not have to deform the integration contour $\acyc_i$ (as described in \secref{sec:EnriquezDHS}) since we require $i\,{\neq}\, m$. Hence the integrations are performed over disjoint cycles and no additional poles arise in this way. To proceed, we use the antisymmetry of the prime form yielding the identity
	\begin{equation}
		\int_{t_2=z_0}^{t_1}\Omega^{(z_1-t)}(t_2)=\log\left(\frac{E(t_1,z_1)E(z_0,t)}{E(t_1,t)E(z_0,z_1)}\right)=\log\left(\frac{E(z_1,t_1)E(t,z_0)}{E(t,t_1)E(z_1,z_0)}\right)=\int_{t_2=t}^{z_1}\Omega^{(t_1-z_0)}(t_2)
	\end{equation}
	for the normalized differential of the third kind (cf.~\eqn{eqn:fundamentalDiff3}),
	which is equivalent to the change of fibration basis formula from \rcite{DHoker:2024ozn}.
	Plugging this identity into \eqn{eqn:diffcycstart}, we arrive at 
	\begin{align}
		\hgzeta{i}(j_1\cdots j_rkl,mn)&=-\frac{1}{(-2\pi\iunit)}\int_{t\in\acyc_m}\omega_n(t)\int_{t_1\in\acyc_i}\omega_{j_1\cdots j_rkl}(t_1,z_0)\int_{t_2=z_1}^{t}\Omega^{(t_1-z_0)}(t_2) \, ,
	\end{align}
	where we used the assumption that $i\,{\neq}\, m$ to freely exchange the integrals and also swapped the direction of the innermost integration over $t_2$, yielding a minus sign in return. Next, notice that by~\cite[Thm.~1]{DHoker:2025dhv}, we have for any $p,q\,{\in}\,\{1,\ldots,\genus\}$
	\begin{align}
		(-2\pi\iunit)[\,\omega_{pq}(t_2,t_1)-\omega_{pq}(t_2,z_0)]&=\int_{t'\in\acyc_p}\omega_q(t')\left[\,\Omega^{(t_1-t')}(t_2)-\Omega^{(z_0-t')}(t_2)\right]\notag\\
		&=\int_{t'\in\acyc_p}\omega_q(t')\,\Omega^{(t_1-z_0)}(t_2)=\delta_{pq}\,\Omega^{(t_1-z_0)}(t_2) \, ,
	\end{align}
	where we have also used the definition~\eqref{eqn:fundamentalDiff3} of $\ndiff(x)$ in terms of prime forms. Using the above relation for $p\texteq q\texteq l$, we arrive at
	\begin{equation}\label{eqn:CycEx}
		\begin{aligned}
			\hgzeta{i}(j_1\cdots j_rkl,mn)=&-\int_{t\in\acyc_m}\omega_n(t)\int_{t_1\in\acyc_i}\omega_{j_1\cdots j_rkl}(t_1,z_0)\int_{t_2=z_1}^{t}[\,\omega_{ll}(t_2,t_1)-\omega_{ll}(t_2,z_0)] \\
			=&-\int_{t\in\acyc_m}\omega_n(t)\int_{t_2=z_1}^{t}\int_{t_1\in\acyc_i}\omega_{j_1\cdots j_rkl}(t_1,z_0)\,\omega_{ll}(t_2,t_1)+\hgzeta{i}(j_1\cdots j_rkl)\,\hgzeta{m}(n,ll) \\
            =&-\int_{t\in\acyc_m}\omega_n(t)\int_{t_2=z_1}^{t}\int_{t_1\in\acyc_i}\omega_{j_1\cdots j_rkl}(t_1,z_1)\,\omega_{ll}(t_2,t_1)\, ,
		\end{aligned}
	\end{equation}
	where we have again used $i\,{\neq}\, m$ to exchange the integrals. Furthermore, we dropped the last term due to \eqn{eqn:hgmzvdepth1} and the assumption that $k\,{\neq}\, l$ as well as swapped $z_0$ with $z_1$ in $\omega_{j_1\cdots j_rkl}(t_1,z_0)$ once more making use of $k\,{\neq}\, l$. To finish the argument, notice again that by~\cite[Thm.~2]{DHoker:2025dhv} (cf.~\eqn{eqn:DSthm2}), we have
	\begin{align}
		-\int_{t_1\in\acyc_i}\omega_{j_1\cdots j_rkl}(t_1,z_1)\,\omega_{ll}(t_2,t_1)&=\omega_{ij_1\cdots j_rkl}(t_2,z_1)+\sum_{\substack{p=1\\p\neq l}}^\genus\omega_{pl}(t_2)\underbrace{\int_{t\in\acyc_i}\omega_{j_1\cdots j_rkp}(t,z_1)}_{=\frac{\bn{r+1}}{(r+1)!}\delta_{ij_1\cdots j_rkp}}\\
		&\quad+\sum_{p=1}^r\frac{\bn{p}}{p!}\delta_{ij_1\cdots j_p}\,\omega_{ij_{p+1}\cdots j_rkl}(t_2,z_1)+\frac{\bn{r+2}}{(r+1)!}\delta_{ij_1\cdots j_rkl}\,\omega_l(t_2) \, ,\notag
	\end{align}
	where the last term vanishes due to the assumption $k\,{\neq}\,l$, and we used \eqn{{eqn:d1}} to evaluate the \Atxt-cycle integral on the right-hand side. Plugging this into \eqn{eqn:CycEx} finally yields
	\begin{equation}
		\begin{aligned}
			\hgzeta{i}(j_1\cdots j_rkl,mn)=\int_{t\in\acyc_m}\omega_n(t)\int_{t_2=z_1}^{t}\Big(&\omega_{ij_1\cdots j_rkl}(t_2,z_1)+\sum_{\substack{p=1\\p\neq l}}^\genus\omega_{pl}(t_2)\frac{\bn{r+1}}{(r+1)!}\delta_{ij_1\cdots j_rkp} \\
            &+\sum_{p=1}^r\frac{\bn{p}}{p!}\delta_{ij_1\cdots j_p}\,\omega_{ij_{p+1}\cdots j_rkl}(t_2,z_1)\Big)\\
			&\hspace*{-24ex}=\hgzeta{m}(n,ij_1\cdots j_rkl)+\delta_{ij_1\cdots j_rk}\frac{\bn{r+1}}{(r+1)!}\hgzeta{m}(n,kl) \\
			&+\sum_{p=1}^r\frac{\bn{p}}{p!}\delta_{ij_1\cdots j_p}\,\zetaA{m}(n,ij_{p+1}\cdots j_rkl)
		\end{aligned}
	\end{equation}
	by converting all (iterated) integrals back to $\acyc$-cycle \hgMZV{}s in the last line, thereby concluding the proof.
\end{proof}
Let us consider an instance of this identity as an example. We find (at arbitrary genus)
\begin{equation}
	\hgzeta{1}(12,21)=\hgzeta{2}(1,112)+\bn{1}\,\hgzeta{2}(1,12) \, .
\end{equation}
This is our first instance of an identity that mixes \hgMZV{}s arising from integrations around two \emph{different} \Atxt-cycles. 

Let us conclude with a few remarks on this identity. First, we assume that the current restrictions imposed on the indices can be (partially) lifted. However, this comes along several subtleties, for example the appearance of residue contributions from additional poles. Moreover, some assumptions have to be made in order to obtain an identity purely relating \hgMZV{}s. In general, it is expected that identities of this type contain additional contributions involving ordinary hgMPLs along paths connecting different $\acyc$-cycles. Secondly, we also expect that \thmref{thm:diffcyc} can be generalized to \hgMZV{}s of higher depth and, finally, one might also generalize the kernel in the second slot of $\hgzeta{i}(j_1\cdots j_rkl,mn)$ to higher weight by application of~\cite[Thm.~2]{DHoker:2025dhv} (c.f.~\eqn{eqn:DSthm2}). Once these subtleties are understood, it might be possible to find a more general statement. A detailed analysis will be devoted to future work.

%%%%%%%%%%%%%%%%%%%%%%%%%%%%%%%%%%%%%%%%%%%%%%%%%%%%%%%%%%%%%%%%%%%%%%%%%%%%%%
%%%%%%%%%%%%%%%%%%%%%%%%%%%%%%%%%%%%%%%%%%%%%%%%%%%%%%%%%%%%%%%%%%%%%%%%%%%%%%
%%%%%%%%%%%%%%%%%%%%%%%%%%%%%%%%%%%%%%%%%%%%%%%%%%%%%%%%%%%%%%%%%%%%%%%%%%%%%%
\subsection{Alternating identity on hyperelliptic Riemann surfaces}\hypertarget{para:altid}{}\label{sec:altid}

Numerical experiments for genus-two and hyperelliptic surfaces of genus three suggested a special identity satisfied by \hgMZV{}s with alternating labels, the simplest instance reading
\begin{equation}
    \hgzeta{j}(ji,ij)=\hgzeta{j}(i,jij),
\end{equation}
where $i,j\,{\in}\,\{1,\ldots,\genus\}$, $i\,{\neq}\,j$ and $\genus\texteq2$ or the surface being hyperelliptic if $\genus\,{>}\,2$. As these \hgMZV{}s crucially rely on all moduli of the genus-two Riemann surface, it can be regarded as a genuine identity of \hgMZV{}s without an elliptic counterpart. In the following, we want to prove the validity of identities of this type on the Schottky uniformization  of hyperelliptic surfaces. To do this, we are going to rely on the techniques developed in~\secref{sec:TechniquesSchottky}.

Throughout this section, we restrict our considerations to the class of hyperelliptic Riemann surfaces described in \secref{sec:hyperelliptic} and \secref{sec:hyperellipticSchottky}. Let us start by introducing some useful notation. Let $\alt_{(i,j)}(n)$ be an alternating multi-index of length $n\,{+}\,1$ for some $i\,{\ne}\, j$ and $n \,{\ge}\, 0$, i.e.~$\alt_{(i,j)}(n) \texteq ijiji\cdots$ with $\alt_{(i,j)}(0) \,{:=}\, i$. Now, consider the two multi-indices  $\alt_{(i,j)}(n_1)$, $\alt_{(j,i)}(n_2)$. For simplicity, let us also assume that $n_1,n_2\,{>}\,0$ in order to avoid complications arising from additional singularities.
\begin{theorem}\label{thrm:altid}
    Let $i,j\,{\in}\,\{1,\ldots,\genus\}$ such that $i\,{\neq}\, j$ and let $n_1,n_2\,{>}\,0$ such that $n_1\,{+}\,n_2\,{\in}\,2\zN$. Then the identity
    \begin{equation}
        \label{eqn:altid}
        \normalfont \hgzeta{i}(\alt_{(i,j)}(n_1),\alt_{(j,i)}(n_2)) = \hgzeta{i}(\alt_{(j,i)}(n_1-1),\alt_{(i,j)}(n_2+1))
    \end{equation}
    holds on an arbitrary hyperelliptic surface of genus $\genus\,{\geq}\,2$.
\end{theorem}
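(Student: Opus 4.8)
The plan is to prove \eqn{eqn:altid} by expanding both \hgMZV{}s in the Schottky uniformization and then exploiting the hyperelliptic symmetry packaged in \lemref{lemma:altid}. Since $n_1,n_2\,{>}\,0$, neither side has endpoint divergences, so no regularization is needed and each side is literally the $\acyc_i$-cycle iterated integral of a product of two Enriquez' kernels, $\omega_{\alt_{(i,j)}(n_1)}$ and $\omega_{\alt_{(j,i)}(n_2)}$ (respectively $\omega_{\alt_{(j,i)}(n_1-1)}$ and $\omega_{\alt_{(i,j)}(n_2+1)}$). First I would insert the Schottky representation~\eqref{eqn:highergenuskernels} for each kernel, turning both depth-two integrals into Poincar\'e series over cosets of the Schottky group $\SGroup$, with summands built from the genus-one forms $s^{(k)}_\bullet$ evaluated at images $\gamma^{-1}z_0$ of the basepoint and coefficients given by the $C$-coefficients of \eqn{eqn:Ccoeff}.

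Next I would carry out the two closed-contour integrations with the residue machinery of \subsecref{sec:IntegrationSchottky} together with the splitting of Schottky words from \secref{sec:splitting_of_Schottky_words}: classifying the pole configurations of the integrand relative to $\acyc_i$ as in \sitref{item:1} through \sitref{item:4} and collecting residue contributions re-expresses each side as a Poincar\'e series in which the coefficients are combinations of the splitting numbers $N_i^\pm$ and $C$-coefficients attached to the alternating words, while the geometry enters only through cross-ratios of images of the fixed points $P_k,P_k'$ under Schottky elements. For the alternating kernels this structure is particularly rigid because the words $\alt_{(i,j)}(n)$ are maximally non-commutative (every exponent is $\pm1$ and no two consecutive letters coincide), which should keep the combinatorics tractable.

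The heart of the argument is to apply, on the hyperelliptic uniformization provided by \thmref{thrm:hyper} (so that $P_k'\texteq-P_k$ and $\theta(z)\texteq-z$ realizes the hyperelliptic involution), the bijection $\gamma\,{\mapsto}\,\tilde\gamma$ on $\SGroup$ that inverts every generator in a reduced word. By \lemref{lemma:altid} this substitution leaves invariant all the cross-ratio data entering the genus-one kernel factors, so it sends summands of one side to summands of the other up to a sign, while at the combinatorial level it exchanges the two splittings ($N_i^+(\tilde\gamma)\texteq N_i^-(\gamma)$, $N_i^-(\tilde\gamma)\texteq N_i^+(\gamma)$) and converts the contribution of the string $\alt_{(i,j)}(n_1)$ read along $\gamma$ into that of $\alt_{(j,i)}(n_1-1)$ read along $\tilde\gamma$, and of $\alt_{(j,i)}(n_2)$ into $\alt_{(i,j)}(n_2+1)$; the hypothesis $n_1+n_2\,{\in}\,2\zN$ is exactly what makes the accumulated signs cancel, mirroring the role of the weight parity in the elliptic reflection identity. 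I would also record that $\theta$ interchanges the circles $C_i$ and $C_i'$, hence reverses the orientation of $\acyc_i$, and absorb this together with path inversion~\eqref{eqn:pathinversion} so that the reversed contour recombines correctly; equivalently one tracks how Enriquez' kernels transform under passing to the homology basis $(-\acyc_k,-\bcyc_k)$, an analogue of the $\mathrm{B}_k\leftrightarrow\tilde{\mathrm{B}}_k$ subtlety noted in \secref{sec:EnriquezSchottky}. Matching the two Poincar\'e series term by term then identifies the two \hgMZV{}s.

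The step I expect to be the main obstacle is precisely this last bookkeeping: controlling exactly how the $C$-coefficients and splitting numbers attached to the specific alternating words behave under generator inversion, and verifying that this reproduces the \emph{asymmetric} length shifts $n_1\,{\to}\,n_1-1$, $n_2\,{\to}\,n_2+1$ rather than a symmetric reversal of the slots --- the latter being the naive expectation from eMZV reflection, so that understanding how the closed-cycle structure breaks the symmetry is the genuinely new input. A secondary technical point is the orientation reversal of $\acyc_i$ under $\theta$ and, should one wish to drop the hypothesis $n_1,n_2\,{>}\,0$, the extra residue contributions and endpoint regularization that then enter.
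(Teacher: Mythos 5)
Your toolkit (Schottky expansion of the kernels, residue analysis along $\acyc_i$, splitting of Schottky words, \lemref{lemma:altid}, parity of $n_1{+}n_2$) is the same as the paper's, but the step you explicitly defer --- how the $C$-coefficients of the specific alternating words behave under generator inversion and why this produces the asymmetric shifts $n_1\to n_1{-}1$, $n_2\to n_2{+}1$ --- is the actual content of the proof, and the mechanism you sketch for it does not work. Under $\gamma\mapsto\tilde\gamma$ the $C$-coefficient of a \emph{fixed} word of length $k$ merely acquires the sign $(-1)^k$ (inductively from \eqn{eqn:Ccoeff}); since $\alt_{(i,j)}(n_1)$ and $\alt_{(j,i)}(n_1{-}1)$ have different lengths, generator inversion cannot transmute the summands of the left-hand Poincar\'e series into those of the right-hand one, so the term-by-term bijection your argument relies on does not exist. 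Likewise no transformation of the contour is needed or used: in the actual argument $\Upsilon\to\tilde\Upsilon$ is a pure relabelling of the summation index, $\theta$ never acts on $\acyc_i$ or on the kernels, and \lemref{lemma:altid} enters only through the invariance of the cross-ratio factor $R^-_{e_1,e_2}(\Upsilon)$. Your concern about orientation reversal and path inversion is an artifact of trying to realize the identity as a geometric involution acting on the integral, which --- as you observe yourself --- would naively reproduce a symmetric reflection as at genus one rather than the asymmetric shift; the input that breaks that symmetry is exactly what is missing from your proposal.

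What closes the gap in the paper is to take the \emph{difference} of the two sides and write it, after the residue analysis, as a single Poincar\'e series over $\Upsilon\in\SCosetLR{e_2}{e_1}$, each term weighted by the combination $\Coeff(\gamma,\delta)=C(b_{\mindx{i}_1^*},\gamma)\,C(b_{\mindx{i}_2^*},\delta)-C(b_{\mindx{i}_3^*},\gamma)\,C(b_{\mindx{i}_4^*},\delta)$ (stars: last index removed) summed over the splittings $\nsplt{i}{\Upsilon}$ and $\psplt{i}{\Upsilon}$. The recursion \eqn{eqn:Ccoeff} shows that the asymmetric length shift is precisely what makes the splitting-exponent dependence cancel between the two quadratic terms (cf.~\eqn{eqn:splittingC2}), so the combined coefficient depends only on the splitting locations and reduces to $\COeff(\Upsilon)=-\sum_l m_l\,\Coeff(\gamma_l,\delta_l)$ as in \eqn{eqn:coeffmain}; this is antisymmetric under $\Upsilon\to\tilde\Upsilon$ when $n_1{+}n_2$ is even (a factor $(-1)^{n_1+n_2}$ from the letters times $-1$ from $m_l\to-m_l$), and pairing $\Upsilon$ with $\tilde\Upsilon$ together with $R^-_{e_1,e_2}(\tilde\Upsilon)=R^-_{e_1,e_2}(\Upsilon)$ from \lemref{lemma:altid} annihilates the whole series. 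This combinatorial cancellation --- which is also where the hypotheses $n_1,n_2>0$ (via $\Coeff(\id,\delta)=\Coeff(\gamma,\id)=0$) and the parity assumption are actually consumed --- is absent from your proposal, so the proof is incomplete at its central step.
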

\begin{proof}
    Let us denote $\mindx{i}_1\texteq\alt_{(i,j)}(n_1)$, $\mindx{i}_2\texteq\alt_{(j,i)}(n_2)$, $\mindx{i}_3\texteq\alt_{(j,i)}(n_1-1)$ and $\mindx{i}_4\texteq\alt_{(i,j)}(n_2+1)$. We start by writing out the definitions~\eqref{eqn:hgMZV} of the \hgMZV{}s in \eqn{eqn:altid}. This yields
    \begin{equation}
        \label{eqn:altidint}
        \int_{\acyc_i}\omega_{\mindx{i}_1}(t)\int_{t'=t_0}^t\omega_{\mindx{i}_2}(t')-\int_{\acyc_i}\omega_{\mindx{i}_3}(t)\int_{t'=t_0}^t\omega_{\mindx{i}_4}(t')=0 \, ,
    \end{equation}
    where we have deliberately ignored the second argument of the kernels as the last two indices of all multi-indices $\mindx{i}_k$ do not match due to their alternating nature.
    
    Before moving on, let us introduce some notation. Denote by $\mindx{i}_l^*$, $l\,{\in}\,\{1,\ldots,4\}$, the multi-index obtained from $\mindx{i}_l$ by removing the last index, i.e.~$\mindx{i}_1^* \texteq \alt_{(i,j)}(n_1\,{-}\, 1)$, etc. Also, define $e_l\,{\in}\,\{1,\ldots,\genus\}$ for $l\,{\in}\,\{1,2\}$ to be the last index\footnote{Notice that it is enough to define this for $l\texteq1,2$ as the last indices of $\mindx{i}_1$ and $\mindx{i}_3$ as well as $\mindx{i}_2$ and $\mindx{i}_4$ are equal by definition.} in $\mindx{i}_l$. With this notation, we can rewrite \eqn{eqn:altidint} by expressing Enriquez' kernels on the Schottky uniformization using \eqn{eqn:highergenuskernels}. Finally, we arrive at 
    \begin{equation}
        \label{eqn:altidschottky}
        \cL^{(i)}_{e_1,e_2}\equiv\sum_{\substack{\gamma\in \SCosetR{e_1}\\\delta\in \SCosetR{e_2}}}\frac{\Coeff(\gamma,\delta)}{(-2\pi\iunit)^2}\int_{\acyc_i}s_{e_1}^{(0)}(\gamma^{-1}t)\int_{t'=t_0}^ts_{e_2}^{(0)}(\delta^{-1}t')
    \end{equation}
    for the left-hand side of \eqn{eqn:altid}, where we have defined $\Coeff(\gamma,\delta)\texteq C(b_{\mindx{i}_1^*},\gamma)\,C(b_{\mindx{i}_2^*},\delta)-C(b_{\mindx{i}_3^*},\gamma)\,C(b_{\mindx{i}_4^*},\delta)$. Here we denote $b_\mindx{i} \texteq b_{\mindx{i}^{(1)}} b_{\mindx{i}^{(2)}} \cdots$, where $\mindx{i}^{(k)}$ is the $k$-th index of the multi-index $\mindx{i}$. The proof now proceeds in two parts. The first part is concerned with evaluating the iterated integral in \eqn{eqn:altidschottky}. To do this, we make use of the explicit representation~\eqref{eqn:schottky-holomorphic-basis} of the forms $s_{e_l}^{(0)}(t)$, $l\,{\in}\,\{1,2\}$. We obtain
    \begin{equation}
        \label{eqn:intschottky}
        \begin{aligned}
            \mathcal{I}_{e_1,e_2}(\gamma,\delta)\equiv&\int_{\acyc_i}s_{e_1}^{(0)}(\gamma^{-1}t)\int_{t'=t_0}^ts_{e_2}^{(0)}(\delta^{-1}t') \\
            =& \int_{\acyc_i}\dd t\left(\frac{1}{t-\gamma P_{e_1}'}-\frac{1}{t-\gamma P_{e_1}}\right)\log\left(\frac{(t-\delta P_{e_2}')(t_0-\delta P_{e_2})}{(t_0-\delta P_{e_2}')(t-\delta P_{e_2})}\right) \, ,
        \end{aligned}
    \end{equation}
    where we have used the invariance of the cross-ratio to move the action of $\delta$ from $z$ and $z_0$ to the fixed points $P_{e_2}'$ and $P_{e_2}$. In order to calculate the integral along $\acyc_i$, we employ the techniques demonstrated in \secref{sec:TechniquesSchottky} by applying of the residue theorem to this integral. We therefore have to analyze the location of the poles in the integrand and determine the different configurations contributing to the residue theorem. The treatment of the contributions entering the residue theorem requires to perform the same steps as described in \secref{sec:IntegrationSchottky} and essentially amounts to collecting all the positive intersections of the circles $\sigma_i^nC_i$, $n\,{\leq}\,0$ and $\sigma_i^nC_i'$, $n\,{\geq}\,0$ with the path associated to an element $\Upsilon\,{\in}\,\SCosetLR{e_2}{e_1}$, weighted by the associated residue contribution. While the details are spelled out in \appref{app:Technicalitiesaltid}, we merely state the result here:
    \begin{equation}
        \label{eqn:altidintermediatemain}
        \cL^{(i)}_{e_1,e_2}=\sum_{\Upsilon\in\SCosetLR{e_2}{e_1}}\left(\sum_{(\gamma,\delta)\in \nsplt{i}{\Upsilon}} \frac{\Coeff(\gamma,\delta)}{(-2\pi\iunit)^2}-\sum_{(\gamma,\delta)\in \psplt{i}{\Upsilon}}\frac{\Coeff(\gamma,\delta)}{(-2\pi\iunit)^2}\right)R^{-}_{e_1,e_2}(\Upsilon) \, ,
    \end{equation}
    where 
    \begin{equation}
        R^-_{e_1,e_2}(\Upsilon)=(-2\pi\iunit)\log\frac{(\Upsilon P_{e_1}'- P_{e_2}')(\Upsilon P_{e_1}- P_{e_2})}{(\Upsilon P_{e_1}- P_{e_2}')(\Upsilon P_{e_1}'- P_{e_2})} \, .
    \end{equation}
	Moreover, $\nsplt{i}{\Upsilon}\,,\,\psplt{i}{\Upsilon}\,\,{\subset}\, \SGroup\,{\times}\, \SGroup$ are defined in \secref{sec:splitting_of_Schottky_words}.
    
    The second part consists of investigating the coefficients of $R^-_{e_1,e_2}(\Upsilon)$ in \eqn{eqn:altidintermediatemain}. This amounts to carefully dissecting the coefficients $\Coeff(\gamma,\delta)$ for the splittings $\nsplt{i}{\Upsilon}$ and $\psplt{i}{\Upsilon}$ and applying the recursive definition~\eqref{eqn:Ccoeff} of the $C$-coefficients. The technical details are also spelled out in \appref{app:Technicalitiesaltid}. The derivation concludes with the formula
    \begin{equation}
        \label{eqn:coeffmain}
        \COeff(\Upsilon)\equiv\left(\sum_{(\gamma,\delta)\in \nsplt{i}{\Upsilon}} \Coeff(\gamma,\delta)-\sum_{(\gamma,\delta)\in \psplt{i}{\Upsilon}}\Coeff(\gamma,\delta)\right)=-\sum_{l\in L_i(\Upsilon)}m_l\, \Coeff(\gamma_l,\delta_l) \, ,
    \end{equation}
    where $L_i(\Upsilon)\texteq\{l\with j_l=i\}$ for $\Upsilon\texteq\sigma_{j_1}^{m_1}\cdots\sigma_{j_k}^{m_k}$, $k\,{\geq}\,0$ are the \textit{splitting locations}. Moreover, $\gamma_l,\delta_l$ represent a fixed splitting\footnote{This is well-defined as we have shown that $\Coeff(\gamma,\delta)$ only depends on the splitting location in \appref{app:Technicalitiesaltid}.} at the location $l\,{\in}\, L_i(\Upsilon)$. From \eqn{eqn:coeffmain}, it clearly follows that $\COeff(\Upsilon)$ admits a \textit{reflection symmetry}
    \begin{equation}
        \label{eqn:coeffsymm}
        \COeff(\tilde\Upsilon)=(-1)^{n+1}\COeff(\Upsilon) \, ,
    \end{equation}
    where $\tilde\Upsilon\texteq\sgen_{j_1}^{-m_1}\cdots\sgen_{j_k}^{-m_k}$ for $\Upsilon\texteq\sgen_{j_1}^{m_1}\cdots\sgen_{j_k}^{m_k}$ and $n$ is defined as the sum of the weights of the multi-indices $\mindx{i}_1$ and $\mindx{i}_2$ (or equivalently $\mindx{i}_3$ and $\mindx{i}_4$), i.e.~$n\texteq n_1\,{+}\,n_2\texteq n_3\,{+}\,n_4$. To see this, note first that the $C$-coefficients $C(b_\mindx{i},\Upsilon)$ for a fixed multi-index $\mindx{i}$ adhere to this symmetry, which can be proven inductively using the recursive definition~\eqref{eqn:Ccoeff}. Then \eqn{eqn:coeffsymm} directly follows from \eqn{eqn:coeffmain} and the definition of $\Coeff(\gamma,\delta)$ as a quadratic combination of $C$-coefficients. Since we assume the total weight of the multi-indices to be even, we can immediately see that $\COeff(\tilde\Upsilon)\texteq{-}\,\COeff(\Upsilon)$. Since we can partition the set $\SCosetLR{e_2}{e_1}$ into pairs $(\Upsilon,\tilde\Upsilon)$ and $\COeff(\id)\texteq0$, the proof of the identity finally boils down to showing that the quantity $R^-_{e_1,e_2}(\Upsilon)$ is invariant under $\Upsilon\,{\rightarrow}\,\tilde\Upsilon$. This however follows immediately from \lemref{lemma:altid}.
\end{proof}
%

%%%%%%%%%%%%%%%%%%%%%%%%%%%%%%%%%%%%%%%%%%%%%%%%%%%%%%%%%%%%%%%%%%%%%%%%%%%%%%
\paragraph{Variations of the alternating identity.}

Notice that this identity can be combined with other identities to yield statements beyond what is covered by \thmref{thrm:altid}. As an example, for $n_1,n_2\,{>}\,0$ with $n_1\,{+}\,n_2\,{\in}\,2\zN$, consider the expression
\begin{equation}
    \label{eqn:altidshuffle}
    \hgzeta{j}(\alt_{(i,j)}(n_1),\alt_{(j,i)}(n_2)) \, ,
\end{equation}
which is not covered by the theorem above as we integrate over the cycle $\acyc_j$ and $i\,{\neq}\, j$ by assumption. However, using the shuffle relations~\eqref{eqn:shuffle} as well as the results~\eqref{eqn:hgmzvdepth1} at depth one, we can trace this back to \thmref{thrm:altid}. In particular, we get
\begin{equation}
    \begin{aligned}
        \hgzeta{j}(\alt_{(i,j)}(n_1),\alt_{(j,i)}(n_2))&=-\hgzeta{j}(\alt_{(j,i)}(n_2),\alt_{(i,j)}(n_1)) \\
        &=-\hgzeta{j}(\alt_{(i,j)}(n_2-1),\alt_{(j,i)}(n_1+1)) \\
        &=\hgzeta{j}(\alt_{(j,i)}(n_1+1),\alt_{(i,j)}(n_2-1)) \, ,
    \end{aligned}
\end{equation}
where we have used the shuffle identity in the first line, \thmref{thrm:altid} in the second line and again the shuffle relation in the last line. In other words, the shuffle relations allow to move indices in the opposite direction as well, given that the assumptions of the theorem still hold after the application of the shuffle relations.

%%%%%%%%%%%%%%%%%%%%%%%%%%%%%%%%%%%%%%%%%%%%%%%%%%%%%%%%%%%%%%%%%%%%%%%%%%%%%%
\paragraph{Possible generalizations of the alternating identity.}

Numerical experiments suggest that the assumptions on the theorem can be weakened in certain situations. For example, it can be shown using the same method as for the general case above that
\begin{equation}
    \hgzeta{i}(\alt_{(i,j)}(n_1),\alt_{(j,i)}(0))=\hgzeta{i}(\alt_{(j,i)}(n_1-1),\alt_{(i,j)}(1)) \, , \quad i\neq j\, , \,\, n_1\in2\zN\setminus\{0\} \, ,
\end{equation}
still holds for $i,j\,{\in}\,\{1,\ldots,\genus\}$, even though the assumption $n_2\,{>}\,0$ is not satisfied anymore. However, it does not continue to hold if we exchange $i$ and $j$ in the second slot of the \hgMZV{}. A detailed analysis of this case reveals that it involves further contributions arising from additional pole configurations entering the residue theorem. There is also a similar phenomenon occurring for moving indices in the opposite direction, which is of course related to the former by the shuffle relations. Despite these subtleties, we suspect that stronger versions of \thmref{thrm:altid} can be proven along the same lines, employing the techniques provided by Schottky uniformization. However, since the involved calculations swiftly grow in complexity, we omit further investigations of these cases in this article.

While an analytical generalization remains to be found, we numerically discovered possible extensions, without finding a general statement for those generalizations yet. Such extensions boil down to \hgMZV{}s of the form $\zetaA{j}(\cdots ij,\cdots,ji)$, so that the considerations about the depth-two integral in \eqn{eqn:intschottky} of the proof of \thmref{thrm:altid} still hold, and where then the other indices of the \hgMZV{} need to fulfill a matching condition for their $C$-coefficients~\eqref{eqn:Ccoeff} in the Schottky expansion. This can then also include identities of odd total weight.

Numerically, we identified the following identities for genus two, which might constitute examples of generalizations of the alternating identity:
\begin{equation}	
\begin{aligned}
	\zetaA{1}(1^221,2)=\zetaA{1}(2&1,1^22),\qquad \zetaA{1}(12^21,2)=\zetaA{1}(12,2^21),\\
	\zetaA{1}(1212,2)&=\zetaA{1}(21^22,2)=\zetaA{1}(12,212)
	.
\end{aligned}
\end{equation}
%

%%%%%%%%%%%%%%%%%%%%%%%%%%%%%%%%%%%%%%%%%%%%%%%%%%%%%%%%%%%%%%%%%%%%%%%%%%%%%%
%%%%%%%%%%%%%%%%%%%%%%%%%%%%%%%%%%%%%%%%%%%%%%%%%%%%%%%%%%%%%%%%%%%%%%%%%%%%%%
%%%%%%%%%%%%%%%%%%%%%%%%%%%%%%%%%%%%%%%%%%%%%%%%%%%%%%%%%%%%%%%%%%%%%%%%%%%%%%
\subsection{Weight exchange}\hypertarget{para:weight}{}\label{sec:weightex}

As will prove useful below, let us split a general \hgMZV{} into an ``elliptic part'' $\omell$ and 
a $\hgcorr$-term according to
\begin{equation}\label{eqn:hgmzvdecomp}
	\hgzeta{j}(\mindx{i}_1, \mindx{i}_2, \ldots, \mindx{i}_k)
	=
	\delta_{j \mindx{i}_1 \mindx{i}_2 \ldots \mindx{i}_k}\,(-2\pi\iunit)^{-\sum_j\!n_j}\,\omell_j(n_1, n_2, \ldots, n_k) 
	+ \hgcorr_j(\mindx{i}_1, \mindx{i}_2, \ldots, \mindx{i}_k),
\end{equation}
where $n_l \texteq |\mindx{i}_l|$ as defined below \eqn{eqn:hgMPL} and $\omell_j(\ldots)$ is an eMZV corresponding to the $j$-th subcover of the Schottky uniformization (i.e.~depending on $\tau_j$). The eMZV appears in the $k$-fold Poincar\'e series for \hgMZV{}s (cf.~\eqn{eqn:highergenuskernels}) as the term where all Schottky group elements are the identity. Such terms have a non-zero $C$-coefficient only when $\mindx{i}_l \texteq i_l^{n_l}$ for some $i_l \,{\in}\,\{1, \ldots, \genus\}$ for all $l$. However, in the case when $i_l\,{\ne}\, j$ for some $l$, we attribute the term to the $\hgcorr$-part. This decomposition not only explicitly identifies the genus-one contribution to the \hgMZV{}, but also makes the properties of the zeta values under permutations of the labels more transparent, as we will see below. 

Notice that this decomposition is also consistent with the separating degeneration~\eqref{eqn:sep_deg_ell_zeta}. If all multi-indices end with $j$-th index ($\mindx{i}_n \texteq \ldots j$ for all $n \,{\in}\, \{1,\ldots,k\}$), then the $\sigma$-terms vanish in the degeneration limit of the $j$-th direction.

The decomposition of \hgMZV{}s in \eqn{eqn:hgmzvdecomp} makes it possible to derive in many cases a reflection identity at depth two as we will show below. As we are not able to prove all cases, but can see the result for the remaining cases numerically, we state the following conjecture.
\begin{conjecture}\label{con:weightex}
	At depth two, \hgMZV{}s fulfill the reflection identity
	\begin{equation}
		\hgzeta{j}(\mindx{i}_1, \mindx{i}_2) =\begin{cases}\hgzeta{j}(\mindx{i}_2, \mindx{i}_1)=(-2\pi\iunit)^{-\len{\mindx{i}_1}-\len{\mindx{i}_2}}\,\omell_j(\len{\mindx{i}_1},\len{\mindx{i}_2}\mid\tau_j), & \text{if } \delta_{j\mindx{i}_1\mindx{i}_2}\texteq1\text{ and }\len{\mindx{i}_1}+\len{\mindx{i}_2}\in2\Integers,\\
		-\hgzeta{j}(\mindx{i}_2,\mindx{i}_1),& \text{otherwise.}\end{cases}
	\end{equation}
\end{conjecture}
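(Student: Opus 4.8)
The strategy is to reduce everything to the shuffle relation together with the closed-form depth-one values of \propref{prop:depth1}, and to isolate a single genuinely new statement that can then be attacked with the residue techniques of \secref{sec:TechniquesSchottky}. First I would apply the shuffle product \eqref{eqn:shuffle} along the closed cycle $\acyc_j$, which gives the exact (regularization-compatible) identity
\[
  \hgzeta{j}(\mindx{i}_1,\mindx{i}_2)+\hgzeta{j}(\mindx{i}_2,\mindx{i}_1)=\hgzeta{j}(\mindx{i}_1)\,\hgzeta{j}(\mindx{i}_2).
\]
By \propref{prop:depth1} the right-hand side is explicit: it vanishes unless each $\mindx{i}_l$ is a power $j^{n_l+1}$ with $n_l$ even, the only other non-zero depth-one values being the regularization constants attached to the kernels $\omega_{mm}$ with $m\,{\neq}\,j$, which I would treat separately (they merely produce the evident modification of the identity). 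Whenever this product vanishes --- in particular for every configuration with $\len{\mindx{i}_1}+\len{\mindx{i}_2}$ odd, and more generally in the \emph{otherwise} branch of the conjecture --- the shuffle relation at once yields the antisymmetry $\hgzeta{j}(\mindx{i}_1,\mindx{i}_2)=-\hgzeta{j}(\mindx{i}_2,\mindx{i}_1)$, so that branch is settled without further work.

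It remains to treat the case $\delta_{j\mindx{i}_1\mindx{i}_2}\texteq1$ with $\len{\mindx{i}_1}+\len{\mindx{i}_2}$ even, where necessarily $\mindx{i}_l\texteq j^{n_l+1}$. Here I would invoke the decomposition \eqref{eqn:hgmzvdecomp},
\[
  \hgzeta{j}(j^{n_1+1},j^{n_2+1})=(-2\pi\iunit)^{-n_1-n_2}\,\omell_j(n_1,n_2)+\hgcorr_j(j^{n_1+1},j^{n_2+1}),
\]
so that the whole special branch follows once one shows that the correction term $\hgcorr_j(j^{n_1+1},j^{n_2+1})$ --- i.e.\ the part of the double Poincar\'e series \eqref{eqn:highergenuskernels} in which at least one Schottky word is non-trivial --- vanishes after integration around $\acyc_j$. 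Granting this, the reflection identity for eMZVs, $\omell_j(n_1,n_2)\texteq(-1)^{n_1+n_2}\omell_j(n_2,n_1)$, gives symmetry for even $n_1+n_2$ and hence $\hgzeta{j}(j^{n_1+1},j^{n_2+1})=\hgzeta{j}(j^{n_2+1},j^{n_1+1})=(-2\pi\iunit)^{-n_1-n_2}\omell_j(n_1,n_2)$, which is the asserted equality; consistency with the shuffle relation above is then automatic via the eMZV shuffle $\omell_j(n_1)\omell_j(n_2)\texteq\omell_j(n_1,n_2)+\omell_j(n_2,n_1)$.

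The only non-trivial input is thus the vanishing of $\hgcorr_j(j^{n_1+1},j^{n_2+1})$, which I would attempt along the lines of the proof of \propref{prop:depth1} and the worked example of \secref{sec:thm2}: expand both kernels through \eqref{eqn:highergenuskernels}, carry out the inner integration, and then evaluate the outer $\acyc_j$-integral by the residue theorem, organizing the result as a sum over the combined Schottky word. Every term whose outer element differs from the identity has all of its poles in the outer variable on one side of $\acyc_j$, and the goal is to show that such contributions cancel pairwise once the inner integral (itself a Poincar\'e series) is accounted for. The main obstacle is exactly this last step: in contrast to the depth-one situation, the inner integration runs over an \emph{open} path, so it produces branch points rather than poles in the outer variable, and the clean ``all poles on one side $\Rightarrow$ zero'' argument no longer applies directly; one must instead control the monodromy of the inner polylogarithm along $\acyc_j$ and reshuffle the double Schottky sum accordingly. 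This combinatorial bookkeeping is what currently obstructs a complete proof and is the reason the statement is recorded as a conjecture, although it is firmly supported by the numerical checks described in \secref{sec:hgmzvrel}.
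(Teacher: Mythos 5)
Your proposal takes essentially the same route as the paper: the antisymmetric (``otherwise'') branch is settled by the shuffle relation combined with the depth-one values of \propref{prop:depth1} (the paper explicitly notes this shuffle/depth-one derivation as equivalent to its argument via the antisymmetry of the $\hgcorr$-term), and the symmetric branch is reduced, through the decomposition \eqref{eqn:hgmzvdecomp} and the genus-one reflection identity, to the vanishing of $\hgcorr_j(j^{n_1},j^{n_2})$ for even total weight. That vanishing is exactly the step the paper also cannot establish --- which is why the statement is recorded as a conjecture --- so your identification of the obstruction (controlling the double Poincar\'e series with at least one non-trivial Schottky word under the closed-cycle integration) coincides with the paper's open point.
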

\noindent This statement would be a first higher-genus generalization of the elliptic reflection identity~\cite[eq.~(2.13)]{Broedel:2015hia}
\begin{equation}\label{eqn:g1reflection}
	\omell_j(n_1, n_2, \ldots, n_k) = (-1)^{\sum_{l=1}^k n_l} \,\omell_j(n_k, n_{k-1}, \ldots, n_1).
\end{equation}

We argue for the above conjecture by inspecting the $\hgcorr$-term in \eqn{eqn:hgmzvdecomp} as follows:
The $\hgcorr$-term is a $k$-fold series composed of terms of the form (with possible prefactors)
\begin{equation}
	\int_{\acyc_j} \kappa_{\mindx{i}_1} \circ\cdots\circ \kappa_{\mindx{i}_k},
\end{equation}
where $\kappa_{\mindx{i}_l}$ is an Enriquez kernel $\omega_{\mindx{i}_l}$ with argument acted on by a non-trivial Schottky group element and/or with the last index $(\mindx{i}_l)_{n_l+1}\,{\ne}\, j$ for at least one value of $l \texteq 1, \ldots, k$. This ensures that for the $l$-th integrand
\begin{equation}
	\int_{\acyc_j} \kappa_{\mindx{i}_l} = 0
\end{equation}
holds. Therefore, due to the shuffle relations
\begin{equation}
	0=\int_{\acyc_j} \kappa_{\mindx{i}_1} \int_{\acyc_j} \kappa_{\mindx{i}_2} \cdots \int_{\acyc_j} \kappa_{\mindx{i}_k}
	=
	\sum_{\rho \in S_k} \int_{\acyc_j} \kappa_{\mindx{i}_{\rho(1)}} \circ\cdots\circ \kappa_{\mindx{i}_{\rho(k)}},
\end{equation}
and we can deduce the permutation property of the $\sigma$-terms
\begin{equation}\label{eqn:shufflecorrection}
	\sum_{\rho \in S_k} \hgcorr_j(\mindx{i}_{\rho(1)}, \mindx{i}_{\rho(2)},\ldots, \mindx{i}_{\rho(k)} ) = 0,
\end{equation}
for the permutation group $S_k$. 

At depth two, we encounter non-trivial $\hgcorr$-terms on top of the eMZVs $\omell$. The shuffle relation~\eqref{eqn:shufflecorrection} in this case gives us the following identity for the $\hgcorr$-terms from \eqn{eqn:hgmzvdecomp}:
\begin{equation}
	\hgcorr_j(\mindx{i}_1, \mindx{i}_2) = -\hgcorr_j(\mindx{i}_2, \mindx{i}_1),
\end{equation}
while for eMZVs the reflection identity~\eqref{eqn:g1reflection} holds.
From this we conclude that depth-two \hgMZV{}s $\zetaA{j}(\mindx{i}_1,\mindx{i}_2)$ with $\delta_{j\mindx{i}_1\mindx{i}_2}\,{\neq}\,1$ or odd total weight $|\mindx{i}_1| \,{+}\, |\mindx{i}_2| \,{\in}\, 2 \Integers \,{+}\, 1$ are antisymmetric under the exchange of kernels:
\begin{equation}
	\hgzeta{j}(\mindx{i}_1, \mindx{i}_2) = -\hgzeta{j}(\mindx{i}_2, \mindx{i}_1).
\end{equation}
This proves the first case stated in \conref{con:weightex}.
Note that this result can also be deduced from the shuffle relation for \hgMZV{}s and the depth-one result of \propref{prop:depth1}.

The remaining case, when all weights consist of $j$ labels (i.e.~$\delta_{j\mindx{i}_1\mindx{i}_2}\texteq1$) and the total weight is even, is more involved and could not be proven yet. Numerical investigations suggest that these \hgMZV{}s are symmetric under the exchange of the kernels as stated in the second case of \conref{con:weightex}. Therefore, the $\sigma$-part should vanish in this case
\begin{equation}
\hgcorr_j(j^{n_1}, j^{n_2}) = 0 \quad \text{for} \ n_1 + n_2 \in 2 \Integers,
\end{equation}
which would be a non-trivial statement. 
This would imply that the depth-two \hgMZV{}s of the form $\zetaA{j}(j^{n_1}, j^{n_2})$ for $n_1\,{+}\, n_2\,{\in}\, 2 \Integers$ reduce to eMZVs. 

%%%%%%%%%%%%%%%%%%%%%%%%%%%%%%%%%%%%%%%%%%%%%%%%%%%%%%%%%%%%%%%%%%%%%%%%%%%%%%
\paragraph{Possible generalization.} 

Numerical tests actually indicate that \conref{con:weightex} might be generalizable to a \hgMZV{} reflection identity for arbitrary depth.
As we do not yet have an analytical handle on these higher-depth cases, showing such a statement is beyond the scope of this article.

%%%%%%%%%%%%%%%%%%%%%%%%%%%%%%%%%%%%%%%%%%%%%%%%%%%%%%%%%%%%%%%%%%%%%%%%%%%%%%
%%%%%%%%%%%%%%%%%%%%%%%%%%%%%%%%%%%%%%%%%%%%%%%%%%%%%%%%%%%%%%%%%%%%%%%%%%%%%%
%%%%%%%%%%%%%%%%%%%%%%%%%%%%%%%%%%%%%%%%%%%%%%%%%%%%%%%%%%%%%%%%%%%%%%%%%%%%%%
\subsection{Higher-genus identities from elliptic identities}\hypertarget{para:emzvhemzv}{}\label{sec:emzvhemzv}

Curiously, despite the $\sigma$-term in \eqn{eqn:hgmzvdecomp}, \hgMZV{}s still seem to satisfy the known relations for the eMZVs: we lifted several known eMZV relations to \hgMZV{} relations in the sense of the conjecture below and tested them numerically. As all of those \hgMZV{} relations turned out to hold numerically, we formulate the following conjecture.
\begin{conjecture} Let $l,k_1,\ldots,k_l\,{\geq}\,1$ and $n_{1,1},\ldots,n_{k_1,1},\ldots,n_{1,l},\ldots,n_{k_l,l}\,{\geq}\,1$. Moreover, let \begin{equation}I(\omell(n_{1,1}-1,\ldots,n_{k_1,1}-1),\ldots,\omell(n_{1,\ell}-1,\ldots,n_{k_\ell,\ell}-1))\texteq0\end{equation} be an identity for eMZVs. Then 
\begin{equation}
	I\!\left(\zetaA{j}\left(j^{n_{1,1}},\ldots,j^{n_{k_1,1}}\right),\ldots,\zetaA{j}\left(j^{n_{1,\ell}},\ldots,j^{n_{k_\ell,\ell}}\right)\right)=0
\end{equation}
holds (with appropriate normalization of appearing genus-zero MZVs\footnote{In our conventions used here for the kernels, each appearance of $\zeta_k$ must include a factor of $(-2\pi\iunit)^{-k}$.}) for any $j\,{\in}\,\{1,\ldots,\genus\}$ and any genus $\genus\,{\geq}\,1$.
\end{conjecture}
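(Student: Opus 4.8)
The plan is to start from the decomposition~\eqref{eqn:hgmzvdecomp}, which for the all-$j$ multi-indices relevant to the conjecture reads
\begin{equation}
\zetaA{j}(j^{m_1},\ldots,j^{m_k})=(-2\pi\iunit)^{-\sum_l(m_l-1)}\,\omell_j(m_1-1,\ldots,m_k-1)+\hgcorr_j(j^{m_1},\ldots,j^{m_k})\,,
\end{equation}
where $\omell_j$ is an eMZV built from the modulus $\tau_j$ of the $j$-th subcover and $\hgcorr_j$ collects the contributions of all non-identity Schottky group elements. Since $\tau_j$ sweeps out the whole upper half-plane as the Schottky data of $\SGroup$ varies, any eMZV relation $I(\omell(\cdots))\texteq0$ is satisfied identically by the family $\omell_j$. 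Expanding $I$ (a polynomial in its arguments) and substituting the decomposition, the pure-$\omell_j$ part of $I\big(\zetaA{j}(j^{\bullet}),\ldots\big)$ cancels by the eMZV relation, and the conjecture reduces to the vanishing of the remaining terms, each of which carries at least one factor $\hgcorr_j(\cdots)$. Equivalently, one must show that the Schottky-correction parts $\hgcorr_j$ satisfy \emph{all} of the relations known for eMZVs.

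For the relations generated by the shuffle product and by the genus-one Fay identity~\eqref{eqn:g1Fay} this can be carried out directly. Shuffle relations lift verbatim: applying the iterated-integral shuffle relation~\eqref{eqn:shuffle} to the $\zetaA{j}(j^{\bullet})$ is compatible with the decomposition and, together with the permutation identity~\eqref{eqn:shufflecorrection} for the $\hgcorr$-terms, forces the shuffle part of $I$ exactly --- this is the same mechanism already used to argue for~\conref{con:weightex}. Fay relations lift because the higher-genus Fay-like identities~\eqref{eqn:FaylikeId}, specialised to all Latin indices equal to $j$ and integrated around $\acyc_j$ (with the regularisation of \secref{sec:hgmzvreg} where $\omega_{jj}$-headed terms appear), are \emph{exact} identities among the $\zetaA{j}(j^{\bullet})$. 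One first verifies that this specialisation reproduces, via the decomposition, precisely the $\zetaA{j}$-relation predicted by~\eqref{eqn:g1Fay}; then exactness of~\eqref{eqn:FaylikeId} forces the corresponding $\hgcorr$-combination to vanish with no further input. Combined with the depth-one identities of~\propref{prop:depth1}, which already fix all regularisation relations, this establishes the conjecture for every eMZV relation lying in the ideal generated by shuffle and Fay.

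The main obstacle is the complement of that ideal. As recalled in the introduction, the known relations among eMZVs include ``exotic'' ones --- such as $\omell(0,5)\texteq\omell(2,3)$ --- that are not consequences of shuffle and Fay, and whose structural origin is not yet understood even at genus one; for these there is no higher-genus source identity to invoke and no evident reason why the associated $\hgcorr$-combinations should vanish. I expect the most promising route to be a structural one: relating the higher-genus KZB associator $Z_j$ of~\eqref{eqn:associator} to the elliptic associator $\cE(\tau_j)$ in a way that transports \emph{every} elliptic relation at once --- for instance via a (Hopf-)algebra morphism, or a precise identity of the form $\pi_j(Z_j)=\cE(\tau_j)\,g_j$ with $\pi_j$ the projection onto words in $a_j,b_j$ and $g_j$ a grouplike factor acting trivially on the coefficients entering $I$. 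Producing such a statement is delicate precisely because $\pi_j(Z_j)$ is manifestly not equal to $\cE(\tau_j)$, the difference being the $\hgcorr$-term, and I expect this to be the genuinely hard step. As a fallback one could instead try to write $\hgcorr_j(j^{m_1},\ldots,j^{m_k})$ explicitly through the Schottky combinatorics developed in \secref{sec:altid} and check each relation term by term, but the combinatorial complexity grows too fast for this to yield a uniform proof.
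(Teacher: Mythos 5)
First, a point of calibration: the statement you are proving is a \emph{conjecture} in the paper --- the authors explicitly say they cannot prove it and only offer heuristic evidence (the decomposition~\eqref{eqn:hgmzvdecomp}, the automatic shuffle relations, and the Fay-like identity specialised to equal indices) together with numerical checks. Your sketch follows essentially the same strategy, and like the paper it does not close the argument; the problem is that at the decisive step you assert more than can currently be justified. Concretely, your claim that the Fay relations ``lift directly'' --- that the specialisation of the higher-genus Fay-like identity~\eqref{eqn:FaylikeId} to all indices equal to $j$, integrated over $\acyc_j$, ``reproduces, via the decomposition, precisely the $\zetaA{j}$-relation predicted by~\eqref{eqn:g1Fay}'' so that ``exactness of~\eqref{eqn:FaylikeId} forces the corresponding $\hgcorr$-combination to vanish with no further input'' --- is exactly where the obstruction sits. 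The specialised identity~\eqref{eqn:Faylike} carries an implicit sum over a second index running over \emph{all} of $\{1,\ldots,\genus\}$, i.e.\ it contains kernels $\omega_{i^{\bullet}ji^{\bullet}}$ with $j\neq i$, and after integration these produce \hgMZV{}s with mixed indices. The resulting exact relation is therefore not the pure-$j$ relation predicted by the genus-one Fay identity plus a manifestly vanishing correction; one must additionally show that all the mixed-index terms vanish or cancel among themselves. The paper states explicitly that no analytic mechanism for this cancellation could be found (it is observed only numerically, and only at the level of the full truncated Poincar\'e series), so this step is a genuine gap, not a routine verification.

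Two further points. The known eMZV relations are generated (to current understanding) by reflection, shuffle and Fay; your treatment covers shuffle, but the reflection input is only smuggled in through the mechanism of \conref{con:weightex}, whose proven part is the antisymmetric case --- the symmetric case relevant here ($\hgcorr_j(j^{n_1},j^{n_2})\texteq0$ for even total weight) is itself an open conjecture, and at higher depth nothing is established. Also, your reformulation ``equivalently, one must show that the $\hgcorr_j$ satisfy all eMZV relations'' is only accurate for relations linear in the zeta values; for relations containing products (e.g.\ the $\omell(0,3)\,\omell(0,5)$ example) the substitution $\zetaA{j}\texteq\omell_j\,{+}\,\hgcorr_j$ produces mixed $\omell$--$\hgcorr$ cross terms, so the required vanishing statement is a specific combination rather than ``$\hgcorr$ satisfies the eMZV relations.'' Your closing suggestion of an associator-level morphism relating $Z_j$ to the elliptic associator on the $j$-th subcover is a reasonable direction and goes beyond what the paper attempts, but as you acknowledge it is precisely the hard, unproven step, so the proposal should be presented as a strategy with an identified obstruction rather than as a proof of the shuffle-and-Fay part.
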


\begin{example}
For several examples of eMZV relations from \rcites{Broedel:2014vla,Broedel:2015hia}, the above conjecture implies the following \hgMZV{} relations:
\begin{center}
\begin{tabular}{ c c c }
	\textbf{eMZV identity} & &\textbf{implied hgMZV identity} ($\forall j\,{\in}\,\{1,\ldots,\genus\}$)\\
	\hline
	\hline
	$\omell(0,5)=\omell(2,3)$&
	$\!\!\!\!\!\!\leadsto\!\!\!\!\!\!\phantom{\Bigg|}$ 
	& $\zetaA{j}(j,j^6)=\zetaA{j}(j^3,j^4)$\\
	
	\hline
	
	$\omell(1,2)=-2\,\zeta_2\,\omell(1,0)-\omell(0,3)$&
	$\!\!\!\!\!\!\leadsto\!\!\!\!\!\!\phantom{\Bigg|}$
	& $\zetaA{j}(j^2,j^3)=-\tfrac{2\,\zeta_2}{(-2\pi\iunit)^2}\zetaA{j}(j^2,j)-\zetaA{j}(j,j^4)$\\
	
	\hline
	\vspace{-1.5ex}
	\small	\hspace{-5ex}$\omell(0,6,2)=-\frac{21}{2}\zeta_8-14\,\zeta_6\,\omell(0,0,2)\phantom{\Bigg|}$& & \small\hspace{-6ex}$\zetaA{j}(j,j^7,j^3)=-\tfrac{21\,\zeta_8}{2(-2\pi\iunit)^8}-\tfrac{14\,\zeta_6}{(-2\pi\iunit)^6}\zetaA{j}(j,j,j^3)$\\
	\small\hspace{3.5ex}$-\frac25\omell(0,3,5)-6\,\zeta_4\,\omell(0,0,4)$ &$\!\!\!\!\!\!\leadsto\!\!\!\!\!\!\phantom{\bigg|}$ 
	&\small\hspace{10ex}$-\frac25\zetaA{j}(j,j^4,j^6)-\tfrac{6\,\zeta_4}{(-2\pi\iunit)^4}\zetaA{j}(j,j,j^5)$ \\
	\small\hspace{8ex}$-\frac92\omell(0,0,8)+2\,\omell(0,3)\,\omell(0,5)$& & \small\hspace{10ex}$-\frac92\zetaA{j}(j,j,j^9)+2\,\zetaA{j}(j,j^4)\,\zetaA{j}(j,j^6)$
\end{tabular}
\end{center}
We could verify these \hgMZV{} identities numerically. In fact, any known eMZV identity that we lifted to a \hgMZV{} identity in the sense of the above conjecture could be verified numerically at genus two and three.
\end{example}
While we cannot prove the above statement, we want to give some insights on why we believe this to be true. Let us recall that the known eMZV identities stem from three concepts:
\begin{enumerate}
	\setlength\itemsep{-0.2em}
	\item[(a)] Reflection identity~\eqref{eqn:g1reflection},
	\item[(b)] Shuffle relation~\eqref{eqn:shuffle},
	\item[(c)] Fay identities~\eqref{eqn:g1Fay}.
\end{enumerate}
Thus, the goal would be to show that these identities also hold for \hgMZV{}s with all indices being the same. We first note that the shuffle relation is automatically fulfilled by iterated integrals and is thus also true for \hgMZV{}s. For the reflection identity for eMZVs, a crucial property for the proof is the parity property $g^{(n)}(-z)\texteq(-1)^ng^{(n)}(z)$ for the elliptic kernels. While this symmetry does not seem to hold generally for the higher-genus kernels, we hope that the Schottky representation of Enriquez' kernels can help to prove a reflection identity, since it allows to write all higher-genus expressions in terms of elliptic kernels, for which we know the parity properties. A first step towards showing a reflection identity for \hgMZV{}s is shown in \secref{sec:weightex} above. For the Fay identity relations for \hgMZV{}s with all indices identical as in the conjecture above, we can use the procedures described in \secref{sec:Fayid}, where if all indices are equal to each other, the general higher-genus Fay-like identity~\eqref{eqn:FaylikeId} simplifies to read (implicit summation over $j\,{\in}\,\{1,\ldots,\genus\}$)
\begin{subequations}
	\begin{align}\label{eqn:Faylike}
		\omega_{i^{r+1}}(z,x)\,\omega_{i^{s+2}}(y,z)&=\omega_{i^{r+1}}(z,x)\,\omega_{i^{s+2}}(y,x)\notag\\
		&\quad-\sum_{\ell=0}^{r-1}\sum_{m=0}^s(-1)^{m-s}\binom{\ell+s-m}{\ell}\omega_{i^{\ell+s-m}ji^{r-\ell+1}}(z,x)\,\omega_{i^mj}(y,x)\notag\\
		&\quad-\sum_{m=0}^s(-1)^{m-s}\binom{r+s-m}{r}\omega_{i^{r+s-m}ji}(z,y)\,\omega_{i^mj}(y,x)\notag\\
		&\quad-\sum_{\ell=0}^r\sum_{m=0}^s(-1)^{m-s}\binom{\ell+s-m}{\ell}\omega_{i^{\ell+s-m}j}(z,y)\,\omega_{i^mji^{r-\ell+1}}(y,x).
	\end{align}
\end{subequations}
Lifting this identity to polylogarithms, one can obtain \hgMZV{} relations in the same way as it is done in \secref{sec:Fayid}.
We notice however, that the identity above introduces indices $j\,{\neq}\, i$. In order to explain that we find identities for \hgMZV{}s with all indices equal, it remains to be shown that the terms containing $j\,{\neq}\, i$ vanish or cancel each other. Unfortunately, showing this using the methods of the term-by-term residue analysis introduced in \secref{sec:TechniquesSchottky} and employed in \secref{sec:altid} for the alternating identities seems to be complicated: we were unable to identify any pattern in which the unwanted terms cancel each other. Numerically, the unwanted terms only vanish as the (truncated) series. Therefore, further analytical investigation is required.

%%%%%%%%%%%%%%%%%%%%%%%%%%%%%%%%%%%%%%%%%%%%%%%%%%%%%%%%%%%%%%%%%%%%%%%%%%%%%%
%%%%%%%%%%%%%%%%%%%%%%%%%%%%%%%%%%%%%%%%%%%%%%%%%%%%%%%%%%%%%%%%%%%%%%%%%%%%%%
%%%%%%%%%%%%%%%%%%%%%%%%%%%%%%%%%%%%%%%%%%%%%%%%%%%%%%%%%%%%%%%%%%%%%%%%%%%%%%
%%%%%%%%%%%%%%%%%%%%%%%%%%%%%%%%%%%%%%%%%%%%%%%%%%%%%%%%%%%%%%%%%%%%%%%%%%%%%%
\section{Open questions}\label{sec:openqs}

In this article, we have been defining general \hgMZV{}s, focussed on their \Atxt-cycle version and started the investigation of their properties. In particular, we explored the implications of known functional relations for hgMPLs on \hgMZV{}s. When evaluating \hgMZV{}s as iterated integrals on a \textit{closed path/cycle}, we observe a symmetry enhancement leading --- in conjunction with endpoint regularization --- to classes of additional relations. While we have been starting to investigate several classes of those relations, we have however not yet been able to draw the complete all-depth picture for some of them. 

The reason is of constructional nature: while functional relations for polylogarithms are available in a convenient Hopf-algebraic language governing the inner mechanics of those relations completely, the additional \hgMZV{} relations like the alternating identity or the weight exchange are not yet accessible in a nice algebraic formulation.

Accordingly, let us collect several open questions.
\begin{itemize}
	\item\textbf{Algebraic structures for \hgMZV{}s.} For genus-zero and genus-one MZVs algebraic structures have been identified, which could then be exploited leading to symbols and coactions: those tools allow to iteratively explore and access relations for polylogarithms and zeta values (see e.g.~\cite{Goncharov:2005sla,Brown:2011ik,Duhr:2012fh,Broedel:2018iwv}). 
	\\ 
	For \hgMZV{}s, such an algebraic structure is yet to be determined. Several of its properties, however, can be inferred already from the investigations in the current article:
\begin{itemize}[label=$\circ$]
	\item The algebra should account for relations between \hgMZV{}s defined along different cycles. A first example is the relation from \thmref{thm:diffcyc}. This could be extended to connect the algebras of \hgMZV{}s along different directions.\\ 
	Some intermediate results and mileage could be obtained from considering \hgMZV{}s, where path segments are allowed to be located on different cycles. In this situation, one would encounter a refinement of the shuffle algebra, where the common shuffling of path segments along the same cycle is extended by relations allowing shuffling of path segments along different cycles.
	\item The algebra is expected to be compatible with the higher modular structure inherent in the description of higher-genus Riemann surfaces.
\end{itemize}
Existence of and knowledge about an algebraic structure would allow derivation of a higher-genus version of the symbol/coproduct formalism right away.  
	
\item\textbf{Dimensions and data mines.} Once an algebraic structure has been identified, one could ask for an extension of the Broadhurst--Kreimer conjecture~\cite{BroadKrei}: what is the number of \hgMZV{}s of given weight and depth independent over $\zQ$? An extension of the two available data mines at genus zero \cite{Blumlein:2009cf} and genus one \cite{eMZVWebsite} would then be an obvious next step. Whether such a tool would find wide application in particle physics is yet to be determined. 

\item\textbf{Eisenstein formulation.} At genus one, many insights about eMZVs have been found by reformulating them in terms of iterated Eisenstein integrals, which is traced back to employing the mixed-heat equation to iteratively rewrite the former into the latter. At this point the derivation algebra considerations set in: it is not unreasonable to expect higher versions of derivation algebras to trigger relations between \hgMZV{}s similar to those at genus one.

\item\textbf{Complete \hgMZV{} relations.} The \hgMZV{} identities presented in this article are not a complete picture of the structure of \hgMZV{}s yet: there are for sure more relations to be explored. During our study of \hgMZV{}s we encountered further relations, which hold true numerically, e.g.~$\zetaAg{1}{2}(21^2,2)\texteq\zetaAg{1}{2}(2^3,2)$. Those relations would need to be explained and proven for the final goal of finding all relations among \hgMZV{}s.

\item\textbf{Physics applications.} It will be interesting to see how \hgMZV{}s appear in the context of physics and how the identities shown here provide new insights: for Feynman integral calculations, higher-genus curves have appeared in some examples~\cite{Marzucca:2023gto, Duhr:2024uid}, where one could hope to find \hgMZV{}s as coefficients in special limits. In string theory, integrations on higher-genus surfaces naturally arise for loop-amplitudes~\cite{DHoker:2002hof} and first applications of the hgMPLs have been discussed for example in \rcites{SnowmassString,DHoker:2023khh,DHoker:2025jgb}.
\end{itemize}

%%%%%%%%%%%%%%%%%%%%%%%%%%%%%%%%%%%%%%%%%%%%%%%%%%%%%%%%%%%%%%%%%%%%%%%%%%%%%%
%%%%%%%%%%%%%%%%%%%%%%%%%%%%%%%%%%%%%%%%%%%%%%%%%%%%%%%%%%%%%%%%%%%%%%%%%%%%%%
%%%%%%%%%%%%%%%%%%%%%%%%%%%%%%%%%%%%%%%%%%%%%%%%%%%%%%%%%%%%%%%%%%%%%%%%%%%%%%
%%%%%%%%%%%%%%%%%%%%%%%%%%%%%%%%%%%%%%%%%%%%%%%%%%%%%%%%%%%%%%%%%%%%%%%%%%%%%%
%%%%%%%%%%%%%%%%%%%%%%%%%%%%%%%%%%%%%%%%%%%%%%%%%%%%%%%%%%%%%%%%%%%%%%%%%%%%%%

\vspace*{2cm}
\noindent{\LARGE\textbf{Appendix}}
\addcontentsline{toc}{section}{Appendix}
\appendix
\addtocontents{toc}{\protect\setcounter{tocdepth}{1}}

%%%%%%%%%%%%%%%%%%%%%%%%%%%%%%%%%%%%%%%%%%%%%%%%%%%%%%%%%%%%%%%%%%%%%%%%%%%%%%
%%%%%%%%%%%%%%%%%%%%%%%%%%%%%%%%%%%%%%%%%%%%%%%%%%%%%%%%%%%%%%%%%%%%%%%%%%%%%%
%%%%%%%%%%%%%%%%%%%%%%%%%%%%%%%%%%%%%%%%%%%%%%%%%%%%%%%%%%%%%%%%%%%%%%%%%%%%%%
%%%%%%%%%%%%%%%%%%%%%%%%%%%%%%%%%%%%%%%%%%%%%%%%%%%%%%%%%%%%%%%%%%%%%%%%%%%%%%
\appendixsection{The normalized differential of the third kind}\label{app:Ndiff}

The normalized differential of the third kind is a crucial object in Riemann surface theory as it can be used to generate all differential forms with at most simple poles on a given Riemann surface\footnote{In the sense that each such differential can be represented as a linear combination of the normalized differential of the third kind and the basis of normalized holomorphic differentials.}~\cite{mumford1984tata}. In this section, we want to derive its representation~\eqref{eqn:fundDiffschottky} on the Schottky uniformization. To do so, first recall the analytic properties of the normalized differential of the third kind defined in \eqn{eqn:fundamentalDiff3}. It is a single-valued differential one-form in $x$ and satisfies~\cite{fay}
\begin{subequations}
    \begin{align}
        \Res_{x=y}\left(\ndiff(x)\right)&=-\Res_{x=t}\left(\ndiff(x)\right)=1 \, , \\
        \int_{\acyc_i}\ndiff(x)&=0 \, , \quad \forall i\in\{1,\ldots,\genus\} \, .
    \end{align}
\end{subequations}
These conditions uniquely determine $\ndiff(x)$ for fixed $y,t\,{\in}\,\RSurf$. On the other hand, the expression
\begin{equation}
    \label{eqn:fundDiffschottkyApp}
    s^{(y-t)}(x)\coloneqq\sum_{\gamma\in\SGroup}\left(\frac{1}{\gamma x-y}-\frac{1}{\gamma x - t}\right)\dd(\gamma x)=\sum_{\gamma\in\SGroup}\left(\frac{1}{x - \gamma y}-\frac{1}{x -\gamma t}\right)\dd x
\end{equation}
for $y,t\,{\in}\,\funddom$ is a well-defined differential one-form on the Schottky cover $\Omega(\SGroup)$. It is invariant under the action of moving along the lift of a $\bcyc$-cycle to the covering space\footnote{Periodicity along $\acyc_k$, $k\,{\in}\,\{1,\ldots,\genus\}$ follows immediately from the fact that these cycles remain closed contours on the covering space.}. To see this, we can use that a cycle $\bcyc_k$ can be realized as $x\,{\rightarrow}\,\sigma_k x$ for all $k\,{\in}\,\{1,\ldots,\genus\}$ and $x$ in the fundamental domain\footnote{Recall that a point on the Riemann surface $\RSurf$ can be (uniquely) identified with a point on the fundamental domain of the associated Schottky group $\SGroup$.}. However, this action can be absorbed by redefining the summation in \eqn{eqn:fundDiffschottkyApp} by $\gamma\,{\rightarrow}\,\gamma\sigma_k$. Since this defines a bijection on $\SGroup$, the desired invariance follows. Hence $s^{(y-t)}(x)$ descends to a well-defined differential one-form on the Riemann surface $\RSurf\texteq\Omega(\SGroup)\slash\SGroup$. From \eqn{eqn:fundDiffschottkyApp}, it is also immediately clear that $s^{(y-t)}(x)$ has simple poles at $y$ and $t$ with residues $+1$ and $-1$, respectively, and is holomorphic elsewhere (regarded as a differential form on the Riemann surface and not on the whole covering space). Finally, to calculate its integral around $\acyc_k$ for $k\,{\in}\,\{1,\ldots,\genus\}$, we employ the residue theorem. Notice that, as an expression on the Riemann sphere $\ComplexComplete$, \eqn{eqn:fundDiffschottkyApp} has poles at $p_y(\gamma)\texteq\gamma y$ and $p_t(\gamma)\texteq\gamma t$ with residues $+1$ and $-1$, respectively. These poles are always both located on the same side of the contour $\acyc_k$. Therefore, by inverting the orientation of the contour if necessary, we can assume that the contour does not enclose the poles. Hence, by the residue theorem, the integral around it vanishes in every case. This implies that
\begin{equation}
    \int_{\acyc_k}s^{(y-t)}(x)=0\, , \quad \forall k\in\{1,\ldots,\genus\} \, .
\end{equation}
In conclusion, we have shown that $\ndiff(x)$ and $s^{(y-t)}(x)$ have the same analytic properties regarded as differential one-forms on the Riemann surface, hence they are equal on $\RSurf$ (or equivalently the fundamental domain $\funddom\,{\subset}\,\ComplexComplete$). Furthermore, comparison of analytic properties for e.g.~$t'\texteq t\,{-}\,\bcyc_i$ shifted outside of the fundamental domain $\funddom$ by the cycle $\bcyc_i$ shows that
\begin{equation}
    \Omega^{(y-t')}(x)=\ndiff(x)+2\pi\iunit\,\omega_i(x)=s^{(y-t')}(x) \, .
\end{equation}
Hence, we can lift the equality to the covering space, regarding $\ndiff(x)$ as a differential in $x$ for $x,y,t$ being variables on the universal covering of the Riemann surface $\RSurf$.

%%%%%%%%%%%%%%%%%%%%%%%%%%%%%%%%%%%%%%%%%%%%%%%%%%%%%%%%%%%%%%%%%%%%%%%%%%%%%%
%%%%%%%%%%%%%%%%%%%%%%%%%%%%%%%%%%%%%%%%%%%%%%%%%%%%%%%%%%%%%%%%%%%%%%%%%%%%%%
%%%%%%%%%%%%%%%%%%%%%%%%%%%%%%%%%%%%%%%%%%%%%%%%%%%%%%%%%%%%%%%%%%%%%%%%%%%%%%
%%%%%%%%%%%%%%%%%%%%%%%%%%%%%%%%%%%%%%%%%%%%%%%%%%%%%%%%%%%%%%%%%%%%%%%%%%%%%%
\appendixsection{Additional calculations}

%%%%%%%%%%%%%%%%%%%%%%%%%%%%%%%%%%%%%%%%%%%%%%%%%%%%%%%%%%%%%%%%%%%%%%%%%%%%%%
%%%%%%%%%%%%%%%%%%%%%%%%%%%%%%%%%%%%%%%%%%%%%%%%%%%%%%%%%%%%%%%%%%%%%%%%%%%%%%
%%%%%%%%%%%%%%%%%%%%%%%%%%%%%%%%%%%%%%%%%%%%%%%%%%%%%%%%%%%%%%%%%%%%%%%%%%%%%%
\appendixsubsection{\texorpdfstring{Additional calculations for \subsecref{sec:IntegrationSchottky}}{Additional calculations for §5.1}}\label{app:TechnicalitiesIntegrationSchottky}

Collecting all the different contributions from the residue theorem given in~\sitref{item:1} -- \sitref{item:4} allows to rewrite \eqn{eqn:omegaij-schottky} as
\begin{equation}
    \begin{aligned}
    \label{eqn:omegaij-schottkyApp}
    \omega_{ij}(x,y)=&\frac{1}{(-2\pi\iunit)^2}\sum_{\gamma_1\in \SGroup^{[-]}_{i\rightarrow}}\sum_{\substack{\gamma_2\in \mathrm{C}(\SCosetR{j})^{[-]}_{i\rightarrow} \\ i=j:\gamma_2\neq\id}}\dd(\gamma_1x)(2\pi\iunit)\left(\frac{1}{\gamma_1x-\gamma_2 P_j'}-\frac{1}{\gamma_1x-\gamma_2 P_j} \right) \\
    &+\frac{1}{(-2\pi\iunit)^2}\sum_{\gamma_1\in \mathrm{C}\SGroup^{[-]}_{i\rightarrow}}\sum_{\gamma_2\in (\SCosetR{j})^{[-]}_{i\rightarrow}}\dd(\gamma_1x)(-2\pi\iunit)\left(\frac{1}{\gamma_1x-\gamma_2 P_j'}-\frac{1}{\gamma_1x-\gamma_2 P_j} \right) \\
    &+\delta_{ij}\left\{\frac{1}{2\pi\iunit}\left(\sum_{\gamma\in \SGroup^{[-]}_{i\rightarrow}}\frac{\dd(\gamma x)}{\gamma x-P_j'}+\sum_{\gamma\in \mathrm{C}\SGroup^{[-]}_{i\rightarrow}}\frac{\dd(\gamma x)}{\gamma x-P_j}-\sum_{\gamma\in \SGroup}\frac{\dd(\gamma x)}{\gamma x-y}\right)+\frac12\omega_j(x)\right\} \, ,
\end{aligned}
\end{equation}
where we have employed the notation~\eqref{eqn:set1}--\eqref{eqn:cset2} defined in \secref{sec:splitting_of_Schottky_words}. To see that this is in fact the case, consider for example the conditions on $\gamma_1$ and $\gamma_2$ stated in~\sitref{item:2}. These exactly translate to the second line in \eqn{eqn:omegaij-schottkyApp}. The other terms are obtained analogously. Next, we can shift the action of $\gamma_1$ from $x$ to $\gamma_2P_j'$ and $\gamma_2P_j$, respectively, by using M\"obius invariance and relabeling $\gamma_1\,{\rightarrow}\,\gamma_1^{-1}$. This results in 
\begin{equation}
\begin{aligned}
    \label{eqn:schottky1}
    \omega_{ij}(x,y)=&-\frac{1}{(-2\pi\iunit)}\sum_{\gamma_1\in \SGroup^{[+]}_{\leftarrow i}}\sum_{\substack{\gamma_2\in \mathrm{C}(\SCosetR{j})^{[-]}_{i\rightarrow} \\ i=j:\gamma_2\neq\id}}\dd x\left(\frac{1}{x-\gamma_1\gamma_2 P_j'}-\frac{1}{x-\gamma_1\gamma_2 P_j} \right) \\
    &+\frac{1}{(-2\pi\iunit)}\sum_{\gamma_1\in \mathrm{C}\SGroup^{[+]}_{\leftarrow i}}\sum_{\gamma_2\in (\SCosetR{j})^{[-]}_{i\rightarrow}}\dd x\left(\frac{1}{x-\gamma_1\gamma_2 P_j'}-\frac{1}{x-\gamma_1\gamma_2 P_j} \right) \\
    &+\delta_{ij}\left\{\frac{1}{2\pi\iunit}\left(\sum_{\gamma\in \SGroup^{[-]}_{i\rightarrow}}\frac{\dd(\gamma x)}{\gamma x-P_j'}+\sum_{\gamma\in \mathrm{C}\SGroup^{[-]}_{i\rightarrow}}\frac{\dd(\gamma x)}{\gamma x-P_j}-\sum_{\gamma\in \SGroup}\frac{\dd(\gamma x)}{\gamma x-y}\right)+\frac12\omega_j(x)\right\} \, ,
\end{aligned}
\end{equation}
where we have cancelled some factors of $2\pi\iunit$ and used that, by definition, $\SGroup^{[-]}_{i \rightarrow}\,{\rightarrow}\, \SGroup^{[+]}_{\leftarrow i}$ and $\mathrm{C}\SGroup^{[-]}_{i \rightarrow}\,{\rightarrow}\, \mathrm{C}\SGroup^{[+]}_{\leftarrow i}$ under inversion of elements. Next, we want to combine the sums over $\gamma_1$ and $\gamma_2$ into a single sum over $\Upsilon\,{\in}\, \SCosetR{j}$. However, in doing this, we overcount as a fixed $\Upsilon\,{\in}\, \SCosetR{j}$ can be realized as a product $\gamma_1\gamma_2$ for $\gamma_1$, $\gamma_2$ in the respective subsets in different ways. To illustrate this, let us consider the sets in the second line of \eqn{eqn:schottky1}, i.e.~$\gamma_1\,{\in}\, \mathrm{C}\SGroup^{[+]}_{\leftarrow i}$ and $\gamma_2\,{\in}\,(\SCosetR{j})^{[-]}_{i\rightarrow}$. Now, as an example, fix $\Upsilon\texteq\sigma_k\sigma_l\sigma_i^{-2}\sigma_m\,{\in}\,\SCosetR{j}$ for $k,l\,{\neq}\, i$ and $m\,{\neq}\, i,j$. Then we can split $\Upsilon$ into a product by either defining 
\begin{equation} \gamma_1=\sigma_k\sigma_l\sigma_i^{-1}\in\mathrm{C}\SGroup^{[+]}_{\leftarrow i}\, ,\quad \gamma_2=\sigma_i^{-1}\sigma_m\in(\SCosetR{j})^{[-]}_{i\rightarrow},
\end{equation} 
or 
\begin{equation}
\gamma_1=\sigma_k\sigma_l\in\mathrm{C}\SGroup^{[+]}_{\leftarrow i} \, , \quad \gamma_2=\sigma_i^{-2}\sigma_m\in(\SCosetR{j})^{[-]}_{i\rightarrow},
\end{equation}
respectively. These are both valid splittings of $\Upsilon$ in the sense of \defref{def:splitting}. The number of possible splittings can easily be related to the number of occurrences of the generator $\sigma_i^{-1}$ as each such $\sigma_i^{-1}$ defines a splitting in $\ssplit^-_i(\Upsilon)$ in the sense of \defref{def:splitting}. This can be easily shown to hold generally, and so we can write
\begin{equation}
    \sum_{\substack{\gamma_1\in \mathrm{C}\SGroup^{[+]}_{\leftarrow i}\\\gamma_2\in (\SCosetR{j})^{[-]}_{i\rightarrow}}}\dd x\left(\frac{1}{x-\gamma_1\gamma_2 P_j'}-\frac{1}{x-\gamma_1\gamma_2 P_j} \right)= \sum_{\Upsilon\in \SCosetR{j}}\dd x \,N_i^-(\Upsilon)\left(\frac{1}{x-\Upsilon P_j'}-\frac{1}{x- \Upsilon P_j} \right) \, ,
\end{equation}
where $N_i^-(\Upsilon)$ denotes the splitting number from \defref{def:splitting}. The same argument works for the first line in \eqn{eqn:schottky1} with the difference that we split after the generator $\sigma_i$ and thus overcount by $N_i^+(\Upsilon)$. In total, we obtain
\begin{equation}
\begin{aligned}
    \label{eqn:schottky2}
    \omega_{ij}(x,y)=&\frac{1}{(-2\pi\iunit)}\sum_{\Upsilon\in \SCosetR{j}}\dd x\,(N_i^-(\Upsilon)-N_i^+(\Upsilon))\left(\frac{1}{x-\Upsilon P_j'}-\frac{1}{x-\Upsilon P_j} \right) \\
    &+\delta_{ij}\left\{\frac{1}{2\pi\iunit}\left(\sum_{\gamma\in \SGroup^{[-]}_{i\rightarrow}}\frac{\dd(\gamma x)}{\gamma x-P_j'}+\sum_{\gamma\in \mathrm{C}\SGroup^{[-]}_{i\rightarrow}}\frac{\dd(\gamma x)}{\gamma x-P_j}-\sum_{\gamma\in \SGroup}\frac{\dd(\gamma x)}{\gamma x-y}\right)+\frac12\omega_j(x)\right\} \, .
\end{aligned}
\end{equation}
Since the coefficient in~\eqref{eqn:schottky2} quantifies the difference in occurrences of $\sigma_i$ and $\sigma_i^{-1}$, we can see that it exactly matches the definition of $-C(b_i,\Upsilon)$. So we finally get
\begin{equation}
    \begin{aligned}
    \label{eqn:schottky3}
    \omega_{ij}(x,y)=&\frac{1}{(-2\pi\iunit)}\sum_{\Upsilon\in \SCosetR{j}}C(b_i,\Upsilon)\,s_j^{(0)}(\Upsilon^{-1}x) \\
    &+\delta_{ij}\left\{\frac{1}{2\pi\iunit}\left(\sum_{\gamma\in \SGroup^{[-]}_{i \rightarrow}}\frac{\dd(\gamma x)}{\gamma x-P_j'}+\sum_{\gamma\in \mathrm{C}\SGroup^{[-]}_{i \rightarrow}}\frac{\dd(\gamma x)}{\gamma x-P_j}-\sum_{\gamma\in \SGroup}\frac{\dd(\gamma x)}{\gamma x-y}\right)+\frac12\omega_j(x)\right\} \, ,
    \end{aligned}
\end{equation}
where we have used the definition~\eqref{eqn:sn} of $s_j^{(0)}(x)$ as well. Let us now finally focus on the second term. We make use of the fact that $\SGroup\texteq\{\sigma_i^n\Upsilon\mid n\,{\in}\,\zZ, \Upsilon\in \SCosetL{i}\}$. This allows us to write
\begin{align}
	\omega_{ij}(x,y)=&\frac{1}{(-2\pi\iunit)}\sum_{\Upsilon\in \SCosetR{j}}C(b_i,\Upsilon)\,s_j^{(0)}(\Upsilon^{-1}x) \notag\\
    &+\delta_{ij}\frac{1}{2\pi\iunit}\sum_{\Upsilon\in \SCosetL{i}}\Bigg\{\left(\sum_{n<0}\frac{\dd(\sigma_i^n\Upsilon x)}{\sigma_i^n\Upsilon x-P_j'}+\sum_{n\geq0}\frac{\dd(\sigma_i^n\Upsilon x)}{\sigma_i^n\Upsilon x-P_j} -\sum_{n\in\zZ}\frac{\dd(\sigma_i^n\Upsilon x)}{\sigma_i^n\Upsilon x-y}\right)\notag \\
    &+\frac{\dd(\Upsilon x)}{2}\left(\frac{1}{\Upsilon x-P_j'}-\frac{1}{\Upsilon x-P_j}\right)\Bigg\} \, ,
\end{align}
where we have also used the definitions~\eqref{eqn:set1}-\eqref{eqn:cset2}. We can now split the sum over $\zZ$ into positive and negative parts to obtain
\begin{align}
    \label{eqn:omegaij-fullyexp}
    \omega_{ij}(x,y)=&\frac{1}{(-2\pi\iunit)}\sum_{\Upsilon\in \SCosetR{j}}C(b_i,\Upsilon)\,s_j^{(0)}(\Upsilon^{-1}x)\notag \\
    &+\delta_{ij}\frac{1}{2\pi\iunit}\sum_{\Upsilon\in \SCosetL{i}}\Bigg\{\sum_{n<0}\frac{(P_j'-y)\,\dd(\sigma_i^n\Upsilon x)}{(\sigma_i^n\Upsilon x-P_j')(\sigma_i^n\Upsilon x-y)}+\sum_{n>0}\frac{(P_j-y)\,\dd(\sigma_i^n\Upsilon x)}{(\sigma_i^n\Upsilon x-P_j)(\sigma_i^n\Upsilon x-y)}\notag \\ 
    &+\frac{\dd(\Upsilon x)}{2}\left(\frac{1}{\Upsilon x-P_j'}+\frac{1}{\Upsilon x-P_j}\right)-\frac{\dd(\Upsilon x)}{\Upsilon x-y}\Bigg\} \, .
\end{align}
On the other hand, we can explicitly compute $s_j^{(1)}(x,y)$ using eqs.~\eqref{eqn:sn}, \eqref{eqn:g1q}, \eqref{eq:abel-map-schottky} and \eqref{eqn:schottky-holomorphic-basis}. Let us compute this in several steps. For the first term in \eqn{eqn:g1q}, we can use the explicit form of Abel's map~\eqref{eq:abel-map-schottky} and the holomorphic differentials~\eqref{eqn:schottky-holomorphic-basis} to expand the cotangent. We obtain
\begin{equation}
    \label{eqn:qExpDiv}
    \pi\cot(\pi\exp(2\pi\iunit\,\abel_j(x,y)))\,\omega(x\mid\SGroup_j) = -\frac{\dd x}{2}\left(\frac{1}{x-P_j'}+\frac{1}{x-P_j}\right)+\frac{\dd x}{x-y}
\end{equation}
after various simplifications. For the second term
\begin{equation}\label{eqn:qExpcon}
    (-2\pi\iunit)\sum_{n,m>0}q_j^{nm}\big(\!\exp(2\pi\iunit\,\abel_j(x,y))^m-\exp(2\pi\iunit\,\abel_j(x,y))^{-m}\big)\,\omega(x\mid\SGroup_j) \, ,
\end{equation}
we can use the geometric series (convergence is ensured by $|q_j|\,{<}\,1$). Concretely, we can derive
\begin{align}
    (-2\pi\iunit)\!\sum_{n,m>0}\!q_j^{nm}\exp(2\pi\iunit\,\abel_j(x,y))^m\,\omega(x|\SGroup_j)&=(-2\pi\iunit)\!\sum_{n>0}\!\left(\frac{1}{1{-}q_j^n\exp(2\pi\iunit\,\abel_j(x,y))}{-}1\right)\omega(x|\SGroup_j)\notag \\
    &=(-2\pi\iunit)\!\sum_{n>0}\frac{(\sigma_j^nx-P_j')(y-P_j)}{(\sigma_j^nx-y)(P_j-P_j')}\omega(x\mid\SGroup_j) \notag\\
    &=\sum_{n>0}\frac{(y-P_j)\,\dd(\sigma_j^nx)}{(\sigma_j^nx-y)(\sigma_j^nx-P_j)} \, ,\label{eqn:qExpNondiv}
\end{align}
where we have used the definition of the holomorphic differential~\eqref{eqn:schottky-holomorphic-basis}, the invariance of the cross-ratio as well as the fact that
\begin{equation}
    q_j^n\exp(2\pi\iunit\,\abel_j(x,y))=q_j^n\frac{(x-P_j')(y-P_j)}{(x-P_j)(y-P_j')}=\frac{(\sigma_j^nx-P_j')(y-P_j)}{(\sigma_j^nx-P_j)(y-P_j')} 
\end{equation}
by definition of Abel's map~\eqref{eq:abel-map-schottky} and the fixed point equation~\eqref{eqn:fixedpteq}. Similarly, the treatment of the remaining term in \eqn{eqn:qExpcon} yields
\begin{equation}\label{eqn:qExpNondiv2}
    (-2\pi\iunit)\sum_{n,m>0}q_j^{nm}\exp(2\pi\iunit\,\abel_j(x,y))^m\,\omega(x\mid\SGroup_j)=\sum_{n<0}\frac{(y-P_j')\,\dd(\sigma_j^nx)}{(\sigma_j^nx-y)(\sigma_j^nx-P_j')} \, .
\end{equation}
Combining eqs.~\eqref{eqn:qExpDiv}, \eqref{eqn:qExpNondiv} and \eqref{eqn:qExpNondiv2}, we can see that \eqn{eqn:omegaij-fullyexp} can be rewritten as 
\begin{equation}
    \begin{aligned}
        \label{eqn:omegaij-finalexp}
        \omega_{ij}(x,y)=&\frac{1}{(-2\pi\iunit)}\sum_{\Upsilon\in \SCosetR{j}}C(b_i,\Upsilon)\,s_j^{(0)}(\Upsilon^{-1}x) +\delta_{ij}\frac{1}{(-2\pi\iunit)}\sum_{\Upsilon\in \SCosetL{i}}s_j^{(1)}(\Upsilon x,y) \\
        =&\frac{1}{(-2\pi\iunit)}\sum_{\Upsilon\in \SCosetR{j}}\left(C(b_i,\Upsilon)\,s_j^{(0)}(\Upsilon^{-1}x) +\delta_{ij}\,s_j^{(1)}(\Upsilon^{-1} x,y)\right) \, ,
    \end{aligned}
\end{equation}
where we have relabelled $\Upsilon\,{\rightarrow}\,\Upsilon^{-1}$ in the second term to arrive at the final expression.

%%%%%%%%%%%%%%%%%%%%%%%%%%%%%%%%%%%%%%%%%%%%%%%%%%%%%%%%%%%%%%%%%%%%%%%%%%%%%%
%%%%%%%%%%%%%%%%%%%%%%%%%%%%%%%%%%%%%%%%%%%%%%%%%%%%%%%%%%%%%%%%%%%%%%%%%%%%%%
%%%%%%%%%%%%%%%%%%%%%%%%%%%%%%%%%%%%%%%%%%%%%%%%%%%%%%%%%%%%%%%%%%%%%%%%%%%%%%
\appendixsubsection{\texorpdfstring{Additional calculations for \subsecref{sec:thm2}}{Additional calculations for §5.2}}\label{app:33}

In \rcite[Thm.~3]{DHoker:2025dhv}, a convolution identity among Enriquez' kernels of different weights was proven:
\begin{align}\label{eqn:DSthm3}
    %&\int_{\mathfrak{A}_L} g^{P_1 \cdots P_s A}{ }_B(y, t) g^{I_1 \cdots I_r}{ }_K(t, z)=(-2\pi\iunit)^r \frac{\bn{r}}{r!} \delta_K^{I_1 \cdots I_r L} g^{P_1 \cdots P_s A_B}(y, z) \\
    %&+\delta_B^A \sum_{\ell=0}^s\left\{\rn{r, s+1-\ell} \delta_K^{P_s \cdots P_{\ell+1} I_1 \cdots I_r L} g^{P_1 \cdots P_{\ell}}_L(y, z)-\sum_{k=0}^r \rn{k, s-\ell} \delta_L^{P_s \cdots P_{\ell+1} I_1 \cdots I_k} g^{P_1 \cdots P_{\ell} L I_{k+1} \cdots I_r}_K(y, z)\right\}.\nonumber
    \int_{t \in \acyc_j} \omega_{p_1 \cdots p_s ab}(y, t)\, \omega_{i_1 \cdots i_rk}(t, z)&=\frac{\bn{r}}{r!} \delta_{i_1 \cdots i_r jk} \,\omega_{p_1 \cdots p_s ab}(y, z)\notag \\
    &\quad+\delta_{ab} \sum_{\ell=0}^s\bigg\{\rn{r, s+1-\ell} \,\delta_{p_s \cdots p_{\ell+1} i_1 \cdots i_r jk}\, \omega_{p_1 \cdots p_{\ell}j}(y, z)\\
    &\hspace{10ex}-\sum_{k=0}^r \rn{k, s-\ell} \,\delta_{p_s \cdots p_{\ell+1} i_1 \cdots i_kj}\, \omega_{p_1 \cdots p_{\ell} j i_{k+1} \cdots i_rk}(y, z)\bigg\},\notag
\end{align}
with the coefficient
\begin{equation}
	\rn{m,n}=(-1)^n\frac{\bn{m+n}}{m!\,n!},
\end{equation}
where $\bn{r}$ are the Bernoulli numbers from \eqn{eqn:Bernoulli}.

In this appendix, we focus on the special case where $s \texteq 0$ and $a \texteq b$, and show how the corresponding identity can be derived using the Schottky formalism. To facilitate the derivation, we divide the discussion into two parts: the cases $j \,{\neq}\, k$ and $j \texteq k$. For each, we provide illustrative examples that introduce the necessary concepts and techniques required to obtain the final result.

%%%%%%%%%%%%%%%%%%%%%%%%%%%%%%%%%%%%%%%%%%%%%%%%%%%%%%%%%%%%%%%%%%%%%%%%%%%%%%
%%%%%%%%%%%%%%%%%%%%%%%%%%%%%%%%%%%%%%%%%%%%%%%%%%%%%%%%%%%%%%%%%%%%%%%%%%%%%%
\subsubsection{\texorpdfstring{$j\,{\neq}\, k$: the combinatorial problem}{j ≠ k: the combinatorial problem}}\label{app:combinatoricProblem}

In this subsection we derive \eqn{eqn:DSthm3} in the case when $j \,{\neq}\, k$, i.e.
\begin{align}\label{eqn:thm3-general-result-j-neq-k}
	\int_{t \in \acyc_j} \omega_{aa}(y, t)\, \omega_{i_1 \cdots i_rk}(t, z) = -\sum_{l=0}^r \frac{\bn{l}}{l!} \delta_{i_1 \cdots i_lj}\, \omega_{j i_{l+1} \cdots i_rk}(y, z),
\end{align}
by reducing it to a combinatorial problem for the $C$-coefficients \eqref{eqn:Ccoeff}. We only consider the most non-trivial case of $i_1 \texteq i_2 \texteq \ldots \texteq i_r \texteq j$, since the other cases can be seen as trivial extensions of the former one. The derivation depends on the following equality for a convolution of the Enriquez' kernels
\begin{align}\label{eqn:recursiveResult}
    \sum_{l = 1}^{r} \frac{1}{l!} \int_{t \in \acyc_j} \omega_{aa}(y, t)\, \omega_{j^{r-l}k}(t, z) 
    = -\omega_{j^{r}k}(y, z),
\end{align}
where we used the multi-index notation introduced in \secref{sec:hgMZV}. We prove it by reducing it to an identity among $C$-coefficients. 

First of all, notice that each integral in the left-hand side has the same pole structure as the example considered in \secref{sec:thm2}. Therefore, we can apply the same residue analysis at the level of the Schottky expansions to \eqn{eqn:recursiveResult}, obtaining 
\begin{align}
\frac{1}{(-2\pi\iunit)}\sum_{\Upsilon \in \SCosetR{k}}\sum_{l=1}^r \frac{1}{l!}
&\left[
\sum_{(\ldots, \gamma^+) \in \psplt{j}{\Upsilon}} C(b_j^{r-l}, \gamma^+) 
-
\sum_{(\ldots, \gamma^-) \in \nsplt{j}{\Upsilon}} C(b_j^{r-l}, \gamma^-) 
\right] \skern{0}_k(\Upsilon^{-1} y) \nonumber
\\
&=
\frac{1}{(-2\pi\iunit)}\sum_{\Upsilon \in \SCosetR{k}}C(b_j^{r}, \Upsilon)\,  \skern{0}_k(\Upsilon^{-1} y).
\label{eqn:recursiveResult_coefs}
\end{align}
In the following we show that this equality holds term-by-term for each $\Upsilon \,{\in}\, \SCosetR{k}$. Using \lemref{lem:Ccoeff} we can rewrite the sums over $l$ in the left-hand side as 
\begin{equation}
\sum_{l=1}^r \frac{1}{l!} C(b_j^{r-l}, \gamma^\pm) = 
C(b_j^{r}, \sigma_j \gamma^\pm) - C(b_j^{r}, \gamma^\pm).
\end{equation}
Thus, the left-hand side of \eqn{eqn:recursiveResult_coefs} for a fixed $\Upsilon$ reduces to 
\begin{equation} \label{eqn:ccoef_lhs}
\sum_{(\ldots, \gamma^+) \in \psplt{j}{\Upsilon}} [C(b_j^{r}, \sigma_j\gamma^+) - 
C(b_j^{r}, \gamma^+) ]
-
\sum_{(\ldots, \gamma^-) \in \nsplt{j}{\Upsilon}} [C(b_j^{r}, \sigma_j\gamma^-) - 
C(b_j^{r}, \gamma^-) ].
\end{equation}
We need to show that \eqn{eqn:ccoef_lhs} equals $C(b_j^r, \Upsilon)$ for all $\Upsilon\,{\in}\, \SCosetR{k}$.
Generally, $\Upsilon$ can be put to the form 
\begin{equation}
\Upsilon = \alpha_0 \sigma_j^{n_1} \alpha_1 \sigma_j^{n_2} \alpha_2 \ldots \sigma_j^{n_m} \alpha_m,
\end{equation}
where $n_i\,{\in}\, \Integers \setminus \{0\}$ for $i\,{\in}\, \{1, \ldots, m\}$, $\alpha_0$, $\alpha_i$ and $\alpha_m$ are the Schottky words that do not contain $\sigma_j$ with $\alpha_i\,{\ne}\,\id$ for $i\,{\in}\, \{1, \ldots, m-1\}$ for some $m\,{\ge}\, 0$. Notice that in the summation over $\gamma^+$ ($\gamma^-$) we essentially sum over all suffixes of $\Upsilon$ taken after (before) each $\sigma_j$ ($\sigma_j^{-1}$). Therefore, we can rewrite the sums in \eqn{eqn:ccoef_lhs} as
\begin{equation}
\sum_{s=1}^m \sum_{l=1}^{n_s} \sgn(n_s) [C(b_j^{r}, \gamma_{s,l-1}) - 
C(b_j^{r}, \gamma_{s,l})],
\end{equation}
where
\begin{equation}
\gamma_{s,l} = \sigma_j^{n_s -l} \alpha_s\, \sigma_j^{n_{s+1}} \ldots \alpha_m,
\end{equation}
and we defined the sum with a negative upper limit $n\,{<}\,0$ according to 
$\sum_{l=1}^{n} \texteq \sum_{l = n+1}^{0}$. In this form it is obvious that for a fixed $s$ the inner sum is telescoping and we obtain 
\begin{equation}
\sum_{s=1}^m [C(b_j^{r}, \gamma_{s,0}) - C(b_j^{r}, \gamma_{s,n_s})].
\end{equation}
We argue that the residual sum over $s$ is also telescoping. In order to see this, we notice that $\gamma_{s, n_s} \texteq \alpha_s \gamma_{s+1, 0}$, where $\alpha_s$ does not contain $\sigma_j$ by definition. Thus, according to the recursive property of the $C$-coefficients \eqref{eqn:Ccoeff}, $C(b_j^r, \gamma_{s, n_s}) \texteq C(b_j^r, \gamma_{s+1, 0})$. Therefore, the whole sum reduces to 
\begin{equation}
    C(b_j^r, \gamma_{1,0}) -  C(b_j^r, \gamma_{m,n_m}) = C(b_j^r, \alpha_0^{-1} \Upsilon ) -  C(b_j^r, \alpha_m) = C(b_j^r,  \Upsilon ),
\end{equation}
where in the last equality we again used the recursive property of the $C$-coefficients. This concludes the proof of \eqn{eqn:recursiveResult}.

Now, to derive \eqn{eqn:thm3-general-result-j-neq-k} for $i_1 \texteq \ldots \texteq i_r \texteq j$ we use induction. 
We have already explicitly proven \eqn{eqn:thm3-general-result-j-neq-k} for $r\texteq 0$ in \secref{sec:thm2}, and now assume it holds for $r\texteq n$.
Using \eqn{eqn:recursiveResult} with $r\texteq n\,{+}\,1$, we find 
\begin{align}
   - \omega_{j^{n+1}k}(y, z)
   &=\sum_{l=1}^{n+1}\frac{1}{l!}\int_{t \in \Acycle_j}\omega_{aa}(y,t)\,\omega_{j^{n-l+1}k}(t, z)\nonumber\\
    &=\int_{t \in \Acycle_j}\omega_{aa}(y,t)\,\omega_{j^{n+1}k}(t, z)
    +\sum_{l=2}^{n+1}\frac{1}{l!}\left(-\sum_{p=0}^{n-l+1}\frac{\bn{p}}{p!}\,\omega_{j^{n-l-p+2}k}(y, z)\right).
\end{align}
Rearranging and changing the summation variable to $m\texteq p\,{+}\,l\,{-}\,1$ yields
\begin{align}
    \int_{t \in \Acycle_j}\omega_{aa}(y,t)\,\omega_{j^{n+1}k}(t, z)
    &= -\omega_{j^{n+1}k}(y, z)
    + \sum_{l=2}^{n+1}\frac{1}{l!}\sum_{m=l-1}^{n}\frac{\bn{m-l+1}}{(m-l+1)!}\, 
    \omega_{j^{n+1-m}k}(y, z)\nonumber\\
    &=-\omega_{j^{n+1}k}(y, z)
    + \sum_{m=1}^{n}\sum_{l=2}^{m+1}\frac{1}{l!}\frac{\bn{m-l+1}}{(m-l+1)!}\,
    \omega_{j^{n+1-m}k}(y, z)
\end{align}
where in the second line we swapped the order of summation.  Next, we relabel $q\texteq m\,{-}\,l\,{+}\,1$ and use the property of Bernoulli numbers (for $m\,{>}\,0$)
\begin{align}
    \frac{\bn{m}}{m!}=-\sum_{q=0}^{m-1}\frac{1}{(m-q+1)!}\frac{\bn{q}}{q!},
\end{align}
to find
\begin{align}
     \int_{t \in \Acycle_j}\omega_{aa}(y,t)\,\omega_{j^{n+1}k}(t, z)
     &= -\omega_{j^{n+1}k}(y, z)
     + \sum_{m=1}^{n}\sum_{q=0}^{m-1}\frac{1}{(m-q+1)!}\frac{\bn{q}}{q!}
     \,\omega_{j^{n+1-m}k}(y, z)\nonumber\\
     &=-\omega_{j^{n+1}k}(y, z)
     -\sum_{m=1}^{n}\frac{\bn{m}}{m!}
     \,\omega_{j^{n+1-m}k}(y, z)\nonumber\\
     &=-\sum_{m=0}^{n}\frac{\bn{m}}{m!}
     \,\omega_{j^{n+1-m}k}(y, z),
\end{align}
which completes the proof of the general case when $j \,{\neq}\, k$.

\subsubsection{\texorpdfstring{$j\texteq k$: genus-one recursion and resummation}{j = k: genus-one recursion and resummation}}\label{app:L=Kproof}

In this subsection, we conclude the general case of the theorem from \rcite{DHoker:2025dhv}, as stated in \eqn{eqn:DSthm3}, by addressing the final case where $j \texteq k$ and $a \texteq b$. In this case, all terms in \eqn{eqn:DSthm3} contribute, and the full expression takes the form:
\begin{align}
    \int_{t \in \acyc_j} \omega_{jj}(y, t)\, \omega_{i_1 \cdots i_r j}(t, z) 
    =&\, \frac{\bn{r}}{r!} \delta_{i_1 \cdots i_rj} \omega_{aa }(y, z)
    -  \frac{\bn{r+1}}{r!} \delta_{i_1 \cdots i_rj}\, \omega_j(y) \nonumber\\
    &- \sum_{l = 0}^r \frac{\bn{l}}{l!} \delta_{i_1 \cdots i_lj} \,\omega_{j i_{l+1} \cdots i_r j}(y, z).
\end{align}
For more general cases, the analysis is similar. We again use the freedom to choose $a$ and fix it to be $a \texteq j$ for convenience. Eq.~\eqref{eqn:DSthm3} becomes
\begin{align}
    \int_{t \in \acyc_j} \omega_{jj}(y, t)\, \omega_{i_1 \cdots i_r j}(t, z) 
    =&\, \frac{\bn{r}}{r!} \delta_{i_1 \cdots i_rj} \omega_{jj}(y, z)
    -  \frac{\bn{r+1}}{r!} \delta_{i_1 \cdots i_rj}\, \omega_j(y) \nonumber\\
    &- \sum_{l = 0}^r \frac{\bn{l}}{l!} \delta_{i_1 \cdots i_lj} \,\omega_{j i_{l+1} \cdots i_r j}(y, z).
\end{align}
Note that in the final summation term, when $l \texteq r$, the expression 
\begin{equation}
\frac{\bn{r}}{r!} \,\delta_{i_1 \cdots i_rj} \,\omega_{jj}(y, z)
\end{equation}
matches the first term, so these two terms always cancel.

Additionally, we now consider the case where $i_1 \texteq \cdots \texteq i_r \texteq j$, which represents the situation with maximal complexity: every Kronekcer delta in \eqn{eqn:DSthm3} is non-zero.
Hence, we aim to prove
\begin{align}
     \int_{t \in \acyc_j} \omega_{jj}(y, t)\, \omega_{j^{n+1}}(t, z) 
    =& -  \frac{\bn{n+1}}{n!} \, \omega_j(y) - \sum_{l = 0}^{n-1} \frac{\bn{l}}{l!}  \,\omega_{j^{n-l+2}}(y, z).
\end{align}
To begin, we compute the left-hand side explicitly:
\begin{align}
    \int_{t \in \acyc_j} \omega_{jj}(y, t)\, \omega_{j^{n+1}}(t, z)&\sum_{\bar{\gamma}_j\in\SCosetR{j}}\frac{1}{(-2\pi \iunit)} \int_{t \in \Acycle_j}C(b_j,\bar{\gamma}_j)\,\skern{0}_j(\bar{\gamma}_j^{-1}y,t)\,\omega_{j^{n+1}}(t, z)\nonumber\\
    &+\sum_{\bar{\gamma}_j\in\SCosetR{j}}\frac{1}{(-2\pi \iunit)} \int_{t \in \Acycle_j}\skern{1}_j(\bar{\gamma}_j^{-1}y,t)\,\omega_{j^{n+1}}(t, z)\nonumber\\
    =&\sum_{\bar{\gamma}_j\in\SCosetR{j}}\frac{1}{(-2\pi \iunit)} C(b_j,\bar{\gamma}_j)\,\skern{0}_j(\bar{\gamma}_j^{-1}y,z)\frac{\bn{n}}{n!}\\
    &+\frac{1}{(-2\pi \iunit)^2}\sum_{\substack{\bar{\gamma}_j\in\SCosetR{j}\\\gamma_j\in\SCosetR{j}}}\sum_{k=0}^n C(b_j^{n-k},\gamma_j)\int_{t \in \Acycle_j}\skern{1}_j(\bar{\gamma}_j^{-1}y,t)\,\skern{k}_j(\gamma_j^{-1}t,z)\notag
\end{align}
In the last step, we used the depth-one result to evaluate the first integral. For the second term, the contribution from $\gamma_j \texteq \id$ can be evaluated separately using the genus-one recursion relation for the $s$-kernels as follows.

According to~\rcite{DHoker:2025dhv} we have the genus-one kernel convolution identity for $r\,{\ge}\, 1$
\begin{align}\label{eqn:genus-onekernelrecursiondefinition}
    \gkern{r+1}&(\xi - \chi) \\
    &= -\int_0^1 \dd \nu \, \gkern{1}(\xi - \nu)\, \gkern{r}(\nu - \chi) 
    - (-2\pi\iunit)^{r+1} \frac{\bn{r+1}}{r!} 
    - \sum_{k = 1}^{r - 1} (-2\pi\iunit)^k \frac{\bn{k}}{k!} \gkern{r+1-k}(\xi - \chi).\notag
\end{align}
The case $r \texteq 0$ is treated as a special case, since we have
\begin{align}
    \int_0^1 \dd \nu \, \gkern{1}(\xi - \nu)\,\gkern{0}(\nu - \chi)
    = \int_0^1 \dd \nu \, \gkern{1}(\xi - \nu)
    = -\pi \iunit
    = \frac{1}{2} (-2\pi \iunit)\,\gkern{0}(\xi - \chi),
\end{align}
where we used the property $\gkern{0}(\xi) \texteq 1$.
Using the relation in \eqn{eqn:sn} the $s$-kernels can be written as
\begin{align}
    \skern{n}_j(z, x) 
    = (-2\pi\iunit)^{1 - n}\, \gkern{n}_j(\abel_j(z,x_0) - \abel_j(x,x_0))\, \dd\abel_j(z,x_0),
\end{align}
where $x_0$ is a basepoint on the fundamental domain of the Schottky cover.

Collecting these results, the genus-one recursion relation for the $s$-kernels in the case $n \,{\ge}\, 0$ takes the form
\begin{align}\label{eqn:genusOneRescursionSKernel}
    \int_{t \in \acyc_j} \skern{1}_j(y, t)\, \skern{n}_j(t, z) 
    = (-2\pi\iunit) \left( -\sum_{k=0}^{n-1}\frac{\bn{k}}{k!} \skern{n+1-k}_j(y,z) -\frac{\bn{n+1}}{n!}\skern{0}_j(y,z)
    \right).
\end{align}
Hence, we can isolate this case for subsequent application:
\begin{align}
    \int_{t \in \acyc_j} \omega_{jj}(y, t)\, \omega_{j^{n+1}}&(t, z) \nonumber\\
    =&\sum_{\bar{\gamma}_j\in\SCosetR{j}}\frac{1}{(-2\pi \iunit)} C(b_j,\bar{\gamma}_j)\,\skern{0}_j(\bar{\gamma}_j^{-1}y,z)\frac{\bn{n}}{n!}\nonumber\\
    &+\frac{1}{(-2\pi \iunit)^2}\sum_{\bar{\gamma}_j\in\SCosetR{j}}\sum_{\substack{\gamma_j\in\SCosetR{j}\\\gamma_j\neq \id}}\sum_{k=0}^n C(b_j^{n-k},\gamma_j)\int_{t \in \Acycle_j}\skern{1}_j(\bar{\gamma}_j^{-1}y,t)\,\skern{k}_j(\gamma_j^{-1}t,z)\nonumber\\
    &+\frac{1}{(-2\pi \iunit)^2}\sum_{\bar{\gamma}_j\in\SCosetR{j}}\int_{t \in \Acycle_j}\skern{1}_j(\bar{\gamma}_j^{-1}y,t)\,\skern{n}_j(\gamma_j^{-1}t,z).
\end{align}
To evaluate the double Schottky sum in the second line, we apply the residue analysis established in \secref{sec:TechniquesSchottky}, noting that only the pole structure differs. As an illustrative example, we consider the case $k\texteq0$.

%%%%%%%%%%%%%%%%%%%%%%%%%%%%%%%%%%%%%%%%%%%%%%%%%%%%%%%%%%%%%%%%%%%%%%%%%%%%%%
\paragraph{Residue analysis.}

Now we focus on the integrand
\begin{align}
	\mathcal{I}_j 
	=\skern{1}_j(\bar{\gamma}_j^{-1} y,  t)\, \skern{0}_j(\gamma_j^{-1}t),
\end{align}
where we again apply the residue analysis, with poles at
\begin{align}
    &p_{P}(\gamma_j) = \gamma_j P_j,\qquad 
    p_{P'}(\gamma_j) = \gamma_j P'_j,
    &&\quad \text{from } \skern{0}_j(\gamma_j^{-1} t) ,
    \nonumber\\
    &p_y^n(\bar{\gamma}_j) = \sigma_j^n \bar{\gamma}_j^{-1} y,
    &&\quad \text{from } \skern{1}_j(\bar{\gamma}_j^{-1}y,t) .
\end{align}
\begin{situations}
    \item\label{appitem:1} All simple poles lie on the same side of the contour. This occurs when $\gamma_j \,{\in}\, \mathrm{C}(\SCosetR{j})^{[-]}_{j \rightarrow}$ with $n\,{\ge}\, 0$, or $\gamma_j \,{\in}\, (\SCosetR{j})^{[-]}_{j \rightarrow}$ with $n\,{<}\,0$. In these cases, the residue contributions are zero.
	\item\label{appitem:2} The point $p_y^n(\bar{\gamma}_j)$ lies outside the integration contour, while $p_{P}(\gamma_j)$ and $p_{P'}(\gamma_j)$ are inside. This happens when $\gamma_j \,{\in}\, (\SCosetR{j})^{[-]}_{j \rightarrow}$ with $n\,{\ge}\, 0$. We take the contour to enclose the poles outside, yielding the residue contribution
	\begin{align}
	    \operatorname{Res}_{p_y^n}\mathcal{I}_{j}(t) 
	    = -{2\pi \iunit}\, \skern{0}_k(\gamma_j^{-1} \sigma_j^n \bar{\gamma}_j^{-1}y)\quad \text{for}\quad  n\ge 0.
	\end{align}
	\item\label{appitem:3} The point $p_y^n(\bar{\gamma}_j)$ lies inside the integration contour, while $p_{P}(\gamma_j)$ and $p_{P'}(\gamma_j)$ are outside. This happens when $\gamma_j \,{\in}\, \mathrm{C}(\SCosetR{j})^{[-]}_{j \rightarrow}$ with $n\,{<}\,0$. We take the contour to enclose the poles outside, yielding the residue contribution
	\begin{align}
	    \operatorname{Res}_{p_y^n}\mathcal{I}_{j}(t) 
	    = {2\pi \iunit}\, \skern{0}_k(\gamma_j^{-1}\sigma_j^n \bar{\gamma}_j^{-1}y) \quad \text{for} \quad n<0.
	\end{align}
\end{situations}

%%%%%%%%%%%%%%%%%%%%%%%%%%%%%%%%%%%%%%%%%%%%%%%%%%%%%%%%%%%%%%%%%%%%%%%%%%%%%%
\paragraph{Manipulating Schottky sums.}

Considering the splitting of words, we get
\begin{align}
    \sum_{\bar{\gamma}_j\in \SCosetR{j}}
    \sum_{\substack{\gamma_j\in\SCosetR{j}\\\gamma_j\neq \id}}
    \frac{1}{(-2\pi\iunit)^2}
    \int_{ \acyc_j} &\skern{1}_j(\bar{\gamma}_j^{-1} y,  t)\, \skern{0}_j(\gamma_j^{-1}t)\nonumber\\
    &= \sum_{\Upsilon\in \SCosetR{j}} 
    \frac{1}{(-2\pi \iunit)} \left(N_j^-(\Upsilon)-N_j^+(\Upsilon)\right) 
    \skern{0}_j(\Upsilon^{-1}y)\nonumber\\
    &= -\sum_{\Upsilon\in \SCosetR{j}} 
    \frac{1}{-2\pi\iunit} C(b_j, \Upsilon)\, \skern{0}_j(\Upsilon^{-1} y).
\end{align}
This result also holds for general $\skern{r}_k$ due to similar residue analysis. 
Thus, we obtain
\begin{align}
    \sum_{\bar{\gamma}_j\in \SCosetR{j}}\sum_{\substack{\gamma_j\in\SCosetR{j}\\\gamma_j\neq \id}}C(b_j^{r-k},\gamma_j)&\int_{t \in \Acycle_j}\skern{1}_j(\bar{\gamma}_j^{-1}y,t)\,\skern{k}_j(\gamma_j^{-1}t,z)\nonumber\\
    &=2\pi \iunit \sum_{l=0}^{r-k}\frac{\bn{l}}{l!}\sum_{\Upsilon\in \SCosetR{j}}C(b_j^{r-k-l+1},\Upsilon)\,\skern{k}_j(\Upsilon^{-1}y,z).
\end{align}
We emphasize that $\gamma_j \texteq \id$ must be excluded to ensure the composed word $\Upsilon \texteq \bar{\gamma}_j \sigma_j^n \gamma_j$ remains in the coset $\SCosetR{j}$. Using this, we write
\begin{align}
    \int_{t \in \acyc_j}& \omega_{jj}(y, t)\, \omega_{j^{n+1}}(t, z) \nonumber\\
    &=\sum_{\bar{\gamma}_j\in\SCosetR{j}}\frac{1}{(-2\pi \iunit)} C(b_j,\bar{\gamma}_j)\,\skern{0}_j(\bar{\gamma}_j^{-1}y,z)\frac{\bn{n}}{n!}\nonumber\\
    &\quad-\frac{1}{(-2\pi \iunit)}\sum_{\Upsilon\in \SCosetR{j}}\sum_{k=0}^n \sum_{l=0}^{n-k}\frac{\bn{l}}{l!}C(b_j^{n-k-l+1},\Upsilon)\,\skern{k}_j(\Upsilon^{-1}y,z)\nonumber\\
    &\quad-\frac{1}{(-2\pi \iunit)}\sum_{\bar{\gamma}_j\in\SCosetR{j}} \left(\sum_{l=0}^{n-1}\frac{\bn{l}}{l!}\skern{n+1-l}_j(\bar{\gamma}_j^{-1}y,z)+\frac{\bn{n+1}}{n!}\skern{0}_j(\bar{\gamma}_j^{-1}y,z)
    \right)\nonumber\\
    &=\sum_{\bar{\gamma}_j\in\SCosetR{j}}\frac{1}{(-2\pi \iunit)} C(b_j,\bar{\gamma}_j)\,\skern{0}_j(\bar{\gamma}_j^{-1}y,z)\frac{\bn{n}}{n!}\nonumber\\
    &\quad-\frac{1}{(-2\pi \iunit)}\sum_{\bar{\gamma}_j\in \SCosetR{j}}\!\!\left(\sum_{l=0}^{n-1}\frac{\bn{l}}{l!}\sum_{k=0}^{n-l}C(b_j^{n-k-l+1},\bar{\gamma}_j)\,\skern{k}_j(\bar{\gamma}_j^{-1}y,z)+\frac{\bn{n}}{n!}C(b_j,\bar{\gamma}_j)\,\skern{0}_j(\bar{\gamma}_j^{-1}y,z)\!\right)
    \nonumber\\
    &\quad-\frac{1}{(-2\pi \iunit)}\sum_{\bar{\gamma}_j\in\SCosetR{j}} \left(\sum_{l=0}^{n-1}\frac{\bn{l}}{l!}\skern{n+1-l}_j(\bar{\gamma}_j^{-1}y,z)+\frac{\bn{n+1}}{n!}\skern{0}_j(\bar{\gamma}_j^{-1}y,z)
    \right).
\end{align}
In the second equality, we reorganized and renamed summation variables for clarity. We also notice that the first term in the last line can be incorporated as the $k \texteq n \,{-}\, l \,{+}\, 1$ term of the first summation in the line before the last line. 

We now perform the resummation and simplify the terms,
\begin{align}
    \int_{t \in \acyc_j} \omega_{jj}(y, t)\, \omega_{j^{n+1}}(t, z)
    &= -\frac{1}{(-2\pi \iunit)}\sum_{\bar{\gamma}_j\in \SCosetR{j}}\sum_{l=0}^{n-1}\frac{\bn{l}}{l!}\sum_{k=0}^{n-l+1}C(b_j^{n-k-l+1},\bar{\gamma}_j)\,\skern{k}_j(\bar{\gamma}_j^{-1}y,z)\notag\\
    &\quad-\frac{1}{(-2\pi \iunit)}\sum_{\bar{\gamma}_j\in\SCosetR{j}} \frac{\bn{n+1}}{n!}\skern{0}_j(\bar{\gamma}_j^{-1}y,z)\nonumber\\
    &=- \sum_{l = 0}^{n-1} \frac{\bn{l}}{l!}  \,\omega_{j^{n-l+2}}(y, z)-\frac{\bn{n+1}}{n!}\omega_j(y),
\end{align}
confirming the claimed identity.

%%%%%%%%%%%%%%%%%%%%%%%%%%%%%%%%%%%%%%%%%%%%%%%%%%%%%%%%%%%%%%%%%%%%%%%%%%%%%%
%%%%%%%%%%%%%%%%%%%%%%%%%%%%%%%%%%%%%%%%%%%%%%%%%%%%%%%%%%%%%%%%%%%%%%%%%%%%%%
%%%%%%%%%%%%%%%%%%%%%%%%%%%%%%%%%%%%%%%%%%%%%%%%%%%%%%%%%%%%%%%%%%%%%%%%%%%%%%
\appendixsubsection{\texorpdfstring{Additional calculations for \secref{sec:altid}}{Additional calculations for §8.3}}\label{app:Technicalitiesaltid}

In this appendix, we want to supplement the details of the proof of the alternating identity stated in \secref{sec:altid}. 

%%%%%%%%%%%%%%%%%%%%%%%%%%%%%%%%%%%%%%%%%%%%%%%%%%%%%%%%%%%%%%%%%%%%%%%%%%%%%%
\paragraph{Residue analysis.} Analyzing the locations of the poles in \eqn{eqn:intschottky}, we arrive at the following contributions entering the residue theorem. From \eqn{eqn:intschottky}, we can read off that the poles are located at $t_{P_{e_1}'}(\gamma)\texteq\gamma P_{e_1}'$, $t_{P_{e_1}}(\gamma)\texteq\gamma P_{e_1}$ (simple poles) and $t_{P_{e_2}'}(\delta)\texteq\delta P_{e_2}'$, $t_{P_{e_2}}(\delta)\texteq\delta P_{e_2}$ (branch points of the logarithm), respectively. Moreover, notice that $t_{P'_{e_l}}(\gamma)$ and $t_{P_{e_l}}(\gamma)$ (resp.~$t_{P'_{e_l}}(\delta)$ and $t_{P_{e_l}}(\delta)$), $l\texteq1,2$, are always located on the same side of the integration contour for $\gamma\,{\neq}\,\id$ (resp.~$\delta\,{\neq}\,\id$). However, the situations when either $\gamma\texteq\id$ or $\delta\texteq\id$ (or $\gamma\texteq\delta\texteq\id$) can be neglected as $\Coeff(\id,\delta)\texteq\Coeff(\gamma,\id)\texteq\Coeff(\id,\id)\texteq0$ for the coefficient\footnote{This furthermore ensures that the branch cut connecting $z_1(\delta)$ and $z_2(\delta)$ is always located entirely on one side of the contour and does not intersect it in any situation. Therefore, the residue theorem is indeed applicable for all relevant configurations. Moreover, $\gamma\,{\neq}\,\id$ makes sure that no additional pole configurations enter the residue theorem.} $\Coeff(\gamma,\delta)$ in \eqn{eqn:altidschottky}. This directly follows from the assumption $n_1,n_2\,{>}\,0$ and the definition~\eqref{eqn:Ccoeff} of the $C$-coefficients. The relevant pole distributions can therefore be summarized in the following list.
\begin{situations}
    \item\label{sit:1} The first configuration occurs when all the singularities are located on the same side of the contour. Then, reversing the orientation if necessary, we can assume that the integrand is holomorphic on the side enclosed by the contour. It follows immediately from the residue theorem that the integral vanishes in this situation.
    \item\label{sit:2} The second configuration is that the simple poles at $t_{P_{e_1}'}(\gamma)$ and $t_{P_{e_1}}(\gamma)$ are enclosed by the contour, whereas the branch points $t_{P_{e_2}'}(\delta)$ and $t_{P_{e_2}}(\delta)$ are not. In order for this to happen, we require $\gamma\,{\in}\, \SCosetR{e_1}$ and $\delta\,{\in}\, \SCosetR{e_2}$ to satisfy certain conditions. The requirement that $t_{P_{e_1}'}(\gamma)$ and $t_{P_{e_1}}(\gamma)$ are enclosed by the contour $\acyc_i$ translates to $\gamma\,{\in}\,(\SCosetR{e_1})^{[-]}_{i\rightarrow}$ (see \secref{sec:splitting_of_Schottky_words} for the definition of the notation) as $\sigma_i^n$ for $n\,{<}\,0$ maps points to the side enclosed by the contour $\acyc_i$ by definition of the generator $\sigma_i^{-1}$. Similarly, the condition on $t_{P_{e_2}'}(\delta)$ and $t_{P_{e_2}}(\delta)$ translates to $\delta\,{\in}\, \mathrm{C}(\SCosetR{e_2})^{[-]}_{i\rightarrow}$ as this is the condition for $\delta$ not mapping points to the side enclosed by $\acyc_i$. The residue attributed to such a configuration is given by 
    \begin{equation}
    \label{eqn:R+}
    R^{-}_{e_1,e_2}(\delta^{-1}\gamma)\equiv(-2\pi\iunit)\log\left(\frac{(\gamma P_{e_1}'-\delta P_{e_2}')(\gamma P_{e_1}-\delta P_{e_2})}{(\gamma P_{e_1}-\delta P_{e_2}')(\gamma P_{e_1}'-\delta P_{e_2})}\right) \, ,
    \end{equation}
    where we have included a minus sign to account for the negative orientation of the contour $\acyc_i$. Notice that $R^-_{e_1,e_2}(\delta^{-1}\gamma)$ in fact only depends on the product $\delta^{-1}\gamma$, which justifies the notation. This can be seen from the definition~\eqref{eqn:R+} by applying the invariance of the cross-ratio.
    \item\label{sit:3} The third configuration is obtained by exchanging the locations of the singularities in the previous case, hence we can trace this configuration back to the previous case by inverting the orientation of the contour. The conditions on $\gamma$ and $\delta$ thus become $\gamma\,{\in}\, \mathrm{C}(\SCosetR{e_1})^{[-]}_{i\rightarrow}$ and $\delta\,{\in}\, (\SCosetR{e_2})^{[-]}_{i\rightarrow}$ for this configuration. Additionally, we are taking up a minus sign accounting for the inversion of the contour. In total, the residue attributed to such a configuration is
    \begin{equation}
    \label{eqn:R-}
    R^{+}_{e_1,e_2}(\delta^{-1}\gamma)\equiv 2\pi\iunit\log\left(\frac{(\gamma P_{e_1}'-\delta P_{e_2}')(\gamma P_{e_1}-\delta P_{e_2})}{(\gamma P_{e_1}-\delta P_{e_2}')(\gamma P_{e_1}'-\delta P_{e_2})}\right)=-R^-_{e_1,e_2}(\delta^{-1}\gamma) \, .
    \end{equation}
\end{situations}
Putting all of this together, \eqn{eqn:altidschottky} becomes
\begin{equation}
    \label{eqn:rescontrib}
    \begin{aligned}
        \cL^{(i)}_{e_1,e_2}=&\sum_{\substack{\gamma\in (\SCosetR{e_1})^{[-]}_{i\rightarrow} \\ \delta\in \mathrm{C}(\SCosetR{e_2})^{[-]}_{i\rightarrow}}}\frac{\Coeff(\gamma,\delta)}{(-2\pi\iunit)^2}R^{-}_{e_1,e_2}(\delta^{-1}\gamma)+\sum_{\substack{\gamma\in \mathrm{C}(\SCosetR{e_1})^{[-]}_{i\rightarrow} \\ \delta\in (\SCosetR{e_2})^{[-]}_{i\rightarrow}}}\frac{\Coeff(\gamma,\delta)}{(-2\pi\iunit)^2}R^{+}_{e_1,e_2}(\delta^{-1}\gamma) \, .
    \end{aligned}
\end{equation}
The strategy is now to define $\Upsilon\,{\equiv}\,\delta^{-1}\gamma$ and to rewrite \eqn{eqn:rescontrib} in terms of a single sum over $\Upsilon$. First of all, it can be seen from the definitions~\eqref{eqn:set1}-\eqref{eqn:cset2} that $\Upsilon\texteq\delta^{-1}\gamma\,{\in}\,\SCosetLR{e_2}{e_1}$ for all tuples $(\gamma,\delta)$ appearing in \eqn{eqn:rescontrib}. However, there is not a one-to-one correspondence between those tuples and $\Upsilon\,{\in}\,\SCosetLR{e_2}{e_1}$. More precisely, given $\Upsilon\,{\in}\,\SCosetLR{e_2}{e_1}$, there are several different ways to divide it into a product $\delta^{-1}\gamma$ for $\gamma$ and $\delta$ contained in the respective sets appearing in \eqn{eqn:rescontrib}. Let us illustrate this procedure in more detail for $(\gamma,\delta)\,{\in}\,(\SCosetR{e_1})^{[-]}_{i\rightarrow}\times \mathrm{C}(\SCosetR{e_2})^{[-]}_{i\rightarrow}$, which corresponds to rewriting the first term in \eqn{eqn:rescontrib}. Unpacking the definitions, $\gamma\,{\in}\,(\SCosetR{e_1})^{[-]}_{i\rightarrow}$ in principle just means that $\gamma\,{\in}\,\SCosetR{e_1}$ must start with a (strictly) negative power of the generator $\sigma_i$. Analogously, $\delta\,{\in}\, \mathrm{C}(\SCosetR{e_2})^{[-]}_{i\rightarrow}$ translates to the statement that $\delta$ must not start with a (strictly) negative power of $\sigma_i$. By inversion, this implies that $\delta^{-1}$ must not end in a (strictly) positive power of $\sigma_i$. Let now $\Upsilon\,{\in}\,\SCosetLR{e_2}{e_1}$. Also recall that we neglect $\delta\texteq\id$ as its coefficient in \eqn{eqn:altidschottky} vanishes. To divide $\Upsilon$ into a product $\delta^{-1}\gamma$ for valid $\gamma$ and $\delta$ now becomes equivalent to splitting $\Upsilon$ before a generator $\sigma_i^{-1}$. In this way, we guarantee that the condition on $\gamma$ is satisfied. The condition on $\delta^{-1}$ on the other hand prohibits trivial splittings obtained from the insertion of an identity $\id\texteq\sigma_i^{-1}\sigma_i$ into $\Upsilon$. To make this a bit more concrete, let us consider the example of $\Upsilon\texteq\sigma_2^2\sigma_1^{-2}\sigma_2^2\sigma_1$ (at genus two). Then
\begin{equation}
    \begin{aligned}
        (\gamma,\delta)=&(\sigma_1^{-2}\sigma_2^2\sigma_1,\sigma_2^{-2}) \, , \\
        (\gamma,\delta)=&(\sigma_1^{-1}\sigma_2^2\sigma_1,\sigma_1\sigma_2^{-2})
    \end{aligned}
\end{equation}
are both valid splittings along $\sigma_1$, while $(\gamma,\delta)\texteq(\sigma_1^{-1}\sigma_2\sigma_1^{-2}\sigma_2^2\sigma_1,\sigma_1^{-1}\sigma_2^{-1})$ is not as it violates the condition on $\delta$. We therefore obtain the set of all possible splittings $\nsplt{i}{\Upsilon}\,{\subset}\, \SGroup\times \SGroup$ of $\Upsilon\,{\in}\,\SCosetLR{e_2}{e_1}$ before $\sigma_i^{-1}$ as defined in \secref{sec:splitting_of_Schottky_words}. We can therefore rewrite the first line of \eqn{eqn:rescontrib} as
\begin{equation}
    \sum_{\Upsilon\in\SCosetLR{e_2}{e_1}}\sum_{(\gamma,\delta)\in \nsplt{i}{\Upsilon}} \frac{\Coeff(\gamma,\delta)}{(-2\pi\iunit)^2}R^{-}_{e_1,e_2}(\Upsilon) \, .
\end{equation}
A completely analogous analysis can be performed for the second term in \eqn{eqn:rescontrib}. It turns out that in this case, $\Upsilon$ has to be split after $\sigma_i$ in order to match the conditions on $\gamma$ and $\delta$. We thus get the set $\psplt{i}{\Upsilon}$. Then we can put everything together and finally write
\begin{equation}
    \label{eqn:altidintermediate}
    \cL^{(i)}_{e_1,e_2}=\sum_{\Upsilon\in\SCosetLR{e_2}{e_1}}\left(\sum_{(\gamma,\delta)\in \nsplt{i}{\Upsilon}} \frac{\Coeff(\gamma,\delta)}{(-2\pi\iunit)^2}-\sum_{(\gamma,\delta)\in \psplt{i}{\Upsilon}}\frac{\Coeff(\gamma,\delta)}{(-2\pi\iunit)^2}\right)R^{-}_{e_1,e_2}(\Upsilon) \, ,
\end{equation}
where we have also made use of the fact that $R^+_{e_1,e_2}(\Upsilon)\texteq{-}\,R^-_{e_1,e_2}(\Upsilon)$. This completes the evaluation of the integral.

%%%%%%%%%%%%%%%%%%%%%%%%%%%%%%%%%%%%%%%%%%%%%%%%%%%%%%%%%%%%%%%%%%%%%%%%%%%%%%
\paragraph{Coefficient analysis.} Let us take a look at the coefficients in \eqn{eqn:altidintermediate}. The splittings in $\nsplt{i}{\Upsilon}\,,\,\psplt{i}{\Upsilon}$ can be parameterized by the location of a generator $\sgen_i^{\mp1}$ inside $\Upsilon$. More detailed, consider a general element $\Upsilon\texteq\sgen_{j_1}^{m_1}\cdots\sgen_{j_s}^{m_s}\,{\in}\,\SCosetLR{e_2}{e_1}$. Define the subsets 
\begin{equation} 
	L_i^\pm(\Upsilon)\texteq\{l\mid j_l\texteq i,\,\pm m_l\,{>}\,0\}
\end{equation} 
of $\{1,\ldots,s\}$, which encapsulate the locations of the generators $\sgen_i^{\pm 1}$ in $\Upsilon$. We call the elements of those sets the \textit{positive\slash negative splitting locations}. We can then write
\begin{subequations}
    \begin{align}
        \label{eqn:splitting+}
        \sum_{(\gamma,\delta)\in \nsplt{i}{\Upsilon}} \Coeff(\gamma,\delta)=&\sum_{l\in L_i^-(\Upsilon)}\sum_{p=0}^{-m_l-1}\Coeff(\sigma_i^{m_l+p}\sgen_{j_{l+1}}^{m_{l+1}}\cdots\sgen_{j_s}^{m_s}, \sgen_i^p\sgen_{j_{l-1}}^{-m_{l-1}}\cdots\sgen_{j_1}^{-m_1}) \, , \\
        \label{eqn:splitting-}
        \sum_{(\gamma,\delta)\in \psplt{i}{\Upsilon}} \Coeff(\gamma,\delta)=&\sum_{l\in L_i^+(\Upsilon)}\sum_{p=1}^{m_l}\Coeff(\sgen_i^{m_l-p}\sgen_{j_{l+1}}^{m_{l+1}}\cdots\sgen_{j_s}^{m_s}, \sgen_i^{-p}\sgen_{j_{l-1}}^{-m_{l-1}}\cdots\sgen_{j_1}^{-m_1}) \, ,
    \end{align}
\end{subequations}
where we have taken into account that we must split before $\sgen_i^{-1}$ and after $\sgen_i$. Let us now investigate the exact dependence of the summands on the splitting locations. We can use the recursive definition~\eqref{eqn:Ccoeff} of the $C$-coefficients to write (for $p\,{\neq}\, 0$)
\begin{equation}\label{eqn:splittingsC}
\begin{aligned}
    C(b_{\mindx{i}_1}^*,\gamma)&=C(b_{\mindx{i}_1}^*,\gamma_{\mathrm{red}})+(m_l+p)\,C(b_{\mindx{i}_3}^*,\gamma_{\mathrm{red}}) \, , \\
    C(b_{\mindx{i}_2}^*,\delta)&=C(b_{\mindx{i}_2}^*,\delta_{\mathrm{red}}) \, , \\
    C(b_{\mindx{i}_3}^*,\gamma)&=C(b_{\mindx{i}_3}^*,\gamma_{\mathrm{red}}) \, , \\
    C(b_{\mindx{i}_4}^*,\delta)&=C(b_{\mindx{i}_4}^*,\delta_{\mathrm{red}})+p\,C(b_{\mindx{i}_2}^*,\delta_{\mathrm{red}}),
\end{aligned}
\end{equation}
for $\gamma,\delta$ as in \eqn{eqn:splitting+}. In the above, we made use of the definitions of the multi-indices $\mindx{i}_l^*$, $l\,{\in}\,\{1,\ldots,4\}$, as well as defined $\rho_{\mathrm{red}}\texteq\sgen_{j_2}^{m_2}\cdots\sgen_{j_s}^{m_s}$ as the reduced version of $\rho\texteq\sgen_{j_1}^{m_1}\cdots\sgen_{j_s}^{m_s}$. In total, this yields for negative splitting locations
\begin{equation}\label{eqn:splittingC2}
    \begin{aligned}
        \Coeff(\gamma,\delta)=&p\,C(b_{\mindx{i}_3}^*,\gamma_{\mathrm{red}})\,C(b_{\mindx{i}_2}^*,\delta_{\mathrm{red}})-p\,C(b_{\mindx{i}_3}^*,\gamma_{\mathrm{red}})\,C(b_{\mindx{i}_2}^*,\delta_{\mathrm{red}})+(\text{$p$-independent terms}) \\
        =&~(\text{$p$-independent terms}).
    \end{aligned}
\end{equation}
A similar analysis shows \eqn{eqn:splittingsC} to hold for $(\gamma,\delta)$ as in \eqn{eqn:splitting-} under $p\,{\rightarrow}\,{-}\,p$. Accordingly, \eqn{eqn:splittingC2} analogously holds for positive splitting locations. This shows that $C(\gamma,\delta)$ does only depend on the splitting location, but not on the exact splitting at this location itself. We can therefore simplify the coefficient in \eqn{eqn:altidintermediate} as
\begin{equation}
    \label{eqn:coeff}
    \COeff(\Upsilon)\equiv\left(\sum_{(\gamma,\delta)\in \nsplt{i}{\Upsilon}} \Coeff(\gamma,\delta)-\sum_{(\gamma,\delta)\in \psplt{i}{\Upsilon}}\Coeff(\gamma,\delta)\right)=-\sum_{l\in L_i(\Upsilon)}m_l\, \Coeff(\gamma_l,\delta_l) \, ,
\end{equation}
where we defined $L_i(\Upsilon)\texteq L_i^+(\Upsilon)\,{\cup}\, L_i^-(\Upsilon)$. Moreover, $\gamma_l,\delta_l$ represent an arbitrary, but fixed, splitting at the location $l\,{\in}\, L_i(\Upsilon)$, which is well-defined by the above.

\newpage

%%%%%%%%%%%%%%%%%%%%%%%%%%%%%%%%%%%%%%%%%%%%%%%%%%%%%%%%%%%%%%%%%%%%%%%%%%%%%%
%%%%%%%%%%%%%%%%%%%%%%%%%%%%%%%%%%%%%%%%%%%%%%%%%%%%%%%%%%%%%%%%%%%%%%%%%%%%%%
%%%%%%%%%%%%%%%%%%%%%%%%%%%%%%%%%%%%%%%%%%%%%%%%%%%%%%%%%%%%%%%%%%%%%%%%%%%%%%
%%%%%%%%%%%%%%%%%%%%%%%%%%%%%%%%%%%%%%%%%%%%%%%%%%%%%%%%%%%%%%%%%%%%%%%%%%%%%%
%%%%%%%%%%%%%%%%%%%%%%%%%%%%%%%%%%%%%%%%%%%%%%%%%%%%%%%%%%%%%%%%%%%%%%%%%%%%%%
%%%%%%%%%%%%%%%%%%%%%%%%%%%%%%%%%%%%%%%%%%%%%%%%%%%%%%%%%%%%%%%%%%%%%%%%%%%%%%

\bibliographystyle{heMZV}
\bibliography{heMZV}

\end{document}